\newif\ifsubmission
 %\submissiontrue
 \submissionfalse
 
\newif\ifanon
  %\anontrue
  \anonfalse

\newif\ifnotes
% \notestrue
\notesfalse

\def\showtableofcontents{1}

\ifsubmission
\documentclass[orivec,runningheads,envcountsect]{llncs}
\else
\documentclass[11pt]{article}
\usepackage{fullpage}
\fi

\usepackage[utf8]{inputenc}

\usepackage[shortlabels]{enumitem}
\usepackage{amsmath}
\usepackage{amssymb}
\usepackage{amsthm}
\usepackage[dvipsnames]{xcolor}
\usepackage{mathtools}
\usepackage[utf8]{inputenc}
\usepackage{amsfonts}
\usepackage{graphicx}
\usepackage{mathrsfs}
\usepackage{physics}
\usepackage{soul}
\usepackage{bm} % allows for writing mixed states \rho in bold
\usepackage[ruled,linesnumbered]{algorithm2e}
\usepackage[T1]{fontenc}
\usepackage{quantikz} % for making quantum circuits

\ifsubmission
\else
\usepackage{beton}
\fi 

\definecolor{DarkBlue}{RGB}{0,0,150}
\definecolor{DarkRed}{RGB}{150,0,0}
\definecolor{DarkGreen}{RGB}{0,150,0}

\usepackage[colorlinks,linkcolor=Red,citecolor=DarkBlue]{hyperref}

\usepackage[capitalize]{cleveref}
\crefname{step}{Step}{Steps}
\usepackage[hyperpageref]{backref}

\newcommand{\authnote}[3]{\textcolor{#3}{[{\footnotesize {\bf #1:} { {#2}}}]}}

\newcommand{\thomas}[1]{\ifnotes \authnote{Thomas}{#1}{Emerald} \fi}

%\input{template}

%%%%%%%%%%%% Defining theorem-like environments %%%%%%%%%%
\ifsubmission
\else
\newtheorem{theorem}{Theorem}[section]

\newtheorem{claim}[theorem]{Claim}
\newtheorem{lemma}[theorem]{Lemma}

\newtheorem{corollary}[theorem]{Corollary}

\newtheorem{definition}[theorem]{Definition}

\newtheorem{remark}[theorem]{Remark}

\fi

%\Crefname{importedtheorem}{Imported Theorem}{Imported Theorems}
%\Crefname{theorem}{Theorem}{Theorems}
%\Crefname{proposition}{Proposition}{Propositions}
%\Crefname{claim}{Claim}{Claims}
%\Crefname{lemma}{Lemma}{Lemmas}
%\Crefname{conjecture}{Conjecture}{Conjectures}
%\Crefname{corollary}{Corollary}{Corollaries}
%\Crefname{construction}{Construction}{Constructions}
%\Crefname{property}{Property}{Properties}

\theoremstyle{definition}

%\Crefname{definition}{Definition}{Definitions}
%\Crefname{assumption}{Assumption}{Assumptions}
%\Crefname{notation}{Notation}{Notations}

\theoremstyle{remark}

% \newtheorem{fact}[theorem]{Fact}

%\Crefname{question}{Question}{Questions}
%\Crefname{remark}{Remark}{Remarks}
%\Crefname{comment}{Comment}{Comments}
%\Crefname{fact}{Fact}{Facts}

%%Yael added:
\newcommand{\Gen}{\mathsf{Gen}}
\def\Eval{\mathsf{Eval}}
\newcommand{\Invert}{\mathsf{Invert}}
\def\Check{\mathsf{Check}}
\def\clawfree{\mathsf{ClawFree}}
\def\vecx{\mathbf{x}}
\def\vecy{\mathbf{y}}
\def\vecd{\mathbf{d}}
\def\inj{\mathsf{injective}}

\def\twotoone{\mathsf{2}\text{-}\mathsf{to}\text{-}\mathsf{1}}

\newcommand{\Enc}{\mathsf{Enc}}
\newcommand{\Dec}{\mathsf{Dec}}
\newcommand{\ct}{\mathsf{ct}}
\newcommand{\FHE}{\mathsf{FHE}}
\newcommand{\Ext}{\mathsf{Ext}}
%%%%%%%%%%%%%%%%%%%%%%%%%%%%%%%%%%%%%%%%%%%%%%%%%%%%%%%%%%%%%%%%%%%%%%%%%%%%%%%%
% Calligraphic and blackboard type letters.

\def\cA{{\cal A}}

\def\cC{{\cal C}}

\def\cF{{\cal F}}

\def\cL{{\cal L}}

\def\cR{{\cal R}}

\def\cV{{\cal V}}

% Quantum Registers
\newcommand{\RegA}{\mathcal{A}}
\newcommand{\RegB}{\mathcal{B}}
\newcommand{\RegC}{\mathcal{C}}

\newcommand{\RegH}{\mathcal{H}}
\newcommand{\RegI}{\mathcal{I}}

\newcommand{\RegP}{\mathcal{P}}

\newcommand{\RegU}{\mathcal{U}}

\newcommand{\RegW}{\mathcal{W}}
\newcommand{\RegX}{\mathcal{X}}
\newcommand{\RegY}{\mathcal{Y}}
\newcommand{\RegZ}{\mathcal{Z}}

%%%%%%%%%%%%%%%%%%%%%%%%%%%%%%%%%%%%%%%%%%%%%%%%%%%%%%%%%%%%%%%%%%%%%%%%%%%
% Typography
\newcommand{\textdef}[1]{\textnormal{\textsf{#1}}}

%%%%%%%%%%%%%%%%%%%%%%%%%%%%%%%%%%%%%%%%%%%%%%%%%%%%%%%%%%%%%%%%%%%%%%%%%%%%%%%%
% Complexity classes

\newcommand{\QMA}{\mathsf{QMA}}
\newcommand{\NP}{\mathsf{NP}}

%%%%%%%%%%%%%%%%%%%%%%%%%%%%%%%%%%%%%%%%%%%%%%%%%%%%%%%%%%%%%%%%%%%%%%%%%%%%%%%%
% Cryptography notation

\newcommand{\secp}{\lambda}
\newcommand{\poly}{\mathsf{poly}}
\newcommand{\polylog}{\mathsf{polylog}}
\newcommand{\negl}{\mathsf{negl}}
\newcommand{\pk}{\mathsf{pk}}
\newcommand{\sk}{\mathsf{sk}}
\newcommand{\mode}{\mathsf{mode}}
\newcommand{\Setup}{\mathsf{Setup}}
\newcommand{\extpk}{\mathsf{ExtPk}}
\newcommand{\extsk}{\mathsf{ExtSk}}
\newcommand{\Program}{\mathsf{Program}}
\newcommand{\PK}{\mathsf{PK}}
\newcommand{\SK}{\mathsf{SK}}
\newcommand{\iO}{\mathsf{i}\mathcal{O}}
\newcommand{\Hyb}{\mathsf{Hyb}}

%\newcommand{\PRF}{\mathsf{PRF}}

%\newcommand{\state}{\mathsf{st}}

% measurement protocol
\newcommand{\MP}{\mathsf{MP}} 
\newcommand{\Test}{\mathsf{Test}}
\newcommand{\Out}{\mathsf{Out}}
\newcommand{\Commit}{\mathsf{Commit}}
\newcommand{\Open}{\mathsf{Open}}

%%%%%%%%%%%%%%%%%%%%%%%%%%%%%%%%%%%%%%%%%%%%%%%%%%%%%%%%%%%%%%%%%%%%%%%%%%%%%%%%
% Quantum

\newcommand{\Id}{\mathsf{Id}}

% Density matrices
\newcommand{\brho}{\bm{\rho}}
\newcommand{\bsigma}{\bm{\sigma}}
\newcommand{\btau}{\bm{\tau}}

\newcommand{\tensor}{\otimes}

%%%%%%%%%%%%%%%%%%%%%%%%%%%%%%%%%%%%%%%%%%%%%%%%%%%%%%%%%%%%%%%%%%%%%%%%%%%%%%%%
% Probability

\DeclareMathOperator*{\E}{\mathbb{E}}

\newcommand{\PRF}{\mathsf{PRF}}

\def\QMA{\textbf{QMA}}

\title{Succinct Classical Verification of Quantum Computation}
%\title{Succinct Arguments for QMA}

\ifsubmission
\author{Anonymous Submission, Full Version}
%\institute{}
\else

\ifanon
\author{Anonymous Submission, Full Version}
\date{}
\else

\author{James Bartusek\thanks{UC Berkeley. Email:\texttt{ bartusek.james@gmail.com}.} \and Yael Tauman Kalai\thanks{Microsoft Research and MIT. Email: \texttt{yael@microsoft.com}.} \and Alex Lombardi\thanks{MIT. Email: \texttt{alexjl@mit.edu}.} \and Fermi Ma\thanks{Simons Institute and UC Berkeley. Email: \texttt{fermima@alum.mit.edu}.}\and Giulio Malavolta\thanks{Max Planck Institute for Security and Privacy. Email: \texttt{giulio.malavolta@hotmail.it}.} \and Vinod Vaikuntanathan\thanks{MIT. Email: \texttt{vinodv@mit.edu}.} \and Thomas Vidick\thanks{Caltech. Email: \texttt{vidick@caltech.edu}.} \and Lisa Yang\thanks{MIT. Email; \texttt{lisayang@mit.edu}.}}
\date{\today}
\fi
\fi

\begin{document}
\maketitle

\begin{abstract} 
We construct a classically verifiable \emph{succinct} interactive argument for quantum computation (BQP) with communication complexity and verifier runtime that are poly-logarithmic in the runtime of the BQP computation (and polynomial in the security parameter). Our protocol is secure assuming the post-quantum security of indistinguishability obfuscation (iO) and Learning with Errors (LWE). This is the first succinct argument for quantum computation \emph{in the plain model}; prior work (Chia-Chung-Yamakawa, TCC '20) requires both a long common reference string and non-black-box use of a hash function modeled as a random oracle.

At a technical level, we revisit the framework for constructing classically verifiable quantum computation (Mahadev, FOCS '18). We give a self-contained, modular proof of security for Mahadev's protocol, which we believe is of independent interest. Our proof readily generalizes to a setting in which the verifier's first message (which consists of many public keys) is \emph{compressed}. Next, we formalize this notion of compressed public keys; we view the object as a generalization of constrained/programmable PRFs and instantiate it based on indistinguishability obfuscation. 

Finally, we compile the above protocol into a fully succinct argument using a (sufficiently composable) succinct argument of knowledge for NP. Using our framework, we achieve several additional results, including

\begin{itemize}
    \item Succinct arguments for QMA (given multiple copies of the witness),
    \item Succinct \emph{non-interactive} arguments for BQP (or QMA) in the quantum random oracle model, and
    \item Succinct batch arguments for BQP (or QMA) assuming post-quantum LWE (without iO).
\end{itemize}

\end{abstract}

\ifsubmission
\else
\ifnum\showtableofcontents=1
{
\thispagestyle{empty}
\newpage
\pagenumbering{roman}
\setcounter{tocdepth}{2}
\tableofcontents
\newpage
\pagenumbering{arabic}
}
\fi

\section{Introduction}
\label{sec:introduction}

Efficient verification of computation is one of the most fundamental and intriguing concepts in computer science, and lies at the heart of the $\mathrm{P}$ vs.\ $\mathrm{NP}$ question. It has been studied in the classical setting for over three decades, giving rise to beautiful notions such as interactive proofs \cite{STOC:GolMicRac85}, multi-prover interactive proofs~\cite{STOC:BGKW88}, probabilistically checkable proofs \cite{FOCS:BabForLun90,FOCS:ALMSS92,FOCS:AroSaf92}, and culminating with the notion of a \emph{succinct} (interactive and non-interactive) \emph{argument}~\cite{STOC:Kilian92,Micali94}. Roughly speaking, a succinct  argument for a $T$-time computation enables a prover running in $\poly(T)$ time to convince a $\polylog(T)$-time verifier of the correctness of the computation using only $\polylog(T)$ bits of communication, with soundness against all polynomial-time cheating provers.

In a breakthrough result in 2018, Mahadev~\cite{FOCS:Mahadev18a} presented an interactive argument system that enables a classical verifier to check the correctness of an arbitrary \emph{quantum} computation. Mahadev's protocol represents a different kind of interactive argument --- unlike the traditional setting in which the prover simply has more \emph{computational resources} (i.e., running time) than the verifier, the prover in Mahadev's protocol works in a qualitatively more powerful \emph{computational model}. More precisely, for any $T$-time quantum computation, Mahadev's protocol enables a quantum prover running in time $\poly(T)$ to convince a classical $\poly(T)$-time verifier with $\poly(T)$ bits of classical communication. Soundness holds against all quantum polynomial-time cheating provers under the post-quantum hardness of the learning with errors (LWE) problem. 

A fundamental question is whether we can get the best of both worlds: can the prover have \emph{both} a more powerful computational model \emph{and} significantly greater computational resources? Namely, we want an interactive argument system for $T$-time quantum computation in which the quantum prover runs in $\poly(T)$ time and convinces a $\polylog(T)$-time classical verifier with $\polylog(T)$ bits of classical communication.

We answer this question affirmatively, both for $\poly(T)$-time quantum computations, corresponding to the complexity class $\mathbf{BQP}$, and also for the non-deterministic analog $\mathbf{QMA}$. 

\begin{theorem}[Succinct Arguments for $\mathbf{BQP}$]\label{thm:main1:informal}
  Let $\secp$ be a security parameter.
  Assuming the existence of a  post-quantum secure indistinguishability obfuscation scheme {\em (iO)} and the post-quantum hardness of the learning with errors problem {\em (LWE)}, there is an interactive argument system for any $T$-time quantum computation on input $x$,\footnote{A $T$-time quantum computation is a \emph{language} $L$ decidable by a bounded-error $T$-time quantum Turing machine \cite{BV97}. We leave it to future work to address more complex tasks such as \emph{sampling} problems (as in \cite{CLLW20}).} where 
  \begin{itemize}
      \item the prover is quantum and runs in time $\poly(T,\lambda)$,
      \item the verifier is classical and runs in time $\mathsf{poly}(\log T, \secp) + \tilde{O}(|x|)$,\footnote{As in the classical setting, some dependence on $|x|$ is necessary at least to read the input; as in \cite{STOC:Kilian92}, we achieve a fairly minimal $|x|$-dependence.} and
      \item the protocol uses $\mathsf{poly}(\log T, \secp)$ bits of classical communication.
  \end{itemize}
\end{theorem}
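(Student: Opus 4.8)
The plan is to obtain the succinct argument in three stages, starting from Mahadev's protocol~\cite{FOCS:Mahadev18a} and progressively shrinking the communication and verifier time until both are $\polylog(T)\cdot\poly(\secp)$. The first stage is to re-prove soundness of Mahadev's protocol in a \emph{modular} way that isolates exactly which features of the verifier's first message are used. Recall that in Mahadev's protocol the verifier sends $\poly(T)$ public keys (one per qubit of the BQP computation, suitably arithmetized via a local Hamiltonian), each for a trapdoor claw-free or injective function family based on LWE; the quantum prover is thereby forced to commit to a history state, and soundness follows by a sequence of hybrids that switch individual keys between ``$\twotoone$'' and ``$\injective$'' modes and analyze the resulting committed state. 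I would re-derive this guarantee abstractly so that it refers only to two properties of the key-generation object: (i) keys can be sampled in the two modes in a computationally indistinguishable way, and (ii) given a master secret one can recover a trapdoor for any single index. This abstraction is what will survive compression.

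Second, I would compress the verifier's messages. The first message is replaced by a \emph{compressed public key}: a $\polylog(T)$-size string that the prover can expand into all $\poly(T)$ keys, while the verifier retains a short master secret from which it can recover the trapdoor at any one index. I would formalize this as a generalization of puncturable/constrained PRFs --- the ``constrain'' operation on a coordinate corresponds to mode-switching that coordinate of the key --- and instantiate it by applying $\iO$ to a program that derives each key's randomness from a puncturable PRF. The verifier's Hadamard-round challenge (a per-qubit basis choice, naively $\poly(T)$ bits) is likewise derived from a short PRF seed. Because the modular soundness proof from the first stage only invokes properties (i)--(ii), it should go through essentially verbatim once the compressed-key object is shown to satisfy them, with the $\iO$ and puncturing hybrids slotted in where single-key mode switches occurred.

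Third, I would compress the prover's messages: the commitment string and its opening are each $\poly(T)$ bits, so instead of sending them the prover sends a short (e.g.\ Merkle) commitment and runs a succinct argument of knowledge for $\NP$ proving that it knows an opening, consistent with that commitment, that makes the (now succinct-message) Mahadev verifier accept --- in the spirit of Kilian~\cite{STOC:Kilian92}. Composing the three stages, together with the standard fact that the Mahadev/Hamiltonian verifier's acceptance predicate is checkable in time $\poly(\log T,\secp)$ given random access to the compressed keys and the committed strings, yields the claimed parameters (with the unavoidable $\tilde O(|x|)$ to read the input).

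The hard part will be soundness of the third stage: one must compose the knowledge extractor of the succinct $\NP$ argument with the already delicate \emph{quantum} soundness analysis of Mahadev's protocol. Concretely, the succinct argument of knowledge must be ``sufficiently composable'' --- its extractor should be straightline and minimally disturbing, so that after extracting the prover's intended long messages round by round, the extracted transcript is correctly distributed as input to Mahadev's soundness reduction and the mode-switching (and $\iO$) hybrids remain valid in the presence of the extractor. Pinning down the precise extraction notion that is simultaneously instantiable post-quantumly (from LWE) and plugs cleanly into Mahadev's argument is where I expect most of the technical weight to lie; a secondary subtlety is making the hybrid structure of the compressed-public-key security proof align with the key-by-key hybrids of the soundness proof without an exponential blowup.
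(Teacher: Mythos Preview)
Your three-stage plan matches the paper's approach: modular re-proof of Mahadev, verifier-message compression via $\iO$ plus puncturable PRFs (with the basis choice derived from a PRF seed), and prover-message compression via hash commitments plus round-by-round succinct arguments of knowledge; your identification of ``sufficiently composable'' (the paper's term is \emph{state-preserving}, instantiated via \cite{LMS21}) extraction as the crux of stage three is exactly right.

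One refinement worth flagging: your abstract properties (i)--(ii) are underspecified. Property (ii) as you state it is correctness, not security; what the proof actually needs is that for each index $j$, the adaptive hardcore bit and collapsing properties of $f_{\pk_j}$ hold \emph{even given all other secret keys} $\{\sk_i\}_{i\neq j}$, and that batch mode-indistinguishability holds given the secret keys at all unchanged indices. The paper does not splice $\iO$ hybrids into Mahadev's original argument but instead gives a genuinely new soundness proof (via ``protocol observables'' $X_i, Z_i$ and a teleportation-inspired extractor) engineered so that exactly these per-index-with-leakage properties are what get invoked; expect that new proof, rather than a surgical edit of \cite{FOCS:Mahadev18a}, to carry most of stage one's weight.
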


\begin{theorem}[Succinct Arguments for $\mathbf{QMA}$]\label{thm:main2:informal}
  Assuming the existence of a post-quantum secure indistinguishability obfuscation scheme {\em (iO)} and the post-quantum hardness of the learning with errors problem {\em (LWE)}, there is an interactive argument system for any $T$-time quantum computation on input $x$ and a $\poly(T)$-qubit witness, where 
  \begin{itemize}
      \item the prover is quantum and runs in time $\poly(T,\lambda)$, using polynomially many copies of the witness,\footnote{We inherit the need for polynomially-many copies of the witness from prior works. This is a feature common to all previous classical verification protocols, and even to the quantum verification protocol of \cite{FHM18}.}
      \item the verifier is classical and runs in time $\mathsf{poly}(\log T, \secp) + \tilde{O}(x)$, and
      \item the protocol uses $\mathsf{poly}(\log T, \secp)$ bits of classical communication.
  \end{itemize}
\end{theorem}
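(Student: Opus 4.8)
The plan is to instantiate, for $\mathbf{QMA}$, the three–step recipe advertised in the abstract: (i) start from a Mahadev-style classical-verifier protocol for $\mathbf{QMA}$ verification whose communication is $\poly(T,\secp)$; (ii) compress the verifier's long first message (the list of public keys) using a ``compressed public key'' object instantiated from $\iO$; and (iii) compress the \emph{prover}'s long messages by a generic transformation using a (sufficiently composable) post-quantum succinct argument of knowledge for $\NP$. For step (i), recall that the $2$-local Hamiltonian problem with Pauli ($XZ$) terms is $\mathbf{QMA}$-complete, so deciding $x\in L$ reduces to estimating whether the ground energy of an $n$-qubit, $m$-term Hamiltonian $H=\sum_j c_j P_j$ (with $n,m=\poly(T,\secp)$, and crucially with a \emph{succinct description} computable from $x$ in time $\poly(\log T,\secp)+\tilde O(|x|)$) is small or large. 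Using noisy trapdoor claw-free functions from LWE, the verifier runs Mahadev's measurement protocol: it sends $n$ public keys (types dictated by a randomly sampled term $P_j$), the prover coherently evaluates and measures to produce commitments $y_1,\dots,y_n$, the verifier sends a test/Hadamard challenge, and the prover replies with either standard-basis preimages or Hadamard-basis strings, from which (in the Hadamard case) the verifier decodes the committed outcomes with its trapdoors and checks the value of $P_j$. Completeness needs $\poly(T,\secp)$ fresh copies of the witness to drive the independent repetitions required for inverse-polynomial energy estimation and soundness amplification; soundness against QPT provers on NO instances is exactly the modular analysis of Mahadev's protocol reproved in the paper.

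For step (ii), I would replace the $n=\poly(T)$ public keys in the verifier's first message by a single short \emph{compressed key}: an obfuscated program that on input an index $i$ outputs the $i$-th public key, together with a short extraction trapdoor, from which the prover locally expands all keys. Here one invokes the compressed-public-key primitive (a generalization of constrained/programmable PRFs) instantiated from $\iO$, as set up earlier in the paper. The point is that the reproved soundness analysis of step (i) is deliberately structured so that each hybrid only punctures/programs a bounded number of public keys, which the primitive supports, and $\iO$ security moves between the real and punctured programs; thus soundness is preserved. After this step the verifier's first message is $\poly(\log T,\secp)$, but the prover's messages (the $y_i$'s, preimages, and Hadamard strings) are still $\poly(T)$.

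For step (iii), instead of sending its long messages in the clear, the prover hashes (Merkle-commits) each of them; for the local-Hamiltonian check it opens only the $O(\log T)$ positions touched by the sampled term $P_j$; and — to avoid the verifier having to check all $n$ preimages in a test round — it proves, via a post-quantum succinct argument of knowledge for $\NP$ (e.g.\ a suitably composable Kilian-type protocol from the collision-resistant hashing implied by LWE), that it \emph{knows} full message strings consistent with the announced Merkle roots that would make the step-(ii) verifier accept. Since the step-(ii) verifier's decision procedure runs in $\poly(T)$ time but is succinctly described given $x$, this is a succinctly-described $\NP$ statement of size $\poly(T)$, and the succinct argument of knowledge brings total communication to $\poly(\log T,\secp)$ and verifier runtime to $\poly(\log T,\secp)+\tilde O(|x|)$, while the prover still runs in $\poly(T,\secp)$ using $\poly(T,\secp)$ witness copies. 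This yields the stated parameters.

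The crux is soundness of the compiled protocol, and this is where the real work lies. Given a cheating QPT prover convincing the verifier on a NO instance, one runs the $\NP$ argument-of-knowledge extractor to recover message strings consistent with the prover's Merkle roots, and then argues these strings would make the step-(ii) verifier accept, contradicting step (ii). The obstacle is that this composition is against quantum adversaries: the extractor must be ``sufficiently composable'' with the quantum soundness reduction of steps (i)–(ii) — essentially straight-line / state non-disturbing — so that after extraction one can still invoke the collapsing and binding properties of the claw-free commitments underlying Mahadev soundness, and the extraction error must remain negligible across the (constant number of) compiled rounds. Pinning down the right abstract notion of post-quantum succinct argument of knowledge, and verifying that the reproved Mahadev soundness analysis is robust to it (including that the extracted classical strings genuinely pin down the prover's committed quantum state), is the main technical burden; by contrast, the $\mathbf{QMA}$-specific ingredients — the local-Hamiltonian encoding, the use of polynomially many witness copies, and amplification of the YES/NO energy gap — are routine given the $\mathbf{BQP}$ case.
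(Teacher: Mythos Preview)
Your three-step plan mirrors the paper's, and you correctly flag that the $\NP$ extractor must be state-preserving (the paper relies on \cite{LMS21} for exactly this; the \cite{FOCS:CMSZ21} extractor alone does not suffice). There is, however, a real gap in step (iii). In your compilation the prover Merkle-commits to $y$, receives the challenge bit, and only in the test branch runs an argument of knowledge; in the measurement branch it opens a few positions. To reduce soundness to the verifier-succinct protocol you must build, from a cheating $P^*$, a prover $P^{**}$ that outputs the \emph{full} string $y$ \emph{before} the challenge bit arrives. Your only extraction handle sits after the challenge, inside the test branch, so extracting $y$ forces $P^*$ down that branch; state-preservation then leaves $P^*$'s residual state close to a post-test-round state, not a pre-challenge state, and $P^{**}$ has no way to answer a subsequent challenge-$1$ query. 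The paper discusses precisely this obstruction (the naive ``single AoK at the end'' compiler fails) and fixes it structurally: it runs a separate AoK \emph{immediately after each} prover hash, proving only ``$\exists w:\hat y=h(w)$'' (no accept predicate), and defers the accept-predicate AoK to a final step after the verifier reveals its secret randomness $r$ (or, in the public-coin variant, with the predicate evaluated under FHE). With this layout the reduction extracts $y$ before the challenge and $z$ before the final check, and state-preservation keeps $P^*$ at the correct intermediate state each time.

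A secondary point: your step (i) uses a ``sample a random Hamiltonian term $P_j$'' verifier with selective opening, whereas the paper uses the \cite{FHM18}/\cite{TCC:ACGH20} verifier with a \emph{uniformly random} $N$-bit basis $h$, made succinct by replacing $h$ with $(\PRF_s(1),\dots,\PRF_s(N))$. Both are plausible, but the paper's choice interfaces cleanly with its measurement-protocol soundness definition (which quantifies over arbitrary basis strings) and avoids any per-term opening argument; in particular, the paper's $V_\mathsf{FHM}$ consumes all $N$ decoded outcomes, so selective opening is not an option there and the full AoK machinery is needed for both rounds.
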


\paragraph{A New Proof of Security for the \cite{FOCS:Mahadev18a} Protocol.}  One might hope to prove~\cref{thm:main1:informal,thm:main2:informal} by treating the Mahadev result as a ``black box'' and showing that \emph{any} (classical) interactive argument for quantum computations can be compressed into a succinct protocol via a suitable cryptographic compiler. This is especially appealing given the extremely technical nature of Mahadev’s security proof. Unfortunately, for reasons that will become clear in the technical overview, this kind of generic compilation seems unlikely to be achievable in our setting. Even worse, there does not appear to be any easily formalized property of the Mahadev protocol that would enable such a compilation. 

Instead, our solution consists of two steps. 

\begin{enumerate}[(1)]
    \item We build a modified variant of the \cite{FOCS:Mahadev18a} protocol and give an entirely self-contained proof of security. This modified protocol satisfies a few technical conditions that the original \cite{FOCS:Mahadev18a} does not; most prominently, the \emph{first verifier message} of our modified protocol is already succinct.
    \item We give a generic compiler that converts the protocol from Step (1) into a succinct argument system.
\end{enumerate} 

Our Step (1) also results in a self-contained proof of security of the original \cite{FOCS:Mahadev18a} protocol that is more modular and amenable to further modification and generalization, which we believe will be useful for future work. Our analysis builds upon \cite{FOCS:Mahadev18a} itself as well as an alternative approach described in Vidick's (unpublished) lecture notes~\cite{Vid20-course}. A concrete consequence of our new proof is that one of the two ``hardcore bit'' security requirements of the main building block primitive (``extended noisy trapdoor claw-free functions'') in \cite{FOCS:Mahadev18a} is not necessary. %

\paragraph{Additional Results.} Beyond our main result of succinct arguments for $\mathbf{BQP}$ and $\mathbf{QMA}$, we explore a number of extensions and obtain various new protocols with additional properties.
\begin{itemize}
    \item {\em Non-Interactive}: Although our protocols are not public-coin, we show how to modify them in order to apply the Fiat-Shamir transformation and round-collapse our protocols. As a result, we obtain designated-verifier non-interactive arguments for $\mathbf{BQP}$ (and the non-deterministic analog $\mathbf{QMA}$) with security in the quantum random oracle model (QROM).
    \item {\em Zero-Knowledge}: We show how to lift both variants of our protocol (interactive and non-interactive) to achieve zero-knowledge. We show a generic transformation based on classical two-party computation for reactive functionalities that makes our protocols simulatable. This transformation does not add any new computational assumption to the starting protocol.
    \item {\em Batch Arguments from LWE}: For the case of batch arguments, i.e., where the parties engage in the parallel verification of $n$ statements, 
    %$\mathcal{L}_1 \dots \mathcal{L}_n$, 
    we show a succinct protocol that only assumes the post-quantum hardness of LWE (without iO). In this context, succinctness requires that the verifier's complexity scales with the size of a \emph{single instance}, but is independent of $n$.
\end{itemize}

\paragraph{Prior Work.} As discussed above, Mahadev \cite{FOCS:Mahadev18a} constructs a \emph{non-succinct} argument system for \textbf{BQP}/\textbf{QMA} under LWE. The only prior work addressing \emph{succinct} classical arguments for quantum computation is the recent work of Chia, Chung and Yamakawa~\cite{TCC:ChiChuYam20}. \cite{TCC:ChiChuYam20} constructs a classically verifiable argument system for quantum computation in the following setting:

\begin{itemize}
    \item The prover and verifier share a $\poly(T)$-bits long, structured reference string (which requires a trusted setup to instantiate) along with a hash function $h$ (e.g. SHA-3).
    \item The ``online communication'' of the protocol is succinct ($\poly(\log T)$).
    \item Security is heuristic: it can be proved when $h$ is modeled as a random oracle, but the \emph{protocol description itself} explicitly requires the code of $h$ (i.e. uses $h$ in a non-black-box way).
\end{itemize}
We specifically note that when viewed in the \emph{plain model} (i.e., without setup), the verifier must send the structured reference string to the prover, resulting in a protocol that is \emph{not succinct}. We note that \cite{TCC:ChiChuYam20} was specifically optimizing for a \emph{two-message} protocol, but their approach seems incapable of achieving succinctness in the plain model even if further interaction is allowed. 

By contrast, our succinct interactive arguments are in the plain model and are secure based on well-formed cryptographic assumptions, and our succinct $2$-message arguments are proved secure in the QROM (and do not require a long common reference string).

Finally, we remark that our approach to achieving succinct arguments fundamentally (and likely necessarily) differs from \cite{TCC:ChiChuYam20} because we manipulate the ``inner workings'' of the \cite{FOCS:Mahadev18a} protocol; by contrast \cite{TCC:ChiChuYam20} makes ``black-box'' use of a specific soundness property of the \cite{FOCS:Mahadev18a} protocol (referred to as ``computational orthogonality'' by \cite{TCC:ACGH20}) and is otherwise agnostic to how the protocol is constructed. 

\paragraph{Acknowledgments.} AL is supported in part by a Charles M. Vest fellowship. GM is partially supported by the German Federal Ministry of Education and Research BMBF (grant 16K15K042, project 6GEM). TV is supported by AFOSR YIP award number FA9550-16-1-0495, a grant from the Simons
Foundation (828076, TV), MURI Grant FA9550-18-1-0161, the NSF QLCI program through grant number OMA-2016245 and the IQIM, an NSF Physics Frontiers Center (NSF Grant PHY-1125565) with support of the Gordon
and Betty Moore Foundation (GBMF-12500028). AL, VV, and LY are supported in part by DARPA under Agreement No. HR00112020023,
a grant from MIT-IBM Watson AI, a grant from Analog Devices, a Microsoft Trustworthy AI grant and the Thornton Family Faculty Research Innovation Fellowship. Any
opinions, findings and conclusions or recommendations expressed in this material are those of the author(s) and do
not necessarily reflect the views of the United States Government or DARPA. LY was supported in part by an NSF graduate research fellowship.

\section{Technical Overview} 
\label{sec:techoverview}

Our starting point is Mahadev's protocol for classical verification of quantum computation~\cite{FOCS:Mahadev18a}, the core ingredient of which is a {\em measurement protocol}.

\subsection{Recap: Mahadev's Measurement Protocol}

We begin by reviewing Mahadev's $N$-qubit measurement protocol. In Mahadev's protocol, a quantum prover holding an $N$-qubit quantum state $\brho$ interacts with a classical verifier, who wants to obtain the result of measuring $\brho$ according to measurement bases $h \in \{0,1\}^N$ ($h_i$ specifies a basis choice for the $i$th qubit, with $h_i = 1$ corresponding to the Hadamard basis and $h_i = 0$ corresponding to the standard basis).

\paragraph{Trapdoor Claw-Free Functions.} At the heart of the protocol is a cryptographic primitive known as an \emph{injective/claw-free trapdoor function} (a variant of lossy trapdoor functions~\cite{STOC:PeiWat08,C:PeiVaiWat08,STOC:GorVaiWic15}), which consists of two trapdoor function families $\mathsf{Inj}$ (for injective) and $\mathsf{Cf}$ (for claw-free), with the following syntactic requirements:\footnote{The actual syntactic requirements, described in Section~\ref{sec:mahadev-rtcf}, are somewhat more complex due to the fact that the functions in question are probabilistic.}
\begin{itemize}
    \item Each function in $\mathsf{Cf}\cup \mathsf{Inj}$ is indexed by a public-key $\pk$, where functions $f_{\pk} \in \mathsf{Inj}$ are injective and functions  $f_{\pk} \in \mathsf{Cf}$ are two-to-one. Moreover, $\pk$ can be sampled along with a secret key $\sk$ that enables computing $f^{-1}_{\pk}$ (i.e., $f^{-1}_{\pk}(y)$ consists of a single pre-image if $f_{\pk} \in \mathsf{Inj}$, and two pre-images if $f_{\pk} \in \mathsf{Cf}$).
    \item All functions in $\mathsf{Inj}$ and $\mathsf{Cf}$ have domain $\{0,1\}^{\ell+1}$ (for some $\ell$) and  the two pre-images of $y$ under $f_{\pk} \in \mathsf{Cf}$ are of the form $(0,x_0)$ and $(1,x_1)$ for some $x_0,x_1\in\{0,1\}^\ell$.
\end{itemize}
An injective/claw-free trapdoor function must satisfy the following security properties:\footnote{In fact, Mahadev's proof relies on two different hardcore bit properties, but we show in this work that only the adaptive hardcore bit property is needed.}
\begin{enumerate}
    \item \textbf{Claw-Free/Injective Indistinguishability.} A random function in $f_{\pk} \gets \mathsf{Cf}$ is computationally indistinguishable from a random function $f_{\pk} \gets \mathsf{Inj}$.
    \item \textbf{Adaptive Hardcore Bit.} Given $f_{\pk} \gets \mathsf{Cf}$, it is computationally infeasible to output both (1) a pair $(x,y)$ satisfying $f_{\pk}(x) = y$ and (2) a non-zero string $d \in \{0,1\}^{\ell+1}$ such that $d \cdot (1,x_0 \oplus x_1) = 0$, where $(0,x_0)$ and $(1,x_1)$ are the two preimages of $y$.\footnote{The full definition places a slightly stronger restriction on $d$ than simply being non-zero. However, this simplified version will suffice for this overview.}
\end{enumerate}

To build some intuition about the usefulness of such function families, notice that they can be used to commit to a single classical bit quite easily. The commitment key is a function $f_{\pk} \in \mathsf{Inj}$, and commitment to a bit $b$ is $y = f_{\pk}(b,x)$ for a random $x \in \{0,1\}^n$. It is not hard to verify that this is a statistically binding and computationally hiding commitment of $b$. On the other hand, if $f_{\pk} \in \mathsf{Cf}$, it is a statistically hiding and computationally binding commitment of $b$.\footnote{In particular, $f_{\pk} \in \mathsf{Cf}$ satisfies Unruh's definition of \emph{collapse-binding}~\cite{EC:Unruh16}.}

\paragraph{Protocol Description.} With this intuition in mind, we now describe a (slightly simplified version of) the Mahadev measurement protocol. Mahadev's protocol is a many-fold sequential repetition of the following one-bit challenge protocol.

  \begin{description}
  \item[$\mathbf{V} \rightarrow \mathbf{P}$:] The verifier samples $N$ pairs of functions and their trapdoors: for every $i\in[\ell]$, he samples $(\pk_i,\sk_i) \gets \mathsf{Inj}$ if $h_i =0$ and $(\pk_i,\sk_i) \gets \mathsf{Cf}$ if $h_i = 1$. He sends the $N$-tuple of public keys (function descriptions) \underline{$(\pk_1,\ldots,\pk_{N})$} to the prover.

  \item[$\mathbf{P} \rightarrow \mathbf{V}$:] The prover holds an $N$-qubit state $\brho$ on register $\RegB = \RegB_1 \otimes \cdots \RegB_N$. She initializes registers $\RegX = \RegX_1 \otimes \cdots \RegX_N$ so that each $\RegX_i$ is an $\ell$-qubit register in uniform superposition. Then she evaluates each $f_{\pk_i}$ on the $\ell+1$ qubit register $\RegB_i \otimes \RegX_i$ and measures the output to obtain $y_i$. Finally, she sends $\underline{(y_1,\ldots,y_{N})}$ to the verifier, keeping the registers $\RegB,\RegX$ for herself.
  
  \item[$\mathbf{V} \rightarrow \mathbf{P}$:] The verifier picks a random challenge bit \underline{$c\gets \{0,1\}$} and sends it to the prover. $c=0$ indicates a ``test round'' and $c=1$ indicates a ``measurement round''.
  \item[$\mathbf{P} \rightarrow \mathbf{V}$:] If $c=0$ (``test''), the prover measures each $\RegB_i \otimes \RegX_i$ in the standard basis, obtaining $(b_i,x_i) \in \{0,1\} \times \{0,1\}^\ell$, and sends \underline{$(b_i,x_i)_{i=1}^{N}$} to the verifier.
  
  If $c=1$ (``measure''), she measures each $\RegB_i \otimes \RegX_i$ in the Hadamard basis (i.e., she applies the Hadamard gate $H$ on each qubit and then measures in the standard basis), obtaining $d_i \in \{0,1\}^{\ell+1}$, and sends \underline{$(d_i)_{i=1}^{N}$} to the verifier. 
  \item[$\mathbf{V}$ (output):] If $c=0$ (``test''), $\mathbf{V}$ either accepts or rejects the interaction but outputs nothing. In particular, $\mathbf{V}$ checks that $f_{\pk_i}(b_i,x_i) = y_i$. If any of the checks fail, reject. 
  
  If $c=1$ (``measure''), proceed as follows. Let $R \subseteq [N]$ be the set of indices where $h_i = 1$, corresponding to Hadamard basis measurements, and let $S = [N] \setminus R$ be the set of indices corresponding to standard basis measurements.
  
  For each $i \in R$, $\mathbf{V}$ computes the two inverses $(0,x_{i,0})$ and $(1,x_{i,1})$ of $y_i$ (which are guaranteed to exist since $f_{\pk,i} \in \mathsf{Cf}$) using $\sk_i$. $\mathbf{V}$ sets $u_i := d \cdot (1,x_{i,0} \oplus x_{i,1})$ as the $i^{th}$ measurement outcome. For every $i \in S$, $\mathbf{V}$ ignores $d_i$, and sets $v_i$ to be the first bit of $f_{\pk_i}^{-1}(y_i)$, computed using the trapdoor $\sk_i$ (this is well-defined since $f_{\pk,i} \in \mathsf{Inj}$). Finally $\mathbf{V}$ outputs the $N$-bit string \underline{$(u,v) \in \{0,1\}^R \times \{0,1\}^S$}.
  \end{description}

Mahadev~\cite{FOCS:Mahadev18a} proves that if a malicious prover $\mathbf{P}^*$ passes the test round with probability $1$, then there exists an $N$-qubit quantum state $\brho^*$ --- \emph{independent} of the verifier's measurement basis $h$ --- such that the result of measuring $\brho^*$ according to $h$ is computationally indistinguishable from the verifier's $N$-bit output distribution in the measurement round.\footnote{This can be extended to provers that pass the test round with probability $1-\varepsilon$ by the gentle measurement lemma. In particular, an efficient distinguisher can only distinguish the verifier's output distribution from the result of measuring some $\brho^*$ with advantage $\poly(\varepsilon)$.} While her definition requires that such a $\brho^*$ \emph{exists}, Vidick and Zhang~\cite{EC:VidZha21} showed that Mahadev's proof steps implicitly define an extractor that efficiently produces $\brho^*$ using black-box access to $\mathbf{P}^*$.

\subsection{Defining a (Succinct) Measurement Protocol}

Our first (straightforward but helpful) step is to give an explicit definition of a \emph{commit-and-measure protocol} that abstracts the completeness and soundness properties of Mahadev's measurement protocol as established in \cite{FOCS:Mahadev18a,EC:VidZha21}. Roughly speaking, a commit-and-measure protocol is sound if, for any malicious prover $\mathbf{P}^*$ that passes the test round with probability $1$ and any basis choice $h$, there exists an efficient extractor that (without knowledge of $h$) interacts with prover and outputs an extracted state $\btau$ such that the following are indistinguishable:
\begin{itemize}
    \item the distribution of verifier outputs obtained in the measurement round from interacting with $\mathbf{P}^*$ using basis choice $h$, and
    \item the distribution of measurement outcomes obtained from measuring $\btau$ according to $h$.
\end{itemize}
This abstraction will be particularly helpful for reasoning about our eventual \emph{succinct} measurement protocols, which will necessitate modifying Mahadev's original protocol.

\paragraph{Can a Measurement Protocol be Succinct?}

Given the definition of a measurement protocol, an immediate concern arises with respect to obtaining succinct arguments: the verifier's \emph{input} to the measurement protocol -- the basis vector $h$ -- is inherently non-succinct. Since the number of qubits $N$ grows with the runtime of the BQP computation when used to obtain quantum verification \cite{FHM18}, this poses an immediate problem. 

Our solution to this problem is to only consider basis vectors $h$ that are \emph{succinct}; our formalization is that $h$ must be the truth table of an efficiently computable function $f: [\log N] \rightarrow \{0,1\}$. For any such $h$, we can represent the verifier's input as a circuit $C$ that computes $h$, removing the above obstacle. 

However, in order for there to be any hope of this idea working, it must be the case that measurement protocols for bases with succinct representations are still useful for constructing delegation for BQP. Fortunately, it has been shown \cite{TCC:ACGH20} that classically verifiable (non-succinct) arguments for BQP can be constructed by invoking Mahadev's measurement protocol (and, by inspection of the proof, any measurement protocol satisfying our definition) on a \emph{uniformly random basis string} $h \gets \{0,1\}^N$. Then, by computational indistinguishability, it is also possible to use a \emph{pseudorandom} string $h$ that has a succinct representation, i.e., $h = (\PRF_s(1), \hdots, \PRF_s(N))$ for some (post-quantum) pseudorandom function $\PRF$. 

Thus, we focus for the moment on constructing a succinct measurement protocol for $h$ with succinct representation, and return to the full delegation problem later. 

\subsection{Constructing a Verifier-Succinct Measurement Protocol}
\label{subsec:succ-measure-tech-overview}

Inspecting the description of the \cite{FOCS:Mahadev18a} protocol, there are three distinct reasons that the protocol is not succinct:

\begin{enumerate}
    \item The verifier's first message, which consists of $N$ TCF public keys, is non-succinct.
    \item The prover's two messages, consisting of the commitments $y_i$ and openings $z_i$ respectively, are non-succinct.
    \item The verifier's decision predicate, as it is a function of these commitments and openings, requires $\poly(N)$ time to evaluate.
\end{enumerate}

The latter two issues turn out to be not too difficult to resolve (although there is an important subtlety that we discuss later); for now, we focus on resolving (1), which is our main technical contribution. Concretely, we want to construct a measurement protocol for succinct bases $h$ where the verifier's first message is succinct.

\paragraph{Idea: Compress the Verifier's message with iO.} Given the problem formulation, a natural idea presents itself: instead of having $V$ send over $N$ i.i.d. public keys $\pk_i$, perhaps $V$ can send a succinct program $\PK$ that contains the description of $N$ public keys $\pk_i$ that are in some sense ``pseudoindependent!'' Using the machinery of obfuscation and the ``punctured programs'' technique \cite{STOC:SahWat14}, it is straightforward to write down a candidate program for this task: simply obfuscate the following code.

\begin{figure}[h!]
\centering
\begin{tabular}{|p{14cm}|}
\hline\ \\
{\bf Input}: index $i \leq N$\\
{\bf Hardwired Values}: Puncturable PRF seed $s$. Circuit $C$. \\
\begin{itemize}

\item Compute $\mode=C(i)$ and $r = \PRF_{s}(i)$.

\item Compute $(\pk_i, \sk_i) \gets \Gen(1^\secp, \mode; r)$. 

\item Output $\pk_i$. 
	
\end{itemize}
\ \\
\hline
\end{tabular}
\end{figure}
Here, $C$ is an efficient circuit with truth table $h$, and $\Gen(1^\secp, \mode)$ indicates sampling either from $\mathsf{Inj}$ or $\mathsf{Cf}$ depending on whether $h_i = C(i) = 0$ or $h_i = C(i) = 1$. 

Letting $\PK$ denote an obfuscation of the above program, $V$ could send $\PK$ to $P$ and allow the prover to compute each $\pk_i = \PK(i)$ on its own, and the protocol could essentially proceed as before, except that the verifier will have to expand its PRF seed $s$ into $(\sk_1, \hdots, \sk_N)$ in order to compute its final output. 

\paragraph{Problem: Proving Soundness.}

While it is not hard to describe this plausible modification to the \cite{FOCS:Mahadev18a} protocol that compresses the verifier's message, it is very unclear how to argue that the modified protocol is sound. The obfuscation literature has no shortage of proof techniques developed over the last 10 years, but since we have made a ``non-black-box'' modification of the \cite{FOCS:Mahadev18a} protocol, a deep understanding of the \cite{FOCS:Mahadev18a} proof of soundness is required in order to understand to what extent these techniques are compatible with the application at hand. 

We believe it \emph{should} be possible to incorporate punctured programming techniques into Mahadev's proof of soundness in \cite{FOCS:Mahadev18a} and conclude the desired soundness property of the new protocol. However, doing so would result in an extremely complex proof that would require the reader to verify the entirety of the \cite{FOCS:Mahadev18a} (already very complicated) original security proof with our modifications in mind.

\subsection{Proof of Soundness}\label{subsec:soundness-tech-overview}

Given the complicated nature of the \cite{FOCS:Mahadev18a} proof of soundness, we instead give a \emph{simpler} and \emph{more modular} proof of soundness for the \cite{FOCS:Mahadev18a} measurement protocol. Moreover, we give this proof for a generic variant of the \cite{FOCS:Mahadev18a} protocol where the prover is given an arbitrary representation $\PK$ of $N$ TCF public keys and show that precisely two properties of this representation $\PK$ are required in order for the proof to go through:

\begin{itemize}
    \item An appropriate generalization of the ``dual-mode'' property of individual TCFs must hold for $\PK$: for any two circuits $C_1, C_2$, it should be that $\PK_1$ generated from basis $C_1$ is computationally indistinguishable from $\PK_2$ generated from basis $C_2$. In fact, a stronger variant of this indistinguishability must hold: it should be the case that $\PK_1 \approx_c \PK_2$ even if the distinguisher is given all secret keys $\sk_j$ such that $C_1(j) = C_2(j)$.  
    \item For every $i$, the adaptive hardcore bit property of $f_{\pk_i}$ should hold \emph{even given} $\sk_j$ for all $j\neq i$.
\end{itemize}

Since these two properties are (essentially) all that is required for our proof to go through, in order to obtain a verifier-succinct protocol, it suffices to show that the obfuscated program $\PK$ above satisfies these two properties, which follows from standard techniques. 

Thus, we proceed by describing our new soundness proof for the \cite{FOCS:Mahadev18a} measurement protocol, which transparently generalizes to the verifier-succinct setting.

\paragraph{The ``Operational Qubits'' Approach.}

Let $P^*$ denote a prover that passes the test round (i.e., makes the verifier accept on the $0$ challenge) with probability $1$. Our goal is to show that the prover in some sense ``has an $N$-qubit state'' such that measuring this state in the $h$-bases produces the same (or an indistinguishable) distribution as the verifier's protocol output, which we will denote $D_{P^*, \mathrm{Out}}$. This $N$-qubit state should be efficiently computable from the prover's internal state $\ket{\psi}$; specifically, we use $\ket{\psi}$ to denote the prover's state after its first message $y$ has been sent.

In order to show this, taking inspiration from \cite{Vid20-course},\footnote{\cite{Vid20-course} gives a soundness proof for a variant of the \cite{FOCS:Mahadev18a} protocol, but in a qualitatively weaker setting. \cite{Vid20-course} only proves indistinguishability of $N$-qubit measurements that are either \emph{all} in the standard basis or \emph{all} in the Hadamard basis, and only proves indistinguishability with respect to \emph{linear} tests of the distribution (that is, \cite{Vid20-course} proves small-bias rather than full indistinguishability). Both of these relaxations are unacceptable in our setting, and achieving the latter specifically requires a different proof strategy.} we will proceed in two steps:

\begin{enumerate}
    \item Identify $N$ ``operational qubits'' within $\ket{\psi}$. That is, we will identify a set of $2N$ observables $Z_1, \hdots, Z_N, X_1, \hdots, \allowbreak X_N$ (analogous to the ``Pauli observables'' $\sigma_{z,1}, \hdots, \sigma_{z, N}, \sigma_{x, 1}, \hdots \sigma_{x, N}$) such that measuring $\ket{\psi}$ with these observables gives the outcome distribution $D_{P^*, \mathrm{Out}}$.
    
    \hspace{.7cm} Provided that these $2N$ observables roughly ``behave like'' Pauli observables with respect to $\ket{\psi}$ (e.g. satisfy the X/Z uncertainty principle), one could then hope to:
    \item  Extract a related state $\ket{\psi'}$ such that measuring $\ket{\psi'}$ in the \emph{actual} standard/Hadamard bases  matches the ``pseudo-Pauli'' $\{Z_j\}$, $\{X_i\}$,  measurements of $\ket{\psi}$ (and therefore $D_{P^*, \mathrm{Out}}$). 
\end{enumerate} 

\paragraph{Relating the Verifier's Output to Measuring $\ket{\psi}$.} Our current goal is to achieve Step (1) above. Let $\ket{\psi}$ denote $P^*$'s post-commitment state and let $U$ denote the unitary such that $P^*$'s opening is a measurement of $U \ket{\psi}$ in the Hadamard basis. 

Now, let us consider the verifier's output distribution. The $i$th bit of the verifier's output when $h_i = 1$ is defined to be $d\cdot (x_{0,i}\oplus x_{1,i})$ (where $d$ is the opening sent by the prover) of $U\ket{\psi}$ in the Hadamard basis. For each such $i$, we can define an observable $X_i$ characterizing this measurement, that \emph{roughly} takes the form
    \[ X_i \approx  U^\dagger (H_{\RegZ_i} \tensor \Id) \left(\sum_{d} (-1)^{ d \cdot (1, x_{0,i} \oplus x_{1,i}) } \ketbra{d}_{\RegZ_i} \tensor \Id_{\RegI, \{\RegZ_j\}_{j\neq i}} \right) (H_{\RegZ_i}\tensor \Id ) U.\]
    Here we have slightly simplified the expression for $X_i$ for the sake of presentation; the correct definition of $X_i$ (see \cref{subsec:XZ-definition}) must account for the case where $d$ is rejected by the verifier. To reiterate, the observable $X_i$ is a syntactic interpretation of the verifier's output $m_i$ as a function of $\ket{\psi}$. 
    
    On the other hand, when $h_i = 0$, the verifier's output $m_i$ is not \emph{a priori} a measurement of $\ket{\psi}$; indeed, the verifier ignores the prover's second message and just inverts $y_i$. However, under the assumption that the prover $P^*$ passes the test round with probability $1-\negl(\secp)$, making use of the fact that $f_{\pk_i}$ is injective, this $y_i$-inverse must be equal to what the prover \emph{would} have sent in the test round. This defines another observable on $\ket{\psi}$ that we call $Z_i$:
    \[ Z_i = \sum_{b, x} (-1)^b \ketbra{b, x}_{\RegZ_i} \tensor \Id_{\RegI, \{\RegZ_j\}_{j\neq i}}. \]
    
    Finally, note that the operator $Z_i$ syntactically makes sense even when $h_i = 1$. However, $X_i$ cannot even be \emph{defined} when $f_{\pk_i}$ is injective, corresponding to $h_i = 0$, since $X_i$ explicitly requires \emph{two} inverses of $y_i$. Therefore, from now on, we sample all $(\pk_i, \sk_i) \gets \mathsf{Cf}$ (forcing all TCFs to be 2-to-1). 
    
    This brings us to the punchline of this step: by invoking a computational assumption (the indistinguishability of $\mathsf{Cf}$ and $\mathsf{Inj}$), we can define observables $(X_i, Z_i)$ for all $i \in [N]$ such that for \emph{every} $i$ and \emph{every} basis choice $h$, the distribution resulting from measuring $\ket{\psi}$ with $X_i$ (resp. $Z_i$) matches the $i$th bit of the verifier's output distribution.
    
    With a little more work, one can actually show that the verifier's \emph{entire} output distribution in the $h$-basis is computationally indistinguishable from the following distribution $D_{P^*,\twotoone}$:
\begin{itemize}
    \item Sample keys $(\pk_i,\sk_i) \gets \mathsf{Cf}$. Run $P^*$ to obtain $y, \ket{\psi}$.
    \item For each $i$ such that $h_i = 0$, measure the first bit of the prover's $i$th response register in the standard basis to obtain (and output) a bit $b_i$.
    \item Measure $U\ket{\psi}$ in the Hadamard basis, obtaining strings $(d_1, \hdots, d_N)$.
    \item For each $i$ such that $h_i = 1$, compute (and output) $d_i \cdot (1,x_{0,i} \oplus x_{1,i})$. 
\end{itemize}
    
\paragraph{Aside: Why are these $Z_j$ and $X_i$ helpful?} As alluded to earlier, this approach is inspired by \emph{operational} definitions of ``having an $N$-qubit state,'' which consists of a state $\ket{\psi}$ and $2N$ ``pseudo-Pauli'' observables $Z_1, \hdots, Z_N, X_1, \hdots X_N$ that behave ``like Pauli observables'' on $\ket{\psi}$. For example, it is possible to prove that many of the ``Pauli group relations'' hold \emph{approximately} on these $X_i, Z_j$ with respect to $\ket{\psi}$, meaning that (for example)
\[ \bra{\psi} Z_i X_i Z_i + X_i \ket{\psi} = \negl(\secp)
\]
and 
\[ \bra{\psi} Z_j X_i Z_j - X_i \ket{\psi} = \negl(\secp)
\]
for $i\neq j$. In fact, these relations turn out to \emph{encode} the two basic properties of the TCF $f_{\pk_i}$: the adaptive hardcore bit property (encoded in the first relation) and that $f_{\pk_i}$ is indistinguishable from injective\footnote{Technically, the property encoded is the \emph{collapsing} of $f_{\pk_i}$, which is implied by (but not equivalent to) being indistinguishable from injective.} (encoded in the second relation)! We will not directly prove the relations here, but they are implicit in our full security proof and are the motivation for this proof strategy.

\paragraph{The Extracted State.}
Given these protocol observables $Z_1, \hdots, Z_N, X_1, \hdots, X_N$, it remains to implement Step (2) of our overall proof strategy: extracting a state $\ket{\psi'}$ whose standard/Hadamard measurement outcomes match $D_{P^*, \mathrm{Out}}$. At a high level, this is achieved by ``teleporting'' the state $\ket{\psi}$ onto a fresh $N$-qubit register in a way that \emph{transforms} the ``pseudo-Paulis" $\{X_i\}, \{Z_j\}$ into \emph{real} Pauli observables $\{\sigma_{x,i}\}, \{\sigma_{z,j}\}$. 

Fix a choice of $\{X_i,Z_i\},\ket{\psi} \gets \mathsf{Samp}$. For ease of notation, write $\RegH = \RegZ \otimes \RegI \otimes \RegU$ so that $\ket{\psi} \in \RegH$. We would like an efficient extraction procedure that takes as input $\ket{\psi} \in \RegH$ and generates an $N$-qubit state $\btau$ such that, roughly speaking, measuring $\ket{\psi}$ with $X/Z$ and measuring $\btau$ with $\sigma_X/\sigma_Z$ produce indistinguishable outcomes.

\paragraph{Intuition for the Extractor.} Before we describe our extractor, we first provide some underlying intuition. For an arbitrary $N$-qubit Hilbert space, let $\sigma_{x,i}$/$\sigma_{z,i}$ denote the  Pauli $\sigma_x$/$\sigma_z$ observable acting on the $i$th qubit. For each $r,s \in \{0,1\}^N$, define the $N$-qubit Pauli ``parity'' observables
\[\sigma_x(r) \coloneqq \prod_{i: r_i = 1} \sigma_{x,i} \;,\; \sigma_z(s) \coloneqq \prod_{i: r_i = 1} \sigma_{z,i}.\]

Suppose for a moment that $\ket{\psi} \in \RegH$ is \emph{already} an $N$-qubit state (i.e., $\RegH$ is an $N$-qubit Hilbert space) and moreover, that each $X_i$/$Z_i$ observable is simply the corresponding Pauli observable $\sigma_{x,i}$/$\sigma_{z,i}$. While these assumptions technically trivialize the task (the state already has the form we want from the extracted state), it will be instructive to \textbf{write down an extractor that ``teleports'' this state into another $N$-qubit external register. }

We can do this by initializing two $N$-qubit registers $\RegA_1 \otimes \RegA_2$ to $\ket{\phi^+}^{\otimes N}$ where $\ket{\phi^+}$ is the EPR state $(\ket{00} + \ket{11})/\sqrt{2}$ (the $i$th EPR pair lives on the $i$th qubit of $\RegA_1$ and $\RegA_2$). Now consider the following steps, which are inspired by the ($N$-qubit) quantum teleportation protocol
\begin{enumerate}
    \item Initialize a $2N$-qubit ancilla $\RegW$ to $\ket{0^{2N}}$, and apply $H^{\otimes 2N}$ to obtain the uniform superposition.
    \item Apply a ``controlled-Pauli'' unitary, which does the following for all $r,s \in \{0,1\}^N$ and all $\ket{\phi} \in \RegH \otimes \RegA_1$:
    \[\ket{r,s}_{\RegW} \ket{\phi}_{\RegH,\RegA_1} \rightarrow \ket{r,s}_{\RegW} (\sigma_x(r)\sigma_z(s)_{\RegH} \otimes \sigma_x(r)\sigma_z(s)_{\RegA_1})\ket{\phi}_{\RegH,\RegA_1}\] 
    \item Apply the unitary that XORs onto $\RegW$ the outcome of performing $N$ Bell-basis measurements\footnote{The Bell basis consists of the $4$ states $(\sigma_x^a \sigma_z^b \otimes \Id)\ket{\phi^+}$ for $a,b \in \{0,1\}$ on $2$ qubits.} on $\RegA_1 \otimes \RegA_2$ onto $\RegW$, i.e., for all $u,v,r,s \in \{0,1\}^N$: \[\ket{u,v}_{\RegW}(\sigma_x(r)\sigma_z(s) \otimes \Id)_{\RegA_1,\RegA_2}\ket{\phi^+}^{\otimes N}_{\RegA_1,\RegA_2} \rightarrow \ket{u \oplus r ,v \oplus s}_{\RegW}(\sigma_x(r)\sigma_z(s) \otimes \Id)_{\RegA_1,\RegA_2}\ket{\phi^+}^{\otimes N}_{\RegA_1,\RegA_2}.\]
    Finally, discard $\RegW$.
\end{enumerate}
One can show that the resulting state is
\begin{equation}
\label{eqn:trivial-extracted-state}
\frac{1}{2^N} \sum_{r,s \in \{0,1\}^N} (\sigma_x(r)\sigma_z(s) \otimes \sigma_x(r) \sigma_z(s) \otimes \Id)\ket{\psi}_{\RegH} \ket{\phi^+}_{\RegA_1,\RegA_2}=\ket{\phi^+}_{\RegH,\RegA_1} \ket{\psi}_{\RegA_2},
\end{equation}
where $\ket{\psi}$ is now ``teleported'' into the $\RegA_2$ register. 

\paragraph{The Full Extractor.} To generalize this idea to the setting where $\ket{\psi} \in \RegH$ is an arbitrary quantum state and $\{X_i,Z_i\}_i$ are an arbitrary collection of $2N$ observables, \emph{we simply replace each $\sigma_x(r)$ and $\sigma_z(s)$ acting on $\RegH$ above with the corresponding parity observables $X(r)$, $Z(s)$}, defined analogously (for $r,s \in \{0,1\}^N$ as
\[ Z(s) = \prod_{i=1}^N Z_i^{s_i}
\hspace{.1in} \mbox{and}
\hspace{.1in} X(r) = \prod_{i=1}^N X_i^{r_i}. 
\]
The rough intuition is that as long as the $\{X_i\}$ and $\{Z_i\}$ observables ``behave like'' Pauli observables with respect to $\ket{\psi}$, the resulting procedure will ``teleport'' $\ket{\psi}$ into the $N$-qubit register $\RegA_2$.

\paragraph{Relating Extracted State Measurements to Verifier Outputs.}

With the extracted state defined to be the state on $\RegA_2$ after performing the ``generalized teleportation'' described above, it remains to prove that the distribution $D_{P^*,\mathrm{Ext}}$ resulting from measuring the extracted state on $\RegA_2$ in the $h$-bases is indistinguishable from $D_{P^*,\twotoone}$.

One can show (by a calculation) that $D_{P^*,\mathrm{Ext}}$ is the following distribution (differences from $D_{P^*,\twotoone}$ in \textcolor{red}{red})
\begin{enumerate}
    \item Sample keys $(\pk_i,\sk_i) \gets \mathsf{Cf}$. Run $P^*$ to obtain $y, \ket{\psi}$.
    \item For each $i$ such that $h_i = 0$, measure the first bit of the prover's $i$th response register in the standard basis to obtain (and output) a bit $b_i$.
    \item \textcolor{red}{For each $i$ such that $h_i = 1$, flip a random bit $w_i$ and apply the unitary $Z_i^{w_i}$.}
    \item Measure $U\ket{\psi}$ in the Hadamard basis, obtaining strings $(d_1, \hdots, d_N)$.
    \item For each $i$ such that $h_i = 1$, compute (and output) $d_i \cdot (1,x_{0,i} \oplus x_{1,i}) \; \textcolor{red}{\oplus \; w_i}$.
\end{enumerate}

We prove indistinguishability between the $N$-bit distributions $D_{P^*,\mathrm{Ext}}$ and $D_{P^*,\twotoone}$ by considering $N$ hybrid distributions, where the difference between Hybrid $j-1$ and Hybrid $j$ is:
\begin{itemize}
    \item an additional application of the unitary $Z_j$ in Item 3, and
    \item an additional XOR of $e_j$ (the $j$th standard basis vector) in Item 5.
\end{itemize}

To conclude the soundness proof, we show that Hybrid $j-1$ and Hybrid $j$ in the following three steps.

\begin{itemize}
    \item First, we prove that the marginal distributions of Hybrid $(j-1)$ and Hybrid $j$ on $N \setminus \{j\}$ are indistinguishable due to the collapsing property of $f_{\pk_{j}}$. Intuitively this holds because the marginal distributions on $N \setminus \{j\}$ only differ by the application of $Z_j$, which is undetectable by collapsing.
    \item By invoking an elementary lemma about $N$-bit indistinguishability, the task reduces to proving a $1$-bit indistinguishability of the $j$th bit of Hybrid $(j-1)$ and Hybrid $j$, conditioned on an efficiently computable property of the marginal distributions on $N \setminus \{j\}$.
    \item Finally, we show that the indistinguishability of the $j$th bit holds due to the adaptive hardcore bit property of $f_{\pk_{j}}$. At a very high level, the above $j$th bit property involves a measurement of $X_j$, and the two hybrids differ in whether a random $Z_j^b$ is applied before $X_j$ is measured; in words, this exactly captures the adaptive hardcore bit security game.
    
    We refer the reader to \cref{subsec:measurement-indistinguishability} for a full proof of indistinguishability.

\end{itemize}

\subsection{From a Verifier-Succinct Measurement Protocol to Succinct Arguments for BQP}

Using \cref{subsec:succ-measure-tech-overview,subsec:soundness-tech-overview}, we have constructed a \emph{verifier-succinct} measurement protocol, for succinctly represented basis strings, with a single bit verifier challenge. What remains is to convert this into a (fully) succinct argument system for BQP (or QMA). This is accomplished via the following transformations:

\begin{itemize}
    \item Converting a measurement protocol into a quantum verification protocol. As described earlier, this is achieved by combining the \cite{FHM18} protocol for BQP verification with a limited quantum verifier (as modified by \cite{TCC:ACGH20}) with our measurement protocol, using a PRF to generate a pseudorandom basis choice instead of a uniformly random basis choice for the \cite{FHM18,TCC:ACGH20} verifier. This results in a verifier-succinct argument system for BQP/QMA with constant soundness error.
    \item Parallel repetition to reduce the soundness error. This follows from the ``computational orthogonal projectors'' property of the 1-bit challenge protocol and follows from \cite{TCC:ACGH20} (we give a somewhat more abstract formulation of their idea in \cref{appendix:cop}). This results in a verifier-succinct argument system for BQP/QMA with negligible soundness error.
    \item Converting a verifier-succinct argument system into a fully succinct argument system. We elaborate on this last transformation below, as a few difficulties come up in this step.
\end{itemize}

Assume that we are given a (for simplicity, 4-message) verifier-succinct argument system for BQP/QMA. Let $m_1, m_2, m_3, m_4$ denote the four messages in such an argument system. In order to obtain a fully succinct argument system, we must reduce (1) the prover communication complexity $|m_2| + |m_4|$, and (2) the runtime of the verifier's decision predicate.

The first idea that comes to mind is to ask the prover to send short (e.g. Merkle tree) commitments $\sigma_2$ and $\sigma_4$ of $m_2$ and $m_4$, respectively, instead of sending $m_2$ and $m_4$ directly. At the end of the interaction, the prover and verifier could then engage in a succinct interactive argument (of knowledge) for a (classical) NP statement that ``the verifier would have accepted the committed messages underlying $\sigma_2$ and $\sigma_4$''. One could potentially employ Kilian's succinct interactive argument of knowledge for NP which was recently shown to be post-quantum secure under the post-quantum LWE assumption~\cite{FOCS:CMSZ21}.

There are a few issues with this naive idea. First of all, the verifier's decision predicate is \emph{private} (it depends on the secret key $\SK$ in the measurement protocol and the PRF seed for its basis), so the NP statement above is not well-formed. One reasonable solution to this issue is to simply have the verifier send this secret information $\mathsf{st}$ after the verifier-succinct protocol emulation has occurred and before the NP-succinct argument has started. For certain applications (e.g. obtaining a non-interactive protocol in the QROM) we would like to have a \emph{public-coin} protocol; this can be achieved by using fully homomorphic encryption to encrypt this secret information in the \emph{first} round rather than sending it in the clear in a later round. For this overview, we focus on the private-coin variant of the protocol.

Now, we can indeed write down the appropriate NP relation\footnote{Note that the verifier also takes as input the QMA instance, but we suppress it here for clarity.} 
\begin{align*} \mathcal{R}_V = & \{ ((h,m_1,\sigma_2,m_3,\sigma_4,\mathsf{st}), (m_2,m_4)): \sigma_2 = h(m_2) \mbox{ and } \\ 
& \sigma_4 = h(m_4) \mbox{ and } V(\mathsf{st},m_1,m_2,c,m_4) = \mathsf{accept} \}
\end{align*}
and execute the aforementioned strategy.
However, this construction turns out not to work. Specifically, it does not seem possible to \emph{convert} a cheating prover $P^*$ in the above fully succinct protocol into a cheating prover $P^{**}$ for the verifier-succinct protocol; for example, $P^{**}$ needs to be able to produce a message $m_2$ given only $m_1$ from the verifier; meanwhile, the message $m_1$ can only be extracted from $P^*$ by repeatedly rewinding $P^*$'s \emph{last} message algorithm, which requires the verifier's secret information $\mathsf{st}$ as input! This does not correspond to a valid $P^{**}$, who does not have access to $\mathsf{st}$ when computing $m_2$.

Our refined compiler is to execute several arguments of knowledge: one right after the prover sends $\sigma_2$, proving knowledge of $m_2$; another one right after she sends $\sigma_4$, proving knowledge of $m_4$ (both before receiving the secret state $\mathsf{st}$ from the verifier); and a third one for the relation $\mathcal{R}_V$ described above. The first two arguments of knowledge are for the relation 
$$ \mathcal{R}_H = \{(h,\sigma),m): h(m) = \sigma\}$$
This allows for {\em immediate extraction} of $m_2$ and $m_3$ and appears to clear the way for a reduction between the verifier-succinct and fully succinct protocol soundness properties. 

However, there is one remaining problem: the argument-of-knowledge property of Kilian's protocol proved by \cite{FOCS:CMSZ21} is \emph{insufficiently composable} to be used in our compiler. They demonstrate an extractor for Kilian's protocol that takes any quantum cheating prover that convinces the verifier and extracts a witness from them. However, their post-quantum extractor might significantly disturb the prover's state, meaning that once we extract $m_2$ above, we may not be able to continue the prover execution in our reduction. 

Fortunately, a recent work \cite{LMS21} shows that a slight variant of Kilian's protocol is a succinct argument of knowledge for NP satisfying a composable extraction property called ``state-preservation.'' This security property is exactly what is required for our compiler to extract a valid cheating prover strategy $P^{**}$ for the verifier-succinct argument given a cheating prover $P^*$ for the compiled protocol. A full discussion of this is given in \cref{sec:final}.

This completes our construction of a succinct argument system for BQP (and QMA). We discuss additional results (2-message protocols, zero knowledge, batch arguments) in \cref{sec:additional}.

\ifsubmission
\else
\newcommand{\Good}{\mathsf{Good}}

\section{Preliminaries}
\label{sec:preliminaries}

\subsection{Quantum Information}

Let $\RegH$ be a finite-dimensional Hilbert space. A pure state is a unit vector $\ket{\psi} \in \RegH$.  Let $\mathrm{D}(\RegH)$ denote the set of all positive semidefinite operators on $\RegH$ with trace~$1$. A mixed state is an operator $\bm{\rho} \in \mathrm{D}(\RegH)$, and is often called a \emph{density matrix}. We sometimes divide $\RegH$ into named \emph{registers} written in uppercase calligraphic font, e.g., $\RegH = \RegA \otimes \RegB \otimes \RegC$. 

%\thomas{edited this. If $\RegH$ is of dimension $2^\ell$, that doesn't imply a well-defined qubit structure. For this we need to specify an explicit isomorphism with $(\mathbb{C}^{2})^{\otimes \ell}$}
For a density matrix $\bm{\rho} \in\mathrm{D}(\RegH)$, where $\RegH\simeq (\mathbb{C}^2)^{\otimes \ell}$, we sometimes use the shortcut $M(h, \bm{\rho})$ to denote the distribution resulting from measuring each qubit of $\bm{\rho}$ (where the qubits are specified by the isomorphism $\RegH\simeq (\mathbb{C}^2)^{\otimes \ell}$) in the basis determined by $h\in\{0,1\}^\ell$. By convention, $h_i = 0$ corresponds to measuring the $i$-th register in the standard basis $\{\ket{0},\ket{1}\}$ and $h_i = 1$ corresponds to measuring the $i$-th register in the Hadamard basis $\{\ket{+},\ket{-}\}$.

An \emph{observable} is represented by a Hermitian operator $O$ on $\RegH$. In particular, any observable $O$ can be written in the form $\sum_i \lambda_i \Pi_i$ where $\{\lambda_i\}$ are real numbers and $\sum_i \Pi_i = \Id$. The measurement corresponding to an observable $O$ is the projective measurement $\{\Pi_i\}$ with corresponding outcomes $\{\lambda_i\}$. A \emph{binary observable} satisfies the additional requirement that $O^2 = \Id$. Notice that for any binary observable, $O$ is a unitary matrix with eigenvalues in $\{1,-1\}$. In this case we sometimes treat the outcomes as bits through the usual correspondence $1\to 0$, $-1\to 1$. 

Given a binary observable $O$, we define its corresponding \emph{projection operators} $O^+ = \frac 1 2 (\Id + O)$ and $O^- = \frac 1 2 (\Id - O)$. $O^+$ and $O^-$ correspond to projecting onto the $+1$ and $-1$ eigenspaces of $O$, respectively, and thus form a binary projective measurement.

\paragraph{The Class $\QMA$.} 
A language $\cL = (\cL_{\text{yes}}, \cL_{\text{no}})$ is in $\QMA$ if and only if there is a uniformly generated family of polynomial-size quantum circuits $\mathcal{V} = \{V_\lambda\}_{\lambda \in \mathbb{N}}$ such that for every $\lambda$, $V_\lambda$ takes as input a string $x \in \{0,1\}^\lambda$ and a quantum state $\ket{\phi}$ on $p(\lambda)$ qubits and returns a single bit and moreover the following conditions hold.
\begin{itemize}
    \item For all $x \in \cL_{\text{yes}}$ of length $\lambda$, there exists a quantum state $\ket{\psi}$ on at most $p(\lambda)$ qubits such that the probability that $V_\lambda$ accepts $(x, \ket{\phi})$ is at least $2/3$. We denote the (possibly infinite) set of quantum states (which we will also refer to as quantum witnesses) that make $V_\lambda$ accept $x$ by $\cR(x)$\thomas{this might not be used if we don't include any ``proof of knowledge'' aspects}.
    \item For all $x \in \cL_{\text{no}}$ of length $\lambda$, and all quantum states $\ket{\psi}$ on at most $p(\lambda)$ qubits, it holds that $V_\lambda$ accepts on input $(x,\ket{\psi})$ with probability at most $1/3$.
\end{itemize}

\subsection{Black-Box Access to Quantum Algorithms}
\label{subsec:black-box}

Let $A$ be a polynomial-time quantum algorithm with internal state $\brho \in \mathrm{D}(\RegI)$ that takes a classical input $r$ and produces a classical output $z$. Without loss of generality, the behavior of $A$ can be described as follows:
\begin{enumerate}
    \item Apply an efficient \emph{classical} algorithm to $r$ to generate the description of a unitary $U(r)$.
    \item Initialize registers $\RegZ \otimes \RegI$ to $\ketbra{0}_{\RegZ} \otimes \brho_{\RegI}$.
    \item Apply $U(r)$ to $\RegZ \otimes \RegI$, measure $\RegZ$ in the computational basis, and return the outcome $z$.
\end{enumerate}

A quantum oracle algorithm $S^A$ with \emph{black-box access} to $(A,\brho)$ does not have direct access to the adversary's \emph{internal registers} $\RegI$, and can only operate on the state $\brho \in \mathrm{D}(\RegI)$ by applying $U(r)$ or $U(r)^\dagger$ for any $r$. In more detail, black-box access to $(A,\brho)$ means the following:  
\begin{itemize}
    \item The registers $\RegZ \otimes \RegI$ are initialized to $\ketbra{0}_{\RegZ} \otimes \brho_{\RegI}$.
    \item Once the $\RegZ \otimes \RegI$ registers are initialized, the algorithm is permitted to perform arbitrary operations on the $\RegZ$ register, but can only act on the $\RegI$ registers by applying $U(r)$ or $U(r)^\dagger$ for any $r$. We explicitly permit the $U(r)$ and $U(r)^\dagger$ gates to be controlled on any external registers (i.e., any registers other than the registers $\RegZ \otimes \RegI$ to which $U(r)$ is applied). 
\end{itemize}

We note that this definition is consistent with the notions of interactive quantum machines and oracle access to an interactive quantum machine used in e.g.~\cite{unruh2012quantum} and other works on post-quantum zero-knowledge. 

\paragraph{The Binary Input Case.} Following~\cite{FOCS:Mahadev18a}, in the special case where $r \in \{0,1\}$, it will be convenient to re-define the internal state to be $\brho \coloneqq U(0)(\ketbra{0}_{\RegZ} \otimes \brho'_{\RegI}) U(0)^\dagger$ (where $\brho'_{\RegI} \in \mathrm{D}(\RegI)$ denotes the ``original'' internal state), so that the behavior of $A$ on $r = 0$ is to simply measure $\RegZ$ in the computational basis, and on $r = 1$ it applies the unitary $U \coloneqq U(1)U(0)^\dagger$ to its state and then measures the $\RegZ$ register. Notice that in this case, the internal state is technically on $\RegZ \otimes \RegI$ instead of just $\RegI$. Thus, black-box access to a quantum algorithm with binary input is formalized as follows:
\begin{itemize}
    \item The registers $\RegZ \otimes \RegI$ are initialized to $\brho \coloneqq U(0)(\ketbra{0}_{\RegZ} \otimes \brho'_{\RegI}) U(0)^\dagger$.
    \item Once the $\RegZ \otimes \RegI$ registers are initialized, the algorithm is permitted to perform arbitrary operations on the $\RegZ$ register, but can only act on the $\RegI$ registers by applying (possibly controlled) $U$ or $U^\dagger$ gates.
\end{itemize}
In this special case, an algorithm with black-box access to $A$ is denoted $S^{U,\brho}$. 

We remark that these definitions are tailored to the two-message challenge-response setting, whereas the protocols we consider in this paper have more rounds of interaction. However, our analysis will typically focus on a single back-and-forth round of interaction (e.g., the last two messages of the~\cite{FOCS:Mahadev18a} protocol), so $\brho$ will be the intermediate state of the interactive algorithm right before the next challenge is sent.\footnote{In the multi-round setting, ``re-defining'' the intermediate state to be $\brho = U(0)(\ketbra{0}_{\RegZ} \otimes \brho'_{\RegI})U(0)^\dagger$ can be implemented by replacing any unitary $W$ applied in the previous round with $U(0)W$; this follows the conventions used in~\cite{FOCS:Mahadev18a}.} Moreover, the unitaries $\{U(r)\}_r$ can be treated as independent of the (classical) protocol transcript before challenge $r$ is sent, since we can assume this transcript is saved in $\brho$. 

\subsection{Interactive Arguments}

In what follows we define the notion of an interactive argument for $\QMA$ languages.  We denote such arguments by $(P,V)$, and denote the output bit of the verifier by $\mathsf{Out}(P, V)$.
\begin{definition}\label{def:interactive-argument}
An interactive argument $(P,V)$ for a language $\cL=(\cL_\mathsf{yes},\cL_\mathsf{no})\in\QMA$ with relation $\cR(x)$ is a (classical) 2-party interactive protocol between a QPT prover $P$ and a p.p.t.\ verifier $V$, with the following completeness and soundness guarantees:

\paragraph{Completeness.}
For all $\lambda\in \mathbb{N}$, there exists a polynomial $k =k(\secp)$ such that for all $x \in \cL_{\text{yes}}$, and all $\ket{\phi}\in\cR(x)$, it holds that
\[
\Pr\left[\mathsf{Out}\left(P(\ket{\phi}^{\otimes k(\lambda)},x), V(x)\right) = 1\right] \geq 1 - \negl(\secp).
\]

\paragraph{Computational Soundness.}
For all $\lambda\in \mathbb{N}$, all $x \in \cL_{\mathsf{no}}$, and all non-uniform QPT provers $P^*$,
it holds that
\[
\Pr\left[\mathsf{Out}(P^*(x), V(x)) = 1\right] \leq \negl(\secp).
\]
\end{definition}

\paragraph{Batch Arguments.} We also consider a sub-class of interactive arguments where the prover simultaneously engages the verifier on $n$ sub-instances $(x_1, \dots, x_n)$, where each $x_i$ is supposed to be a Yes-instance of a fixed language $\mathcal L_i$. We require the following notion of (computational) soundness.

\begin{definition}[Soundness]\label{def:somewhere-soundness}
An interactive argument $(P, V)$ for a batch language $\cL = \cL_1 \times \hdots \times \cL_n \in\QMA$ with relation $\cR(x)$ is sound if for all $\lambda\in \mathbb{N}$, all polynomials $n=n(\secp)$, all indices $i\in[n]$, all statements $(x_1, \dots, x_n)$, where $x_i \in \cL_{\mathsf{no}}$, and all non-uniform QPT provers $P^*$, it holds that
\[
\Pr\left[\mathsf{Out}(P^*(x_1, \dots, x_n), V((x_1, \dots, x_n))) = 1\right] \leq \negl(\secp).
\]
\end{definition}

\subsection{Computational Indistinguishability}
Two classical distribution ensembles $\{(X^{(\secp)}, Y^{(\secp)})\}_\secp$ are said to be post-quantum computationally indistinguishable if for every non-uniform QPT algorithm $A = \{(A^{(\secp)}, \brho^{(\secp)})\}_{\secp}$ (that outputs a bit $b$), we have that \[\left| \mathbf \E\left[A^{(\secp)}(X^{(\secp)},\brho^{(\secp)})\right] - \mathbf \E\left[A^{(\secp)}(Y^{(\secp)},\brho^{(\secp)})\right]\right| = \negl(\secp).\]

Two quantum state ensembles $\{\brho_0^{(\secp)}, \brho_1^{(\secp)}\}_\secp$ are said to be \emph{computationally indistinguishable} if for every non-uniform QPT algorithm $A = \{A^{(\secp)}, \brho^{(\secp)} \}$ (that outputs a bit $b$), we have that 
\[\Big| \mathbf \E\left[A^{(\secp)}(\brho^{(\secp)},\brho^{(\secp)}_0)\right] - \mathbf \E\left[A^{(\secp)}(\brho^{(\secp)},\brho^{(\secp)}_1)\right]\Big| = \negl(\secp).
\]

Equivalently,  $\{\brho_0^{(\secp)}, \brho_1^{(\secp)}\}_\secp$ are computationally indistinguishable if for every efficiently computable non-uniform binary observable ($R, \bsigma)$, we have that 
\[\Big|\Tr(R (\brho_0 \tensor \bsigma) ) - \Tr(R (\brho_1 \tensor \bsigma)) \Big| = \negl(\secp).
\]
We will occasionally use the notation $\brho_0 \approx_c \brho_1$ to denote computational indistinguishability of $\{\brho_0^{(\secp)}, \brho_1^{(\secp)}\}_\secp$. 

More generally, we use $(T(\secp), \varepsilon(\secp))$-indistinguishability to denote computational indistiguishability as above where the distinguisher is allowed to run in time $T$ and the advantage is required to be at most $\varepsilon$.

\subsection{Mahadev Randomized TCFs}
\label{sec:mahadev-rtcf}

In this section, we define the cryptographic primitive used by Mahadev~\cite{FOCS:Mahadev18a} to obtain a (non-succinct) delegation scheme for $\QMA$ with classical verification. The primitive is closely related to Regev encryption \cite{STOC:Regev05} and LWE-based ``lossy'' trapdoor functions \cite{STOC:PeiWat08,C:PeiVaiWat08,STOC:GorVaiWic15}, but makes use of special-purpose structure relevant for quantum functionality. Most of this special-purpose structure, in particular, the ``adaptive hardcore bit'', was introduced in the work of Brakerski, Christiano, Mahadev, Vazirani and Vidick~\cite{FOCS:BCMVV18}, but \cite{FOCS:Mahadev18a} further requires ``dual-mode key generation'' in addition to the \cite{FOCS:BCMVV18} properties. Given the numerous special-purpose requirements, we refer to the primitive as ``Mahadev randomized trapdoor claw-free functions (rTCFs).''\footnote{Actually, \cite{FOCS:Mahadev18a} requires an extra (second) hardcore bit property (Property 2 of Definition 4.4 in \cite{FOCS:Mahadev18a}) that we drop from our definition, as our proof does not require it.} 
\begin{definition}
  \label{def:clawfree}
  A Mahadev randomized trapdoor claw-free function family (Mahadev rTCF) $\clawfree$ is described by a tuple of efficient classical algorithms $(\Gen,\Eval,\Invert, \Check, \Good)$ with the following syntax:
  \begin{itemize}
      \item $\Gen(1^\lambda,\mathsf{mode})$ is a dual-mode PPT key generation algorithm that takes as input a security parameter $\secp$ in unary, and a bit $\mathsf{mode}\in\{0,1\}$,  and it outputs a public key $\pk$ and a private key $\sk$. The description of the public key implicitly defines a domain of the form $\{0,1\}\times\mathcal{D}_{\pk}$ for the randomized function $f_{\pk}$. We view $\mathcal{D}_{\pk}$ as an explicit (efficiently verifiable and samplable) subset of $\{0,1\}^{\ell(\secp)}$, so that applying bit operations to elements of $\mathcal{D}_{\pk}$ is well-defined. 
      
      In our context, $\mode=0$ samples keys for an {\em injective} function and $\mode=1$ samples keys for  a {\em two-to-one} function.  For the sake of readability, we use a descriptive notation by which $\mode\in \{\inj,\twotoone\}$, where $\mode=\inj$ corresponds to $\mode=0$ and $\mode=\twotoone$ corresponds to $\mode=1$.
      \item $\Eval(\pk,b,\vecx)$ is a (possibly probabilistic) algorithm that takes as input a public key $\pk$, a bit $b\in\{0,1\}$ and an element $\vecx \in \mathcal{D}_{\pk}$, and outputs a string $\vecy$ with distribution $\chi$.
      \item $\Invert(\mode,\sk,\vecy)$ is a deterministic algorithm that takes as input $\mode\in\{\inj,\twotoone\}$, a secret key $\sk$, and an element $\vecy$ in the range.  If $\mode=\inj$ then it outputs a pair $(b,\vecx)\in\{0,1\}\times\mathcal{D}_\pk$ or $\bot$. If $\mode=\twotoone$ then it outputs two pairs $(0,\vecx_0)$ and $(1,\vecx_1)$ with $\vecx_0,\vecx_1\in \mathcal{D}_\pk$, or $\bot$. 
      \item $\Check(\pk,b,\vecx,\vecy)$ is a deterministic algorithm takes as input a public key $\pk$, a bit $b\in\{0,1\}$, an element $\vecx\in \mathcal{D}_\pk$, and an element $\vecy$ in the range, and it outputs a bit.
      \item $\Good(\vecx_0, \vecx_1, \vecd)$ is a deterministic poly-time algorithm that takes as input two domain elements $\vecx_0, \vecx_1 \in \mathcal{D}_{\pk}$ and a string $\vecd \in \{0,1\}^{\ell + 1}$. It outputs a bit that characterizes membership in a set that we call 
      \[\Good_{\vecx_0, \vecx_1} \coloneqq \{\vecd : \Good(\vecx_0,\vecx_1,\vecd) = 1\}.\]
      Moreover, we stipulate that $\Good(\vecx_0, \vecx_1, \vecd)$ ignores the first bit of $\vecd$.\footnote{We depart slightly from notation in prior work, which defines $\vecd$ to be an element of $\{0,1\}^\ell$ (corresponding to the last $\ell$ bits of our $\vecd$).} 
      
  \end{itemize}
  We require that the following properties are satisfied.
  \begin{enumerate}
      \item \textbf{\em Correctness:} 
      \begin{enumerate}
          \item 
      For \emph{all}
      ~$(\pk,\sk)$ in the support of $\Gen(\inj,1^\secp)$: For every $b\in\{0,1\}$, every  $\vecx\in\mathcal{D}_\pk$, and every $\vecy\in{\sf Supp}(\Eval(\pk,(b,\vecx)))$, 
      $$ \Invert(\inj,\sk,\vecy) = (b,\vecx).\footnote{Note that this implies that ${\sf Supp}(\Eval(\pk,(b_1,\vecx_1)))\cap {\sf Supp}(\Eval(\pk,(b_2,\vecx_2)))=\emptyset$  for every $(b_1,\vecx_1)\neq (b_2,\vecx_2)$. This can be enforced in the LWE-based instantiation by using \emph{truncated} discrete Gaussian errors}$$

      \item For \emph{all} $(\pk,\sk)$ in the support of $\Gen(\twotoone,1^\secp)$: For
      every $b\in\{0,1\}$, every  $\vecx\in\mathcal{D}_\pk$, and every $\vecy\in{\sf Supp}(\Eval(\pk,(b,\vecx)))$, 
      $$ \Invert(\twotoone,\sk,\vecy) = ((0,\vecx_0),(1,\vecx_1))$$ such that $\vecx_b=\vecx$ and  $\vecy\in{\sf Supp}(\Eval(\pk,(\beta,\vecx_\beta)))$ for every $\beta\in\{0,1\}$. 
      
      \item For every $(\pk,\sk)\in {\sf Supp}(\Gen(\twotoone,1^\secp)) \cup{\sf Supp}(\Gen(\inj,1^\secp))$, every $b\in\{0,1\}$ and every $\vecx\in\mathcal{D}$, 
      $$
      \Pr[\Check(\pk,(b,\vecx),\vecy)=1]=1
      $$
      if and only if $\vecy\in{\sf Supp}(\Eval(\pk,(b,\vecx)))$.
      \item For every $(\pk, \sk)$ in the support of $\Gen(\twotoone, 1^\secp)$ and every pair of domain elements $\vecx_0, \vecx_1$, the density of $\Good_{\vecx_0, \vecx_1}$ is $1-\negl(\secp)$.      \end{enumerate}
      
      \item \textbf{\em Key Indistinguishability:} 
      $$ \{ \pk: (\pk,\sk) \gets \Gen(\twotoone,1^\secp) \} \approx_c \{ \pk: (\pk,\sk) \gets \Gen(\inj,1^\secp) \}$$ 
      
      \item \textbf{\em Adaptive Hardcore Bit:}  For every ${\sf BQP}$ adversary ${\cal A}=(\cal{A}_0,\cal{A}_1)$ there exists a negligible function~$\mu$ such that for every $\secp\in\mathbb{N}$, the following difference of probabilities is equal to $\mu(\secp)$: 
      \begin{align*}
      & \Big| \Pr[{\cal A}_1(\pk,\vecy)=(\vecd,(b,\vecx)): \Check(\pk,b, \vecx, \vecy) = 1 ~~\wedge~~ \vecd\cdot (1, \vecx_0\oplus \vecx_1)=0 ~~\wedge~~ \vecd\in \Good_{\vecx_0,\vecx_1}] \\
      & -\Pr[{\cal A}_1(\pk,\vecy)=(\vecd,(b,\vecx)): \Check(\pk,b, \vecx, \vecy) = 1 ~~\wedge~~ \vecd\cdot (1, \vecx_0\oplus \vecx_1)=1 ~~\wedge~~ \vecd\in \Good_{\vecx_0,\vecx_1}]\Big|
      \end{align*}
      
       where the probabilities are over the experiment that generates $(\pk,\sk)\leftarrow \Gen(\twotoone,1^\secp)$,  $\vecy\leftarrow\mathcal{A}_0(\pk)$, and where $((0,\vecx_0),(1,\vecx_1))=\Invert(\twotoone,\sk,\vecy)$.

  \end{enumerate}
\end{definition}

\begin{lemma}[\cite{FOCS:BCMVV18,FOCS:Mahadev18a}]
  Assuming LWE, there is a collection of Mahadev randomized TCFs. 
\end{lemma}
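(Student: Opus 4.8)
The plan is to recall the LWE-based construction of noisy trapdoor claw-free functions of Brakerski, Christiano, Mahadev, Vazirani and Vidick~\cite{FOCS:BCMVV18}, augmented with the dual-mode key generation of~\cite{FOCS:Mahadev18a}, and to check that it meets Definition~\ref{def:clawfree} --- noting that since our definition omits Mahadev's second hardcore bit property, nothing needs to be said about it. Concretely, $\Gen(1^\secp, \mode)$ samples a matrix $A \in \mathbb{Z}_q^{m \times n}$ together with a lattice trapdoor, in such a way that the marginal distribution of $A$ is (statistically close to) uniform and independent of $\mode$. If $\mode = \twotoone$ it also samples a short secret $s$ and a discrete Gaussian error $e$ and sets $v \coloneqq As + e$; if $\mode = \inj$ it sets $v \coloneqq u$ for uniformly random $u$. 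The public key is $\pk = (A, v)$ and $\sk$ is the trapdoor (together with $s$ in the $\twotoone$ case). We take $\mathcal{D}_\pk$ to be a binary encoding of the set of admissible LWE secrets, a bounded subset of $\mathbb{Z}_q^n$; $\Eval(\pk, b, \vecx)$ outputs $\vecy \coloneqq A\vecx + b \cdot v + e'$ where $e'$ is a \emph{truncated} discrete Gaussian (truncation is what makes the supports of $\Eval(\pk, (b, \vecx))$ pairwise disjoint, as required in Property 1); $\Invert(\mode, \sk, \vecy)$ uses the trapdoor for $A$ to recover the short preimage(s); $\Check(\pk, b, \vecx, \vecy)$ tests that $\vecy - A\vecx - b v$ lies in the truncated-Gaussian error support; and $\Good(\vecx_0, \vecx_1, \vecd)$ is the indicator of $\vecd$ avoiding the ``bad'' set identified in~\cite{FOCS:BCMVV18} (the $\vecd$ on which $\vecd \cdot (1, \vecx_0 \oplus \vecx_1)$ would be trivially predictable).

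Correctness (Property 1) is then routine. In $\inj$ mode, since $v = u$ is uniform, with overwhelming probability $u$ is not of the form $As' + e'$ for any $s'$ and short $e'$, so $f_\pk$ has no collisions and the trapdoor recovers the unique $(b, \vecx)$, giving (1a). In $\twotoone$ mode, $f_\pk(0, \vecx')$ and $f_\pk(1, \vecx'')$ collide exactly when $A(\vecx' - \vecx'') \approx As$, i.e.\ $\vecx_0 = \vecx_1 + s$, and the trapdoor recovers both preimages, giving (1b). Part (1c) is immediate from disjointness of the truncated-Gaussian supports, and (1d) --- density $1 - \negl(\secp)$ of $\Good_{\vecx_0, \vecx_1}$ --- is precisely the counting bound on the bad set proven in~\cite{FOCS:BCMVV18}.

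Key indistinguishability (Property 2) is a one-line reduction to LWE: the matrix $A$ is identically distributed in the two modes, and conditioned on $A$, the pair $(\pk, \sk)$ reveals only $v$, which is $As + e$ in $\twotoone$ mode and uniform $u$ in $\inj$ mode --- exactly one LWE sample. Hence $\{\pk : (\pk, \sk) \gets \Gen(\twotoone, 1^\secp)\} \approx_c \{\pk : (\pk, \sk) \gets \Gen(\inj, 1^\secp)\}$.

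The adaptive hardcore bit property (Property 3) is the main obstacle, and is where essentially all of the technical work lies; here I would import the argument of~\cite{FOCS:BCMVV18}. At a high level, one shows that given $(\pk, \vecy)$ with $\pk$ in $\twotoone$ mode, conditioned on the event that the adversary outputs a valid $(b, \vecx)$ (i.e.\ $\Check(\pk, b, \vecx, \vecy) = 1$) together with some $\vecd \in \Good_{\vecx_0, \vecx_1}$, the bit $\vecd \cdot (1, \vecx_0 \oplus \vecx_1)$ is statistically close to a uniform bit from the adversary's viewpoint --- which immediately bounds the probability difference in the definition by $\negl(\secp)$. This is proven by first using LWE to pass to a computationally indistinguishable distribution of keys under which the claw $(\vecx_0, \vecx_1)$, equivalently the short secret $s = \vecx_0 - \vecx_1$, retains substantial min-entropy conditioned on the adversary's entire view, and then applying a leftover-hash-lemma-style bound --- valid precisely because $\vecd$ is restricted to $\Good_{\vecx_0, \vecx_1}$, i.e.\ to functionals on which the bit is not information-theoretically forced --- to conclude that $\vecd \cdot (1, \vecx_0 \oplus \vecx_1)$ is near-uniform given the view. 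Switching back to the real key via LWE completes the argument. The delicate points, all carried out in~\cite{FOCS:BCMVV18,FOCS:Mahadev18a}, are reconciling this residual-entropy step with the requirement of a functioning trapdoor for $A$ and with the binary-encoding structure implicit in the definition of $\Good$; modulo these already-established facts, the lemma follows.
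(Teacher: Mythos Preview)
The paper does not prove this lemma; it simply states it with a citation to \cite{FOCS:BCMVV18,FOCS:Mahadev18a} and moves on. Your sketch is a faithful high-level summary of the construction and analysis carried out in those works, and it correctly identifies which property maps to which argument (LWE for key indistinguishability, the entropy/leftover-hash argument of \cite{FOCS:BCMVV18} for the adaptive hardcore bit, truncation for disjoint supports). So your proposal is correct and is essentially what the paper implicitly imports by citation.
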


\begin{remark}
For some of our applications (and for simplicity of proofs), we will actually require an rTCF that is \emph{perfectly correct}, which means that the correctness properties (a) and (b) hold with probability 1. That is, they hold for all $(\pk,\sk) \in \Gen(\inj,1^\secp)$ and $(\pk,\sk)\in \Gen(\twotoone,1^\secp)$ respectively. We briefly argue that this is possible. In the injective mode case, this is possible because the sampling procedure for injective keys given in \cite[Section 9.2]{FOCS:Mahadev18a} can determine whether the key it sampled is indeed injective and if not, output a fixed hard-coded injective key. In the 2-to-1 mode case, the sampling procedure given in \cite[Section 4.1]{FOCS:BCMVV18} is perfect except for when $s = 0^n$. Thus, we can again hard-code a fixed 2-to-1 key to output instead whenever $s = 0^n$.
\end{remark}

\subsection{Collapsing Hash Functions}

\paragraph{Collapsing Hash Functions.} Let $H = \{H_\lambda\}_{\lambda \in \mathbb{N}}$ be a hash function family where each $H_\lambda$ is a distribution over functions $h: \{0,1\}^{n(\lambda)}\rightarrow \{0,1\}^{\ell(\lambda)}$. 

Define the collapsing experiment $\texttt{CollapseExpt}_{H,\lambda,b}(D)$ on quantum distinguisher $D$ as follows.
\noindent $\texttt{CollapseExpt}_{H,\lambda,b}(D)$:
\begin{enumerate}
    \item The challenger samples $h \gets H_\lambda$ and sends $h$ to the distinguisher $D$.
    \item The distinguisher replies with a classical binary string $y \in \{0,1\}^{\ell(\lambda)}$ and an $n(\lambda)$-qubit quantum state on the register $\RegX$. Note that the requirement that $y$ be classical can be enforced by having the challenger immediately measure these registers upon receiving them.
    \item The challenger computes $h$ in superposition on the $n(\lambda)$-qubit quantum state, and measures the bit indicating whether the output of $h$ equals $y$. If the output does not equal $y$, the challenger aborts and outputs $\bot$.
    \item\label{item:collapsing-challenge} If $b = 0$, the challenger does nothing. If $b =1$, the challenger measures the $n(\lambda)$-qubit state in the standard basis.
    \item The challenger returns the contents of the $\RegX$ register to the distinguisher.
    \item The distinguisher outputs a bit $b'$.
\end{enumerate}

\begin{definition}[\cite{EC:Unruh16}] $H = \{H_\lambda\}_{\lambda}$ is \emph{collapsing} if for every security parameter $\lambda \in \mathbb{N}$ and any polynomial-size quantum distinguisher $\mathcal{D} = \{D_\lambda\}_{\lambda}$, there exists a negligible function $\mu$ such that
\[ \left|
\Pr[\texttt{\emph{CollapseExpt}}_{H,\lambda,0}(D_\lambda)=1]-\Pr[\texttt{\emph{CollapseExpt}}_{H,\lambda,1}(D_\lambda) = 1] \right| \leq  \mu(\lambda).\]
\end{definition}

Unruh~\cite{AC:Unruh16} constructs collapsing hash functions from lossy functions, which can be based on LWE~\cite{STOC:PeiWat08}. 

\begin{lemma}[\cite{STOC:PeiWat08,AC:Unruh16}]\label{lemma:lwe-collapsing}
  Assuming LWE, a family of collapsing hash functions $\{H_\lambda:\,\{0,1\}^*\to\{0,1\}^\lambda\}_{\lambda}$ exists.
\end{lemma}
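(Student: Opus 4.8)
The plan is to assemble this from two standard ingredients: the LWE-based \emph{lossy} (trapdoor) function family of Peikert--Waters~\cite{STOC:PeiWat08}, together with Unruh's observation~\cite{AC:Unruh16} that a function family which is computationally indistinguishable from a statistically injective one is automatically collapsing. First I would invoke~\cite{STOC:PeiWat08} to obtain a dual-mode function family $\{f_{\pk}\}$ with domain $\{0,1\}^n$ for some $n=n(\secp)\geq 2\secp$: in the \emph{injective mode} $f_{\pk}$ is statistically injective, in the \emph{lossy mode} the image of $f_{\pk}$ has size at most $2^{k}$ for some $k\ll \secp$, and keys sampled in the two modes are computationally indistinguishable. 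In the real scheme keys are always sampled in lossy mode, and the hash value on input $x$ is $g(f_{\pk}(x))$, where $g$ is a universal hash function (also published) that maps the ambient range of $f_{\pk}$ to exactly $\secp$ bits; since $2^k$ is tiny compared to $2^\secp$, $g$ is injective on $\mathrm{Im}(f_{\pk})$ with overwhelming probability, so $g\circ f_{\pk}\colon\{0,1\}^n\to\{0,1\}^\secp$ is a genuinely compressing function.

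The heart of the argument is that $f_{\pk}$ in lossy mode is collapsing. The point is that a \emph{statistically injective} function is trivially collapsing: in the experiment $\texttt{CollapseExpt}_{H,\secp,b}$, once the challenger applies the consistency check it has projected the distinguisher's register onto the span of preimages of the claimed output, and (conditioned on not aborting) this leaves the single computational-basis vector $\ket{x_0}$ of the unique preimage $x_0$, so the extra measurement performed when $b=1$ has no effect and the $b=0$ and $b=1$ experiments coincide exactly. Hence a QPT distinguisher $D$ with non-negligible collapsing advantage $\delta$ against lossy-mode $f_{\pk}$ yields a mode distinguisher: on input a key $\pk$, sample $b\gets\{0,1\}$, internally run $\texttt{CollapseExpt}_{b}$ against $D$ while playing the challenger (which only requires evaluating $f_{\pk}$ coherently), and output $1$ iff $D$'s final guess equals $b$. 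For an injective key the guess is independent of $b$, so the output is $1$ with probability exactly $1/2$; for a lossy key it is $1$ with probability $1/2+\delta/2$. This contradicts mode indistinguishability unless $\delta=\negl(\secp)$. Composing with $g$ preserves collapsing by a routine reduction: conditioned on the overwhelmingly likely event that $g$ is injective on $\mathrm{Im}(f_{\pk})$, any preimage-superposition for $g\circ f_{\pk}$ is, after coherently computing $f_{\pk}$, also a preimage-superposition for $f_{\pk}$, so a standard-basis measurement of it is detectable only if it is detectable against $f_{\pk}$ itself.

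Finally, to handle inputs of arbitrary length I would extend the domain of this fixed-input-length compressing function via a Merkle tree (equivalently a Merkle--Damg\r{a}rd chain), which is known to preserve collapsing by a hybrid over the levels of the tree~\cite{AC:Unruh16}: a distinguisher that detects a standard-basis measurement of a full preimage can be localized to some internal node, where it contradicts the collapsing property of the compression function. The step I expect to require the most care is the bookkeeping that reconciles \emph{compression} with the ``indistinguishable-from-injective'' reduction: a compressing function can never itself be statistically injective, so the injective-mode hybrid must be applied to $f_{\pk}$ (which is expanding) rather than to the published hash $g\circ f_{\pk}$, and one must verify that the $g$-composition and the tree extension faithfully transport the collapsing property through the compressing layers in the right order. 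None of these steps introduces a new assumption, so the whole construction rests only on the hardness of LWE.
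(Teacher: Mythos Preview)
The paper does not give its own proof of this lemma; it is stated as an imported result with the one-line justification ``Unruh~\cite{AC:Unruh16} constructs collapsing hash functions from lossy functions, which can be based on LWE~\cite{STOC:PeiWat08}.'' Your proposal correctly unpacks exactly this chain of references: the dual-mode lossy family from \cite{STOC:PeiWat08}, the ``injective $\Rightarrow$ collapsing, hence indistinguishable-from-injective $\Rightarrow$ collapsing'' step, the universal-hash post-processing to force the codomain to $\{0,1\}^\secp$, and the Merkle--Damg{\aa}rd/tree extension (with collapsing preservation from \cite{AC:Unruh16}) to handle $\{0,1\}^*$. You have also identified the one genuine subtlety---that the injective-mode hybrid must be applied to the expanding $f_{\pk}$ before the compressing $g$, not after---and handled it correctly. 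There is nothing to compare against beyond this; your write-up is essentially the proof the paper outsources to its citations.
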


\subsection{Fully Homomorphic Encryption}\label{sec:FHE-definition}

We define fully homomorphic encryption (FHE), which is used in \cref{sec:final}. A fully homomorphic encryption scheme $\FHE = (\FHE.\Gen, \FHE.\Enc, \allowbreak \FHE.\Dec, \FHE.\Eval)$ for (classical) polynomial-time computation is a tuple of four PPT algorithms.

\begin{itemize}
    \item $\Gen(1^\secp)$ takes as input the security parameter and outputs a key pair $(\pk, \sk)$.
    \item $\Enc(\pk, m)$ takes as input a message $m$ and outputs a ciphertext $\ct$. 
    \item $\Eval(f, \ct)$ takes as input a ciphertext $\ct$ corresponding to an $n$-bit plaintext as well as a function $f: \{0,1\}^n \rightarrow \{0,1\}$. It outputs a ciphertext $\ct_f$.
    \item $\Dec(\sk, \ct)$ takes as input the secret key and a ciphertext. It outputs a message. 
\end{itemize}

\noindent We require the following properties.

\begin{itemize}
    \item \textbf{Evaluation/Decryption Correctness:} for any (polynomial-size circuit) function $f: \{0,1\}^n \rightarrow \{0,1\}$ and any message $m\in \{0,1\}^n$, we have that
    \[ \Dec(\sk, \Eval(f, \Enc(\pk, m))) = f(m)
    \]
    with probability $1-\negl(\secp)$ over the parameter sampling. 
    \item \textbf{Compactness:} we require that $\FHE.\Eval(f, \Enc(\pk, m))$ has a fixed size $\poly(\secp)$ independent of $|f|, |m|$. 
    \item \textbf{Semantic Security:} For any pair of messages $(m_0, m_1)$, we have that $(\pk, \FHE.\Enc(\pk, m_0)) \approx_c (\pk, \FHE.\Enc(\pk, m_1))$. 
\end{itemize}

\begin{theorem}[\cite{STOC:Gentry09,FOCS:BraVai11,ITCS:BraGenVai12, ITCS:BraVai14}]
Under circular-secure variants of the Learning with Errors assumption, there exists a fully homomorphic encryption scheme for all polynomial-time computable functions. If the circular LWE variant is post-quantum, then so is the FHE scheme. 

Under the standard LWE assumption, there exists a FHE scheme for all polynomial-size circuits of depth $d(\secp)$, where the scheme has compactness $\poly(\secp, d)$. 
\end{theorem}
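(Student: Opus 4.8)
The plan is to follow Gentry's blueprint for fully homomorphic encryption, instantiated with an LWE-based scheme. First I would recall a basic LWE (Regev-style) encryption scheme in which a ciphertext of a bit $m$ carries a bounded ``noise'' term, and exhibit homomorphic addition and multiplication operations (the latter via tensoring together with a key-switching/relinearization hint, as in \cite{FOCS:BraVai11,ITCS:BraGenVai12}) whose noise grows by at most a fixed polynomial factor per multiplicative level. Combined with modulus switching, this controls the noise so that, for any a priori depth bound $d = d(\secp)$, choosing the LWE parameters appropriately yields a scheme that correctly evaluates all depth-$d$ circuits with failure probability $\negl(\secp)$, and whose ciphertexts and public keys (a chain of $d$ key-switching hints) have size $\poly(\secp, d)$. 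This already establishes the second conclusion: the leveled scheme under standard LWE.

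For the first conclusion (unbounded-depth FHE) I would invoke Gentry's bootstrapping theorem. After one round of modulus switching, the scheme's decryption map $\sk \mapsto \Dec(\sk, \ct)$ is computable, for fixed $\ct$, by a circuit of depth $O(\log \secp)$ --- essentially a rounded inner product --- which is strictly below the depth the scheme above can homomorphically evaluate with a single fresh key pair. Publishing $\FHE.\Enc(\pk, \sk)$ under the scheme's own key --- this is exactly where the \emph{circular-security} variant of LWE is needed --- lets one ``recrypt'': homomorphically evaluate the decryption circuit on the encrypted secret key and the bit-decomposition of any ciphertext $\ct$, obtaining a fresh low-noise encryption of the same plaintext. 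Recrypting after each NAND gate then permits homomorphic evaluation of arbitrary polynomial-size circuits with ciphertexts of size $\poly(\secp)$, giving full FHE. Semantic security of all variants reduces immediately to (circular) LWE via the security of the base encryption scheme, and since nothing in the reduction exploits classical-only structure, post-quantum hardness of (circular) LWE yields a post-quantum FHE scheme.

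The main obstacle --- and the crux of the original proofs --- is the quantitative noise analysis: establishing that per-gate noise growth is small enough (polynomial, after modulus switching) that the scheme's own decryption circuit has depth below the homomorphic-evaluation threshold, so that the bootstrapping loop closes. Everything else is routine given that analysis, and in the present paper this theorem is used purely as a black box imported from \cite{STOC:Gentry09,FOCS:BraVai11,ITCS:BraGenVai12,ITCS:BraVai14}.
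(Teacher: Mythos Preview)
The paper does not prove this theorem at all: it is stated in the preliminaries (Section~\ref{sec:FHE-definition}) purely as an imported result, with the citations \cite{STOC:Gentry09,FOCS:BraVai11,ITCS:BraGenVai12,ITCS:BraVai14} serving as the proof. Your final sentence already recognizes this. Your sketch is a faithful high-level summary of what those references do (leveled scheme from LWE via relinearization and modulus switching, then bootstrapping under a circular-security assumption), so there is nothing to compare against and no gap to flag.
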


\subsection{Indistinguishability Obfuscation}\label{subsec:iO-def}

An indistinguishability obfuscator (iO) is an algorithm $\iO$ that takes as input a circuit $C$ and satisfies the following properties.
\begin{itemize}
    \item \textbf{Functional Equivalence:} for any (polynomial-size) circuit circuit $C: \{0,1\}^n \rightarrow \{0,1\}^m$ and any input $x\in \{0,1\}^n$, we have that
    $$\iO(C)(x) = C(x).$$
    \item \textbf{Security:} For any pair of functionally equivalent circuits $(C_0, C_1)$, we have that $$\iO(C_0) \approx_c \iO(C_1).$$ 
\end{itemize}
We mention that, while some recent candidates for iO (such as  \cite{STOC:JaiLinSah21}) can be broken using quantum algorithms, others, such as \cite{TCC:BGMZ18,C:CheVaiWee18,BDGM20,EC:WeeWic21,STOC:GayPas21,DQVWW21}, are plausibly post-quantum secure. Furthermore, it will be convenient for us to assume iO with perfect correctness to simplify our analysis (in particular the argument in Section~\ref{sec:zk}). We point out that this property is already satisfied by most candidates and can also be attained via generic transformations~\cite{EC:BitVai17}.

\newcommand{\Puncture}{\mathsf{Puncture}}
\newcommand{\PuncEval}{\mathsf{PuncEval}}
\newcommand{\punc}[2]{#1\{ #2 \}}
\subsection{Puncturable PRFs}
\begin{definition}[Puncturable PRF \cite{AC:BonWat13,PKC:BoyGolIva14,CCS:KPTZ13,STOC:SahWat14}]\label{def:pprf}
  A puncturable PRF family is a family of functions
  \[
    \cF = \left\{ F_{\secp,s} : \{0,1\}^{\nu(\secp)} \to \{0,1\}^{\mu(\secp)} \right \}_{\secp \in \mathbb N, s \in \{0,1\}^{\ell(\secp)}}
  \]
  with associated (deterministic) polynomial-time algorithms
  $(\cF.\Eval, \cF.\Puncture, \allowbreak \cF.\PuncEval)$ satisfying
  \begin{itemize}
  \item For all $x \in \{0,1\}^{\nu(\secp)}$ and all $s \in \{0,1\}^{\ell(\secp)}$, $\cF.\Eval(s, x) = F_{\secp,s}(x)$.
  \item For all distinct $x, x' \in \{0,1\}^{\nu(\secp)}$ and all
    $s \in \{0,1\}^{\ell(\secp)}$,
    $$\cF.\PuncEval(\cF.\Puncture(s, x), x') = \cF.\Eval(s, x')$$
  \end{itemize}
  For ease of notation, we write $F_s(x)$ and $\cF.\Eval(s, x)$ interchangeably,
  and we write $\punc{s}{x}$ to denote $\cF.\Puncture(s, x)$.

  $\cF$ is said to be $(s, \delta)$-secure if for every
  $\{x^{(\secp)} \in \{0,1\}^{\nu(\secp)}\}_{\secp \in \mathbb N}$, the following two distribution
  ensembles (indexed by $\secp$) are $\delta(\secp)$-indistinguishable to circuits of
  size $s(\secp)$:
  \[
    (\punc{S}{x^{(\secp)}}, F_S(x^{(\secp)})) \text{ where $S \gets \{0,1\}^{\ell(\secp)}$}
  \] and
  \[
    (\punc{S}{x^{(\secp)}}, U) \text{ where $S \gets \{0,1\}^{\ell(\secp)}$, $U \gets \{0,1\}^{\mu(\secp)}$}.
  \]
\end{definition}

\begin{theorem}[\cite{FOCS:GolGolMic84,CCS:KPTZ13,AC:BonWat13,PKC:BoyGolIva14,STOC:SahWat14}]
	If \{polynomially secure, subexponentially secure\} one-way functions exist, then for all functions $\mu: \mathbb N \to \mathbb N$ (with $1^{\mu(\nu)}$ polynomial-time computable from $1^{\nu}$), and all $\delta : \mathbb N \rightarrow [0,1]$ with $\delta(\nu) \geq 2^{-\poly(\nu)}$, there are polynomials $\ell(\secp), \nu(\secp)$ and a \{polynomially secure, $(\frac 1 {\delta(\nu(\secp))}, \delta(\nu(\secp)))$-secure\} puncturable PRF family \[\cF_{\mu} = \left\{F_{\secp, s} : \{0,1\}^{\nu(\secp)} \rightarrow \{0,1\}^{\mu(\nu(\secp))}\}_{\secp \in \mathbb N, s \in \{0,1\}^{\ell(\secp)}}\right\}.\]
\end{theorem}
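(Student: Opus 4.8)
The plan is to instantiate $\cF_\mu$ by the Goldreich--Goldwasser--Micali (GGM) tree construction and to observe that it is puncturable. A \{polynomially, subexponentially\} secure one-way function implies, by the standard Håstad--Impagliazzo--Levin--Luby / Goldreich--Levin machinery, a \{polynomially, subexponentially\} secure pseudorandom generator of any polynomial stretch, so it suffices to build the puncturable PRF from such a generator. Fix a length-doubling PRG $G:\{0,1\}^{n}\to\{0,1\}^{2n}$, write $G(z)=(G_0(z),G_1(z))$ for its two halves, and fix a leaf PRG $G':\{0,1\}^{n}\to\{0,1\}^{\,n+\mu(\nu)}$; for $s\in\{0,1\}^{n}$ and $x=x_1\cdots x_\nu\in\{0,1\}^{\nu}$ define
\[ F_s(x)=\big(G'\big(G_{x_\nu}(\cdots G_{x_2}(G_{x_1}(s))\cdots)\big)\big)_{[1..\mu(\nu)]}. \]
Here $\ell(\secp)=n$, and since $1^{\mu(\nu)}$ is poly-time computable from $1^\nu$ the algorithm $\cF.\Eval$ runs in polynomial time once $n=\ell(\secp)$ and $\nu=\nu(\secp)$ are fixed as polynomials in $\secp$ (done below).

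Next I would define puncturing. The root-to-leaf path for $x^*$ visits at depth $i$ the node $v_i=G_{x^*_i}(v_{i-1})$ (with $v_0=s$); let $\sigma_i=G_{1-x^*_i}(v_{i-1})$ be the sibling label. Set $\cF.\Puncture(s,x^*)$ to output $x^*$ together with $(\sigma_1,\dots,\sigma_\nu)$. Given this key and any $x'\neq x^*$, let $i^{*}$ be the first coordinate where $x'$ and $x^*$ disagree; then the depth-$i^{*}$ node on the path to $x'$ equals $\sigma_{i^{*}}$, so $\cF.\PuncEval$ recomputes $F_s(x')$ from $\sigma_{i^{*}}$ downward. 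This procedure is deterministic and polynomial-time and agrees with $\cF.\Eval$ on every $x'\neq x^*$, which is exactly the claimed correctness property.

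For security I would run the textbook hybrid over the $\nu+1$ PRG invocations on the path to the challenge point $x^{(\secp)}$. In hybrid $j$, the labels $v_1,\dots,v_j$ are replaced by fresh uniform strings (and in the final hybrid also the leaf output); hybrid $0$ is $(\punc{S}{x^{(\secp)}},F_S(x^{(\secp)}))$ with $S$ uniform, and hybrid $\nu+1$ is $(\punc{S}{x^{(\secp)}},U)$. Consecutive hybrids differ only in whether the (already uniform) string at depth $j$ is stretched by $G$ (resp.\ $G'$) or replaced by a fresh uniform string; a reduction that receives a single PRG challenge can sample everything else, plant the challenge at depth $j$, derive the sibling $\sigma_j$ and the rest of the punctured key and the output from it, and thus distinguish $G$ (resp.\ $G'$) with the same advantage. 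A union bound gives distinguishing advantage against $\cF_\mu$ at most $(\nu+1)$ times that against the PRG, for adversaries of essentially the same size.

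Finally I would choose parameters. In the polynomial case this is immediate: take $\nu(\secp)=\ell(\secp)=\secp$; polynomial PRG security plus the $(\nu+1)$ factor yields a polynomially secure puncturable PRF. In the $(\tfrac{1}{\delta},\delta)$-secure case, fix $d$ with $\delta(\nu)\geq 2^{-\nu^{d}}$; subexponential one-wayness gives, at seed length $n$, a PRG with $(2^{n^{c}},2^{-n^{c}})$-security for some constant $c>0$ (and likewise $G'$). Setting $\nu(\secp)=\secp$ and $\ell(\secp)=n=\secp^{\lceil (d+1)/c\rceil}$ ensures $(\nu(\secp)+1)\cdot 2^{-n^{c}}\leq \delta(\nu(\secp))$ and $2^{n^{c}}\geq 1/\delta(\nu(\secp))$ for all large $\secp$, so after the hybrid loss $\cF_\mu$ is $\big(\tfrac{1}{\delta(\nu(\secp))},\delta(\nu(\secp))\big)$-secure, with $\ell(\secp),\nu(\secp)$ polynomials. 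I expect the only delicate point to be this bookkeeping --- matching the PRG's seed-length-indexed security to the target advantage $\delta$, which is specified as a function of the \emph{input} length $\nu$, while keeping $\ell(\secp)$ and $\nu(\secp)$ polynomial --- since the GGM correctness, the puncturing observation, and the hybrid argument itself are all routine.
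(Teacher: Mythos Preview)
The paper does not give its own proof of this theorem; it is stated in the preliminaries with citations to \cite{FOCS:GolGolMic84,CCS:KPTZ13,AC:BonWat13,PKC:BoyGolIva14,STOC:SahWat14} and used as a black box. Your proposal correctly reconstructs the standard argument from those references: the GGM tree instantiation, puncturing by releasing the off-path sibling labels, and the $(\nu+1)$-hybrid reduction to PRG security, together with the parameter bookkeeping needed for the subexponential case. This is exactly the approach of the cited works, so your sketch is on target.
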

\fi

\newcommand{\SimGen}{\mathsf{SimGen}}
\newcommand{\pparam}{\mathsf{pp}}
\newcommand{\sparam}{\mathsf{sp}}
\newcommand{\acc}{\mathsf{acc}}
\newcommand{\rej}{\mathsf{rej}}
\renewcommand{\st}{\mathsf{st}}
\newcommand{\Ver}{\mathsf{Ver}}
\newcommand{\TCF}{\mathsf{TCF}}
\newcommand{\pp}{\mathsf{pp}}
\newcommand{\mask}{\mathsf{mask}}
\newcommand{\Inv}{\mathsf{Inv}}
\newcommand{\td}{\mathsf{td}}
\newcommand{\ek}{\mathsf{ek}}

\section{Commit-and-Measure Protocols}\label{sec:measurement-protocol-definition}

\subsection{Defining Commit-and-Measure Protocols}\label{subsec:commit-and-measure}

In this section, we formalize the notion of a \emph{commit-and-measure protocol}, which was informally described in \cite{FOCS:Mahadev18a}. A commit-and-measure protocol enables a classical verifier to obtain the results of measuring, in the standard or Hadamard basis, each qubit of an $N$-qubit quantum state $\bsigma$ held by the prover. More precisely, the verifier encodes its choice of basis with a classical circuit $C: [N] = \{0,1\}^{\log N} \rightarrow \{0,1\}$, where $C(i) = b$ specifies the basis for the measurement of the $i$th qubit. We adopt the convention that $b = 0$ corresponds to the standard basis and $b = 1$ corresponds to the Hadamard basis. Note that in Mahadev's original protocol, $C$ is given as an explicit string $(C(0),C(1),\dots,C(N-1))$, but our eventual \emph{succinct} protocols will require circuits $C$ with size much smaller than $N$.

\begin{definition}[Commit-and-Measure Protocol Syntax]\label{def:measurement-protocol-syntax}
An $N$-qubit commit-and-measure protocol between a quantum polynomial-time prover $P = (\Commit,\Open)$ and a classical probabilistic polynomial-time verifier $V = (\Gen,\Test,\Out)$ has the following syntax.
\begin{enumerate}
    \item The verifier samples $(\pk,\sk) \gets  \Gen(1^\secp,C)$, where $C:[N] = \{0,1\}^{\log N} \to \{0,1\}$ represents a \emph{basis vector} $h \in \{0,1\}^N$, obtaining public parameters $\pk$ and secret parameters $\sk$. It sends the public parameters $\pk$ to the prover.
    \item The prover computes $(y,\brho) \gets \Commit(\pk,\bsigma)$, obtaining a classical ``commitment'' string $y$ and a private quantum state $\brho$. It sends $y$ to the verifier.
    \item The verifier samples a random challenge bit $c \gets \{0,1\}$ and sends $c$ to the prover; $c = 0$ corresponds to a ``test round'' and $c = 1$ corresponds to a ``measurement round''.
    \item The prover computes $z \gets \Open(\brho,c)$, obtaining a classical string $z$ that it sends to the verifier.
    \item If $c = 0$, the verifier computes $\{\acc,\rej\} \gets \Test(\pk,(y,z))$.
    
    If $c = 1$, the verifier computes $m \gets \Out(\sk,(y,z))$ to obtain a classical string $m \in \{0,1\}^{N}$ of measurement outcomes. 
\end{enumerate}
\end{definition}

The protocol is required to satisfy the following completeness (\cref{def:measurement-protocol-completeness}) and soundness (\cref{def:measurement-protocol-soundness}) properties. For the definitions below, we write $M(h,\bsigma)$ to denote the distribution of outcomes from measuring $\bsigma$ in the basis $h$.

\begin{definition}[Completeness]\label{def:measurement-protocol-completeness}
A commit-and-measure protocol is required to satisfy two completeness properties.
    
    \begin{enumerate}
        \item (Test Round Completeness) 
        For all $C: [N] \to \{0,1\}$ and $N$-qubit states $\bsigma$:
        \[\Pr\left[\acc \gets \Test(\pk,(y,z)): \begin{array}{r} (\pk,\sk) \gets \Gen(1^\secp,C) \\ (y,\brho) \gets \Commit(\pk,\bsigma) \\ z \gets \Open(\brho,0) \\ \end{array}\right] = 1-\negl(\secp).\]
        \item (Measurement Round Completeness) For all $C: [N] \to \{0,1\}$ and $N$-qubit states $\bsigma$:
        \[\left\{m \gets \Out(\sk,(y,z)): \begin{array}{r} (\pk,\sk) \gets \Gen(1^\secp,C) \\ (y,\brho) \gets \Commit(\pk,\bsigma) \\ z \gets \Open(\brho,1) \\ \end{array}\right\} \approx_c M(h,\bsigma),\]
        where $h \in \{0,1\}^N$ is such that $h_i = C(i)$ for all $i \in [N] = \{0,1\}^{\log N}$. 
    \end{enumerate}
\end{definition}

\begin{remark}
The~\cite{FOCS:Mahadev18a} protocol satisfies \emph{statistical} measurement round completeness, but our verifier-succinct commit-and-measure protocol will not.
\end{remark}

\begin{remark}
One of our applications will require a measurement protocol with \emph{perfect} completeness, which stipulates that the above completeness guarantees hold over \emph{all} $(\pk,\sk) \in \Gen(1^\secp,C)$ (and where the measurement round completeness is statistical rather than computational). This can be achieved by using an rTCF with perfect correctness, which we discuss in \cref{sec:mahadev-rtcf}, and, in the succinct case, an indistinguishability obfuscation scheme with perfect correctness (\cref{subsec:iO-def}).
\end{remark}

To state our soundness definition (\cref{def:measurement-protocol-soundness}), we first specify the registers that any non-uniform  cheating prover acts on:
\begin{itemize}
    \item $\RegP$ contains the public parameters $\pk$, 
    \item $\RegY$ contains the classical commitment string $y$,
    \item $\RegZ$ contains the classical opening string $z$,
    \item $\RegI$ contains the prover's initial state and its internal work registers.
\end{itemize}
In a protocol execution, $\RegP$ is initialized with $\ketbra{\pk}$. A non-uniform cheating prover $$P^* = (\brho_0,U_{\Commit^*},U_{\Open^*,0},U_{\Open^*,1})$$ is parameterized by:
\begin{itemize}
    \item An arbitrary quantum state $\brho_0 \in \mathrm{D}(\RegY\tensor \RegZ \tensor \RegI) $. In a protocol execution with $\widetilde P$, $\RegY \otimes \RegZ \otimes \RegI$ is initialized with $\brho_0$.
    \item An adversarial commitment unitary $U_{\Commit^*}$ on $\RegP \otimes \RegY \otimes \RegZ \otimes \RegI$ of the form
    \[ \sum_{\pk} \ketbra{\pk}_{\RegP} \otimes (U_{\Commit^*,\pk})_{\RegY,\RegZ,\RegI}.\]
    That is, $U_{\Commit^*}$ is classically controlled on $\RegP$. In particular, the adversarial prover's commitment on verifier message $\pk$ is obtained by measuring register $\RegY$ of
    \[ U_{\Commit^*}(\ketbra{\pk}_{\RegP} \otimes (\brho_0)_{\RegY, \RegZ, \RegI})\]
    in the computational basis to obtain $y$.
    \item An adversarial opening unitary $U_{\Open^*,0}$ on $\RegP \otimes \RegY \otimes \RegZ \otimes \RegI$ corresponding to the prover's behavior in the test round $(b = 0)$ of the form:
    \[ \sum_{\pk,y} \ketbra{\pk,y}_{\RegP, \RegY} \otimes (U_{\Open^*,0,\pk,y})_{\RegZ, \RegI}.\] 
    That is, $U_{\Open^*,0}$ is classically controlled on $\RegP$ and $\RegY$.
    In particular, given a commitment string $y$ and residual prover state $\brho \in \mathrm{D}(\RegZ\tensor \RegI)$, the prover's response on challenge $c=0$ is obtained by measuring register $\RegZ$ of \[U_{\Open^*, 0}(\ketbra{\pk,y}_{\RegP,\RegY} \otimes \brho_{\RegZ, \RegI})\]
    in the computational basis to obtain $z$.
    \item An adversarial opening unitary $U_{\Open^*,1}$ on $\RegP \otimes \RegY \otimes \RegZ \otimes \RegI$ corresponding to the prover's behavior in the measurement round $(b = 1)$ of the form:
    \[ \sum_{\pk,y} \ketbra{\pk,y}_{\RegP, \RegY} \otimes (U_{\Open^*,1,\pk,y})_{\RegZ, \RegI}.\] 
    In particular, given a commitment string $y$ and residual prover state $\brho \in \mathrm{D}(\RegZ\tensor \RegI)$, the prover's response on challenge $c=1$ is obtained by measuring register $\RegZ$ of \[U_{\Open^*, 1}(\ketbra{\pk,y}_{\RegP,\RegY} \otimes \brho_{\RegZ, \RegI})\]
    in the Hadamard basis to obtain $z$.
\end{itemize}

\noindent Following~\cite{FOCS:Mahadev18a}, we can assume without loss of generality that $U_{\Open^*,0}$ is the identity (refer to \cref{subsec:black-box} for additional details). We will therefore write $U$ to describe the prover's ``attack unitary'' for the measurement round $(c=1)$.

For defining soundness, we informally require that a prover $P^*$ that passes the test round with probability $1-\negl(\secp)$ (this could alternatively be enforced by applying a measurement in the security game) implicitly defines\footnote{In fact, we require that $\btau$ can be extracted efficiently from $P^*$.} an $N$-qubit state $\btau$ whose measurement outcome distribution matches the output distribution of $\Out(\cdot)$ (up to computational indistinguishability). 

\begin{definition}[Soundness]\label{def:measurement-protocol-soundness}
There exists an efficient classical algorithm $\SimGen(1^\secp)$ and an efficient quantum algorithm $\Ext^{U,\brho}(\pk,\sk,y)$ with black-box access to an attacker parameterized by a state $\brho$ and a unitary $U$ (see \cref{subsec:black-box} for more details on how we formalize quantum black-box access), that takes as input classical strings $(\pk,\sk,y)$, and satisfies the following properties:

\begin{itemize} 

\item Consider any non-uniform QPT cheating prover $P^* = (\brho_0,U_{\Commit^*},U)$ that passes the test round with probability $1-\negl(\secp)$ for all $h \in \{0,1\}^N$. 

Then,  for all $h \in \{0,1\}^N$ with circuit representation $C$, the following two distributions are computationally indistinguishable:

    $\texttt{Real}$:
    \begin{enumerate}
        \item Sample parameters $(\pk,\sk) \gets \Gen(1^\secp,C)$.
        \item Run the attacker $P^*$ on $\pk$ to obtain a classical commitment string $y$ (i.e., apply $U_{\Commit^*}$ and then measure the register containing $y$). Denote the post-measurement state as $\brho \in \mathrm{D}(\RegZ \otimes \RegI)$, where $\RegZ$ corresponds to the registers that will eventually be measured to obtain the prover's final message, and $\RegI$ contains all of the other internal registers of the prover.\footnote{We will also assume, without loss of generality, that the prover always copies $\pk$ and $y$ into its internal state registers $\RegI$.}
        
        \item Apply the prover's attack unitary $U$. This yields the state $\brho' \coloneqq U \brho_{\RegZ,\RegI}U^\dagger$.
        Measure the $\RegZ$ register of $\brho'$ in the Hadamard basis to obtain the prover's opening string $z$.
        \item Compute $m \gets \Out(\sk,(y,z))$ and output $m$. %$(-1)^{v\cdot m}$.
    \end{enumerate}
    
    $\texttt{Sim}$:
    \begin{enumerate}
        \item Sample parameters $(\pk,\sk) \gets \SimGen(1^\secp)$.
        \item Run the attacker $P^*$ on $\pk$ to obtain a classical commitment string $y$ (i.e., apply $U_{\Commit^*}$ and then measure the register containing $y$). Denote the post-measurement state as $\brho \in \mathrm{D}(\RegZ \otimes \RegI)$.
        
        \item Run $\Ext^{U,\brho}(\pk,\sk,y) \rightarrow \btau$ to obtain an $N$-qubit state $\btau$.
        \item Measure each qubit of $\btau$ according to the bases specified by $h \in \{0,1\}^N$ (i.e., qubit $i$ is measured in the Hadamard basis if $h_i = 1$ and the standard basis if $h_i = 0$) and output the result. %Letting $m$ denote this measurement outcome, output the result $(-1)^{v\cdot m}$. 

    \end{enumerate}

\end{itemize}

\end{definition}

\section{A Measurement Protocol Template}\label{sec:meas-protocol}

In this section, we describe a generic construction of a $N$-qubit commit-and-measure protocol (\cref{sec:measurement-protocol-definition}) using two building blocks: (1) a family of Mahadev rTCFs (\cref{def:clawfree}), and (2) a ``batch key generation'' scheme (fully defined in \cref{sec:succinct-keygen}) whose syntax we describe below. We consider two different instantiations of this template:

\begin{itemize}
    \item Using a ``trivial'' batch key generation scheme in which the $N$ rTCF keys are sampled i.i.d., we recover Mahadev's original protocol \cite{FOCS:Mahadev18a}.
    \item Using a succinct key generation scheme (constructed in \cref{sec:succinct-keygen,sec:succinct-clawfree}, we obtain a measurement protocol in which the verifier's messages are succinct. We refer to this as a \textbf{verifier-succinct} measurement protocol.
\end{itemize}

\paragraph{Batch Key Generation.} For our construction, we make use of what we call a ``batch key generation scheme'' for the Mahadev rTCF. Let $\mathsf{TCF}.\Gen(1^\secp, \mode)$ denote the ``standard'' key generation algorithm for a Mahadev rTCF. Informally, a batch key generation scheme for $\mathsf{TCF}.\Gen(1^\secp, \mode)$ is a mechanism that produces a joint representation of $N$ TCF pairs $(\pk_i, \sk_i)$, from which any individual $\pk_i, \sk_i$ can be computed, such that the pairs $(\pk_i, \sk_i)$ are sufficiently ``independent'' of each other.

A full definition of a batch key generation scheme is given in \cref{def:succinct-key-gen}, but we formally state here the relevant syntax and security properties. Syntactically, a batch key generation scheme includes three algorithms $(\Gen, \extpk,\extsk)$, where:

\begin{itemize}
    \item $\Gen(1^\secp, C)$ takes as input a security parameter $\secp$ and a circuit $C: [N]\rightarrow \{0,1\}$ representing (through its truth table) an $N$-bit string. It outputs a master public key $\PK$ and master secret key $\SK$.
    \item $\extpk(\PK, i)$ is a deterministic algorithm that takes as input $\PK$ and an index $i\in N$, and outputs a public key $\pk_i$.
    \item $\extsk(\SK, i)$ is a deterministic algorithm that takes as input $\SK$ and an index $i\in N$, and outputs a secret key $\sk_i$.
\end{itemize}

When instantiated for a Mahadev rTCF family, we require the following properties to hold for such a procedure:

\begin{itemize}
    \item \textbf{Correctness}: for $(\PK, \SK)\gets \Gen(1^\secp, N, C)$ and $(\pk_i, \sk_i) = (\extpk(\PK, i), \extsk(\SK, i))$, we have that $(\pk_i, \sk_i)$ is in the range of $\TCF.\Gen(1^\secp, C(i))$ (i.e. they are a valid key pair in mode $C(i)$).
    \item \textbf{Key Indistinguishability}: if $C_1$ and $C_2$ represent functions that agree on a set $T$ of inputs, then $\PK$ output by $\Gen(1^\secp, C_1)$ is computationally indistinguishable from $\PK$ output by $\Gen(1^\secp, C_2)$, \emph{even} in the presence of all $\{\sk_i, i\in T\}$. 
    \item \textbf{Collapsing at a single index}: For any index $j$, the function $f_{\pk_j}$ is collapsing even given all secret keys $\sk_i$ for $i\neq j$.
    \item \textbf{Adaptive hardcore bit at a single index}: For any index $j$, the function $f_{\pk_j}$ satisfies the rTCF adaptive hardcore bit property even given all secret keys $\sk_i$ for $i\neq j$.
    %\item \textbf{Pseudoindependence}: The collection of \emph{all keys} $\{(\pk_i, \sk_i)\}_{i\in [N]}$ is computationally indistinguishable from a collection of $N$ i.i.d. key pairs (where the $i$th key pair is in mode $C(i)$).
\end{itemize}

Our protocol is a variant of the Mahadev protocol~\cite{FOCS:Mahadev18a} in which the verifier's first message $(\pk_1, \hdots, \pk_N)$ is replaced by the output $\PK$ of a batch key generation procedure.%\footnote{By this, we mean a tuple $(\Setup, \extpk, \extsk, \Program)$ satisfying all definitions of \cref{def:succinct-key-gen} except for succinctness.} 

\subsection{Measurement Protocol Description}\label{subsec:meas-protocol}
%In constructing a verification protocol, \cite{FOCS:Mahadev18a} informally describes an intermediate protocol as a ``measurement protocol'' and essentially proves that it satisfies our definition, although \cite{FOCS:Mahadev18a} does not give an explicit formalization of the property. The \cite{FOCS:Mahadev18a} protocol is secure assuming the post-quantum hardness of the LWE problem.

%In this subsection, we give a formal description of Mahadev's protocol. Later, this construction will be generalized to allow an instantiation where the verifier's first message is succinct.
Let $\mathsf{ClawFree} = (\mathsf{TCF.}\Gen, \Eval, \Invert, \Check, \Good)$ denote a family of \cite{FOCS:Mahadev18a} randomized TCFs (\cref{def:clawfree}). For simplicity of the analysis, we assume that on a fixed security parameter~$1^\secp$, $\Gen$ outputs keys defining functions from $\{0,1\} \times D$ to $R$, where $D$ is a fixed-size subset of $\{0,1\}^{\ell(\secp)}$ independent of $\pk$.\footnote{The LWE-based instantiation from \cite{FOCS:BCMVV18} satisfies this property. However, with more complicated expressions, our proof should go through without this simplifying assumption.} We denote the size of $D$ by $L(\secp)$. Let $(\Gen, \extpk, \extsk)$ denote a batch key generation algorithm for $\mathsf{TCF}.\Gen$ satisfying the requirements described above.

For simplicity, we assume the honest prover $P$ has an $N$-qubit pure state of the form
\[\ket{\psi}_{\RegW} = \sum_{w_1, \hdots, w_N}\alpha_{w_1, \hdots, w_N} \ket{w_1 \hdots w_N}
\]
on its internal $N$-qubit register $\RegW$; the protocol completeness immediately extends to mixed states by linearity. The verifier has a basis string $h\in\{0,1\}^N$, represented by a circuit $C: [N] \rightarrow \{0,1\}$, and wants to obtain a measurement of $\ket{\psi}$ in this basis (where $h_i = 0$ denotes a standard basis measurement and $h_i = 1$ denotes a Hadamard basis measurement). The protocol $\MP$ (for measurement protocol), with prover algorithm $P = (\MP.\Commit, \MP.\Open)$ and verifier algorithm $V = (\MP.\Gen, \MP.\Test, \MP.\Out)$, is executed as follows.

\begin{itemize}
    \item $\MP.\Gen$: On input $1^\secp$, $N$ and $C$, the verifier samples a representation of $N$ TCF key pairs by computing $(\PK, \SK) \gets \Gen(1^\secp, N, C)$, where $C$ is circuit computing $C(i) = h_i$. The verifier sends $\PK$ to the prover.
    \item $\MP.\Commit$: Given $\PK$ and its input state $\ket{\psi}_{\RegW}$, the prover computes public keys $\pk_i \gets \extpk(\PK, i)$ for $1\leq i\leq N$. From now on, $P$ operates directly on $(\pk_i)$ and ignores $\PK$. 

The prover coherently (with respect to $\ket{\psi}$) computes a ``range superposition''
    \[\frac 1 {\sqrt{L^N}} \sum_{\substack{x_1, \hdots, x_N \in D \\ y_1, \hdots, y_N \in R \\ w \in \{0,1\}^N}} \left( \alpha_{w} \prod_i \sqrt{p_{\pk_i}(w_i, x_i, y_i)} \ket{w}_{\RegW} \ket{x_1}_{\RegX_1} \hdots \ket{x_N}_{\RegX_N} \ket{y_1}_{\RegY_1} \hdots \ket{y_N}_{\RegY_N}\right)
    \]
    where each $\RegX_i$ is an $\ell(\secp)$-qubit register (where $D \subset \{0,1\}^{\ell})$, and each $\RegY_i$ 
    
    has basis $\{\ket{y} \}_{y \in R}$. Here, $p_{\pk}(b, x, y)$ denotes the probability density of $y$ in the distribution $f_{\pk}(b, x)$, where $p_{\pk}(b,x,y) \coloneqq 0$ for $x \in \{0,1\}^{\ell} \setminus D$. Following~\cite[Section 4.3]{FOCS:BCMVV18} the honest prover algorithm can efficiently prepare this state up to exponentially small trace distance. 
    
    After preparing this state, the prover measures $\RegY_1, \hdots, \RegY_N$ in the standard ($R$-)basis and sends the outcome $(y_1,\dots,y_N)$ to the verifier.

    \item The verifier sends a uniformly random challenge bit $c$. After receiving the prover response, the verifier computes each public key $\pk_i \gets \extpk(\PK, i)$ and secret key $\sk_i \gets \extsk(\SK, i)$ in order to evaluate either $\MP.\Test$ or $\MP.\Out$. 
    \item $\MP.\Open$: On challenge bit $c$, the prover operates as follows.
    \begin{itemize}
        \item If $c=0$, the prover measures $\RegW \tensor \RegX_1 \tensor \hdots \tensor \RegX_N$ in the standard basis and sends the outcome $(b_1, \hdots, b_N, x_1, \hdots, x_N)$ to the verifier. 
        \item If $c=1$, the prover instead measures $\RegW \tensor \RegX_1\tensor \hdots \tensor \RegX_N$ in the Hadamard basis, returning strings $d_1, \hdots d_N$.
    \end{itemize} 
    \item $\MP.\Test$: Given $(b_1, \hdots, b_N, x_1, \hdots, x_N)$, the verifier computes (for every $i$) $\mathsf{Check}(\pk_i, b_i, x_i, y_i)$ and rejects if any of these checks do not pass.
    \item $\MP.\Out$: Given $d_1, \hdots d_N$, the verifier outputs $N$ bits as follows. For each $i \in [N]$: 
    \begin{itemize}
        \item If $h_i = 0$, the verifier ignores $d_i$, computes $(b_i, x_i) = \Invert(\inj,\sk_i, y_i)$, and outputs $b_i$.
        \item If $h_i=1$, the verifier computes the two inverses $\{(0, x_{0,i}), (1, x_{1,i})\} \gets \Invert(\twotoone,\sk_i, y_i)$. For each $i$, the verifier checks whether $d_i\in \mathsf{Good}_{x_{0,i}, x_{1,i}}$ (corresponding to a valid equation in the $i$th slot), and if so, the verifier outputs $d_i \cdot (1, x_{0,i}\oplus x_{1,i})$. If $d_i \not\in \mathsf{Good}_{x_{0,i}, x_{1,i}}$, the verifier samples a uniformly random bit and outputs it.
    \end{itemize}  
\end{itemize}

Completeness of this protocol follows immediately from \cite{FOCS:Mahadev18a} and the correctness property of $\mathsf{Gen}$. Specifically, the correctness property of $\Gen$ implies that each $(\pk_i, \sk_i)$ in our protocol is in the range of $\TCF.\Gen(1^\secp, C(i))$, in which case (as shown in \cite{FOCS:Mahadev18a}) the verifier's output distribution is statistically close to $h = (C(0), \hdots, C(N))$-measurement outcome on $\ket{\psi}$. 

\def\semi{\mathsf{SemiSuccinct}}

\section{Soundness of Mahadev's Protocol}
\label{sec:succinct-measurement}

In this section, we prove that the measurement protocol from \cref{sec:meas-protocol} a computationally sound (\cref{def:measurement-protocol-soundness}) commit-and-measure protocol. As a consequence, we obtain a new, self-contained proof of soundness of the \cite{FOCS:Mahadev18a} protocol. Later (\cref{sec:succinct-keygen-construction}), we will instead instantiate our protocol with a \emph{succinct} key generation algorithm to obtain a verifier-succinct measurement protocol.

Our soundness proof is based in part on both \cite{FOCS:Mahadev18a} itself as well as a proof strategy suggested in \cite{Vid20-course}. 

\paragraph{Notation.} Throughout this section, we will fix the verifier's choice of basis $h \in \{0,1\}^N$. We write $R \coloneqq \{i \in [N] : h_i = 1\}$ and $S \coloneqq \{i \in [N] : h_i = 0\}$, where $R \subset [N]$ denotes the set of indices that the verifier wants to measure in the Hadamard basis, and $S \subset [N]$ denotes the set of indices the verifier wants to measure in the standard basis.

%We will use the notation $\PK = (\pk_1, \hdots, \pk_N)$ and $\SK = (\sk_1, \hdots, \sk_N)$. Finally, we let $\SK_{-i} = (\sk_1, \hdots, \sk_{i-1}, \sk_{i+1}, \hdots, \sk_N)$ denote $\SK$ with $\sk_i$ removed. 

Finally, we will decompose the state space of the prover as $\RegZ \otimes \RegI$, where:
\begin{itemize}
    \item $\RegZ = \RegZ_1 \otimes \cdots \otimes \RegZ_N$. $1\leq i\leq N$, $\RegZ_i$ is an $(\ell(\secp)+1)$-qubit register that contains the classical opening string $z_i$. We will sometimes write $\RegZ$ as shorthand for $\RegZ_1 \otimes \hdots  \otimes\RegZ_N$.
    \item Each $\RegZ_i$ can be written as $\RegZ_i = \RegB_i \tensor \RegX_i$, where $\RegB_i$ is a one-bit register and $\RegX_i$ is an $\ell(\secp)$-bit register.
    \item $\RegI$ denotes any additional registers the prover uses.
\end{itemize}

\subsection{The Verifier's Output Distribution}

Our goal is to characterize the $N$-bit distribution $D_{P^*,\mathrm{Out}}$ corresponding to the verifier's output in the measurement protocol when interacting with a malicious prover $P^*$ using $h \in \{0,1\}^N$ as its choice of bases. In particular, we want to prove that if $P^*$ succeeds in passing the test round with probability $1-\negl(\secp)$, then $D_{P^*,\mathrm{Out}}$ is computationally indistinguishable from $D_{P^*,\mathrm{Ext}}$, a distribution obtained from (1) running an efficient extractor $\Ext^{P^*}$ to obtain an $N$-qubit quantum state $\btau$, and (2) measuring $\btau$ in the verifier's specified bases.

The distribution $D_{P^*,\mathrm{Out}}$ produces a sample according to the following steps:
\begin{enumerate}
    \item Sample keys $(\PK,\SK) \gets \MP.\Gen(1^\secp,N,C)$ (where $C$ specifies the choice of bases $R,S$).
    \item Run the malicious prover on $\PK$ to obtain a classical commitment string $y$. Let $\ket{\psi} \in \RegZ \otimes \RegI$ denote the prover's residual state. 
    
    \item For each $i \in S$, compute $(b_i,x_i) \gets \Invert(\inj,\sk_i,y_i)$. Let $v \in \{0,1\}^S$ be the vector whose $i$th entry is $b_i$.
    \item Next, apply the prover's attack unitary $U$ on $\RegZ \otimes \RegI$, and then measure $\RegZ$ in the Hadamard basis to obtain a response $z = (d_1,\dots,d_N)$. 
    \item For each $i \in R$, compute $(0,x_{0,i}),(1,x_{1,i}) \gets \Invert(\twotoone,\sk_i,y_i)$. If $d_i \in \Good_{x_{0,i},x_{1,i}}$, set $u_i = d_i \cdot (1,x_{0,i} \oplus x_{1,i})$. Otherwise, set $u_i$ to be a uniformly random bit. This results in a string $u \in \{0,1\}^R$
    \item Output $(u,v) \in \{0,1\}^R \times \{0,1\}^S$.
\end{enumerate}

Our first step is to show that $D_{P^*, \mathsf{Out}}$ is computationally indistinguishable from a distribution $D_{P^*,\twotoone}$ that does \emph{not} require running the $\Invert$ algorithm for any key pair $(\pk_i, \sk_i)$ in injective mode. Instead, this second distribution can be produced by directly measuring the register $\RegB_i$ (i.e., the first bit of $\RegZ_i$) of the prover's state $\ket{\psi}$). Since $\{\sk_i\}_{i\in S}$ will no longer be required at this point, we are also able to switch all key pairs $(\pk_i, \sk_i)$ to be in two-to-one mode by invoking key indistinguishability.

%Our first step is to switch to second distribution $D_{P^*,\twotoone}$ in which all of the public keys are sampled in ``two-to-one'' mode. As a result, for indices in $S$ the measurement outcomes can no longer be computed by running $\Invert$, so instead we will obtain these outcomes by directly measuring the register $\RegB_i$ (i.e., the first bit of $\RegZ_i$) of the prover's state. 

Formally, $D_{P^*,\twotoone}$ produces outcomes as follows (differences from $D_{P^*,\mathrm{Out}}$ highlighted in \textcolor{red}{red}):
\begin{enumerate}
    \item Sample keys $(\PK,\SK) \gets \MP.\Gen(1^\secp,N,\mathbf{1})$ \textcolor{red}{(where $\mathbf{1}$ denotes the constant $1$ function, corresponding to two-to-one mode)}
    \item Run the malicious prover on $\PK$ to obtain a classical commitment string $y$. Let $\ket{\psi} \in \RegZ \otimes \RegI$ denote the prover's residual state. 
    \item \textcolor{red}{For each $i \in S$, measure $\RegB_i$ to obtain a bit $v_i$; the result of this step is a string $v \in \{0,1\}^S$.}
    \item Next, apply the prover's attack unitary $U$ on $\RegZ \otimes \RegI$, and then measure $\RegZ$ in the Hadamard basis to obtain a response $z = (d_1,\dots,d_N)$. 
    \item For each $i \in R$, compute $(0,x_{0,i}),(1,x_{1,i}) \gets \Invert(\twotoone,\sk_i,y_i)$. If $d_i \in \Good_{x_{0,i},x_{1,i}}$, set $u_i = d_i \cdot (1,x_{0,i} \oplus x_{1,i})$. Otherwise, set $u_i$ to be a uniformly random bit. This results in a string $u \in \{0,1\}^R$
    \item Output $(u,v) \in \{0,1\}^R \times \{0,1\}^S$.
\end{enumerate}

\begin{lemma}\label{lemma:observables-to-output}
$D_{P^*,\twotoone}$ is computationally indistinguishable from $D_{P^*,\mathrm{Out}}$.
\end{lemma}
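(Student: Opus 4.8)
## Proof proposal for Lemma~\ref{lemma:observables-to-output}

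The plan is to interpolate between $D_{P^*,\mathrm{Out}}$ and $D_{P^*,\twotoone}$ through two intermediate changes, each justified by one of the correctness/security properties of the rTCF and the batch key generation scheme. I would introduce a hybrid distribution $D_1$ that still samples $(\PK,\SK) \gets \MP.\Gen(1^\secp,N,C)$ for the \emph{true} basis circuit $C$, but where for each $i \in S$ the output bit $v_i$ is obtained by directly measuring the register $\RegB_i$ of $\ket{\psi}$ in the standard basis (rather than by computing $\Invert(\inj,\sk_i,y_i)$). The key claim is that $D_1$ is \emph{statistically} (in fact, for a prover passing the test round with probability $1$, identically; otherwise up to $\negl(\secp)$) close to $D_{P^*,\mathrm{Out}}$. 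Here is why: conditioned on passing the test round (challenge $c=0$), the prover's measurement of $\RegB_i \otimes \RegX_i$ in the standard basis yields $(b_i, x_i)$ with $\Check(\pk_i, b_i, x_i, y_i) = 1$, which by the correctness property (1c) of \cref{def:clawfree} means $y_i \in \mathsf{Supp}(\Eval(\pk_i,(b_i,x_i)))$, and since $f_{\pk_i}$ is injective, property (1a) forces $\Invert(\inj,\sk_i,y_i) = (b_i, x_i)$. So the bit returned by inverting $y_i$ equals the first bit the prover would have measured — and since the test-round opening unitary is WLOG the identity (as noted after the cheating-prover parameterization), this measurement of $\RegB_i$ acts on exactly the state $\ket{\psi}$. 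The only subtlety is quantifying the $(1-\negl)$ test-round success: I would invoke the gentle measurement lemma (as the paper does elsewhere) to argue that projecting $\ket{\psi}$ onto the "accepting" subspace for the $S$-slots disturbs it negligibly, so sampling $v_i$ by direct measurement of $\RegB_i$ versus by inversion of $y_i$ changes the joint output distribution by at most $\poly(\negl(\secp))$.

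The second step is to move from $D_1$ to $D_{P^*,\twotoone}$, which differs only in that $\MP.\Gen$ is now run on the constant-$1$ circuit $\mathbf{1}$ instead of $C$ — i.e., all key pairs become two-to-one. This is a purely computational step, justified by the \textbf{key indistinguishability} property of the batch key generation scheme: $C$ and $\mathbf{1}$ agree exactly on the set $T = R = \{i : h_i = 1\}$, so $\PK \gets \Gen(1^\secp, C)$ is computationally indistinguishable from $\PK \gets \Gen(1^\secp, \mathbf{1})$ \emph{even given} all secret keys $\{\sk_i : i \in R\}$. The crucial observation — and the reason the hybrid $D_1$ was set up to eliminate inversions for $i \in S$ first — is that the entire sampling procedure of $D_1$, given $\PK$ and $\{\sk_i\}_{i \in R}$, is an efficient quantum algorithm: it runs $P^*$ (which only needs $\PK$), measures $\RegB_i$ for $i \in S$ without any secret key, applies $U$, measures in the Hadamard basis, and then uses only $\{\sk_i\}_{i \in R}$ to compute the inverses $(0,x_{0,i}),(1,x_{1,i})$ and the $\Good$ checks. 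Hence any distinguisher between $D_1$ and $D_{P^*,\twotoone}$ yields a distinguisher against key indistinguishability with the secret keys for $T = R$ as auxiliary input, contradiction.

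Chaining the two steps: $D_{P^*,\mathrm{Out}} \approx_s D_1 \approx_c D_{P^*,\twotoone}$, which gives the lemma. I expect the main obstacle to be the first step rather than the second — specifically, carefully handling the "passes the test round with probability $1-\negl(\secp)$" hypothesis. One must argue that the $S$-slot bits extracted by inversion really do coincide (up to negligible statistical error) with a standard-basis measurement of $\RegB_i$ \emph{on the post-commitment state $\ket{\psi}$, before the attack unitary $U$ is applied}, and that this can be done simultaneously for all $i \in S$ without the measurements interfering with the subsequent Hadamard measurement that produces $u$. The cleanest route is: (i) note that the honest test-round behaviour is a standard-basis measurement of $\RegB \otimes \RegX$; (ii) use perfect correctness of $\Invert$ in injective mode to identify the inverted bit with the measured bit whenever $\Check$ passes; (iii) use the near-certainty of $\Check$ passing (the test-round hypothesis) plus gentle measurement to conclude the disturbance to $\ket{\psi}$ from "pre-measuring" the $\RegB_i$ is negligible, so the order of these measurements relative to $U$ and the Hadamard measurement is immaterial up to $\negl(\secp)$. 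The second step is then essentially a one-line reduction once the syntactic setup is in place. If the paper wants a fully self-contained argument, I would also explicitly remark that the $\Good$-check resampling step for $i \in R$ is common to both distributions and hence irrelevant to the reduction.
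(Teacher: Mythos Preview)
Your proposal is correct and matches the paper's proof essentially line for line: your hybrid $D_1$ is exactly the paper's intermediate distribution $D_{P^*,h}$, and the two steps (statistical closeness via injectivity plus the test-round hypothesis, then computational indistinguishability via batch key indistinguishability with $\{\sk_i\}_{i\in R}$ as auxiliary input) are the same as the paper's Claims~\ref{lemma:verifier-output} and~\ref{lemma:extractor-key-indistinguishability}. The paper is terser on the first step---it simply observes that for $i\in S$ the register $\RegB_i$ is (up to $\negl$) already determined by $y_i$ when $\pk_i$ is injective and the test passes, so measuring it gives the inverted bit and does not disturb the state---but this is the same content as your gentle-measurement reasoning.
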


We prove \cref{lemma:observables-to-output} by first switching the keys sampled in $D_{P^*,\twotoone}$ to match the verifier's basis choice $h$. That is, we define the distribution $D_{P^*,h}$ to be the same distribution as $D_{P^*,\twotoone}$, except that the keys $(\pk_i, \sk_i)$ are sampled in mode $h_i$, i.e., Step 1 is replaced with:
\begin{enumerate}
    \item Sample keys $(\PK,\SK) \gets \MP.\Gen(1^\secp,N,C)$ \textcolor{red}{(where $C(i) = h_i$ for all $i$)}.
\end{enumerate}
This is well-defined because the $i$th bit of the output is still obtained by measuring $\RegB_i$, which can be done regardless of how $(\pk_i,\sk_i)$ is sampled.     
  \begin{claim}\label{lemma:extractor-key-indistinguishability}
     For every basis choice $h$, $D_{P^*,h}$ is computationally indistinguishable from $D_{P^*,\twotoone}$.
  \end{claim}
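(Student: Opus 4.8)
The plan is to reduce directly to the Key Indistinguishability property of the batch key generation scheme. The first observation is that $D_{P^*,h}$ and $D_{P^*,\twotoone}$ are produced by the \emph{same} sampling procedure, differing only in the circuit handed to $\MP.\Gen$: in $D_{P^*,h}$ it is the circuit $C$ with $C(i)=h_i$, and in $D_{P^*,\twotoone}$ it is the constant-one circuit $\mathbf{1}$. These two circuits agree exactly on the set $R = \{i\in[N] : h_i=1\}$, so Key Indistinguishability (instantiated with $C_1 = C$, $C_2 = \mathbf{1}$, and agreement set $T = R$) tells us that $\PK \gets \Gen(1^\secp, C)$ is computationally indistinguishable from $\PK \gets \Gen(1^\secp, \mathbf{1})$, \emph{even given} $\{\sk_i\}_{i\in R}$.

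Second, I would record the structural fact that makes the reduction possible: the sampling procedure for either distribution consults secret key material only at indices in $R$. For $i \in S$ the $i$th output bit is obtained by measuring the register $\RegB_i$ of the prover's residual state, which needs no secret key; the only use of any $\sk_i$ is in Step 5, which runs $\Invert(\twotoone,\sk_i,y_i)$ for $i \in R$. This is precisely why the present claim appears after the transition from $D_{P^*,\mathrm{Out}}$ to the measure-$\RegB_i$ variant in the proof of \cref{lemma:observables-to-output}: that step removes any dependence on $\{\sk_i\}_{i\in S}$, without which the reduction below could not go through.

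Third, given a candidate QPT distinguisher $D$ for $D_{P^*,h}$ versus $D_{P^*,\twotoone}$, I would build a distinguisher $D'$ for Key Indistinguishability. On input $\PK$ (sampled from $\Gen(1^\secp,C)$ or $\Gen(1^\secp,\mathbf{1})$) together with $\{\sk_i\}_{i\in R}$, the distinguisher $D'$ runs the hardwired non-uniform prover $P^*$ on $\PK$ to obtain $y$ and the residual state, measures $\RegB_i$ for each $i\in S$ to get $v$, applies the attack unitary $U$, measures $\RegZ$ in the Hadamard basis to obtain $(d_1,\dots,d_N)$, uses $\{\sk_i\}_{i\in R}$ to compute $u$ as in Step 5, and outputs $D(u,v)$. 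By the structural fact, $D'$ perfectly simulates $D_{P^*,h}$ when $\PK$ comes from $\Gen(1^\secp,C)$ and $D_{P^*,\twotoone}$ when it comes from $\Gen(1^\secp,\mathbf{1})$, so $D'$ inherits the advantage of $D$, and it is QPT since $P^*$ and $U$ are efficient and $\Invert$ runs in classical polynomial time. The main point requiring care — more bookkeeping than a genuine obstacle — is confirming that no secret information beyond $\{\sk_i\}_{i\in R}$ ever enters the sample, so that $D'$ never needs the full master secret key $\SK$ (the Key Indistinguishability game only hands over $\{\sk_i\}_{i\in T}$); once that is checked, the claim is immediate.
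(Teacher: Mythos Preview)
Your proposal is correct and follows the same approach as the paper: invoke Key Indistinguishability of the batch key generation scheme with agreement set $R$ (where $C$ and $\mathbf{1}$ coincide), and observe that the common sampling procedure only uses $\{\sk_i\}_{i\in R}$, so a distinguisher for $D_{P^*,h}$ versus $D_{P^*,\twotoone}$ yields one for the key indistinguishability game. The paper's proof is a terse two-sentence version of exactly this argument; your explicit construction of the reduction $D'$ and the structural observation about where secret keys are consumed simply spell out what the paper leaves implicit.
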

  
  \begin{proof}
     This follows by invoking the following key indistinguishability property of $\Gen$: 
     \[\left\{(\PK, \SK)\gets \Gen(1^\secp, 1): (\PK, \{\sk_i\}_{i\not\in S})\right\} \approx_c \left\{ (\PK, \SK)\gets \Gen(1^\secp, C): (\PK, \{\sk_i\}_{i\not\in S})\right\}.
     \]
     Since the distributions are sampled without use of $\sk_i$ for all $i\in S$, \cref{lemma:extractor-key-indistinguishability} follows from this indistinguishability. 
  \end{proof}
  
To conclude that $D_{P^*,\twotoone} \approx_c D_{P^*,\mathrm{Out}}$, we note:

\begin{claim}\label{lemma:verifier-output}
If $D_{P^*, \mathsf{Out}}$ is instantiated with basis choice $h$, then $D_{P^*, \mathsf{Out}}$ is statistically indistinguishable from $D_{P^*,h}$.
\end{claim}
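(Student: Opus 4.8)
The plan is to observe that $D_{P^*,\mathrm{Out}}$ and $D_{P^*,h}$ differ \emph{only} in Step 3: in $D_{P^*,\mathrm{Out}}$ each bit $v_i$ (for $i\in S$) is computed classically as the first component of $\Invert(\inj,\sk_i,y_i)$, without touching the prover's residual state $\ket{\psi}$, whereas in $D_{P^*,h}$ it is obtained by measuring the register $\RegB_i$ of $\ket{\psi}$ in the standard basis; Steps 1, 2, 4, 5, 6 are literally identical. So the entire content of the claim is that, on the part of $\ket{\psi}$ that would pass the test round, these two versions of Step 3 agree on \emph{both} the value $v$ produced and the residual state left behind, and that the remaining part of $\ket{\psi}$ carries negligible weight. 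Note also that this is exactly where injectivity of the keys on $S$ is used, which is why the claim concerns $D_{P^*,h}$ (keys in mode $h$) rather than $D_{P^*,\twotoone}$.

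Concretely, I would fix the keys $(\PK,\SK)$ (sampled in mode $h$) and a commitment string $y$, with associated normalized residual state $\ket{\psi_y}\in\RegZ\otimes\RegI$. Since the test-round attack unitary is the identity (WLOG, as in \cref{subsec:black-box}), the test-round acceptance event is given by the projector $\Pi_{\mathrm{acc}}=\bigotimes_{i} \Pi_{\mathrm{acc},i}$ on $\RegZ$, where $\Pi_{\mathrm{acc},i}$ projects $\RegZ_i$ onto the computational basis states $\ket{b,x}$ with $\Check(\pk_i,b,x,y_i)=1$. The structural point, from the rTCF correctness conditions (items 1(a), 1(c) of \cref{def:clawfree}, together with disjointness of the supports of $\Eval(\pk_i,\cdot)$), is that for $i\in S$ the function $f_{\pk_i}$ is injective, so $\Pi_{\mathrm{acc},i}=\ketbra{\Invert(\inj,\sk_i,y_i)}_{\RegZ_i}$ is rank one (or zero, if $\Invert$ returns $\bot$); in particular, throughout the image of $\Pi_{\mathrm{acc}}$ the register $\RegB_i$ is pinned to the basis state $\ket{g_i(y_i)}$, where $g_i(y_i)$ is the first component of $\Invert(\inj,\sk_i,y_i)$. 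Set $\varepsilon_y := 1-\|(\Pi_{\mathrm{acc}}\otimes\Id)\ket{\psi_y}\|^2$ and let $\ket{\psi_y'}$ be the re-normalized projection $(\Pi_{\mathrm{acc}}\otimes\Id)\ket{\psi_y}$; then $\ket{\psi_y}$ and $\ket{\psi_y'}$ are within trace distance $\sqrt{\varepsilon_y}$.

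Next I would run Steps 3--6 on $\ket{\psi_y'}$ in place of $\ket{\psi_y}$: for $i\in S$, the Step-3 measurement of $D_{P^*,h}$ returns $v_i=g_i(y_i)$ with probability $1$ and acts as the identity on $\ket{\psi_y'}$, while the Step-3 computation of $D_{P^*,\mathrm{Out}}$ produces the same $v_i=g_i(y_i)$ and likewise leaves $\ket{\psi_y'}$ untouched. Hence, conditioned on $(\PK,\SK)$, $y$, and the state being $\ket{\psi_y'}$, the two procedures feed Steps 4--6 the identical state and identical $v$, so their output distributions coincide \emph{exactly}. Swapping $\ket{\psi_y'}$ back for the true $\ket{\psi_y}$ changes each conditional output by at most $\sqrt{\varepsilon_y}$ in statistical distance, so by the triangle inequality the conditional distributions of $D_{P^*,\mathrm{Out}}$ and $D_{P^*,h}$ are within $2\sqrt{\varepsilon_y}$. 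Averaging over $(\PK,\SK)$ and $y$ and applying Jensen's inequality, $\mathrm{SD}(D_{P^*,\mathrm{Out}},D_{P^*,h})\le 2\,\E[\sqrt{\varepsilon_y}]\le 2\sqrt{\E[\varepsilon_y]}$, and $\E[\varepsilon_y]$ is precisely the overall test-round rejection probability of $P^*$ on basis $h$, which is $\negl(\secp)$ by hypothesis; since $\sqrt{\negl(\secp)}=\negl(\secp)$, the claim follows.

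The main obstacle here is essentially bookkeeping rather than a genuine difficulty: the two things that need care are (i) unpacking the rTCF correctness conditions to conclude that $\Pi_{\mathrm{acc},i}$ is rank one on every injective slot, so that ``read the first bit off $\ket{\psi}$'' and ``invert $y_i$ with $\sk_i$'' yield the same deterministic bit and the former measurement is a true no-op on the accepting subspace; and (ii) the averaging, since the test-round guarantee is an average over keys and commitments, not a per-transcript statement, so one must pass through Jensen's inequality (a gentle-measurement-flavored step) to turn $\E[\varepsilon_y]=\negl$ into $\E[\sqrt{\varepsilon_y}]=\negl$. Nothing in this argument invokes a computational assumption, which is why the claim asserts statistical rather than computational indistinguishability.
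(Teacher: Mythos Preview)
Your proposal is correct and follows essentially the same approach as the paper's proof: both hinge on injectivity of $f_{\pk_i}$ for $i\in S$ to argue that, on the accepting part of the state, measuring $\RegB_i$ agrees with classically inverting $y_i$, and then invoke the $1-\negl(\secp)$ test-round success to bound the error. Your version is considerably more explicit---carefully tracking the residual state via the gentle-measurement style projection onto $\Pi_{\mathrm{acc}}$ and passing through Jensen's inequality---whereas the paper compresses this into a one-sentence sketch.
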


\begin{proof}
\cref{lemma:verifier-output} follows from the injectivity of $f_{\pk_i}$ for each $i\in S$; by the correctness of $\Gen$, we have that each $\pk_i$ (for $i\in S$) is in the support of $\TCF.\Gen(1^\secp, \inj)$. Therefore, since $\ket{\psi}$ is guaranteed to pass the test round with probability $1-\negl(\secp)$, we have that with probability $1-\negl(\secp)$, measuring $\RegB_i$ gives the same result as computing the first bit of $\Invert(\sk_j, y_j)$ (which is the verifier's output). 
\end{proof}

\subsection{The Protocol Observables}
\label{subsec:XZ-definition}

\paragraph{Defining the Protocol Observables.}

In $D_{P^*,\twotoone}$, the entire $N$-bit output $(u,v)$ is the result of performing measurements on $\ket{\psi}$, the prover's residual state after it sends its commitment $y$.

We now define a collection of binary observables $\{X_i,Z_i\}_{i \in [N]}$, parameterized by $(\PK,\SK,y)$ and the malicious prover's attack unitary $U$, such that the following process is equivalent to sampling from $D_{P^*,\twotoone}$:

\begin{enumerate}
    \item Sample keys $(\PK,\SK) \gets \MP.\Gen(1^\secp,N,\mathbf{1})$.
    \item Run the malicious prover on $\PK$ to obtain a classical commitment string $y$. Let $\ket{\psi}$ denote the prover's residual state. 
    \item For each $i \in S$, measure $\ket{\psi}$ with the observable $Z_i$ to obtain a bit $v_i$.
    \item Next, for each $i \in R$, measure the observable $X_i$ to obtain a bit $u_i$.
    \item Output $(u,v) \in \{0,1\}^R \times \{0,1\}^S$.
\end{enumerate}

The definition of the $Z_i$ observable is straightforward: since each $v_i$ is obtained by measuring $\RegB_i$ in the standard basis, $Z_i$ is simply the Pauli-$Z$ observable $Z_i \coloneqq (\sigma_Z)_{\RegB_i}$.

Defining the $X_i$ observable requires more care. In $D_{P^*,\twotoone}$, the string $u \in \{0,1\}^R$ is obtained by applying the following steps (after $v \in \{0,1\}^S$ is measured)
\begin{enumerate}
    \item Apply the prover's attack unitary $U$ on $\RegZ \otimes \RegI$.
    \item For each $i \in R$:
    \begin{enumerate}
        \item Apply $H^{\otimes \ell+1}$ to the register $\RegZ_i$ containing the prover's response in the $i$th slot.
        \item Measure $\RegZ_i$ to obtain $d_i$. If $d_i\in \Good(x_{0,i}, x_{1,i})$, set $u_i = d_i \cdot (1, x_{0,i} \oplus x_{1,i})$. If $d_i\not\in \Good(x_{0,i}, x_{1,i})$, set $u_i$ to be a uniformly random bit.
    \end{enumerate}
\end{enumerate}

In order to output a uniformly random bit, we will prepare fresh one-qubit ancilla registers $\RegU_1,\dots,\RegU_N$, so that in the event that the prover returns an invalid $d_i$ in slot $i$, the verifier can generate a random bit by measuring $\RegU_i$ (initialized to $\ket{0}$) in the Hadamard basis. Note that the $\RegU = \RegU_1,\dots,\RegU_N$ register is not part of the malicious prover's state.

We therefore redefine $\ket{\psi} \coloneqq \ket{\psi}_{\RegZ,\RegI} \ket{0}_{\RegU}$ to denote the global state on $\RegZ \otimes \RegI \otimes \RegU$ including the ancilla $\RegU$ registers initialized to $\ket{0}_{\RegU}$.

Finally, the $X_i$ observable is defined as 

\[ X_i =  (U \tensor \Id_{\RegU})^\dagger (H^{\otimes \ell+1}_{\RegZ_i} \tensor \Id \tensor H_{\RegU_i}) X'_i (H^{\otimes \ell+1}_{\RegZ_i}\tensor \Id \tensor H_{\RegU_i}) (U \tensor \Id_{\RegU}) . 
\]
where
\begin{align*}
    X'_i = &\sum_{d \in \Good(x_{0,i}, x_{1,i})} (-1)^{ d \cdot (1, x_{0,i} \oplus x_{1,i}) } \ketbra{d}_{\RegZ_i} \tensor \Id_{\RegI, \{\RegZ_j\}_{j\neq i},\RegU} \\
    &+ \sum_{d \not\in \Good(x_{0,i}, x_{1,i}), u\in \{0,1\}} (-1)^{u} \ketbra{d,u}_{\RegZ_i,\RegU_i} \tensor \Id_{\RegI, \{\RegZ_j\}_{j\neq i},\{\RegU_j\}_{j\neq i}}.
\end{align*}

Note that the $X_i$ observables are defined so that each pair of $X_i,X_j$ commute. Moreover, one can verify that measuring $X_i$ for each $i \in R$ exactly corresponds to measuring $u \in \{0,1\}^R$ as described above.

The description of $X_i$ depends on $(y_i, \sk_i)$ because of the appearance of $x_{i,0}, x_{i,1}$ in $X_i'$. Moreover, note that each $X_i$ is efficiently computable given $\sk_i$.

For convenience, we define a procedure $\{X_i,Z_i\}_i,\ket{\psi}_{\RegZ,\RegI,\RegU} \gets \mathsf{Samp}$ that works as follows:
\begin{itemize}
    \item Sample keys $(\PK,\SK) \gets \MP.\Gen(1^\secp,N,\mathbf{1})$.
    \item Run the malicious prover on $\PK$ to obtain a classical commitment string $y$. Let $\ket{\psi'}$ denote the prover's residual state on $\RegZ \otimes \RegI$.
    \item Output the observables $\{X_i,Z_i\}$ parameterized by $(\PK,\SK,y)$ and malicious prover's unitary $U$, along with the state $\ket{\psi} \coloneqq \ket{\psi'} \otimes \ket{0}_{\RegU}$. 
\end{itemize}

For the remainder of this section, we will write $D_{P^*,\twotoone}$ as a two-step sampling process:
\begin{enumerate}
    \item Run $\{X_i,Z_i\}_i,\ket{\psi}_{\RegZ,\RegI,\RegU} \gets \mathsf{Samp}$.
    \item Starting with $\ket{\psi}$, measure each $Z_i$ for $i \in S$ to obtain $v \in \{0,1\}^S$. Then measure each $X_i$ for $i \in R$ to obtain $u \in \{0,1\}^R$. Output $(u,v) \in \{0,1\}^R \times \{0,1\}^S$.
\end{enumerate}

\subsection{The Extracted State}\label{subsec:extracted-state}

Recall that our definition of measurement protocol soundness (\cref{def:measurement-protocol-soundness}) requires us to give an extractor that:
\begin{enumerate}
    \item Generates keys $(\PK,\SK)$ according to an algorithm $\SimGen(1^\secp)$ (independently of the verifier's basis choice $h$).
    \item Runs the malicious prover $P^*$ on $\PK$ to obtain $y$; as usual, $\ket{\psi} \in \RegZ \otimes \RegI \otimes \RegU$ denotes the residual prover state with $\RegU$ initialized to $\ket{0}_{\RegU}$.\footnote{To match the syntax of our definition in~\cref{def:measurement-protocol-soundness}, the register $\RegU$ should be viewed as an internal register initialized by the extractor.} 
    \item Generates an extracted state $\btau \gets \Ext^{U,\ket{\psi}}(\PK,\SK,y)$ (the superscript denotes black-box access to a unitary $U$ and state $\ket{\psi}$, see \cref{subsec:black-box}).
\end{enumerate}

We define $\SimGen(1^\secp)$ to be $\MP.\Gen(1^\lambda,N,\mathbf{1})$, which exactly corresponds to how keys are sampled in $D_{P^*,\twotoone}$.

To establish soundness, it remains to (1) describe how to generate the extracted state $\btau$ given $(\PK,\SK,y),U$, and (2) prove that the distribution that arises from measuring $\btau$ with the Pauli-$X$ and Pauli-$Z$ observables in the verifier's chosen bases $h$ is computationally indistinguishable from $D_{P^*,\twotoone}$.

We handle (1) in~\cref{def:extracted-state}. We then describe the distribution $D_{P^*,\mathrm{Ext}}$ that arises from measuring our extracted state in~\cref{subsubsec:measuring-extracted} and prove that indistinguishability from $D_{P^*,\twotoone}$ in~\cref{subsec:measurement-indistinguishability}.

\subsubsection{A Teleportation-Inspired Extraction Procedure}
\label{def:extracted-state}

Fix a choice of $\{X_i,Z_i\},\ket{\psi} \gets \mathsf{Samp}$. For ease of notation, write $\RegH = \RegZ \otimes \RegI \otimes \RegU$ so that $\ket{\psi} \in \RegH$. We would like an efficient extraction procedure that takes as input $\ket{\psi} \in \RegH$ and generates an $N$-qubit state $\btau$ such that, roughly speaking, measuring $\ket{\psi}$ with $X/Z$ and measuring $\btau$ with $\sigma_X/\sigma_Z$ produce indistinguishable outcomes.

\paragraph{Intuition for the Extractor.} Before we describe our extractor, we first provide some underlying intuition. For an arbitrary $N$-qubit Hilbert space, let $\sigma_{x,i}$/$\sigma_{z,i}$ denote the  Pauli $\sigma_x$/$\sigma_z$ observable acting on the $i$th qubit. For each $r,s \in \{0,1\}^N$, define the $N$-qubit Pauli ``parity'' observables
\[\sigma_x(r) \coloneqq \prod_{i: r_i = 1} \sigma_{x,i} \;,\; \sigma_z(s) \coloneqq \prod_{i: s_i = 1} \sigma_{z,i}.\]

Suppose for a moment that $\ket{\psi} \in \RegH$ is \emph{already} an $N$-qubit state (i.e., $\RegH$ is an $N$-qubit Hilbert space) and moreover, that each $X_i$/$Z_i$ observable is simply the corresponding Pauli observable $\sigma_{x,i}$/$\sigma_{z,i}$. While these assumptions technically trivialize the task (the state already has the form we want from the extracted state), it will be instructive to \textbf{write down an extractor that ``teleports'' this state into another $N$-qubit external register. }

We can do this by initializing two $N$-qubit registers $\RegA_1 \otimes \RegA_2$ to $\ket{\phi^+}^{\otimes N}$ where $\ket{\phi^+}$ is the EPR state $(\ket{00} + \ket{11})/\sqrt{2}$ (the $i$th EPR pair lives on the $i$th qubit of $\RegA_1$ and $\RegA_2$). Now consider the following steps, which are inspired by the ($N$-qubit) quantum teleportation protocol
\begin{enumerate}
    \item Initialize a $2N$-qubit ancilla $\RegW$ to $\ket{0^{2N}}$, and apply $H^{\otimes 2N}$ to obtain the uniform superposition.
    \item Apply a ``controlled-Pauli'' unitary, which does the following for all $r,s \in \{0,1\}^N$ and all $\ket{\phi} \in \RegH \otimes \RegA_1$:
    \[\ket{r,s}_{\RegW} \ket{\phi}_{\RegH,\RegA_1} \rightarrow \ket{r,s}_{\RegW} (\sigma_x(r)\sigma_z(s)_{\RegH} \otimes \sigma_x(r)\sigma_z(s)_{\RegA_1})\ket{\phi}_{\RegH,\RegA_1}\] 
    \item Apply the unitary that XORs onto $\RegW$ the outcome of performing $N$ Bell-basis measurements\footnote{The Bell basis consists of the $4$ states $(\sigma_x^a \sigma_z^b \otimes \Id)\ket{\phi^+}$ for $a,b \in \{0,1\}$ on $2$ qubits.} on $\RegA_1 \otimes \RegA_2$ onto $\RegW$, i.e., for all $u,v,r,s \in \{0,1\}^N$: \[\ket{u,v}_{\RegW}(\sigma_x(r)\sigma_z(s) \otimes \Id)_{\RegA_1,\RegA_2}\ket{\phi^+}^{\otimes N}_{\RegA_1,\RegA_2} \rightarrow \ket{u \oplus r ,v \oplus s}_{\RegW}(\sigma_x(r)\sigma_z(s) \otimes \Id)_{\RegA_1,\RegA_2}\ket{\phi^+}^{\otimes N}_{\RegA_1,\RegA_2}.\]
    Finally, discard $\RegW$.
\end{enumerate}
One can show that the resulting state is
\begin{equation}
\label{eqn:trivial-extracted-state}
\frac{1}{2^N} \sum_{r,s \in \{0,1\}^N} (\sigma_x(r)\sigma_z(s) \otimes \sigma_x(r) \sigma_z(s) \otimes \Id)\ket{\psi}_{\RegH} \ket{\phi^+}_{\RegA_1,\RegA_2}=\ket{\phi^+}_{\RegH,\RegA_1} \ket{\psi}_{\RegA_2},
\end{equation}
where $\ket{\psi}$ is now ``teleported'' into the $\RegA_2$ register. 

To generalize this idea to the setting where $\ket{\psi} \in \RegH$ is an arbitrary quantum state and $\{X_i,Z_i\}_i$ are an arbitrary collection of $2N$ observables, \emph{we simply replace each $\sigma_x(r)$ and $\sigma_z(s)$ acting on $\RegH$ above with the corresponding parity observables for $\{X_i,Z_i\}$}. That is for each $r,s \in \{0,1\}^N$, define
\[ Z(s) = \prod_{i=1}^N Z_i^{s_i}
\hspace{.1in} \mbox{and}
\hspace{.1in} X(r) = \prod_{i=1}^N X_i^{r_i}. 
\]
The rough intuition is that as long as the $\{X_i\}$ and $\{Z_i\}$ observables ``behave like'' Pauli observables with respect to $\ket{\psi}$, the resulting procedure will ``teleport'' $\ket{\psi}$ into the $N$-qubit register $\RegA_2$.

\paragraph{The Full Extractor.} In more detail, we have the state $\ket{\psi}_\RegH = \ket{\psi}_{\RegZ,\RegI,\RegU}$, and we initialize two $N$-qubit registers $\RegA_1 \otimes \RegA_2$ to $\ket{\phi}^{\otimes N}$. We run the following steps (the changes from the above procedure are highlighted in \textcolor{red}{red}):
\begin{enumerate}
    \item Initialize a $2N$-qubit ancilla $\RegW$ to $\ket{0^{2N}}$, and apply $H^{\otimes {2N}}$.
    \item Apply a unitary that does the following for all $r,s \in \{0,1\}^N$:
    \[\ket{r,s}_{\RegW} \ket{\phi}_{\RegH,\RegA_1} \rightarrow \ket{r,s}_{\RegW} (\textcolor{red}{X(r) Z(s)_{\RegH}} \otimes \sigma_x(r)\sigma_z(s)_{\RegA_1})\ket{\phi}_{\RegH,\RegA_1}\]
    \item Apply the unitary that XORs onto $\RegW$ the outcome of performing $N$ Bell-basis measurements on $\RegA_1 \otimes \RegA_2$ onto $\RegW$, i.e., for all $u,v,r,s \in \{0,1\}^N$: \[\ket{u,v}_{\RegW}(\sigma_x(r)\sigma_z(s) \otimes \Id)\ket{\phi^+}^{\otimes N}_{\RegA_1,\RegA_2} \rightarrow \ket{u \oplus r ,v \oplus s}_{\RegW}(\sigma_x(r)\sigma_z(s) \otimes \Id)\ket{\phi^+}^{\otimes N}_{\RegA_1,\RegA_2}.\]
    Finally, discard $\RegW$.
\end{enumerate}

All of these steps can be efficiently implemented given black-box access to $\{X_i,Z_i\}_i$. The resulting state is
\begin{align*} 
\frac{1}{2^{N}}\sum_{r,s \in \{0,1\}^{N}} X(r)Z(s)\ket{\psi}_{\RegH} \otimes \sigma_x(r)\sigma_z(s)\ket{\phi^+}^{\otimes N}_{\RegA_1,\RegA_2},
\end{align*}
and we define the extracted state $\btau \coloneqq \mathsf{Ext}_{\{X_i\},\{Z_i\}}(\ket{\psi})$ to be the residual state on $\RegA_2$ after tracing out $\RegH$ and $\RegA_1$.\footnote{The same extracted state is defined in Vidick's lecture notes~\cite{Vid20-course}, although the notes do not give an explicit procedure for generating it.}

\subsubsection{Measuring the Extracted State}
\label{subsubsec:measuring-extracted}
We now consider the $N$-bit distribution of measurement outcomes that arise from measuring the extracted state $\btau$ using the Pauli observables $\sigma_x,\sigma_z$. In particular, we consider performing the measurements according to the verifier's basis choice, so that we measure $\sigma_{z,i}$ for each $i \in S$ and $\sigma_{x,i}$ for each $i \in R$. 

Formally, we define the distribution $D_{P^*,\mathrm{Ext}}$ on $\{0,1\}^N$ obtained by the following process:
\begin{itemize}
    \item Run $\{X_i,Z_i\},\ket{\psi} \gets \mathsf{Samp}$.
    \item Let $\btau = \mathsf{Ext}_{\{X_i\},\{Z_i\}}(\ket{\psi})$ be the $N$-qubit extracted state.
    \item Measure the Pauli-$Z$ observable $\sigma_{z,i}$ for all $i \in S$, obtaining $v \in \{0,1\}^S$.
    \item Measure the Pauli-$X$ observable $\sigma_{x,i}$ for all $i \in R$, obtaining $u \in \{0,1\}^R$.
    \item Output $(u,v) \in \{0,1\}^R \times \{0,1\}^S$.
\end{itemize}

It will be convenient to define the following projection operators. For each $u \in \{0,1\}^R$ and $v \in \{0,1\}^S$ let
 \begin{equation}\label{eq:f-i}
\Pi^{\sigma_x}_{u} = \E_{u'\in \{0,1\}^R} (-1)^{u\cdot u'}\sigma_x(u')\qquad\text{and}\qquad  \Pi^{\sigma_z}_{v} = \E_{v'\in \{0,1\}^S} (-1)^{v\cdot v'}\sigma_z(v')
\end{equation}
In words, $\Pi^{\sigma_x}_{u}$ is the projection that corresponds to measuring $\sigma_{x,i}$ for each $i \in R$ and obtaining the string of outcomes $u \in \{0,1\}^R$, and $\Pi^{\sigma_z}_{v}$ is the projection that corresponds to measuring $\sigma_{z,i}$ for each $i \in S$ and obtaining the string of outcomes $v \in \{0,1\}^S$.

Then the probability $D_{P^*,\mathrm{Ext}}$ outputs any $(u,v) \in \{0,1\}^R \times \{0,1\}^S$ can be written as
\[ D_{P^*,\mathrm{Ext}}(u,v) = \E_{\{X_i,Z_i\},\ket{\psi} \gets \mathsf{Samp}}[\Tr\big(\Pi^{\sigma_x}_{u} \Pi^{\sigma_z}_{v} \btau\big) : \btau = \mathsf{Ext}_{\{X_i\},\{Z_i\}}(\ket{\psi})].\]

We define a set of analogous projection operators for the $\{X_i\}$ and $\{Z_i\}$ observables. For each $u \in \{0,1\}^R$ and $v \in \{0,1\}^S$, let
\begin{equation}\label{eq:XZ-proj}
\Pi^{X}_{u} = \E_{u'\in \{0,1\}^R} (-1)^{u\cdot u'}X(u')\qquad\text{and}\qquad  \Pi^{Z}_{v} = \E_{v'\in \{0,1\}^S} (-1)^{v\cdot v'}Z(v')
\end{equation}
In words, $\Pi^{X}_{u}$ is the projection that corresponds to measuring $X_i$ for each $i \in R$ and obtaining the string of outcomes $u \in \{0,1\}^R$, and $\Pi^{Z}_{v}$ is the projection that corresponds to measuring $Z_i$ for each $i \in S$ and obtaining the string of outcomes $v \in \{0,1\}^S$.

With these definitions in mind, we state a claim that allows us to characterize the result of measuring the extracted state $\btau$ with the Pauli observables.

\begin{claim}\label{claim:extracted-state-distribution}
Fix any choice of $\{X_i,Z_i\}_{i \in [N]}$ and state $\ket{\psi}$, and let $\btau = \mathsf{Ext}_{\{X_i\},\{Z_i\}}(\ket{\psi})$. For all $(u,v)\in \{0,1\}^R\times \{0,1\}^S$ it holds that 
 \begin{equation}
\Tr\big(\Pi^{\sigma_x}_{u} \Pi^{\sigma_z}_{v} \btau\big) = \E_{u' \in \{0,1\}^{R}}  \bra{\psi} \Pi^{Z}_{v} Z(u') \Pi^{X}_{u' \oplus u} Z(u') \Pi^{Z}_{v} \ket{\psi}.
  \end{equation}
\end{claim}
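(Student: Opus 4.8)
The plan is to open up the extracted state $\btau$ into an explicit operator on $\RegA_2$, evaluate the Pauli trace appearing on the left‑hand side, and then match the outcome term‑by‑term against an expansion of the right‑hand side.

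First I would recall (from \cref{def:extracted-state}) that $\btau=\Tr_{\RegH,\RegA_1}\ketbra{\Psi}$, where $\ket{\Psi}=\frac{1}{2^{N}}\sum_{r,s\in\{0,1\}^N}\big(X(r)Z(s)\ket{\psi}\big)_{\RegH}\otimes\big(\sigma_x(r)\sigma_z(s)\big)_{\RegA_1}\ket{\phi^+}^{\otimes N}_{\RegA_1,\RegA_2}$ is the state produced by the extraction procedure before the trace. Since $R\cap S=\emptyset$, the real‑Pauli projectors $\Pi^{\sigma_x}_u$ and $\Pi^{\sigma_z}_v$ act on disjoint qubits and hence commute, so their product is a genuine projector on $\RegA_2$ and $\Tr(\Pi^{\sigma_x}_u\Pi^{\sigma_z}_v\btau)=\bra{\Psi}\big(\Id_{\RegH,\RegA_1}\otimes\Pi^{\sigma_x}_u\Pi^{\sigma_z}_v\big)\ket{\Psi}$. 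Using the EPR identity $(M\otimes\Id)\ket{\phi^+}=(\Id\otimes M^{T})\ket{\phi^+}$ pair‑by‑pair (and the symmetry of $\sigma_x,\sigma_z$, so that $(\sigma_x(r)\sigma_z(s))^{T}=\sigma_z(s)\sigma_x(r)$), I would push the $\RegA_1$‑Paulis in $\ket{\Psi}$ onto $\RegA_2$ and then trace out $\RegA_1$ via $\Tr_{\RegA_1}\ketbra{\phi^+}^{\otimes N}=2^{-N}\Id$; combined with $X(r)X(r')=X(r\oplus r')$ this gives the explicit form $\btau=8^{-N}\sum_{r,s,r',s'}\bra{\psi}Z(s)X(r\oplus r')Z(s')\ket{\psi}\cdot\big(\sigma_z(s)\sigma_x(r\oplus r')\sigma_z(s')\big)_{\RegA_2}$, so that $\Tr(\Pi^{\sigma_x}_u\Pi^{\sigma_z}_v\btau)$ becomes a sum over $r,s,r',s'$ of $\bra{\psi}Z(s)X(r\oplus r')Z(s')\ket{\psi}$ weighted by the scalar $\Tr\big(\Pi^{\sigma_x}_u\Pi^{\sigma_z}_v\,\sigma_z(s)\sigma_x(r\oplus r')\sigma_z(s')\big)$.

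Next I would evaluate that scalar: expanding $\Pi^{\sigma_x}_u=\E_{a\in\{0,1\}^R}(-1)^{u\cdot a}\sigma_x(a)$, $\Pi^{\sigma_z}_v=\E_{b\in\{0,1\}^S}(-1)^{v\cdot b}\sigma_z(b)$, and computing qubit‑by‑qubit using that $\Tr(\sigma_z(w)\sigma_x(e))$ equals $2^N$ when $w=e=0$ and vanishes otherwise, the scalar is zero unless $r\oplus r'$ is supported on $R$ and $s,s'$ agree on $R$ (these constraints forcing $a=r\oplus r'$ and $b=s\oplus s'$), and on that support it equals $(-1)^{u\cdot(r\oplus r')+v\cdot(s\oplus s')+s\cdot(r\oplus r')}$. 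Substituting this, I would reparametrize the $(r,r')$‑sum by $q\coloneqq r\oplus r'\in\{0,1\}^R$ (each $q$ coming from $2^N$ pairs, which turns $8^{-N}$ into $4^{-N}$), write $s=s_R+s_S$ and $s'=s_R+s'_S$ with common $R$‑part $s_R\in\{0,1\}^R$ and $s_S,s'_S\in\{0,1\}^S$, and split $Z(s)=Z(s_S)Z(s_R)$, $Z(s')=Z(s_R)Z(s'_S)$ (noting $s\cdot q=s_R\cdot q$). The left‑hand side then reads $4^{-N}\sum_{q,s_R\in\{0,1\}^R}\sum_{s_S,s'_S\in\{0,1\}^S}(-1)^{u\cdot q+v\cdot(s_S\oplus s'_S)+s_R\cdot q}\,\bra{\psi}Z(s_S)Z(s_R)X(q)Z(s_R)Z(s'_S)\ket{\psi}$.

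Finally I would expand the right‑hand side of the claim the same way: writing $\Pi^Z_v=\E_{b\in\{0,1\}^S}(-1)^{v\cdot b}Z(b)$ and $\Pi^X_{u\oplus u'}=\E_{c\in\{0,1\}^R}(-1)^{(u\oplus u')\cdot c}X(c)$, and using that $Z(b)$ with $b\in\{0,1\}^S$ (a product of honest Pauli‑$Z$'s on the one‑qubit registers $\{\RegB_i\}_{i\in S}$) commutes with $Z(u')$ with $u'\in\{0,1\}^R$ (Pauli‑$Z$'s on $\{\RegB_i\}_{i\in R}$), the operator $\Pi^Z_v Z(u')\Pi^X_{u\oplus u'}Z(u')\Pi^Z_v$ expands into terms of the shape $Z(b_1)Z(u')X(c)Z(u')Z(b_2)$; averaging over $u'$ yields $4^{-N}\sum_{c,u'\in\{0,1\}^R}\sum_{b_1,b_2\in\{0,1\}^S}(-1)^{v\cdot(b_1\oplus b_2)+u\cdot c+u'\cdot c}\,\bra{\psi}Z(b_1)Z(u')X(c)Z(u')Z(b_2)\ket{\psi}$, which coincides with the previous expression under the relabeling $(q,s_R,s_S,s'_S)\leftrightarrow(c,u',b_1,b_2)$. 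The delicate point — and the place I expect to need the most care — is the sign bookkeeping in the Pauli commutations above, together with the fact that the identity must hold for \emph{arbitrary} observables $\{X_i,Z_i\}$: we therefore cannot use any commutation between $X_i$ and $Z_j$ for $i\neq j$, so the ordering $Z(\cdot)X(\cdot)Z(\cdot)$ must be maintained throughout and only the genuine commutativity among the $Z_i$ (and among the $X_i$) may be invoked.
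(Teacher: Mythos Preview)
Your proposal is correct and follows essentially the same approach as the paper: both expand the extracted state, identify the support constraints ($r\oplus r'$ supported on $R$, and $s,s'$ agreeing on $R$) that make the Pauli/EPR terms nonzero, reparametrize accordingly, and match with the Fourier expansion of $\Pi^Z_v Z(u')\Pi^X_{u\oplus u'}Z(u')\Pi^Z_v$. The only cosmetic difference is that you first reduce $\btau$ to an explicit operator on $\RegA_2$ via the EPR transpose identity before taking the trace, whereas the paper keeps the full state on $\RegH\otimes\RegA_1\otimes\RegA_2$ and evaluates the EPR inner products directly; the resulting sign and constraint bookkeeping is identical (note in particular that on the constrained support $s\cdot q=s_R\cdot q=s'\cdot q$, so your sign is consistent).
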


The proof of~\cref{claim:extracted-state-distribution} is a straightforward (but slightly tedious) computation and is deferred to~\cref{sec:extracted-dist}.

Importantly, \cref{claim:extracted-state-distribution} gives a clear understanding of how $D_{P^*,\twotoone}$ and $D_{P^*,\mathrm{Ext}}$ relate to each other, since it allows us to view the distribution $D_{P^*,\mathrm{Ext}}$ (which arises from Pauli measurements on the extracted state $\btau$) as the result of performing certain protocol observable measurements $\{X_i,Z_i\}$ on $\ket{\psi}$.

Recall that the distribution $D_{P^*,\twotoone}$ is the following distribution:
\begin{enumerate}
    \item Run $\{X_i,Z_i\}_i,\ket{\psi}_{\RegZ,\RegI,\RegU} \gets \mathsf{Samp}$.
    \item Starting with $\ket{\psi}$, measure each $Z_i$ for $i \in S$ to obtain $v \in \{0,1\}^S$. Then measure each $X_i$ for $i \in R$ to obtain $u \in \{0,1\}^R$. Output $(u,v) \in \{0,1\}^R \times \{0,1\}^S$.
\end{enumerate}

By~\cref{claim:extracted-state-distribution}, we can write $D_{P^*,\mathrm{Ext}}$ as follows (differences from $D_{P^*,\twotoone}$ are in \textcolor{red}{red}): 
\begin{enumerate}
    \item Run $\{X_i,Z_i\},\ket{\psi} \gets \mathsf{Samp}$.
    \item Starting with $\ket{\psi}$ measure each $Z_i$ for $i \in S$ to obtain $v \in \{0,1\}^S$. \textcolor{red}{Then sample a uniformly random string $u' \gets \{0,1\}^R$ and apply the unitary $Z(u')$.} Finally, measure each $X_i$ for $i \in R$ \textcolor{red}{and XOR the output with $u'$} to obtain $u \in \{0,1\}^R$. Output $(u,v) \in \{0,1\}^R \times \{0,1\}^S$. 
\end{enumerate}

With this key difference in mind, it remains to prove indistinguishability of these two distributions.

\subsection{Indistinguishability of Measurement Outcomes} 
\label{subsec:measurement-indistinguishability}

In this subsection, we complete the proof that $D_{P^*,\twotoone}$ and $D_{P^*,\mathrm{Ext}}$ are computationally indistinguishable. We first write out their probability mass functions:
\begin{itemize}
    \item $D_{P^*,\twotoone}$ outputs $(u,v) \in \{0,1\}^R \times \{0,1\}^S$ with probability
    \[ D_{P^*,\twotoone}(u,v) = \underset{\{X_i,Z_i\},\ket{\psi} \gets \mathsf{Samp}}{\mathbb E}\left[ \bra{\psi} \Pi^Z_v \Pi^X_u \Pi^Z_v \ket{\psi} \right].\]
    \item $D_{P^*,\mathrm{Ext}}$ outputs $(u,v) \in \{0,1\}^R \times \{0,1\}^S$ with probability
    \[ D_{P^*,\mathrm{Ext}}(u,v) = \E_{\substack{\{X_i,Z_i\},\ket{\psi} \gets \mathsf{Samp}\\ u' \in \{0,1\}^R}}\left[ \bra{\psi} \Pi^{Z}_{v} Z(u') \Pi^{X}_{u' \oplus u} Z(u') \Pi^{Z}_{v} \ket{\psi} \right].\]
\end{itemize}

At this point, the reader may find it helpful to convince themselves that probability mass functions above exactly correspond to the descriptions of these distributions given at the end of~\cref{subsec:extracted-state}. The equivalence between these two representations will be a key component of the upcoming proofs.

For convenience, we will reorder the indices so that the indices in $R$ are labeled $1,2,\dots,|R|$. Let $u_{\leq j} \in \{0,1\}^R$ be the vector equal to $u$ on the first $j$ indices, and is $0$ on the remaining indices. For each $j \in \{0,1,\dots,|R|\}$, define hybrid $\Hyb_{j}$ to be the distribution that outputs $(u,v) \in \{0,1\}^R \times \{0,1\}^S$ with probability
\[ \Hyb_{j}(u,v) = \E_{\substack{\{X_i,Z_i\},\ket{\psi} \gets \mathsf{Samp}\\ u' \in \{0,1\}^R}}\left[ \bra{\psi} \Pi^{Z}_{v}Z(u'_{\leq j}) \Pi^{X}_{u \oplus u'_{\leq j}} Z(u'_{\leq j}) \Pi^{Z}_{v} \ket{\psi}\right].\]
Additionally, for each $j \in \{1,\dots,|R|\}$, and $b \in \{0,1\}$ define hybrid $\Hyb_{j,b}$ to be the distribution that outputs $(u,v) \in \{0,1\}^R \times \{0,1\}^S$ with probability
\[ \Hyb_{j,b}(u,v) = \E_{\substack{\{X_i,Z_i\},\ket{\psi} \gets \mathsf{Samp}\\ u' \in \{0,1\}^R}}\left[\bra{\psi} \Pi^{Z}_{v}Z(u'_{\leq j-1}) Z_j^b \Pi^{X}_{u \oplus u'_{\leq j-1} \oplus b\cdot e_{j}} Z_j^b Z(u'_{\leq j-1}) \Pi^{Z}_{v} \ket{\psi}\right],\]
where $e_j \in \{0,1\}^R$ denotes the $j$th standard basis vector.

\begin{claim}
\label{claim:indistinguishable-u}
For all $j \in \{1,\dots,|R|\}$, the distributions $\Hyb_{j,0}$ and $\Hyb_{j,1}$ are computationally indistinguishable.
\end{claim}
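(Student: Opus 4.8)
The plan is to follow the three-step route sketched in the technical overview, using the fact that $\Hyb_{j,0}$ and $\Hyb_{j,1}$ differ in exactly two ways, both visible from the probability mass functions: (i) in $\Hyb_{j,1}$ the unitary $Z_j$ is applied immediately before the block of $X$-measurements, and (ii) in $\Hyb_{j,1}$ the $j$th output coordinate is flipped (the $b\cdot e_j$ shift inside $\Pi^X_{\cdot}$). Were $Z_j$ and $X_j$ to anticommute \emph{exactly} on $\ket{\psi}$, these two changes would cancel and the hybrids would coincide; the content of the claim is precisely that the (generally nonzero) failure of anticommutation is computationally \emph{undetectable}. I would establish this by (1) showing the two hybrids have computationally close marginals on the coordinates in $(R\setminus\{j\})\cup S$, (2) reducing, via an elementary $N$-bit hybrid lemma, to a one-bit statement about coordinate $j$ conditioned on an efficiently computable function of the rest, and (3) proving that one-bit statement from the adaptive hardcore bit property of $f_{\pk_j}$.

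\textbf{Step 1 (marginals away from $j$).} Here I would invoke the \emph{collapsing at a single index} property of the batch key generation scheme for $f_{\pk_j}$. The idea is that inserting a standard-basis measurement of $\RegB_j\otimes\RegX_j$ immediately after the prover's commitment is undetectable by the remainder of the (efficient) experiment: build a reduction that embeds the collapsing challenge key as $\pk_j$ inside $\PK$ (thereby learning all other $\sk_i$), runs $P^*$ to obtain $y=(y_1,\dots,y_N)$, hands $(y_j,\RegB_j\otimes\RegX_j)$ to the collapsing challenger — whose verification $\Check(\pk_j,\cdot,y_j)$ succeeds with probability $1-\negl(\secp)$ since $P^*$ passes the test round (using the rTCF correctness linking $\Check$ to the support of $\Eval$, \cref{def:clawfree}) — and then carries out the rest of $\Hyb_{j,b}$ restricted to the outputs on $(R\setminus\{j\})\cup S$. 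This residual computation never needs $\sk_j$: the $Z_i$ ($i\in S$) need no trapdoor, the $X_i$ ($i\in R$) pairwise commute so discarding the $j$th $X$-outcome is the same as not measuring $X_j$, and the remaining $X_i$ only need $\sk_i$ with $i\neq j$. Finally, once $\RegB_j$ has been measured in the standard basis, $Z_j=(\sigma_Z)_{\RegB_j}$ acts as a branch-wise scalar phase and so leaves all later statistics unchanged, and the coordinate-$j$ flip is irrelevant to the $\neq j$ marginal; hence both $\mathrm{marg}_{\neq j}(\Hyb_{j,0})$ and $\mathrm{marg}_{\neq j}(\Hyb_{j,1})$ are $\negl(\secp)$-close to this common ``inserted-measurement'' distribution.

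\textbf{Step 2 (reduce to one bit).} Given that the $\neq j$ marginals agree computationally, I would apply an elementary hybrid lemma on $N$-bit indistinguishability: it then suffices to show that the conditional law of coordinate $j$ is computationally indistinguishable between the two hybrids, even when the distinguisher is additionally handed an efficiently computable function of the other $N-1$ coordinates (sampled as in Step 1, including the mask $Z(u'_{\leq j-1})$). This isolates the sole genuinely quantum difference — whether a fresh $Z_j$ precedes the $X_j$-measurement — for a cryptographic attack.

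\textbf{Step 3 (the $j$th bit, via adaptive hardcore bit; the main obstacle).} Coordinate $j$ of $\Hyb_{j,0}$ is the outcome of measuring $X_j$ on $\ket{\psi}$; coordinate $j$ of $\Hyb_{j,1}$ is the outcome of measuring $X_j$ after pre-applying $Z_j$, then flipped. Expanding the definition of $X_j$, on the $\Good$-branch the $X_j$-measurement is exactly: Hadamard-measure $\RegZ_j$ of $U\ket{\psi}$ to get $d_j$ and output $d_j\cdot(1,x_{0,j}\oplus x_{1,j})$; conjugating $Z_j=(\sigma_Z)_{\RegB_j}$ through the Hadamard turns it into $(\sigma_X)_{\RegB_j}$, which flips the first coordinate of $d_j$ and hence flips $d_j\cdot(1,x_{0,j}\oplus x_{1,j})$. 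So the $\Hyb_{j,1}$ bit differs from the $\Hyb_{j,0}$ bit precisely when that flip occurs, i.e.\ precisely when one can change whether $d_j\cdot(1,x_{0,j}\oplus x_{1,j})$ equals $0$ or $1$; a distinguisher for coordinate $j$ therefore yields an adversary that biases $d_j\cdot(1,x_{0,j}\oplus x_{1,j})$, contradicting the adaptive hardcore bit property of $f_{\pk_j}$, which by \emph{adaptive hardcore bit at a single index} holds even given $\{\sk_i\}_{i\neq j}$ — exactly the side information the reduction has. I expect the delicate point to be turning the distinguisher into a bona fide adaptive-hardcore-bit attacker, since that attacker must produce from a single run of $P^*$ \emph{both} a valid preimage $(b_j,x_j)$ of $y_j$ (so $\Check(\pk_j,b_j,x_j,y_j)=1$) \emph{and} an equation $d_j\in\Good_{x_{0,j},x_{1,j}}$, even though the preimage comes from a standard-basis (test-round) measurement of $\RegZ_j$ and $d_j$ from the complementary Hadamard measurement. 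I would handle this as in Mahadev's analysis, leaning on the two-branch form of $X'_j$ (one branch reads off the equation on $\Good$, the other absorbs rejected $d$), on the guarantee that $P^*$ passes the test round so a valid preimage is extractable, and on careful bookkeeping of the conditioning and the mask $Z(u'_{\leq j-1})$ inherited from Step 2 (the small non-commutation of $Z_i$, $i<j$, with $X_j$ being itself absorbed by collapsing of the corresponding $f_{\pk_i}$). Chaining Steps 1--3 through the $N$-bit lemma yields $\Hyb_{j,0}\approx_c\Hyb_{j,1}$.
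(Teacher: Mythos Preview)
Your proposal is correct and follows essentially the same three-step approach as the paper: collapsing of $f_{\pk_j}$ for the $[N]\setminus\{j\}$ marginals (\cref{claim:collapsing}), an elementary $N$-bit lemma to reduce to one bit (\cref{claim:d1}), and the adaptive hardcore bit of $f_{\pk_j}$ for coordinate $j$ (\cref{claim:hardcore-bit}). The one place where the paper is crisper than your sketch is the mechanism in Step 3 for extracting both a preimage and an equation: rather than reasoning about conjugating $Z_j$ through the Hadamard, the paper observes directly that applying $Z_j^b$ for a uniformly random $b$ is \emph{equivalent} to measuring the observable $Z_j$, which (since $\ket{\psi}$ contains a valid preimage in $\RegZ_j$) is equivalent to a standard-basis measurement of $\RegZ_j$ yielding $(b_j,x_j)$ --- so the preimage comes ``for free'' from the very $Z_j^b$-randomization that distinguishes the hybrids, and no appeal to collapsing of $f_{\pk_i}$ for $i<j$ is needed.
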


Observe that for $j \in \{1,2,\dots,|R|\}$, $\Hyb_{j,0} = \Hyb_{j-1}$, and that $\Hyb_{j,0}$ is the uniform mixture of $\Hyb_{j,0}$ and $\Hyb_{j,1}$. Since $\Hyb_0 = D_{P^*,\twotoone}$ and $\Hyb_{|R|} = D_{P^*,\mathrm{Ext}}$,~\cref{claim:indistinguishable-u} implies that $D_{P^*,\twotoone}$ and $D_{P^*,\mathrm{Ext}}$ are computationally indistinguishable.

We now prove~\cref{claim:indistinguishable-u}, which will complete the proof of measurement protocol soundness. Our proof involves the following steps:
\begin{itemize}
    \item First, we prove~\cref{claim:collapsing}, which states that the marginal distributions of $\Hyb_{j,0}$ and $\Hyb_{j,1}$ on $N \setminus \{j\}$ are indistinguishable due to the collapsing property of $f_{\pk_{j}}$.
    \item We then state~\cref{claim:d1}, which (together with~\cref{claim:collapsing}) shows that if $\Hyb_{j,0}$ and $\Hyb_{j,1}$ are efficiently distinguishable, then they can be distinguished as follows:
    \begin{enumerate}
        \item Given a sample $x$ (from either $\Hyb_{j,0}$ or $\Hyb_{j,1}$) run an efficient algorithm $A$ on $x_{\setminus \{j\}}$ ($x$ without the $j$th bit). 
        \item If $A$ outputs $0$, guess a random bit $b$. If $A$ outputs $1$, guess $b = x_{j}$.
    \end{enumerate}
    Roughly speaking, this reduces the task to arguing about the indistinguishability of the single bit $x_{j}$ (conditioned on $A$ outputting $1$).
    \item Finally, we show that the $1$-bit conditional distributions must be indistinguishable by appealing to the adaptive hardcore bit property of $f_{\pk_{j}}$.
\end{itemize}

\begin{claim}
\label{claim:collapsing}
Let $R' = R \setminus \{j\}$ and let $(\Hyb_{j,0})_{[N] \backslash \{j\}}$ and $(\Hyb_{j,1})_{[N] \backslash \{j\}}$ be the marginal distributions of $\Hyb_{j,0}$ and $\Hyb_{j,1}$ on $[N]\backslash \{j\} = R' \cup S$. Then $(\Hyb_{j,0})_{[N] \backslash \{j\}}$ and $(\Hyb_{j,1})_{[N] \backslash \{j\}}$ are computationally indistinguishable.
\end{claim}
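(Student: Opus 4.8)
The plan is to re-express the two marginal distributions so that the index-$j$ response register $\RegZ_j=\RegB_j\otimes\RegX_j$ appears only as an argument of the prover's attack unitary $U$ and is otherwise never acted on, and then to move between the two hybrids using the collapsing property of $f_{\pk_j}$ (which, by the single-index collapsing property of the batch key generation scheme, holds even given $\PK$ and all $\sk_i$ for $i\neq j$).

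First I would unfold the sampling process underlying $\Hyb_{j,b}$. Reading its probability mass function, a sample is drawn as follows: (i) measure $\{Z_i\}_{i\in S}$ on $\ket\psi$ to get $v$; (ii) apply the standard-basis-diagonal unitary $Z(u'_{\leq j-1})Z_j^b$ for a uniform $u'$; and (iii) measure $\{X_i\}_{i\in R}$ and relabel the outcome by $u'_{\leq j-1}\oplus b e_j$. Because the $X_i$ all commute and $X_j^{(0)}+X_j^{(1)}=\Id$, marginalizing the output over the $j$-th bit turns the measurement of $\{X_i\}_{i\in R}$ into a measurement of only $\{X_i\}_{i\in R'}$ with $R'=R\setminus\{j\}$. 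Unfolding the definition of the $X_i$, this measurement is implemented by applying $U$, then Hadamards on the registers $\{\RegZ_i,\RegU_i\}_{i\in R'}$, then measuring the observables $X'_i$ ($i\in R'$), which are diagonal in the computational basis of $\{\RegZ_i,\RegU_i\}_{i\in R'}$. In particular this process never acts on $\RegZ_j$ except through $U$ and through the phase $Z_j^b$, and it can be carried out knowing only $\PK$ and $\{\sk_i\}_{i\neq j}$ (the observables $\{Z_i\}_{i\in S}$, $Z_j$, and $\{X_i\}_{i\in R'}$ do not use $\sk_j$).

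Next I would interpose an intermediate process $M_{\mathrm{mid}}$ that is identical to the $b=0$ process above but with a standard-basis measurement of $\RegZ_j$ inserted between step (ii) and the application of $U$, and argue $(\Hyb_{j,0})_{[N]\backslash\{j\}}\approx_c M_{\mathrm{mid}}\approx_c(\Hyb_{j,1})_{[N]\backslash\{j\}}$. For the first relation, a distinguisher yields a collapsing adversary for $f_{\pk_j}$: it receives $\PK$ and $\{\sk_i\}_{i\neq j}$, runs $P^*$ on $\PK$ to get $y$ and $\ket\psi$, performs steps (i)--(ii) with $b=0$, hands $(y_j,\RegZ_j)$ to the collapsing challenger, receives $\RegZ_j$ back (untouched or standard-basis-measured), finishes with $U$, the Hadamards, and the $\{X'_i\}_{i\in R'}$ measurement, and runs the distinguisher; the ``do nothing'' branch reproduces $(\Hyb_{j,0})_{[N]\backslash\{j\}}$ and the ``measure'' branch reproduces $M_{\mathrm{mid}}$. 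For the second relation I would use the same reduction but with $b=1$ in step (ii), i.e.\ applying $Z_j=(\sigma_Z)_{\RegB_j}$ before handing over $\RegZ_j$; the key point is that $Z_j$ is diagonal in the computational basis of $\RegZ_j$, so pre-applying it commutes past the challenger's standard-basis measurement as a global phase, making the ``measure'' branch reproduce $M_{\mathrm{mid}}$ while the ``do nothing'' branch reproduces $(\Hyb_{j,1})_{[N]\backslash\{j\}}$. Since $Z_j$ also does not change which computational-basis preimages of $y_j$ pass the challenger's consistency check, test-round soundness in the all-two-to-one mode ensures this check aborts with only negligible probability in both reductions, and the triangle inequality then gives the claim.

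I expect the bookkeeping of the first step to be the main obstacle: one must verify carefully that marginalizing over the $j$-th output bit really does collapse $\Pi^X_{u\oplus\cdots}$ to the $R'$-projector, and re-express the remaining $\{X_i\}_{i\in R'}$ measurement purely in terms of $U$, Hadamards, and computational-basis operations on $\{\RegZ_i,\RegU_i\}_{i\in R'}$, so that $\RegZ_j$ is manifestly untouched and the collapsing game applies cleanly. Everything after that — the two applications of collapsing and the handling of the consistency-check abort — is routine.
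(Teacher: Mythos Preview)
Your proposal is correct, and the overall strategy---marginalize out index $j$ so that only $\{X_i\}_{i\in R'}$ are measured, then reduce to collapsing of $f_{\pk_j}$ given $\PK$ and $\{\sk_i\}_{i\neq j}$---matches the paper. The difference is purely in how the collapsing reduction is packaged. You introduce an intermediate hybrid $M_{\mathrm{mid}}$ (measure $\RegZ_j$ in the standard basis) and run the standard measure/do-nothing collapsing game twice, using in the second step that $Z_j$ is diagonal and hence absorbed as a global phase after the measurement. The paper instead invokes the collapsing game once, in the equivalent ``apply $Z_j$ vs.\ do nothing'' formulation (noting in a footnote that applying $Z^b$ for random $b$ is the same channel as measuring $\RegB_j$ and discarding the outcome): the do-nothing branch is the $(\Hyb_{j,0})_{[N]\setminus\{j\}}$ marginal and the apply-$Z_j$ branch is the $(\Hyb_{j,1})_{[N]\setminus\{j\}}$ marginal directly. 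Your two steps are exactly what one writes out to justify that equivalence, so the arguments are the same up to whether this twirl-equals-measurement fact is cited or unpacked inline. The paper's version is shorter; yours is more self-contained and avoids needing the reader to know the alternate collapsing formulation.
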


\begin{proof}
Any quantum algorithm for distinguishing $(\Hyb_{j,0})_{[N] \backslash \{j\}}$ and $(\Hyb_{j,1})_{[N] \backslash \{j\}}$ can be represented as an $N-1$ qubit binary POVM $(A,\Id-A)$, where the distinguisher outputs $1$ on $x$ with probability $\bra{x} A \ket{x}$. We show that this contradicts the collapsing property of $f_{\pk_{j}}$ (given $\PK, \{\sk_i\}_{i\neq j}$). 

Consider the following adversary for the $f_{\pk_j}$ collapsing security game:
\begin{itemize}
    \item Given $\PK, \{\sk_i\}_{i\neq j}$, the adversary runs the prover $P^*$ on $\PK$ to obtain $(y, \ket{\psi})$. Recall that $\ket{\psi}$ is guaranteed to contain a valid pre-image in register $\RegZ_{j}$. The adversary submits $y$ to the collapsing game challenger. 
    \item The challenger flips a random bit and either applies $Z_{j}$ or does nothing.\footnote{This version of the collapsing game is equivalent to the standard formulation in which the challenger either does/does not perform a measurement. This follows from the fact that measuring a qubit in the computational basis (and discarding the outcome) is \emph{equivalent} to applying $Z^b$ for a random $b \gets \{0,1\}$. Thus, the challenger's measurement (in the $b = 1$ experiment) is equivalent to applying $Z$ with probability $1/2$; for simplicity, our formulation has the challenger (in the $b =1$ experiment) apply $Z$ with probability $1$, which increases the adversary's distinguishing advantage by a factor of $2$.}
    \item Then the adversary performs the following steps:
    \begin{enumerate}
        \item Measure $Z_i$ for every $i \in S$ obtaining outcomes $v \in \{0,1\}^S$.
        \item Sample a random string $u' \gets \{0,1\}^R$ and apply the unitary $Z(u'_{\leq j-1})$.
        \item Measure $X_{i}$ for every $i \in R'$, and XOR the outcomes with $u'_{\leq j-1}$ to obtain an output string $u \in \{0,1\}^{R'}$.
        \item Finally, measure $\ket{u,v}$ with the POVM $\{A,\Id-A\}$, and output $1$ if and only if the measurement outcome is $A$.
    \end{enumerate}
\end{itemize}
All of the adversary's steps can be efficiently performed given $(\PK, \{\sk_i\}_{i\neq j})$. Moreover, the above adversary's advantage in the collapsing game is polynomially related to the advantage the POVM $(A,\Id-A)$ attains in distinguishing $(\Hyb_{j,0})_{[N] \backslash \{j\}}$ and $(\Hyb_{j,1})_{[N] \backslash \{j\}}$.
\end{proof}

Given that the marginal distributions of $\Hyb_{j,0}$ and $\Hyb_{j,1}$ on $[N] \backslash \{j\}$ are computationally indistinguishable (\cref{claim:collapsing}), we next invoke a general property of $N$-bit distributions implying that a distinguisher between $\Hyb_{j,0}$ and $\Hyb_{j,1}$ must be distinguishing some (efficiently computable) property of the $j$th bit of $\Hyb_{j,0}$ and $\Hyb_{j,1}$ \emph{conditioned} on an efficiently computable property of the $[N] \backslash \{j\}$-marginal distributions.

\begin{claim}\label{claim:d1}
Let $k=k(\lambda)$ be a positive integer-valued function of a security parameter $\lambda$. 
Let $\{D_{0,\lambda}\}_{\lambda\geq 1}$ and $\{D_{1,\lambda}\}_{\lambda \geq 1}$ be families of distributions on $\{0,1\}^{k+1}$ such that the marginal distributions $D_{0,\lambda}'$ and $D_{1,\lambda}'$ of $D_{0,\lambda}$ and $D_{1,\lambda}$ respectively on the first $k$ bits are computationally indistinguishable. Suppose that $D_{0,\lambda}$ and $D_{1,\lambda}$ are computationally \emph{distinguishable}. Then there is an efficiently computable binary-outcome POVM $\{M,\Id-M\}$ acting on $k$ qubits such that
\[\Big| \E_{x\sim D_{0,\lambda}} (-1)^{x_{k+1}} \bra{x_{\leq k}}M \ket{x_{\leq k}} 
- \E_{x\sim D_{1,\lambda}} (-1)^{x_{k+1}} \bra{x_{\leq k}}M\ket{x_{\leq k}} \Big| >  \frac{1}{\poly(\lambda)}.\]
\end{claim}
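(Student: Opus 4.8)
\textbf{Proof plan for \cref{claim:d1}.} The plan is to take the given efficient distinguisher between $D_{0,\lambda}$ and $D_{1,\lambda}$ and ``factor out'' the part of its behavior that depends only on the first $k$ bits, which is controlled by the assumed indistinguishability of the marginals. Let $\mathcal{A}$ be a non-uniform QPT algorithm with $\big|\Pr_{x\sim D_{0,\lambda}}[\mathcal{A}(x)=1] - \Pr_{x\sim D_{1,\lambda}}[\mathcal{A}(x)=1]\big| \geq \varepsilon$ for some $\varepsilon = 1/\poly(\lambda)$; flipping $\mathcal{A}$'s output bit if needed, assume the signed difference is at least $\varepsilon$. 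From $\mathcal{A}$ I define three efficient algorithms taking only a $k$-bit input $w$: $\mathcal{A}_0(w) \coloneqq \mathcal{A}(w,0)$ and $\mathcal{A}_1(w) \coloneqq \mathcal{A}(w,1)$ (hardwire the last input bit), and $\mathcal{A}_{\mathrm{rand}}(w) \coloneqq \mathcal{A}(w,c)$ for a fresh uniform $c \gets \{0,1\}$. The POVM $M$ produced at the end will be the one naturally induced by $\mathcal{A}_0$ or $\mathcal{A}_1$, i.e. the efficient $k$-qubit binary POVM whose acceptance probability on a computational basis state $\ket{w}$ equals $\Pr[\mathcal{A}(w,c)=1]$; this uses the routine fact (implicit in the conventions of \cref{subsec:black-box}) that an efficient algorithm on a classical $k$-bit input corresponds to an efficient binary POVM on $k$ qubits evaluated on basis states.

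The core step is an elementary calculation. Write $p_c(w) \coloneqq \Pr[\mathcal{A}(w,c)=1] \in [0,1]$ and let $D_j'$ denote the marginal of $D_{j,\lambda}$ on the first $k$ bits, so $D_j'(w) = D_{j,\lambda}(w,0)+D_{j,\lambda}(w,1)$. Then $\Pr_{x\sim D_{j,\lambda}}[\mathcal{A}(x)=1] = \sum_w \big(D_{j,\lambda}(w,0)p_0(w) + D_{j,\lambda}(w,1)p_1(w)\big)$ while $\Pr_{w\sim D_j'}[\mathcal{A}_{\mathrm{rand}}(w)=1] = \sum_w D_j'(w)\cdot\tfrac{p_0(w)+p_1(w)}{2}$. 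Subtracting gives $\Pr_{x\sim D_{j,\lambda}}[\mathcal{A}(x)=1] - \Pr_{w\sim D_j'}[\mathcal{A}_{\mathrm{rand}}(w)=1] = \tfrac12\sum_w\big(D_{j,\lambda}(w,0)-D_{j,\lambda}(w,1)\big)\big(p_0(w)-p_1(w)\big) \eqqcolon \Delta_j$. Now I recognize $\Delta_j$ in terms of the target quantities: for any function $\beta$ of $w$ alone, $\E_{x\sim D_{j,\lambda}}[(-1)^{x_{k+1}}\beta(x_{\leq k})] = \sum_w\big(D_{j,\lambda}(w,0)-D_{j,\lambda}(w,1)\big)\beta(w)$, and hence $\Delta_j = \tfrac12\big(\E_{x\sim D_{j,\lambda}}[(-1)^{x_{k+1}}p_0(x_{\leq k})] - \E_{x\sim D_{j,\lambda}}[(-1)^{x_{k+1}}p_1(x_{\leq k})]\big)$.

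Finally I combine the two indices. Since $\mathcal{A}_{\mathrm{rand}}$ is efficient and $D_0' \approx_c D_1'$, the quantity $\Pr_{w\sim D_0'}[\mathcal{A}_{\mathrm{rand}}(w)=1] - \Pr_{w\sim D_1'}[\mathcal{A}_{\mathrm{rand}}(w)=1]$ is negligible, so $\Delta_0 - \Delta_1 = \big(\Pr_{D_{0,\lambda}}[\mathcal{A}=1]-\Pr_{D_{1,\lambda}}[\mathcal{A}=1]\big) - \big(\Pr_{D_0'}[\mathcal{A}_{\mathrm{rand}}=1]-\Pr_{D_1'}[\mathcal{A}_{\mathrm{rand}}=1]\big) \geq \varepsilon - \negl(\lambda) \geq \varepsilon/2$ for all sufficiently large $\lambda$. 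Writing $T_c \coloneqq \E_{x\sim D_{0,\lambda}}[(-1)^{x_{k+1}}p_c(x_{\leq k})] - \E_{x\sim D_{1,\lambda}}[(-1)^{x_{k+1}}p_c(x_{\leq k})]$, the identity for $\Delta_j$ gives $\Delta_0 - \Delta_1 = \tfrac12(T_0 - T_1)$, so $|T_0| + |T_1| \geq |T_0-T_1| \geq \varepsilon$ and therefore $|T_c| \geq \varepsilon/2$ for at least one $c\in\{0,1\}$. Taking $M$ to be the efficient $k$-qubit POVM induced by $\mathcal{A}_c$, so that $\bra{w}M\ket{w} = p_c(w)$, yields exactly $\big|\E_{x\sim D_{0,\lambda}}(-1)^{x_{k+1}}\bra{x_{\leq k}}M\ket{x_{\leq k}} - \E_{x\sim D_{1,\lambda}}(-1)^{x_{k+1}}\bra{x_{\leq k}}M\ket{x_{\leq k}}\big| \geq \varepsilon/2 > 1/\poly(\lambda)$.

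This argument is essentially bookkeeping, so I do not expect a real obstacle; the only points requiring care are (i) choosing $\mathcal{A}_{\mathrm{rand}}$ precisely (averaging over the last bit) so that the ``first $k$ bits'' component of $\mathcal{A}$ is \emph{literally} the advantage of an efficient algorithm against the marginals $D_0', D_1'$, which is where the marginal-indistinguishability hypothesis gets used; and (ii) the standard translation between an efficient algorithm on a classical input and an efficient binary POVM on the corresponding computational basis state, which should be remarked on but not belabored.
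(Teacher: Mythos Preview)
Your proposal is correct and follows essentially the same approach as the paper: both hardwire the $(k{+}1)$-st input bit of the distinguisher to obtain two efficient $k$-qubit POVMs $M_0,M_1$, use the marginal indistinguishability of $D_0',D_1'$ to kill the ``last-bit-independent'' contribution, and then pick whichever of $M_0,M_1$ yields inverse-polynomial advantage. The only cosmetic difference is that you introduce the auxiliary $\mathcal{A}_{\mathrm{rand}}$ to isolate the marginal-dependent term cleanly, whereas the paper expands $\Pr[\mathcal{A}=1]$ directly in terms of $M_0,M_1$ and invokes the identity $(-1)^b=1-2b$ at the end; the two calculations are equivalent rearrangements of the same sum.
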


We defer the proof to~\cref{sec:distinguish-marginals}.

Finally, we show that the $j$th bit distinguisher of $\Hyb_{j,0}$ and $\Hyb_{j,1}$ discussed by \cref{claim:d1} cannot exist by the adaptive hardcore bit property of $f_{\pk_j}$ (given $\PK, \{\sk_i\}_{i\neq j}\}$).

\begin{claim}
\label{claim:hardcore-bit}
For any efficiently computable binary outcome POVM $\{M, \Id-M\}$,
\begin{align}
\label{eqn:adaptive-hc-bit}
    \Big| \E_{(u,v) \sim \Hyb_{j,0}} (-1)^{u_{j}} \bra{u_{\setminus \{j\}},v}M\ket{u_{\setminus \{j\}},v} - \E_{(u,v) \sim \Hyb_{j,1}} (-1)^{u_{j}} \bra{u_{\setminus \{j\}},v}M\ket{u_{\setminus \{j\}},v} \Big| = \negl(\lambda).
\end{align}
\end{claim}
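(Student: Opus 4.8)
The plan is to unfold the definitions of $\Hyb_{j,0}$ and $\Hyb_{j,1}$ and to use the elementary observation that weighting the $j$th output bit by the sign $(-1)^{u_j}$ collapses the joint $\{X_i\}_{i \in R}$-measurement into an insertion of the single observable $X_j$: because the $X_i$ commute pairwise, $\sum_{u_j}(-1)^{u_j}\Pi^X_{u} = X_j\,\Pi^{X_{R'}}_{u_{R'}}$, where $R' = R\setminus\{j\}$ and $\Pi^{X_{R'}}_{u_{R'}}$ is the joint projector of $\{X_i\}_{i \in R'}$ onto outcome $u_{R'}$. Carrying this out and reindexing the $X_{R'}$-outcome, the left-hand side of~\eqref{eqn:adaptive-hc-bit} becomes
\[ \Delta \;=\; \E_{\{X_i,Z_i\},\ket{\psi} \gets \mathsf{Samp},\; u' \gets \{0,1\}^R}\;\sum_{v \in \{0,1\}^S} \bra{\psi}\, \Pi^Z_v\, Z(u'_{\leq j-1})\,\big(X_j \hat M + Z_j X_j \hat M Z_j\big)\, Z(u'_{\leq j-1})\, \Pi^Z_v \ket{\psi}, \]
where $\hat M := \sum_{w} \langle w\oplus u'_{\leq j-1}, v|M|w\oplus u'_{\leq j-1}, v\rangle\, \Pi^{X_{R'}}_{w}$ (depending on $v$ and $u'$) is a positive operator $\preceq \Id$; being a function of the commuting family $\{X_i\}_{i \in R'}$, it commutes with $X_j$, and it is efficiently implementable as a soft measurement (by a Naimark dilation) given $M$ and $\{\sk_i\}_{i \in R'}$.

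The next step is to commute $Z_j$ past $\hat M$ at negligible cost. This is exactly the content of the collapsing property of $f_{\pk_j}$ given $\PK$ and $\{\sk_i\}_{i\neq j}$, and is argued just as in the proof of~\cref{claim:collapsing}: inserting $Z_j$ before the $\{X_i\}_{i \in R'}$ and $\{Z_i\}_{i \in S}$ measurements changes the resulting distribution only negligibly, so $\Delta$ equals, up to $\negl(\secp)$, an expectation of $\bra{\psi} \Pi^Z_v Z(u'_{\leq j-1})(X_j + Z_j X_j Z_j)\,\hat M\, Z(u'_{\leq j-1}) \Pi^Z_v \ket{\psi}$. Now both $X_j$ and $Z_j X_j Z_j$ commute with $\hat M$, so we may write $\hat M = \hat M^{1/2}\hat M^{1/2}$ and absorb one factor into the state, expressing $\Delta$ (up to $\negl$) as $\E\big[\bra{\phi}(X_j + Z_j X_j Z_j)\ket{\phi}\big]$ over (sub-normalized) states $\ket{\phi} = \hat M^{1/2} Z(u'_{\leq j-1})\Pi^Z_v\ket{\psi}$ that are efficiently preparable from the prover's post-commitment state $\ket{\psi}$ given $\{\sk_i\}_{i \neq j}$. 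It therefore suffices to show $\E\big[\bra{\phi}(X_j + Z_j X_j Z_j)\ket{\phi}\big] = \negl(\secp)$ --- that is, that $X_j$ and $Z_j$ approximately anticommute on $\ket{\psi}$.

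This final step is the main obstacle, and it is where the adaptive hardcore bit of $f_{\pk_j}$ (given $\{\sk_i\}_{i \neq j}$) enters. Assume for contradiction $\Delta$ is non-negligible. The adversary $\cA_0$ runs $P^*$ on a batch key $\PK$ generated in $\twotoone$ mode, outputs $\vecy := y_j$, and forwards $\ket{\psi}$, $\{\sk_i\}_{i \neq j}$, and the circuit $M$; $\cA_1$ flips a bit $\beta$, prepares $Z_j^{\beta}\ket{\phi}$ (possible since $\ket{\phi}$ and $Z_j$ only reference $\{\sk_i\}_{i\neq j}$), applies the prover's attack unitary $U$, and measures $\RegZ_j$ in the Hadamard basis to obtain $\vecd := d_j$. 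Crucially, the $\pm 1$ label of the $X_j$ measurement on $Z_j^{\beta}\ket{\phi}$ is precisely the parity bit $\vecd\cdot(1,x_{0,j}\oplus x_{1,j})$ recomputed by the hardcore-bit challenger (the event $\vecd \notin \Good_{x_{0,j},x_{1,j}}$ contributes negligibly, as that set has density $1-\negl$), so $\cA_1$ does not need $\sk_j$ to produce $\vecd$; and from the outcome of $M$ on the emulated $[N]\setminus\{j\}$ outcomes together with $\beta$, $\cA_1$ computes a guess for this parity bit with advantage proportional to $\Delta$, hence non-negligible. The remaining ingredient is a valid preimage $(b,\vecx)$ with $\Check(\pk_j,b,\vecx,y_j) = 1$ to output alongside $\vecd$; this is supplied by the perfect test-round guarantee, which forces $\RegZ_j$ of $\ket{\psi}$ to lie in the span of the two preimages of $y_j$. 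Reconciling the extraction of this preimage with the Hadamard-basis measurement that produces $\vecd$ is the delicate point, and is handled exactly as in~\cite{FOCS:Mahadev18a} (see also~\cite{Vid20-course}). Together these yield an adversary against the adaptive hardcore bit of $f_{\pk_j}$, a contradiction; hence $\Delta = \negl(\secp)$, which proves~\cref{claim:hardcore-bit} and, combined with~\cref{claim:collapsing,claim:d1}, completes the proof of~\cref{claim:indistinguishable-u}.
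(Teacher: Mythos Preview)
Your algebraic opening (collapsing the sum over $u_j$ to insert $X_j$) is fine, and your expression
\[
\Delta \;=\; \E\Big[\sum_v \bra{\psi'}\big(X_j\hat M + Z_j X_j \hat M Z_j\big)\ket{\psi'}\Big],\qquad \ket{\psi'} = Z(u'_{\le j-1})\Pi^Z_v\ket{\psi},
\]
is correct. The trouble is the next two moves.

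First, ``commute $Z_j$ past $\hat M$ via collapsing'' does not do what you claim. Collapsing gives $\bra{\psi'}Z_jAZ_j\ket{\psi'}\approx\bra{\psi'}A\ket{\psi'}$ for efficient Hermitian $A$; applied with $A=X_j\hat M$ it would collapse \emph{both} summands to the same thing, yielding $\Delta\approx 2\bra{\psi'}X_j\hat M\ket{\psi'}$, not the $(X_j+Z_jX_jZ_j)\hat M$ form you write. Collapsing is a statement about conjugating the \emph{state} by $Z_j$, not about commuting $Z_j$ past an operator on one side. Second, even if one grants your rewrite, your hardcore-bit adversary applies $Z_j^\beta$ to $\ket{\phi}=\hat M^{1/2}\ket{\psi'}$ and then hopes to read off a preimage. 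But $\hat M^{1/2}$ is of the form $U^\dagger(\cdot)U$ and hence disturbs $\RegZ_j$; the register $\RegZ_j$ of $\ket{\phi}$ is no longer supported on preimages of $y_j$, so there is no preimage to output. The appeal to ``handled exactly as in \cite{FOCS:Mahadev18a}'' does not cover this ordering.

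The paper's proof sidesteps both issues with one identity and \emph{no} use of collapsing. It introduces $\overline{\Hyb}_{j,1}$ (your $\Hyb_{j,1}$ with the $\oplus e_j$ removed, equivalently $u_j$ flipped), turning the difference into a sum; then, since $X_j$ commutes with $\hat M$, your $\Delta$ is already
\[
\Delta \;=\; \E\Big[\bra{\psi'}\big(A + Z_jAZ_j\big)\ket{\psi'}\Big],\qquad A=\hat M^{1/2}X_j\hat M^{1/2},
\]
and the elementary identity $\tfrac12(\brho+Z_j\brho Z_j)=Z_j^+\brho Z_j^++Z_j^-\brho Z_j^-$ says this is exactly twice the expectation of $A$ on the state obtained by \emph{first} measuring $Z_j$ (equivalently, measuring $\RegZ_j$ in the standard basis, by the test-round assumption). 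So the adversary measures $(b_j,x_j)$ on $\ket{\psi}$ \emph{before} any $U$-involving operation, then performs the $\{Z_i\}_{i\in S}$, $Z(u'_{\le j-1})$, $\{X_i\}_{i\in R'}$, and $M$ steps, and finally applies $U$ and Hadamard-measures $\RegZ_j$ to obtain $d_j$. This gives advantage $\Delta/2$ in the adaptive hardcore-bit game, with no commutation lemma and no ``reconciliation'' needed. Your detour through collapsing is both unnecessary and, as written, incorrect; reorganizing your adversary to measure the preimage first recovers precisely the paper's argument.
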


\begin{proof}
For the reader's convenience, we write out the probability mass functions of $\Hyb_{j,0}$ and $\Hyb_{j,1}$ explicitly, with the differences highlighted in \textcolor{red}{red}
\begin{align*} \Hyb_{j,0}(u,v) &= \E_{\substack{\{X_i,Z_i\},\ket{\psi} \gets \mathsf{Samp}\\ u' \in \{0,1\}^R}}\left[ \bra{\psi} \Pi^{Z}_{v}Z(u'_{\leq j-1}) \Pi^{X}_{u \oplus u'_{\leq j-1}} Z(u'_{\leq j-1}) \Pi^{Z}_{v} \ket{\psi}\right]\\
\Hyb_{j,1}(u,v) &= \E_{\substack{\{X_i,Z_i\},\ket{\psi} \gets \mathsf{Samp}\\ u' \in \{0,1\}^R}}\left[\bra{\psi} \Pi^{Z}_{v}Z(u'_{\leq j-1}) \textcolor{red}{Z_j} \Pi^{X}_{u \oplus u'_{\leq j-1} \textcolor{red}{\oplus e_{j}}} \textcolor{red}{Z_j} Z(u'_{\leq j-1}) \Pi^{Z}_{v} \ket{\psi}\right].
\end{align*}

We define one more distribution (with the difference relative to $\Hyb_{j,0}$ highlighted in \textcolor{red}{red})
\begin{align*}
\overline{\Hyb}_{j,1}(u,v) &= \E_{\substack{\{X_i,Z_i\},\ket{\psi} \gets \mathsf{Samp}\\ u' \in \{0,1\}^R}}\left[\bra{\psi} \Pi^{Z}_{v}Z(u'_{\leq j-1}) \textcolor{red}{Z_j} \Pi^{X}_{u \oplus u'_{\leq j-1}} \textcolor{red}{Z_j} Z(u'_{\leq j-1}) \Pi^{Z}_{v} \ket{\psi}\right].
\end{align*}
We now rewrite the left-hand-side of~\cref{eqn:adaptive-hc-bit}, where in the second expectation we sample from $\overline{\Hyb}_{j,1}$ instead of $\Hyb_{j,1}$. Note that these distributions are identical except that $u_j$ is flipped, so we have
\begin{align*}
    \Big| & \E_{(u,v) \sim \Hyb_{j,0}} (-1)^{u_{j}} \bra{u_{\setminus \{j\}},v}M\ket{u_{\setminus \{j\}},v} - \E_{(u,v) \sim \Hyb_{j,1}} (-1)^{u_{j}} \bra{u_{\setminus \{j\}},v}M\ket{u_{\setminus \{j\}},v} \Big| \\
    &= \Big| \E_{(u,v) \sim \Hyb_{j,0}} (-1)^{u_{j}} \bra{u_{\setminus \{j\}},v}M\ket{u_{\setminus \{j\}},v} + \E_{(u,v) \sim \overline{\Hyb}_{j,1}} (-1)^{u_{j}} \bra{u_{\setminus \{j\}},v}M\ket{u_{\setminus \{j\}},v} \Big|.
\end{align*}
Dividing the right-hand-side by $2$ gives an expression equal to the (absolute value of) the expectation of the output in the following process:
\begin{itemize}
    \item Prepare $\{X_i,Z_i\},\ket{\psi} \gets \mathsf{Samp}$.
    \item Sample $b \gets \{0,1\}$ and prepare $Z_j^b \ket{\psi}$ (the $b = 0$ case corresponds to $\Hyb_{j,0}$ and the $b = 1$ case corresponds to $\overline{\Hyb}_{j,1}$).
    \item Then measure $Z_i$ for all $i \in S$ to obtain $v \in \{0,1\}^S$. Sample a random $u' \gets \{0,1\}^R$ and apply $Z(u'_{\leq j-1})$, and finally measure $X_i$ for all $i \in R$ and XOR the result with $u'_{\leq j-1}$ to obtain $u \in \{0,1\}^R$. 
    \item Prepare the state $ \ket{u_{\setminus \{j\}},v}$ and measure it with the POVM $\{M,\Id-M\}$. If the output is $\Id-M$, stop at this point and output $0$.
    \item Otherwise, if the output is $M$, output $(-1)^{u_j}$.
\end{itemize}
Notice that the second step is equivalent to measuring $\ket{\psi}$ with $Z_j$, since (writing $Z_j = Z_j^+ - Z_j^-$, where $Z_j^+$ is the projection onto the $1$ eigenstate of $Z_j$ and $Z_j^- = \Id - Z_j^+$ is the projection onto the $-1$ eigenstate of $Z_j$):
\[ \frac{1}{2}(Z_j \ketbra{\psi} Z_j + \ketbra{\psi}) = Z_j^+ \ketbra{\psi} Z_j^+ + Z_j^- \ketbra{\psi} Z_j^-.\]

It follows that
\begin{align*}
    \Big| \E_{(u,v) \sim \Hyb_{j,0}} (-1)^{u_{j}} \bra{u_{\setminus \{j\}},v}M\ket{u_{\setminus \{j\}},v} + \E_{(u,v) \sim \overline{\Hyb}_{j,1}} (-1)^{u_{j}} \bra{u_{\setminus \{j\}},v}M\ket{u_{\setminus \{j\}},v} \Big|/2
\end{align*}
is polynomially-related to the advantage of the following adversary for the adaptive hardcore bit game:
\begin{itemize}
    \item Given $\PK, \{\sk_i\}_{i\neq j}$, the adversary runs the prover $P^*$ on $\PK$ to obtain $(y, \ket{\psi})$. Recall that $\ket{\psi}$ is guaranteed to contain a valid pre-image in register $\RegZ_{j}$ with probability $1-\negl(\secp)$.
    \item The adversary measures the register $\RegZ_{j}$ of $\ket{\psi}$ in the standard basis, obtaining a string $(b_{j},x_{j})$. By the assumption that $\ket{\psi}$ contains valid pre-images and the fact that $\pk_j$ is in the range of $\mathsf{TCF}.\Gen(1^\secp, \twotoone)$, this is equivalent to measuring the observable $Z_{j}$ (which just measures $b_{j}$). 
    \item Next, the adversary measures $Z_{i}$ for all $i \in S$, obtaining a string of outcomes $v \in \{0,1\}^S$.
    \item Then the adversary samples random $u' \gets \{0,1\}^{R}$ and applies the unitary $Z(u'_{\leq j-1})$ to its state.
    \item The adversary measures $X_{i}$ for all $i \in R'$ and XORs the outcome with $u'_{\leq j-1}$, obtaining a string $u \in \{0,1\}^{R'}$. 
    \item The adversary prepares the state $\ket{u,v}$ and measures it with $\{M,\Id-M\}$. Depending on the outcome, it does the following:
    \begin{itemize}
        \item If the measurement outcome is $\Id-M$, it samples a uniformly random string $d_{j} \gets \{0,1\}^{\ell+1}$ and sends $(b_{j},x_{j},d_{j})$ to the challenger (in this case obtaining $\negl(\secp)$ advantage).
        \item If the measurement outcome is $M$, it applies $U$ to its state, followed by $H^{\otimes \ell+1}$ to $\RegZ_{j}$. It then measures $\RegZ_{j}$ to obtain a string $d_{j} \in \{0,1\}^{\ell+1}$ and sends $(b_{j},x_{j},d_{j})$ to the challenger. Note that the challenger's output bit (i.e., whether the adversary wins or loses) exactly corresponds to the bit $u_j$.
    \end{itemize}
\end{itemize}
By assumption, this adversary outputs a valid pre-image $(b_{j},x_{j})$ with probability $1-\negl(\lambda)$. Since all of the adversary's steps are efficient given $(\PK,\{\sk_i\}_{i \neq j})$, the claim follows from the adaptive hardcore bit property of $f_{\pk_{j}}$.
\end{proof}

This completes the proof of \cref{claim:indistinguishable-u}, which in turn implies the soundness of the measurement protocol.

\section{Succinct Key Generation from iO}\label{sec:succinct-keygen}

In this section, we construct a cryptographic primitive that provides a succinct representation of $N$ key pairs. We call this primitive a ``succinct batch key generation algorithm,'' and provide definitions and a construction based on iO in \cref{sec:succinct-keygen-construction}. In \cref{sec:succinct-clawfree}, we compose our succinct key generation primitive with Mahadev randomized TCFs~\cite{FOCS:Mahadev18a} and prove that the composition satisfies the hypotheses stated in \cref{sec:meas-protocol}, while also having \emph{succinct} keys $(\PK, \SK)$.

\subsection{Batch Key Generation: Definition and Construction }\label{sec:succinct-keygen-construction}
A batch key generation algorithm is an algorithm that outputs a description of many $(\pk, \sk)$-pairs; a \emph{succinct} batch key generation algorithm produces a short such description. Formally, we will define this primitive relative to any dual-mode key generation algorithm. 
\begin{definition}
An algorithm $\Gen$ is said to be a dual-mode key generation algorithm if it takes as input a security parameter $1^\secp$ and a bit $\mode\in\{0,1\}$, and it outputs a pair of keys $(\pk,\sk)$. Moreover, we require \emph{key indistinguishability}:  public keys sampled using $\Gen(1^\secp, 0)$ are computationally indistinguishable from public keys sampled using $\Gen(1^\secp, 1)$.
\end{definition}

\begin{definition}\label{def:succinct-key-gen}
Let $(\pk,\sk) \gets \Gen(1^\secp, \mode)$ denote a dual-mode key generation algorithm. A \textdef{(succinct) batch key generation algorithm} $\mathsf{BatchGen}$ for $\Gen$ is a tuple of p.p.t.\ algorithms $(\Setup, \extpk,\allowbreak \extsk, \Program)$ with the following syntax.
  
\begin{itemize}
      \item $\Setup(1^\secp, N, f)$ takes as input a security parameter $\secp$ in unary; the number of indices $N$ in binary;  and the  description of a circuit 
      $f: [N]\rightarrow \{0, 1\}$.
      It outputs a master public key $\PK$ and a master secret key $\SK$. 
      \item $\extpk(\PK, i)$ is a deterministic algorithm that takes as input a master public key $\PK$ and an index $i\in [N]$. It outputs a public key $\pk_i$. 
      \item $\extsk(\SK, i)$ is a deterministic algorithm takes as input a master secret key $\SK$ and an index $i\in [N]$. It outputs a secret key $\sk_i$. 
      \item $\Program(1^\secp, N, f, i, \pk)$ takes as input $(1^\secp, N, f)$ just as $\Setup$ does, along with two additional inputs: an index $i \in [N]$ and a public key $\pk$. It outputs a master public key $\PK$ and (an implicitly restricted) master secret key $\SK$. 
\end{itemize}

\noindent 
We require that the following three properties are satisfied. Informally, we require that (0) $\Setup(1^\secp, N, f)$ always outputs a representation of valid key pairs, (1) $\Program(1^\secp, \allowbreak N, f, i, \pk)$ successfully programs $\pk$ into the $i$th ``slot'' of $\PK$, (2) if $(\pk, \sk)\gets \Gen(1^\secp, \mode = f(i))$ this programming is undetectable (even given all secret keys), and (3) mode indistinguishability continues to hold for batched keys, even in the presence of ``irrelevant secret keys.''
      
\begin{enumerate}
\item \textbf{Setup Correctness.} For any  $\secp,N\in\mathbb{N}$, any circuit $f:[N]\rightarrow\{0,1\}$, any index $i\in [N]$, we have that for $(\PK, \SK) \gets \Setup(1^\secp, N, f)$ and $(\pk_i, \sk_i) = (\extpk(\PK, i), \extsk(\SK, i))$, $(\pk_i, \sk_i)$ is in the range of $\Gen(1^\secp, \mode = f(i))$. 
\item \textbf{\em Programming Correctness.} For any  $\secp,N\in\mathbb{N}$, any circuit $f:[N]\rightarrow\{0,1\}$, any index $i\in [N]$, and any bit $\mode$, we have the following guarantee: for $(\pk, \sk)\gets \Gen(1^\secp, \mode)$ and $(\PK, \SK)\gets \Program(1^\secp, N, f, i, \pk)$, 
          \[ \extpk(\PK, i) = \pk. 
          \]
with probability $1$.

\item \textbf{\em Programming Indistinguishability.} For any $N=N(\secp)$, any circuit $f:[N]\rightarrow\{0,1\}$ and any index $i\in [N]$, the following distributions are $(\poly(\secp, N), \negl(\secp, N))$-indistinguishable:  
\[ \left\{(\PK, \SK) \gets \Setup(1^\secp, N, f), \sk_j \gets \extsk(\SK, j): \textcolor{blue}{ (\PK, \sk_1, \hdots, \sk_N)} \right\}_{\secp\in\mathbb{N}}
\] 
\[ \approx_c \Big\{(\pk, \sk) \gets \Gen(1^\secp, \mode = f(i)), (\PK, \SK) \gets \Program(1^\secp, N, f, \pk, i), 
\]
\[ \sk_i = \sk \mbox{ and } \forall j \neq i,  \sk_j \gets \extsk(\SK, j): \textcolor{blue}{ (\PK, \sk_1, \hdots, \sk_N)} \Big\}_{\secp\in\mathbb{N}}
\]
While we let the circuit be arbitrary in this definition, we note that it will be instantiated with an efficient circuit of size $\mathsf{poly}(\log N, \secp)$ in our eventual constructions.

\item \textbf{\em Key Indistinguishability:} For any $N = N(\secp)$, for any subset $S\subset [N]$, and for any two circuits $f_0, f_1: [N]\rightarrow \{0,1\}$ such that $f_0(i) = f_1(i)$ for all $i\in S$, for $(\PK_b, \SK_b) \gets \Setup(1^\secp, N, f_b)$, the distributions of keys
    \[ \Big\{ \PK_b, \Big(\sk_i \gets \extsk(\SK_b, i)\Big)_{i\in S}\Big\}_{\secp\in\mathbb{N}}
    \]
    are computationally $(\poly(\secp, N), \negl(\secp, N))$-indistinguishable.
      \end{enumerate}
\end{definition}

We now construct succinct key generation from iO and puncturable PRFs using standard puncturing techniques.
\begin{theorem}
  For any $N(\secp)$, assuming a $(\poly(\secp, N), \negl(\secp, N))$-secure iO scheme and a $(\poly(\secp, N), \negl(\secp, N))$-secure puncturable PRF, there exists a \emph{succinct} batch key generation algorithm $$(\Setup, \extpk, \extsk, \Program)$$ where $\Setup$ supports batch sizes up to $N(\secp)$ and runs in time $\poly(\secp, \log N)$. 
  
  In particular, when $N(\secp) = 2^\secp$ we rely on the sub-exponential hardness of $\iO$ and puncturable PRFs, while for any $N(\secp) = \poly(\secp)$ we rely on polynomial hardness.
\end{theorem}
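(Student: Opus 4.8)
The plan is to instantiate the standard ``puncturable-program'' paradigm and then verify each required property by a hybrid argument. Fix a puncturable PRF family $F$ whose output length suffices as randomness for $\Gen(1^\secp, \cdot)$ and whose input length is at least $\log N$, and let $\iO$ be a perfectly correct indistinguishability obfuscator. Define $\Setup(1^\secp, N, f)$ to sample a seed $s$ and output $\PK \gets \iO(P_{\pk})$ and $\SK \gets \iO(P_{\sk})$, where $P_{\pk}$ (resp.\ $P_{\sk}$) on input $i \in [N]$ computes $r = F_s(i)$, then $(\pk_i, \sk_i) \gets \Gen(1^\secp, f(i); r)$, and outputs $\pk_i$ (resp.\ $\sk_i$); set $\extpk(\PK, i) = \PK(i)$ and $\extsk(\SK, i) = \SK(i)$. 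Define $\Program(1^\secp, N, f, i^*, \pk^*)$ to sample $s$, compute $\punc{s}{i^*} = \Puncture(s, i^*)$, and output obfuscations of the analogous programs that on input $i^*$ return the hardwired $\pk^*$ (resp.\ a dummy value such as $0$) and otherwise use $\PuncEval(\punc{s}{i^*}, \cdot)$. Each program has size $\poly(\secp, \log N, |f|)$ and $\iO$ runs in time polynomial in its input size and $\secp$, so $\Setup$ runs in time $\poly(\secp, \log N)$ whenever $f$ is succinct, as claimed. Setup Correctness and Programming Correctness are then immediate from the \emph{perfect} correctness of $\iO$: $(\extpk(\PK, i), \extsk(\SK, i))$ equals the output of $\Gen(1^\secp, f(i); F_s(i))$ exactly (hence a valid key pair in mode $f(i)$), and $\Program$'s obfuscated public-key program returns $\pk^*$ on input $i^*$ by construction.

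For Programming Indistinguishability, the key observation is that the master secret key $\SK$ is \emph{never published} (the applications only ever hand out $\PK$ and individual secret keys $\sk_j$), which gives freedom to restructure the obfuscated programs. I would use three hybrids: $\Hyb_0$ is the real $\Setup$ distribution; $\Hyb_1$ replaces $\PK$ by an obfuscation of the program in which slot $i^*$ is hardwired to $\pk^* := \Gen(1^\secp, f(i^*); F_s(i^*)).\pk$ and the seed is punctured at $i^*$, which is functionally identical to $P_{\pk}$ so $\Hyb_0 \approx_c \Hyb_1$ by $\iO$ security; $\Hyb_2$ replaces $F_s(i^*)$ (the randomness used to derive $\pk^*$ and the exposed $\sk_{i^*}$) by a fresh uniform string, which is indistinguishable given $\punc{s}{i^*}$ by puncturable PRF security. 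Now $\Hyb_2$ is syntactically the $\Program$ distribution, with $(\pk^*, \sk^*) \gets \Gen(1^\secp, f(i^*))$ sampled afresh, $\sk_{i^*} = \sk^*$, and $\sk_j = \extsk(\SK, j)$ for $j \neq i^*$.

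For Key Indistinguishability with circuits $f_0, f_1$ that agree on $S$, I would walk through the indices $t \in T := [N] \setminus S$ one at a time, at each step changing the mode encoded at index $t$ from $f_0(t)$ to $f_1(t)$. Each such step is a short chain of its own: hardwire slot $t$ into the public-key program and puncture at $t$ ($\iO$ security); switch $F_s(t)$ to fresh randomness (puncturable PRF security), so the hardwired public key is distributed as $\Gen(1^\secp, f_0(t))$; invoke the dual-mode key indistinguishability of $\Gen$ to swap it for $\Gen(1^\secp, f_1(t))$; then undo the first two steps. Because $t \notin S$, the secret key $\sk_t$ is never exposed, and the reduction can still produce every exposed $\sk_i$ ($i \in S$) via $\PuncEval(\punc{s}{t}, i)$, so all these reductions are efficient and faithful. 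Composing the $O(N)$ sub-hybrids yields $\PK$ from $\Setup(1^\secp, N, f_0)$ together with $\{\sk_i\}_{i \in S}$ computationally indistinguishable from the $f_1$ analog.

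I expect the main difficulty to be bookkeeping rather than a conceptual barrier. The two points needing care are: (i) $\iO$ is invoked only on functionally identical circuits, which is exactly why every mode change is routed through the ``hardwire, swap, un-hardwire'' pattern rather than by directly altering program behavior; and (ii) the quantitative budget --- the Key Indistinguishability hybrid has $\Theta(N)$ steps, so $\iO$, the puncturable PRF, and $\Gen$'s mode indistinguishability must each be $(\poly(\secp, N), \negl(\secp, N))$-secure, which is precisely why the case $N = 2^\secp$ needs sub-exponential hardness while $N = \poly(\secp)$ needs only polynomial hardness. A minor point worth flagging is that the master secret key output by $\Program$ is deliberately ``restricted'' at index $i^*$ --- it cannot contain $\sk^*$, since $\Program$ receives only $\pk^*$ --- which is harmless because $\SK$ is never published and $\extsk(\SK, i^*)$ is never queried in any application.
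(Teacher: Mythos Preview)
Your proposal is correct and follows essentially the same puncturable-programs approach as the paper. The only differences are cosmetic: the paper stores $\SK = (s, f)$ in the clear rather than obfuscating a secret-key program, its $\Program$ hardwires $\pk$ at $i^*$ while keeping the \emph{full} seed (puncturing only inside the Programming Indistinguishability proof), and its Key Indistinguishability argument iterates over all $N$ indices while invoking Programming Indistinguishability as a black box at each step rather than inlining your hardwire/puncture/PRF/swap/undo chain.
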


\begin{proof}

Given a dual-mode key generation algorithm $\Gen$,  an iO scheme $\iO$, and a puncturable PRF family $\PRF$, we define our batch key generation procedure $\mathsf{SuccGen} = (\Setup, \extpk, \extsk, \Program)$ as follows. 

\begin{itemize}
    \item $\Setup(1^\secp, N, f)$ samples a PRF seed $s$ and outputs (as the public key) an obfuscated program $\widetilde{P}=\iO(P_{s, f})$, where $P$ is defined in \cref{fig:succinct-keys}, and (as the secret key) the PRF seed $s$ and the function $f$.   
    \item $\extpk(\PK, i)$ computes and outputs $\pk_i = \widetilde P(i)$ (for $\widetilde P= \PK$). 
    \item $\extsk(\SK, i)$ computes $r = \PRF_{s}(i)$ and $\mode = f(i)$. It then computes $(\pk_i, \sk_i) \gets \Gen(1^\secp, \mode; r)$ and outputs $\sk_i$. 
    \item $\Program(1^\secp, N, f, \pk, i^*)$ samples a PRF seed $s$ and outputs (as the public key) an obfuscated program $\iO(P_{\pk, i^*, s, f})$, where $P_{\pk, i^*, s, f}$ is defined in \cref{fig:program-key}, and (as the secret key) the PRF seed $s$ and the function $f$. 
\end{itemize}

\begin{figure}[h!]
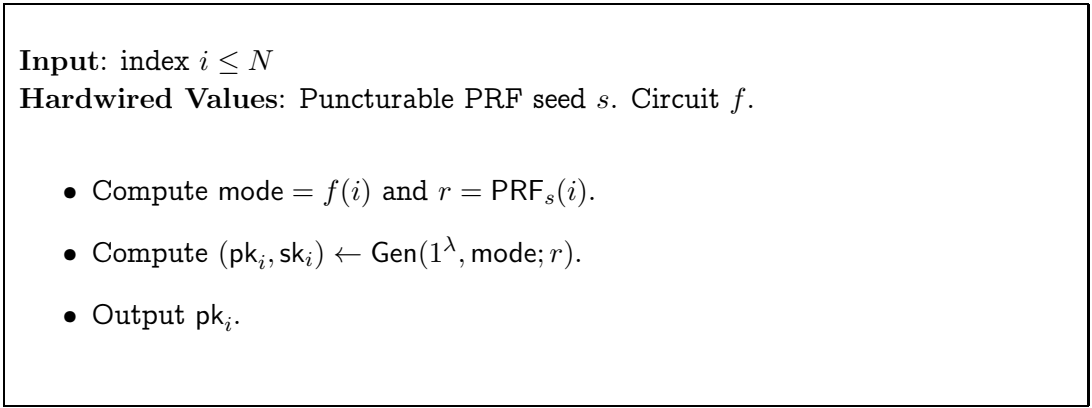

\centering
\begin{tabular}{|p{14cm}|}
\hline\ \\
{\bf Input}: index $i \leq N$\\
{\bf Hardwired Values}: Puncturable PRF seed $s$. Circuit $f$. \\
\begin{itemize}

\item Compute $\mode=f(i)$ and $r = \PRF_{s}(i)$.

\item Compute $(\pk_i, \sk_i) \gets \Gen(1^\secp, \mode; r)$. 

\item Output $\pk_i$. 
	
\end{itemize}
\ \\
\hline
\end{tabular}
\caption{The program $P$.}	
\label{fig:succinct-keys}
\end{figure}

\begin{figure}[h!]
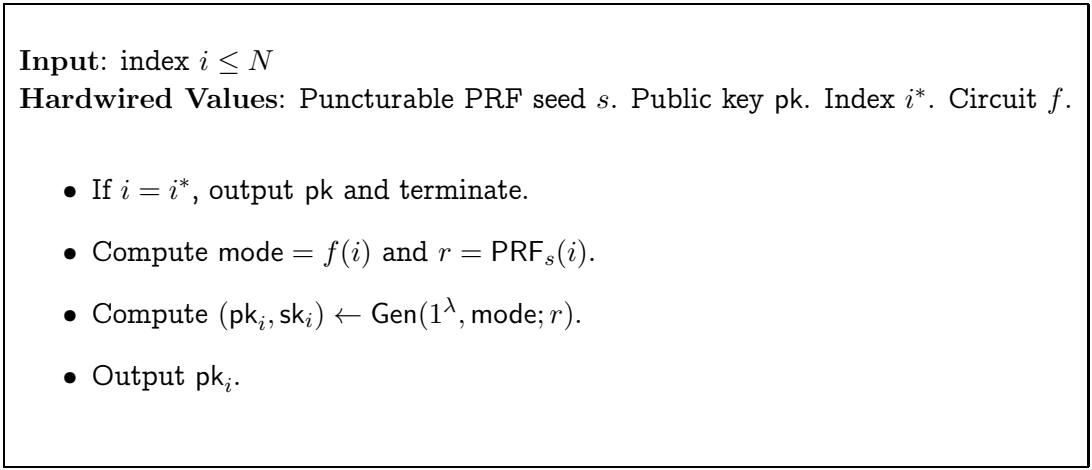

\centering
\begin{tabular}{|p{14cm}|}
\hline\ \\
{\bf Input}: index $i \leq N$\\
{\bf Hardwired Values}: Puncturable PRF seed $s$. Public key $\pk$. Index $i^*$.  Circuit $f$. \\
\begin{itemize}

\item If $i = i^*$, output $\pk$ and terminate. 

\item Compute $\mode=f(i)$ and $r = \PRF_{s}(i)$.

\item Compute $(\pk_i, \sk_i) \gets \Gen(1^\secp, \mode; r)$. 

\item Output $\pk_i$. 
	
\end{itemize}
\ \\
\hline
\end{tabular}
\caption{The program $P_{\pk, i^*, s, f}$. 
}
\label{fig:program-key}
\end{figure}

Succinctness, setup correctness and programming correctness are immediate from the definitions. We now prove programming indistinguishability. 

\begin{claim}\label{claim:programming-indistinguishability}
    For any circuit $f$ and any index $i\in [N]$, the following distributions are $(\poly(\secp, N), \allowbreak \negl(\secp, N))$- computationally indistinguishable: 
          \[ \left\{(\PK, \SK) \gets \Setup(1^\secp, N, f), \sk_j \gets \extsk(\SK, j): \textcolor{blue}{ (\PK, \sk_1, \hdots, \sk_N)} \right\}
          \]
          \[ \approx_c \Big\{(\pk, \sk) \gets \Gen(1^\secp, \mode = f(i)), (\PK, \SK) \gets \Program(1^\secp, N, f, \pk, i), 
          \]
          \[ \sk_j \gets \extsk(\SK, j) (j\neq i), \sk_i = \sk: \textcolor{blue}{ (\PK, \sk_1, \hdots, \sk_N)} \Big\}
          \] 
\end{claim}

\begin{proof}
    We know that $(\iO(P_{s, f}), s) \approx_c (\iO(P_{\pk, i^*, s, f}), s)$ for $(\pk, \sk) \gets \Gen(1^\secp, \mode = f(i^*); \PRF_s(i^*))$ by iO security because these two circuits $P_{s, f}, P_{\pk, i^*, s, f}$ are functionally equivalent.
    
    Moreover, $(\iO(P_{\pk, i^*, s, f}), \{\sk_i\}_{1\leq i\leq N})$ for pseudorandom $(\pk,\sk_{i^*})$ is computationally indistinguishable from $(\iO(P_{\pk, i^*, s, f}), \{\sk_i\}_{1\leq i\leq N})$ for truly random $(\pk, \sk_{i^*})$ by puncturing $s$ at $i^*$ (invoking iO security to do so) and then invoking PRF security.
\end{proof}

Finally, we prove key indistinguishability.

\begin{claim}
For any $N = N(\secp)$, for any subset $S\subset [N(\secp)]$, and for any two circuits $f_0, f_1: [N]\rightarrow \{0,1\}$ such that $f_0(i) = f_1(i)$ for all $i\in S$, for $(\PK_b, \SK_b) \gets \Setup(1^\secp, N, f_b)$, the distributions of keys
    \[ \Big\{ \PK_b, \Big(\sk_i \gets \extsk(\SK_b, i)\Big)_{i\in S}\Big\}_{\secp\in\mathbb{N}}
    \]
    are computationally $(\poly(\secp, N), \negl(\secp, N))$- indistinguishable.
\end{claim}

\begin{proof}
Consider the following hybrid circuits $f'_j$ for $0\leq j\leq N$:
   
   \[ f'_j(i) = f_0(i) \text{ if } i\geq j \text{ and } f'_j(i) = f_1(i) \text{ if } i > j. 
   \]
   Note that $f'_0 = f_0$ and $f'_N = f_1$. Now, we consider the $N+1$ distributions
    \[ \Hyb_j = \Big\{ \PK, \Big(\sk_i \gets \extsk(\SK, i)\Big)_{i\in S}\Big\}_{\secp\in\mathbb{N}}
    \]
   for $(\PK, \SK) \gets \Setup(1^\secp, N, f'_j)$. The claim holds as long as $\Hyb_{j-1} \approx_c \Hyb_{j}$ for all $j\geq 1$. To see that this indistinguishability holds, it suffices to consider two further hybrid distributions:
   \[ \Hyb_{j, 1} = \Big\{ \PK, \Big(\sk_i \gets \extsk(\SK, i)\Big)_{i\neq j\in S}, \sk_{j} \text{ (included if } j\in S)\Big\}_{\secp\in\mathbb{N}}
   \]
   for $(\pk_j, \sk_j) \gets \Gen(1^\secp, \mode = f_0(j))$ and $(\PK, \SK) \gets \Program(1^\secp, N, f'_{j-1}, j, \pk_j)$, and
   \[ \Hyb_{j, 2} = \Big\{ \PK, \Big(\sk_i \gets \extsk(\SK, i)\Big)_{i\neq j\in S}, \sk_{j} \text{ (included if } j\in S)\Big\}_{\secp\in\mathbb{N}}
   \]
   for $(\pk_j, \sk_j) \gets \Gen(1^\secp, \mode = f_1(j))$ and $(\PK, \SK) \gets \Program(1^\secp, N, f'_{j}, j, \pk_j)$.
   
   We have that $\Hyb_{j-1} \approx_c \Hyb_{j, 1}$ by programming indistinguishability (\cref{claim:programming-indistinguishability}). We have that $\Hyb_{j,1}\approx_c \Hyb_{j, 2}$ by considering two cases: if $f_0(j) = f_1(j)$ then $(\pk_j, \sk_j)$ are sampled from identical distributions in the hybrid and $f'_{j-1} = f'_j$, so indistinguishability follows from a single invocation of $\iO$ security. If $f_0(j) \neq f_1(j)$, then $\sk_j$ is not included in the hybrid distributions; moreover, $\pk_j$ in $\Hyb_{j,1}$ is computationally indistinguishable from $\pk_j$ in $\Hyb_{j, 2}$ by the key indistinguishability of $\Gen$. Finally, note that for a fixed $\pk_j$, the programs $P_{\pk_j, j, s, f'_{j-1}}$ and $P_{\pk_j, j, s, f'_{j}}$ are functionally equivalent (as index $j$ is being programmed to $\pk_j$ in both cases), so the claimed indistinguishability now follows from $\iO$ security. 
   
   Finally, we have $\Hyb_{j, 2}\approx_c \Hyb_j$ by programming indistinguishability. This completes the proof of the claim. 
   \end{proof}

\noindent This completes the proof that $(\Setup, \extpk, \extsk,\Program)$ is a succinct batch key generation algorithm for $\Gen$. ~\qedhere

\end{proof}

\subsection{Combining Succinct Key Generation with Mahadev rTCFs}\label{sec:succinct-clawfree}

In our protocols, we compose a batch key generation algorithm (\cref{def:succinct-key-gen}) with a family of Mahadev randomized TCFs (\cref{def:clawfree}). The composition is simple: use a batch key generation procedure $\mathsf{BatchGen} = (\Setup, \extpk, \extsk, \Program)$ to batch the procedure $\mathsf{TCF}.\Gen(1^\secp, \mode)$ for many Mahadev rTCFs $f_{\pk_1}, \hdots, f_{\pk_N}$. The composition has the following syntax:

\begin{itemize}
    \item $\Setup(1^\secp, N, C)$ takes as input the security parameter $\secp$, the batch size $N$ (in binary), and a circuit $C$ computing a function mapping $[N]\rightarrow \{0,1\}$. It outputs a public key $\PK$ and secret key $\SK$.
    \item $\extpk(\PK, i)$ then outputs a public key $\pk_i$ that can be used to evaluate a randomized TCF $f_{\pk_i}$. 
    \item $\extsk(\SK, i)$ outputs a secret key $\sk_i$ that can be used to invert a TCF evaluation $y_i$. 
    \item $\Program$, as defined above, can be used to program a fresh $(\pk_i, \sk_i) \gets \Gen(1^\secp, \mode = C(i))$ into a succinct program generated using circuit $C$. $\Program$ is an auxiliary algorithm used only for analysis. 
\end{itemize}

We now establish that all of the necessary properties listed in \cref{sec:meas-protocol} are satisfied by this composition.

Correctness of the composition (i.e., that key pairs $(\pk_i, \sk_i)$ are in the range of $\mathsf{TCF}.\Gen(1^\secp, \mode = f(i))$) follows immediately from the correctness of $\Setup$. Key indistinguishability of the composition is also inherited directly from the key indistinguishability of $\mathsf{BatchGen}$. 

We next prove that collapsing of $f_{\pk_j}$ holds in the presence of $\PK$ and all $\{\sk_i\}_{i\neq j}$. 

\begin{lemma}\label{lemma:io-collapsing}[Collapsing]
For any circuit $C$, any index $j$, and $(\PK, \SK) \gets \Setup(1^\secp, N, C)$, the TCF $f_{\pk_j}$ is collapsing, even to an adversary given $\PK$ \emph{along with} all secret keys $\{\sk_i\}_{i\neq j}$ besides $\sk_j$.

Formally, a computationally bounded adversary cannot win the following distinguishing game with non-negligible advantage:
\begin{enumerate}
    \item The adversary chooses an index $j\in [N]$ and a circuit $C: [N]\rightarrow \{0,1\}$.
    \item The challenger samples $(\PK, \SK) \gets \Setup(1^\secp, N, C)$.
    \item The challenger sends $(\PK, \{\sk_i\}_{i\neq j})$ to the adversary.
    \item The adversary prepares a quantum state $\ket{\psi}$ on registers $\RegB, \RegX$ along with a string $y$ and sends both to the challenger.
    \item The challenger computes, in superposition, whether $\Check(\pk_j, b, x, y) = 1$. 
    \begin{itemize}
        \item If $\Check$ fails, the challenger samples a random bit $c$ and stops. 
        \item If $\Check$ passes, the challenger samples a random bit $c$; if $c=1$, the challenger measures $\RegB$.
    \end{itemize}
    \item The adversary, given access to the modified $(\RegB, \RegX)$, outputs a bit $c'$ and wins if $c' = c$. 
\end{enumerate}
\end{lemma}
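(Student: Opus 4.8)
The plan is to deduce collapsing of the batched key $f_{\pk_j}$ from the collapsing of a \emph{single} freshly sampled Mahadev rTCF, using $\Program$ as the bridge and then deriving standalone collapsing of the rTCF from its key indistinguishability. First I would move to a hybrid game $G_1$ that is identical to the game in the statement except that the challenger samples a fresh pair $(\pk,\sk)\gets\TCF.\Gen(1^\secp,C(j))$, sets $(\PK,\SK)\gets\Program(1^\secp,N,C,\pk,j)$ instead of $(\PK,\SK)\gets\Setup(1^\secp,N,C)$, hands the adversary $(\PK,\{\sk_i\}_{i\neq j})$ with $\sk_i\gets\extsk(\SK,i)$, and uses $\pk_j=\extpk(\PK,j)$ (which equals $\pk$ by programming correctness) when evaluating $\Check$. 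The key observation is that the challenger never uses $\sk_j$: its entire behaviour is an efficient function of $\PK$ and the adversary's messages (it needs only $\extpk(\PK,j)$ for the $\Check$, while the bit $c$ and the measurement of $\RegB$ require no key at all). Hence ``run the game with the adversary and output $1$ iff the adversary wins'' is an efficient distinguisher taking $(\PK,\sk_1,\dots,\sk_N)$ as input, so programming indistinguishability (\cref{def:succinct-key-gen}, proved for the iO construction in \cref{claim:programming-indistinguishability}) implies the adversary's advantage changes by at most $\negl(\secp,N)$ between the original game and $G_1$.

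Next I would bound the advantage in $G_1$ by that of a standalone collapsing adversary against the single rTCF $f_{\pk}$ with $(\pk,\sk)\gets\TCF.\Gen(1^\secp,C(j))$. Given the standalone challenger's public key $\pk$, the reduction runs $\Program(1^\secp,N,C,\pk,j)$ on its own, computes $\sk_i\gets\extsk(\SK,i)$ for every $i\neq j$, sends $(\PK,\{\sk_i\}_{i\neq j})$ to the batched adversary, relays the adversary's $(y,\ket{\psi})$ on $\RegB\otimes\RegX$ to the challenger and the challenger's modified registers back, and echoes the adversary's output bit. Since $\Check$ needs only $\pk$ and $\extpk(\PK,j)=\pk$ by programming correctness, this perfectly simulates $G_1$, so the advantage in $G_1$ is at most the standalone collapsing advantage of $f_{\pk}$.

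It then remains to show that the single rTCF $f_{\pk}$ is collapsing regardless of the value of $C(j)$. If $C(j)=0$, then $\pk$ is injective; by the correctness guarantees of \cref{def:clawfree} (the supports $\mathrm{Supp}(\Eval(\pk,(b,x)))$ are pairwise disjoint, and $\Check(\pk,(b,x),y)=1$ iff $y\in\mathrm{Supp}(\Eval(\pk,(b,x)))$), for any fixed classical $y$ at most one pair $(b,x)$ passes $\Check$, so after the challenger projects $\RegB\otimes\RegX$ onto the $\Check$-accepting subspace the state is supported on a single computational basis vector; measuring $\RegB$ is then the identity, the $c=0$ and $c=1$ experiments are literally the same, and the advantage is $0$. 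If $C(j)=1$, then $\pk$ is two-to-one, and I would switch to $\pk\gets\TCF.\Gen(1^\secp,\inj)$: the whole collapsing experiment (challenger and adversary) is efficiently computable from $\pk$ alone, so the rTCF key indistinguishability property of \cref{def:clawfree} bounds the change in advantage by $\negl(\secp)$, and in the injective world the advantage is $0$ as just argued. Chaining the three steps yields the lemma.

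The only genuinely delicate points are (i) checking that the collapsing challenger never touches $\sk_j$ --- this is what lets programming indistinguishability apply in Step 1 and lets the Step 2 reduction run without the standalone challenger's secret key --- and (ii) the claim that an injective rTCF is \emph{perfectly} collapsing, which is an information-theoretic consequence of the disjoint-support property and the determinism of $\Check$ in \cref{def:clawfree} rather than of any computational assumption; the rest is a routine composition of one puncturing-based indistinguishability step with two key-switching reductions. I expect the companion ``adaptive hardcore bit at a single index'' lemma to follow the identical template, with the standalone adaptive hardcore bit game of \cref{def:clawfree} replacing standalone collapsing.
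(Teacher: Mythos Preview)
Your proposal is correct and follows essentially the same route as the paper: invoke programming indistinguishability to replace $\Setup$ by $\Program$ with a fresh $(\pk,\sk)\gets\TCF.\Gen(1^\secp,C(j))$, then use key indistinguishability of the rTCF to switch $\pk$ to injective mode, and finally observe that an injective key is perfectly collapsing by the disjoint-support/$\Check$ correctness properties. Your intermediate reduction to a ``standalone collapsing game'' and the case split on $C(j)$ are minor repackagings of the same hybrid chain the paper writes directly as $\Hyb_0\to\Hyb_1\to\Hyb_2$.
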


\begin{proof}
   We consider the following hybrid experiments.
   \begin{itemize}
       \item $\Hyb_0$: this is the actual security game.
       \item $\Hyb_1$: In step (2), challenger samples $(\pk, \sk) \gets \Gen(1^\secp, C(j))$ and samples $(\PK, \SK) \gets \Program(1^\secp, N, C, \pk, j)$.
       
       $\Hyb_0$ and $\Hyb_1$ are computationally indistinguishable by the programming indistinguishability of $\mathsf{SuccGen}$. 
       \item $\Hyb_2$: In step (2), the challenger instead samples $(\pk, \sk) \gets \Gen(1^\secp, \inj)$. 
       
       $\Hyb_1$ and $\Hyb_2$ are computationally indistinguishable by the key indistinguishability of the injective/claw-free trapdoor functions.
   \end{itemize}
   
   Finally, in $\Hyb_3$, even a computationally unbounded adversary cannot guess the challenge bit $c$, as with all but negligible probability, $\pk_j = \extpk(\PK, j)$ defines an injective function (by \cref{def:clawfree}), so after verifying that $\Check(\pk_j, b, x, y) = 1$, the register $\RegB$ is already a standard basis state. This completes the proof of \cref{lemma:io-collapsing}. 
\end{proof}

Finally, we prove that the adaptive hardcore bit property of $f_{\pk_j}$ holds given $\PK$ and all $\{\sk_i\}_{i\neq j}$

\begin{lemma}\label{lemma:io-hardcore-bit}[Adaptive Hardcore Bit]
    For any $j$ and any circuit $C$ such that $C(j) = 1 = \twotoone$, for $(\PK, \SK) \gets \Setup(1^\secp, N, C)$, the adaptive hardcore bit property (see \cref{def:clawfree}) holds for the function $f_{\pk_j}$ (with associated secret key $\sk_j$), even if the adversary is given $(\PK, \{\sk_i\}_{i\neq j})$.
\end{lemma}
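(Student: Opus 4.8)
The plan is to mirror the proof of \cref{lemma:io-collapsing} almost verbatim, replacing the collapsing game with the adaptive hardcore bit game, and then reduce to the stand-alone adaptive hardcore bit property of a single Mahadev rTCF (\cref{def:clawfree}, Property~3). The only subtlety relative to the collapsing case is that here we are \emph{not} allowed to switch $\pk_j$ to injective mode: the adaptive hardcore bit game is only defined for $\twotoone$ keys, and indeed $C(j) = \twotoone$ is assumed. So the hybrid chain is one step shorter.

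First I would set up the hybrids. In $\Hyb_0$, the challenger runs $(\PK,\SK)\gets\Setup(1^\secp,N,C)$ and plays the adaptive hardcore bit game for $f_{\pk_j}$, handing the adversary $(\PK,\{\sk_i\}_{i\neq j})$ as auxiliary input; note all of $\{\sk_i\}_{i\neq j}$ are computable from $\SK$ via $\extsk$, and the challenger retains $\sk_j$ in order to compute $\Invert(\twotoone,\sk_j,\vecy)$ and check the winning condition. In $\Hyb_1$, the challenger instead samples $(\pk,\sk)\gets\Gen(1^\secp,\twotoone)$ (note $C(j)=\twotoone$) and then $(\PK,\SK)\gets\Program(1^\secp,N,C,\pk,j)$, setting $\sk_j\coloneqq\sk$ and $\sk_i\gets\extsk(\SK,i)$ for $i\neq j$. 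By the programming correctness of $\mathsf{BatchGen}$, $\extpk(\PK,j)=\pk$, so the key whose hardcore bit is being attacked is exactly the freshly generated $\pk$; and by programming indistinguishability (the version from \cref{claim:programming-indistinguishability}, which gives indistinguishability of $(\PK,\sk_1,\dots,\sk_N)$ — hence in particular of the sub-tuple $(\PK,\{\sk_i\}_{i\neq j})$ together with the challenger's behavior, which is an efficient function of $\sk_j$ and the adversary's messages), $\Hyb_0$ and $\Hyb_1$ induce computationally indistinguishable win probabilities.

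Second, in $\Hyb_1$ the challenger's entire behavior — generating $\vecy$ via $\mathcal{A}_0$, running $\mathcal{A}_1$, inverting with $\sk_j$, and checking $\Check(\pk,b,\vecx,\vecy)=1 \wedge \vecd\cdot(1,\vecx_0\oplus\vecx_1)=c \wedge \vecd\in\Good_{\vecx_0,\vecx_1}$ — is exactly the adaptive hardcore bit experiment of \cref{def:clawfree} for the single rTCF key $\pk\gets\Gen(1^\secp,\twotoone)$, \emph{except} that the adversary additionally receives the auxiliary string $(\PK,\{\sk_i\}_{i\neq j})$. But this auxiliary string is efficiently sampleable given only $\pk$: run $(\PK,\SK)\gets\Program(1^\secp,N,C,\pk,j)$ and compute $\sk_i\gets\extsk(\SK,i)$ for all $i\neq j$. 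Hence any adversary with non-negligible advantage in $\Hyb_1$ yields a non-uniform BQP adversary $(\mathcal{A}_0,\mathcal{A}_1)$ against the plain adaptive hardcore bit property of a single Mahadev rTCF, contradicting \cref{def:clawfree}, Property~3. Therefore the advantage in $\Hyb_0$ is also negligible, which is the claim.

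The main obstacle — really the only place to be careful — is bookkeeping the reduction so that the programmed master secret key is never needed in full: the reduction must be able to answer with exactly $(\PK,\{\sk_i\}_{i\neq j})$ and to forward the adversary's messages as the two stages $\mathcal{A}_0,\mathcal{A}_1$ of the hardcore bit attacker, while the \emph{challenger} of the single-key game supplies $\pk$, samples the challenge bit $c$, performs $\Invert(\twotoone,\sk_j,\vecy)$ and evaluates the $\Good$/parity predicate. Since $\Program$ takes only $(1^\secp,N,C,\pk,j)$ as input and $\extsk(\cdot,i)$ is efficient, this auxiliary input is efficiently generated from $\pk$ alone, so the reduction is valid. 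One should also note that $C$ is chosen adaptively by the distinguisher (as in \cref{lemma:io-collapsing}); this is harmless because $\Program$ and $\extsk$ handle an arbitrary circuit $C$ with $C(j)=\twotoone$, and the single-key game is independent of $C$.
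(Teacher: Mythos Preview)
Your proposal is correct and follows essentially the same approach as the paper: a two-hybrid argument where $\Hyb_0$ uses $\Setup$ and $\Hyb_1$ uses $\Program$ with a freshly sampled $\twotoone$ key, with programming indistinguishability bridging the hybrids and the single-key adaptive hardcore bit property finishing $\Hyb_1$ via a reduction that simulates $(\PK,\{\sk_i\}_{i\neq j})$ from $\pk$ alone. Your write-up is in fact more careful about the bookkeeping (e.g., noting that the reduction never needs $\sk_j$ and that $\Program$ takes only $\pk$) than the paper's terse version; the only minor over-elaboration is the remark about adaptive choice of $C$, since the lemma is stated for fixed $j,C$.
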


\begin{proof}
   We consider the following hybrid experiments.
   
   \begin{itemize}
       \item $\Hyb_0$: this is the adaptive hardcore bit security game for $(\pk_j, \sk_j)$ as sampled above.
       \item $\Hyb_1$: this is the adaptive hardcore bit security game for $(\pk, \sk) \gets \Gen(1^\secp, \twotoone)$, $(\PK, \SK) \gets \Program(1^\secp, N, C, \pk, j)$. The adversary is additionally given $\PK$ and $\sk_i = \extsk(\SK, i)$ for all $i\neq j$. 
   \end{itemize}
   
   $\Hyb_0$ and $\Hyb_1$ are computationally indistinguishable by the programming indistinguishability property. Moreover, the adversary's advantage in $\Hyb_1$ is negligible by the adaptive hardcore bit property of the freshly generated key pair $(\pk, \sk) \gets \Gen(1^\secp, \twotoone)$, as a reduction given $\pk$ can simulate $\Hyb_1$ by sampling all other parameters given to the $\Hyb_1$ adversary itself.
   
   This completes the proof of \cref{lemma:io-hardcore-bit}. 
\end{proof}

\section{A Verifier-Succinct Protocol}\label{section:semi-succinct-delegation}

In this section, we present a delegation protocol for $\QMA$ with succinct verifier messages. First, in \cref{subsec:cop}, we describe results due to \cite{TCC:ACGH20} about the parallel repetition of certain commit-challenge-response protocols with a quantum prover. Our treatment is somewhat more abstract than \cite{TCC:ACGH20}, so for completeness we provide proofs of all claims (based on proofs appearing in \cite{TCC:ACGH20}). Next, in \cref{subsec:FHM}, we describe the syntax of a non-interactive, information-theoretic $\QMA$ verification protocol (with quantum verifier) that we will use, due to \cite{FHM18}. In \cref{subsec:semi-succinct-description}, we describe a \emph{verifier-succinct} protocol for $\QMA$ delegation, where the verifier messages (but not the prover messages) are succinct.

\subsection{Quantum commit-challenge-response protocols}\label{subsec:cop}

Consider any \emph{commit-challenge-response} protocol between a quantum prover $P$ and a classical verifier $V$, with the following three phases.

\begin{itemize}
    \item Commit: $P(1^\secp)$ and $V(1^\secp;r)$ engage in a (potentially interactive) commitment protocol, where $r$ are the random coins used by $V$.
    \item Challenge: $V$ samples a random bit $b \gets \{0,1\}$ and sends it to $P$.
    \item Response: $P$ computes a (classical) response $z$ and sends it to $V$. 
\end{itemize}

After receiving the response, $V$ decides to accept or reject the execution.

Consider any non-uniform QPT prover $P^*$, and let $\ket{\psi^{P^*}_{\secp,r}}_{\RegA,\RegC}$ be the (purified) state of the prover after interacting with $V(1^\secp;r)$ in the commit phase, where $\RegC$ holds the (classical) prover messages output during this phase, and $\RegA$ holds the remaining state.

The remaining strategy of the prover can be described by family of unitaries $\left\{U^{P^*}_{\secp,0},U^{P^*}_{\secp,1}\right\}_{\secp \in \mathbb{N}^+}$, where $U^{P^*}_{\secp,0}$ is applied to $\ket{\psi^{P^*}_{\secp,r}}$ on challenge 0 (followed by a measurement of $z$), and $U^{P^*}_{\secp,1}$ is applied to $\ket{\psi^{P^*}_{\secp,r}}$ on challenge 1 (followed by a measurement of $z$).

Let $V_{\secp,r,0}$ denote the accept/reject predicate applied by the verifier to the prover messages when $b=0$, written as a projection to be applied to the registers holding the prover messages, and define $V_{\secp,r,1}$ analogously. Then define the following projectors on $\cA \otimes \cC:$

\[\Pi^{P^*}_{\secp,r,0} \coloneqq {U^{P^*}_{\secp,0}}^\dagger V_{\secp,r,0}U^{P^*}_{\secp,0}, \ \ \ \Pi^{P^*}_{\secp,r,1} \coloneqq {U^{P^*}_{\secp,1}}^\dagger V_{\secp,r,1}U^{P^*}_{\secp,1}.\]

\begin{definition}\label{def:computationally-orthogonal-projectors}
A \emph{commit-challenge-response} protocol has \emph{computationally orthogonal projectors} if for any QPT prover $P^*$, \[\E_{r}\left[\bra{\psi^{P^*}_{\secp,r}}\Pi^{P^*}_{\secp,r,0} \Pi^{P^*}_{\secp,r,1} \Pi^{P^*}_{\secp,r,0} \ket{\psi^{P^*}_{\secp,r}}\right] = \negl(\secp).\]
\end{definition}

Proofs of the following are given in \cref{appendix:cop}.

\begin{lemma}[\cite{TCC:ACGH20}]\label{lemma:two-conditions}
Consider a commit-challenge-response protocol with the following properties.
\begin{enumerate}
    \item $V_{\secp,r,0}$ does not depend on $r$ (that is, it is publicly computable given the transcript).
    \item For any $P^*$, if $\E_{r}\left[\bra{\psi^{P^*}_{\secp,r}}\Pi^{P^*}_{\secp,r,0} \ket{\psi^{P^*}_{\secp,r}}\right] = 1-\negl(\secp),$ then $\E_{r}\left[\bra{\psi^{P^*}_{\secp,r}}\Pi^{P^*}_{\secp,r,1} \ket{\psi^{P^*}_{\secp,r}}\right] = \negl(\secp).$
\end{enumerate}
Then, the protocol has computationally orthogonal projectors.
\end{lemma}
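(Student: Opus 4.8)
The plan is to show that the two hypotheses force the product $\Pi^{P^*}_{\secp,r,0}\Pi^{P^*}_{\secp,r,1}\Pi^{P^*}_{\secp,r,0}$ to have small expectation in the state $\ket{\psi^{P^*}_{\secp,r}}$, by a conditioning/projection argument. The key point is that $\Pi^{P^*}_{\secp,r,0}\Pi^{P^*}_{\secp,r,1}\Pi^{P^*}_{\secp,r,0}$ is sandwiched by $\Pi^{P^*}_{\secp,r,0}$ on both sides, so if we project $\ket{\psi^{P^*}_{\secp,r}}$ onto the image of $\Pi^{P^*}_{\secp,r,0}$, the quantity we care about is exactly $\bra{\phi_r}\Pi^{P^*}_{\secp,r,1}\ket{\phi_r}$ (times a norm-squared factor), where $\ket{\phi_r}$ is the (subnormalized) projected state. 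So the strategy is: define a modified prover $\widetilde P$ that first runs $P^*$'s commit phase and then \emph{coherently} applies the test-round measurement $\Pi^{P^*}_{\secp,r,0}$, post-selecting (or amplifying) on success; then argue that $\widetilde P$ passes the test round with probability $1-\negl(\secp)$, so by hypothesis (2) it passes the challenge-$1$ round with only negligible probability, which is exactly the bound we want after undoing the post-selection bookkeeping.

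Concretely, first I would use hypothesis (1): since $V_{\secp,r,0}$ does not depend on $r$, the operation ``apply $U^{P^*}_{\secp,0}$, measure whether the verifier accepts, apply ${U^{P^*}_{\secp,0}}^\dagger$'' is a well-defined binary measurement $\{\Pi^{P^*}_{\secp,r,0}, \Id - \Pi^{P^*}_{\secp,r,0}\}$ that is \emph{publicly computable} (no secret coins needed), hence can be incorporated into a legitimate prover strategy. Define $\widetilde P$ as follows: run $P^*$'s commit phase to obtain $\ket{\psi^{P^*}_{\secp,r}}$; then apply the binary measurement associated to $\Pi^{P^*}_{\secp,r,0}$; if the outcome is ``reject'' (outcome corresponding to $\Id - \Pi^{P^*}_{\secp,r,0}$), abort and behave arbitrarily; if ``accept'', the state collapses to $\ket{\phi_r} := \Pi^{P^*}_{\secp,r,0}\ket{\psi^{P^*}_{\secp,r}} / \norm{\Pi^{P^*}_{\secp,r,0}\ket{\psi^{P^*}_{\secp,r}}}$, and then $\widetilde P$ continues exactly like $P^*$ (applying $U^{P^*}_{\secp,b}$ on challenge $b$). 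The next step is the bookkeeping: I would split the analysis according to whether the good event ($\Pi^{P^*}_{\secp,r,0}$ succeeds) has large probability. Write $p_r := \bra{\psi^{P^*}_{\secp,r}}\Pi^{P^*}_{\secp,r,0}\ket{\psi^{P^*}_{\secp,r}}$ and $q_r := \bra{\psi^{P^*}_{\secp,r}}\Pi^{P^*}_{\secp,r,0}\Pi^{P^*}_{\secp,r,1}\Pi^{P^*}_{\secp,r,0}\ket{\psi^{P^*}_{\secp,r}}$; note $q_r = p_r \cdot \bra{\phi_r}\Pi^{P^*}_{\secp,r,1}\ket{\phi_r}$ and $q_r \le p_r \le 1$. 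The quantity to bound is $\E_r[q_r]$.

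The main obstacle — and the reason hypothesis (2) is phrased with an $\E_r$ rather than pointwise — is that $p_r$ need not be close to $1$ for every $r$; it is only an average statement we get to use. So I cannot directly say ``$\widetilde P$ passes the test round with probability $1-\negl$'' without first handling the provers for which $p_r$ is small. I would resolve this by a two-case / truncation argument: fix the (arbitrary, non-uniform) prover $P^*$ and suppose for contradiction that $\E_r[q_r] \ge 1/\poly(\secp)$. Since $q_r \le p_r$, this forces $\E_r[p_r] \ge 1/\poly(\secp)$ as well, but more usefully, by an averaging/Markov argument there is a noticeable-probability set of $r$ on which both $p_r$ and $\bra{\phi_r}\Pi^{P^*}_{\secp,r,1}\ket{\phi_r}$ are noticeable. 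To turn this into a single prover violating hypothesis~(2), which demands the test-round success probability be $1-\negl$, I would use amplification: because $\{\Pi^{P^*}_{\secp,r,0}, \Id-\Pi^{P^*}_{\secp,r,0}\}$ is an efficiently implementable projective measurement and the commit phase can be re-run (the prover controls its own coins; only $V$'s coins $r$ are fixed in a given execution), $\widetilde P$ can repeat the commit-then-test step polynomially many times until the test accepts, conditioning its subsequent behavior on the accepting branch. This yields a prover $\widetilde P$ whose test-round acceptance probability is $1 - \negl(\secp)$ (over the combined randomness) and whose conditional state is exactly (a mixture of) the $\ket{\phi_r}$'s; by hypothesis~(2), $\widetilde P$'s challenge-$1$ acceptance probability $\E_r[\bra{\phi_r}\Pi^{P^*}_{\secp,r,1}\ket{\phi_r}]$ is $\negl(\secp)$. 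Finally, since $q_r = p_r\cdot\bra{\phi_r}\Pi^{P^*}_{\secp,r,1}\ket{\phi_r} \le \bra{\phi_r}\Pi^{P^*}_{\secp,r,1}\ket{\phi_r}$, we get $\E_r[q_r] = \negl(\secp)$, contradicting the assumption and establishing $\E_r[\bra{\psi^{P^*}_{\secp,r}}\Pi^{P^*}_{\secp,r,0}\Pi^{P^*}_{\secp,r,1}\Pi^{P^*}_{\secp,r,0}\ket{\psi^{P^*}_{\secp,r}}] = \negl(\secp)$, which is the definition of computationally orthogonal projectors. I would double-check the amplification step carefully, as that is where a naive argument can leak the verifier's coins or disturb the registers $\RegC$ holding the committed transcript; the cleanest formulation keeps all re-runs inside the prover's private workspace and only ``commits'' once the test branch has succeeded. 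I expect that bookkeeping — making the amplified $\widetilde P$ a syntactically valid prover for the same protocol — to be the delicate part, while the algebraic inequalities ($q_r \le p_r \le 1$ and the factorization of $q_r$) are routine.
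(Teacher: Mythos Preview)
Your approach matches the paper's: build an amplified prover $\widehat{P}^*$ that, given polynomially many fresh copies of $P^*$'s advice, repeatedly prepares $\ket{\psi^{P^*}_{\secp,r}}$ and applies the (publicly computable, by hypothesis~1) measurement $\{\Pi^{P^*}_{\secp,r,0},\Id-\Pi^{P^*}_{\secp,r,0}\}$ until success, then invokes hypothesis~(2). The one piece you flag but do not close is how to guarantee the amplified prover's test-round acceptance is $1-\negl(\secp)$ \emph{over all $r$}: polynomially many repetitions do not suffice when $p_r$ is itself negligible, and nothing rules out a constant fraction of such $r$'s. The paper resolves this by having $\widehat{P}^*$ output a fixed dummy state $\ket{\phi}$ with $\bra{\phi}\Pi^{P^*}_{\secp,r,0}\ket{\phi}=1$ after $p(\secp)^4$ failed attempts, so test-round acceptance is exactly $1$ for every $r$, and then splits over the threshold set $\cR_{\mathsf{term}}=\{r:p_r>1/p(\secp)^2\}$: on $\cR_{\mathsf{term}}$ amplification yields $\ket{\phi_r}$ with probability $1-\negl(\secp)$, so hypothesis~(2) applied to $\widehat{P}^*$ bounds $\frac{1}{|\cR|}\sum_{r\in\cR_{\mathsf{term}}} q_r$ by $\negl(\secp)$; off $\cR_{\mathsf{term}}$ one uses $q_r\le p_r\le 1/p(\secp)^2$. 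Together these contradict $\E_r[q_r]\ge 1/p(\secp)$. Your final line $\E_r[q_r]\le\E_r[\bra{\phi_r}\Pi^{P^*}_{\secp,r,1}\ket{\phi_r}]=\negl(\secp)$ is too quick precisely because the amplified prover's state equals $\ket{\phi_r}$ only on $\cR_{\mathsf{term}}$; the dummy-state-plus-threshold split is what makes the identification go through.
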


\begin{theorem}[\cite{TCC:ACGH20}]\label{thm:parallel-repetition}
Consider the $\secp$-fold parallel repetition of any commit-challenge-response protocol with computationally orthogonal projectors. The probability that the verifier accepts all $\secp$ parallel repetitions of the protocol is $\negl(\secp)$.
\end{theorem}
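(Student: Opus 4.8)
The plan is to prove \cref{thm:parallel-repetition} as a parallel-repetition statement: I will show that for any commit-challenge-response protocol with computationally orthogonal projectors, the $k$-fold parallel repetition is accepted (on all $k$ coordinates) with probability at most $2^{-k} + k\cdot\negl(\secp)$ against every non-uniform QPT prover, and then specialize to $k = \secp$ (noting $2^{-\secp} + \secp\cdot\negl(\secp) = \negl(\secp)$). The base case $k=1$ is immediate from \cref{def:computationally-orthogonal-projectors}: abbreviating the post-commit state as $\ket{\psi_r}$ and the conjugated accept projectors as $\Pi_{r,0},\Pi_{r,1}$, the hypothesis $\E_r\bra{\psi_r}\Pi_{r,0}\Pi_{r,1}\Pi_{r,0}\ket{\psi_r} = \negl(\secp)$ says that $\Pi_{r,0}\ket{\psi_r}$ is (on average over $r$) approximately in the kernel of $\Pi_{r,1}$; combined with $\|\Pi_{r,0}\ket{\psi_r}\|^2 + \|(\Id-\Pi_{r,0})\ket{\psi_r}\|^2 = 1$, the triangle inequality and Jensen's inequality give $\E_r\big(\|\Pi_{r,0}\ket{\psi_r}\|^2 + \|\Pi_{r,1}\ket{\psi_r}\|^2\big) \le 1 + \negl(\secp)$, and since the single-shot acceptance probability on a uniform challenge is exactly half of this quantity, it is $\tfrac12 + \negl(\secp)$.

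For the inductive step I would single out one coordinate (say the last). Given a $k$-fold cheating prover $P^*$ accepted with probability $\delta_k$, one builds a single-shot cheating prover that internally emulates the verifiers of coordinates $1,\dots,k-1$ (sampling their randomness and challenges itself), relays coordinate $k$ to the external single-shot challenger, and forwards coordinate $k$'s response only when its internal emulation accepts all of coordinates $1,\dots,k-1$ (sending a rejecting message otherwise); using the black-box-access formalism of \cref{subsec:black-box} to invoke $P^*$'s commit and response unitaries---in particular preparing the first $k-1$ challenges at the moment the external challenge arrives---one checks this single-shot prover is accepted with probability exactly $\delta_k$. By itself this only re-derives $\delta_k \le \tfrac12 + \negl(\secp)$, which is far too weak. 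The structural point that must be extracted is that the ``which of the two challenges is blocked'' information for a coordinate is already fixed by the commit phase, before any challenge is seen; since the $k$ challenge bits are independent, the value should therefore drop by a factor of $2$ for each coordinate. I would realize this via a hybrid over coordinates (equivalently, a Raz-style conditioning on a growing set of accepting coordinates): replacing, one coordinate at a time, the verifier's challenge-dependent accept projector by an ``idealized'', challenge-independent projector---the projection onto the Jordan blocks on which $\Pi_{r,0}$ and $\Pi_{r,1}$ are complementary, which computational orthogonality guarantees exists up to negligible weight on the state---so that the idealized coordinate accepts a uniform challenge with probability at most $\tfrac12$. After all $k$ replacements the acceptance probability is manifestly at most $2^{-k}$.

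The main obstacle is the per-step analysis: showing that replacing coordinate $i$'s accept projector by its idealized, challenge-independent version perturbs the overall acceptance probability by only $\negl(\secp)$. This is exactly where the full operator statement $\Pi_{r,0}\Pi_{r,1}\Pi_{r,0}\approx 0$ is needed---not merely the value bound $\tfrac12 + \negl(\secp)$ that it implies---and amounts to a ``collapsing''-style argument that measuring the blocked-challenge observable is undetectable. One also has to track efficiency: the distinguisher/prover constructed at each of the $k$ levels should have size $\poly(\secp,k)$ times that of $P^*$, so that the loss incurred at each level is a fixed $\negl(\secp)$ independent of $k$ and the total loss remains $\negl(\secp)$ at $k=\secp$. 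For applications it is convenient to invoke \cref{thm:parallel-repetition} through the two sufficient conditions of \cref{lemma:two-conditions}---that the test-round ($b=0$) decision is publicly computable from the transcript, and that passing the test round with overwhelming probability forces the $b=1$ branch to fail with overwhelming probability---which are precisely the properties established for the (verifier-succinct) measurement protocol composed with the \cite{FHM18} verification protocol.
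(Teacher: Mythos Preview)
Your base case is fine, but the inductive step has a real gap that you yourself flag and do not close. The ``idealized, challenge-independent projector''---the projection onto Jordan blocks where $\Pi_{r,0}$ and $\Pi_{r,1}$ are complementary---is not in general efficiently computable, so you cannot plant it inside a reduction to the \emph{computational} orthogonality hypothesis. That hypothesis only constrains efficient provers; to invoke it at step $i$ of your hybrid you must exhibit an efficient single-shot prover whose accept projectors are the genuine $\Pi_{r_i,0},\Pi_{r_i,1}$, not a spectral truncation of them. Nor does the hypothesis say the state lies mostly on ``complementary'' blocks: $\E_r\bra{\psi_r}\Pi_{r,0}\Pi_{r,1}\Pi_{r,0}\ket{\psi_r}=\negl$ is equally consistent with weight on blocks where $\Pi_{r,0}=0$, which are not complementary at all. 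The ``collapsing-style'' replacement you allude to would further have to be undetectable by a test that depends on all $2^k$ challenge strings at once, and you give no mechanism for this.

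The paper's proof is direct and avoids induction. Writing $\Pi_{r,c}$ for the conjugated accept projector on the full challenge string $c\in\{0,1\}^\secp$, the acceptance probability is $2^{-\secp}\E_r\sum_c\bra{\psi_r}\Pi_{r,c}\ket{\psi_r}$; by $\ketbra{\psi_r}\preceq\Id$ and Cauchy--Schwarz this is at most $2^{-\secp}\E_r\big(\bra{\psi_r}(\sum_c\Pi_{r,c})^2\ket{\psi_r}\big)^{1/2}$. Expanding the square, the diagonal sum is at most $2^\secp$, while each cross term $\E_r\bra{\psi_r}(\Pi_{r,c_1}\Pi_{r,c_2}+\Pi_{r,c_2}\Pi_{r,c_1})\ket{\psi_r}$ is shown to be $\negl(\secp)$ by a \emph{single} reduction: pick a coordinate $i$ where $c_1,c_2$ differ, drop all other tensor factors from the verifier's check, and observe that the resulting quantity is exactly the computational-orthogonality expression for an efficient single-shot prover on coordinate $i$ (apply $U^{P^*}_{c_2}$ on challenge $0$ and $U^{P^*}_{c_1}$ on challenge $1$, internally simulating the other $\secp-1$ verifiers). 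Summing the $O(2^{2\secp})$ cross terms and taking the square root gives $2^{-\secp/2}+\sqrt{\negl(\secp)}=\negl(\secp)$. No idealized projectors, no Jordan decomposition, and only one level of reduction---so the efficiency bookkeeping you worry about never arises.
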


\subsection{Non-Interactive Post Hoc Verification of $\QMA$}\label{subsec:FHM}

We recall a useful information-theoretic QMA verification protocol of Fitzsimons, Hajdušek, and Morimae~\cite{FHM18}. In fact, we will use an ``instance-independent'' version due to~\cite{TCC:ACGH20}.
\begin{lemma}[\cite{FHM18,TCC:ACGH20}]\label{lmm:FHM}
For all languages $\cL = (\cL_{\mathsf{yes}}, \cL_{\mathsf{no}}) \in \QMA$ there exists a polynomial $k(\secp)$, a function $\ell(\secp)$ that is polynomial in the time $T(\secp)$ required to verify instances of size $\secp$, a QPT algorithm $P_\mathsf{FHM}$, and a PPT algorithm $V_\mathsf{FHM}$ such that the following holds.
\begin{itemize}
    \item $P_\mathsf{FHM}(x,\ket{\psi}) \to \ket{\pi}$: on input an instance $x \in \{0,1\}^\secp$ and a quantum state $\ket{\psi}$, $P_\mathsf{FHM}$ outputs an $\ell(\secp)$-qubit state $\ket{\pi}$.
    \item \textbf{Completeness.} For all $x\in\cL_{\mathsf{yes}}$ and $\ket{\phi}\in\cR_\cL(x)$ it holds that
    $$
    \Pr[V_\mathsf{FHM}(x, M(h, \ket{\pi})) = \acc : \ket{\pi} \gets P_\mathsf{FHM}\left(x, \ket{\phi}^{\otimes k(\secp)}\right)] \geq 1 - \negl(\secp)
    $$
    where $h \gets \{0,1\}^{\ell(\secp)}$.
    \item \textbf{Soundness.} For all $x\in\cL_{\mathsf{no}}$ and all $\ell$-qubit states $\ket{\pi^*}$ it holds that
    $$
    \Pr[V_\mathsf{FHM}(x, M(h, \ket{\pi^*})) = \acc] \leq \negl(\secp)
    $$
    where $h \gets \{0,1\}^{\ell(\secp)}$.
\end{itemize}
Moreover, when $\mathcal L \in \mathbf{BQP}$, the honest prover algorithm $P_{\mathsf{FHM}}$ is also a $\mathbf{BQP}$ algorithm. 
\end{lemma}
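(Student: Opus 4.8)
The plan is to derive \cref{lmm:FHM} from the $\QMA$-completeness of a suitably normalized local Hamiltonian problem, following \cite{FHM18,TCC:ACGH20}. First I would invoke the fact (Biamonte--Love, building on Kitaev, together with standard Hamiltonian gap amplification) that every $\cL \in \QMA$ reduces, via a size-uniform reduction running in time $\poly(T(\secp))$, to the problem of estimating the ground energy of a Hamiltonian $H_x = \sum_i c_i P_i$ on $n = n(\secp)$ qubits, where each $P_i$ is a tensor product of $\Id$, $\sigma_x$, $\sigma_z$ acting nontrivially on only $O(1)$ qubits, $\sum_i |c_i| \le 1$, and (after amplification and padding) $x \in \cL_{\mathsf{yes}}$ implies the ground energy is at most $1/3$ while $x \in \cL_{\mathsf{no}}$ implies the operator inequality $H_x \succeq (2/3)\,\Id$. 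In the $\mathbf{BQP}$ case the relevant low-energy state can be taken to be the history state of the $\mathbf{BQP}$ verification circuit, which is preparable by a $\mathbf{BQP}$ machine with no witness; this is what gives the ``Moreover'' clause.

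Next I would define the protocol. Set $\ell(\secp) = k(\secp)\cdot n$, where $k(\secp)$ is a polynomial fixed at the end for amplification, and view an $\ell$-qubit state as $k$ blocks of $n$ qubits. The honest prover $P_{\mathsf{FHM}}(x,\ket{\phi}^{\otimes k})$ uses its $k$ witness copies to prepare $k$ independent copies of a state negligibly close to the ground state of $H_x$ (for $\mathbf{BQP}$ it just prepares these directly). The basis string $h \gets \{0,1\}^{\ell}$ is drawn uniformly and, crucially, \emph{independently of $x$} (it depends only on $\ell(\secp)$); $h_j = 1$ designates a Hadamard-basis measurement of qubit $j$ and $h_j=0$ a standard-basis measurement, so the outcomes are exactly $M(h,\ket{\pi})$. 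Given $x$ (hence $H_x$) and the outcomes $m$, the verifier $V_{\mathsf{FHM}}$ processes each block $t\in[k]$: restricting $h$ to block $t$, let $S_{h,t}$ be the set of indices $i$ such that $h$ prescribes the $\sigma_x$-eigenbasis on every qubit where $P_i$ acts as $\sigma_x$ and the $\sigma_z$-eigenbasis on every qubit where $P_i$ acts as $\sigma_z$. For $i \in S_{h,t}$, the product $\mu_i \in \{\pm1\}$ of the $\pm1$-encoded outcomes of $m$ on $\mathrm{supp}(P_i)$ has expectation exactly $\langle P_i\rangle$ on the block's state, since the single-qubit measurements on $\mathrm{supp}(P_i)$ are in $P_i$'s eigenbasis while the remaining measurements act on disjoint qubits and hence commute with $P_i$. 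Writing $q_i = 2^{-|\mathrm{supp}(P_i)|}$ for the probability that a uniform $h$ is compatible with $P_i$, the estimator $\widehat E_t = \sum_{i \in S_{h,t}} (c_i/q_i)\,\mu_i$ satisfies $\E_{h,m}[\widehat E_t] = \sum_i c_i\langle P_i\rangle = \langle H_x\rangle$ on block $t$'s state and $|\widehat E_t| \le B := \sum_i |c_i|/q_i = 2^{O(1)}$ by $O(1)$-locality. The verifier then accepts block $t$ independently with probability $\tfrac12(1 - \widehat E_t / B)$, so the single-block acceptance probability is $\tfrac12(1 - \langle H_x\rangle / B)$, which differs by $\ge \tfrac{1}{6B} = 1/\poly(\secp)$ between the yes and no cases (in the no case \emph{every} $n$-qubit state, including the reduced state of any block of a possibly-entangled cheating witness, has energy $\ge 2/3$ by $H_x \succeq (2/3)\Id$). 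Finally $V_{\mathsf{FHM}}$ accepts iff more than a $\theta$-fraction of the $k$ blocks accept, for a threshold $\theta$ strictly between the yes and no single-block probabilities; a Chernoff bound with $k(\secp) = \poly(\secp, B)$ blocks drives both errors to $\negl(\secp)$, yielding completeness and soundness, and $\ell(\secp) = k(\secp)\, n(\secp) = \poly(T(\secp))$.

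The step I expect to be the main obstacle — or at least the one requiring real care rather than routine bookkeeping — is the first ingredient: reducing an arbitrary $\QMA$ instance to an $O(1)$-local Hamiltonian \emph{in $\sigma_x/\sigma_z$-only form} while simultaneously keeping $\sum_i |c_i|$ bounded by a constant, retaining a constant promise gap after amplification, and guaranteeing the no-instance operator bound $H_x \succeq (2/3)\Id$ (which is exactly what lets soundness hold against an arbitrary, not necessarily product-form, $\ell$-qubit prover). This is the content of the Biamonte--Love / Cubitt--Montanaro style $\QMA$-completeness theorems combined with standard Hamiltonian gap amplification, which I would cite rather than reprove. Everything else — unbiasedness of $\mu_i$, the affine dependence of the acceptance probability on the energy, instance-independence of $h$ (immediate, since a fully uniform $h$ over all qubits makes every term compatible with probability $q_i > 0$, and the verifier only needs $x$ post hoc to compute the weights), and the Chernoff amplification — is routine once this normalized Hamiltonian is in hand.
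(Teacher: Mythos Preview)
The paper does not give its own proof of this lemma; it is stated as an imported result from \cite{FHM18,TCC:ACGH20}, with only the one-line remark that the negligible completeness/soundness gap is obtained by ``standard QMA amplification.'' Your sketch is a faithful reconstruction of the FHM/ACGH protocol and is correct in all essentials, including the instance-independence of $h$ (which is precisely the ACGH refinement the paper needs).

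One point worth making explicit, which you gloss over by simply invoking ``Chernoff'': in the no-case the $k$ block-acceptance bits $A_1,\dots,A_k$ need not be independent when the cheating prover's $\ell$-qubit state is entangled across blocks, so the hypothesis of the ordinary Chernoff bound is not literally met. The fix is already implicit in the operator inequality you isolated: since $H_x \succeq (2/3)\,\Id$, the per-block acceptance POVM element satisfies $M_t^{\acc} = \tfrac12(\Id - H_x/B) \preceq p_{\mathsf{no}}\,\Id$, and because the $M_t^{\acc}$ act on disjoint tensor factors one gets
\[
\E\big[e^{s\sum_t A_t}\big] \;=\; \Tr\Big(\bigotimes_t \big(e^s M_t^{\acc} + (\Id - M_t^{\acc})\big)\,\ketbra{\pi^*}\Big) \;\le\; \big(p_{\mathsf{no}} e^s + 1 - p_{\mathsf{no}}\big)^k,
\]
which is the moment generating function of $k$ i.i.d.\ Bernoulli$(p_{\mathsf{no}})$ variables; hence the Chernoff tail bound goes through unchanged. (Equivalently, a one-line Azuma/martingale argument using that the conditional reduced state of each block still has energy $\ge 2/3$ suffices.) On the completeness side your honest prover's state is a genuine tensor product, so ordinary Chernoff applies directly there.
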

While the result was originally stated in~\cite{FHM18,TCC:ACGH20} to have an inverse polynomial soundness gap, we have driven the soundness gap to negligible by standard QMA amplification. Finally, we remark that although the algorithm $V_\mathsf{FHM}$ is completely classical, the entire verification procedure is quantum since it involves measuring the quantum state sent by the prover.

\subsection{Semi-Succinct Delegation for QMA}\label{subsec:semi-succinct-description}

We describe a protocol for verifying any QMA language $\cL$.

\paragraph{Ingredients:}

\begin{itemize}
    \item Let $(P_\mathsf{FHM},V_\mathsf{FHM})$ be the non-interactive protocol described in \cref{lmm:FHM} for language $\cL$ with associated polynomials $k(\secp),\ell(\secp)$. 
    \item Let $\PRF: \{0,1\}^{\secp} \times \{0,1\}^{\log \ell(\secp)} \to \{0,1\}$ be a pseudo-random function.
    \item Let $P_\mathsf{Meas} = (\Commit,\Open)$ and $V_\mathsf{Meas} = (\Gen,\Test,\Out)$ be the prover and verifier algorithms for an $\ell(\secp)$-qubit (verifier succinct) commit-and-measure protocol, defined in \cref{sec:measurement-protocol-definition} and constructed in \cref{subsec:meas-protocol}.
\end{itemize}

\paragraph{The Protocol:}

\begin{itemize}
    \item The verifier is initialized with an instance $x \in \{0,1\}^\secp$ and the prover is initialized with $x$ and $k(\secp)$ copies of a witness $\ket{\phi} \in \cR_\cL(x)$.
    \item The verifier samples $s \gets \{0,1\}^\secp$, defines $C$ so that $C(i) = \PRF_s(i)$, and computes $(\pk, \sk) \gets \Gen(1^\secp, C)$. It sends $\pk$ to the prover.
    \item The prover first computes $\ket{\psi} \gets P_\mathsf{FHM}\left(x,\ket{\phi}^{\otimes k(\secp)}\right)$, and then computes $(y,\ket{\st}) \gets \Commit(\pk,\ket{\psi})$. It sends $y$ to the verifier.
    \item The verifier samples a random challenge $c \gets \{0,1\}$ and sends $c$ to the prover.
    \item The prover computes $z \gets \Open(\ket{\st}, c)$ and sends $z$ to the verifier.
    \item If $c = 0$, the verifier checks whether $\Test(\pk,(y,z)) = \acc$ and rejects if the test fails. If $c = 1$, the verifier computes $m \gets \Out(\sk, (y, z))$ and checks whether $V_\mathsf{FHM}(x, m) = \acc$. The verifier accepts if and only the verification is successful.
\end{itemize}

\begin{theorem}\label{thm:semi-succinct}
Let $(P_{\mathsf{SS}},V_{\mathsf{SS}})$ be the $\secp$-fold parallel repetition of the above protocol. Then, $(P_{\mathsf{SS}},V_{\mathsf{SS}})$ satisfies completeness and soundness as defined in \cref{def:interactive-argument}. Moreover, for an instance $x$ with QMA verification time $T$, the total size of verifier messages is $\poly(\secp,\log T)$.
\end{theorem}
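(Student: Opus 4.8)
The plan is to prove completeness, soundness, and succinctness of the parallel-repeated protocol $(P_{\mathsf{SS}}, V_{\mathsf{SS}})$ by assembling the pieces already developed: the commit-and-measure protocol from \cref{subsec:meas-protocol} (with the verifier-succinct instantiation from \cref{sec:succinct-keygen,sec:succinct-clawfree}), the FHM post-hoc verification protocol of \cref{lmm:FHM}, pseudorandomness of $\PRF$, and the parallel-repetition machinery of \cref{subsec:cop} (especially \cref{lemma:two-conditions,thm:parallel-repetition}).

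\textbf{Completeness.} First I would argue completeness of a single repetition, then note that parallel repetition preserves completeness up to a union bound. For a yes-instance $x$ with witness $\ket{\phi}$: the honest prover computes $\ket{\psi} \gets P_{\mathsf{FHM}}(x, \ket{\phi}^{\otimes k(\secp)})$ and runs $\Commit(\pk, \ket{\psi})$. On challenge $c = 0$, test round completeness of the commit-and-measure protocol (\cref{def:measurement-protocol-completeness}, part 1) guarantees acceptance with probability $1 - \negl(\secp)$. On challenge $c = 1$, measurement round completeness (\cref{def:measurement-protocol-completeness}, part 2) says $\Out(\sk,(y,z)) \approx_c M(h, \ket{\psi})$ where $h_i = C(i) = \PRF_s(i)$; then by pseudorandomness of $\PRF$, the distribution $M(h,\ket{\psi})$ for this pseudorandom $h$ is computationally indistinguishable from $M(h', \ket{\psi})$ for a truly uniform $h' \gets \{0,1\}^{\ell(\secp)}$, and by completeness of the FHM protocol, $V_{\mathsf{FHM}}(x, M(h',\ket{\psi}))$ accepts with probability $1 - \negl(\secp)$. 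Chaining these indistinguishabilities (and using that acceptance probability is an efficiently checkable property), $V_{\mathsf{FHM}}(x, \Out(\sk,(y,z)))$ accepts with probability $1 - \negl(\secp)$. A union bound over the $\secp$ repetitions completes the argument.

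\textbf{Soundness.} The key is to check that a single repetition of the protocol is a commit-challenge-response protocol satisfying the two hypotheses of \cref{lemma:two-conditions}, so that it has computationally orthogonal projectors, and then invoke \cref{thm:parallel-repetition}. Hypothesis (1) is immediate: the $c = 0$ predicate is exactly $\Test(\pk,(y,z)) = \acc$, which depends only on $\pk$ (part of the transcript) and $(y,z)$, not on the secret randomness $s, \sk$. For hypothesis (2), suppose a cheating prover $P^*$ passes the $c=0$ check with probability $1 - \negl(\secp)$ over the verifier's coins; I must show it passes the $c=1$ check with probability $\negl(\secp)$. Here I invoke soundness of the commit-and-measure protocol (\cref{def:measurement-protocol-soundness}): there is an extractor producing an $\ell(\secp)$-qubit state $\btau$ (via $\SimGen$, which samples keys independently of $h$) such that the verifier's real output distribution on basis $h$ is computationally indistinguishable from $M(h, \btau)$. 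Since the commit-and-measure soundness game uses keys sampled by $\SimGen$ — independent of the basis — I can first switch from the real keys (generated from $C$ with $C(i) = \PRF_s(i)$) to $\SimGen$ keys, invoking the key-indistinguishability property; then switch from the pseudorandom basis $h$ to a uniformly random basis $h'$ via $\PRF$ security (crucially, $\btau$ is extracted \emph{before} $h$ is chosen, so this step is valid). Now the verifier's acceptance probability on $c=1$ is within $\negl(\secp)$ of $\Pr[V_{\mathsf{FHM}}(x, M(h', \btau)) = \acc]$ for uniform $h'$, which is $\negl(\secp)$ by FHM soundness for the no-instance $x$. Therefore hypothesis (2) holds, \cref{lemma:two-conditions} gives computationally orthogonal projectors, and \cref{thm:parallel-repetition} yields that $V_{\mathsf{SS}}$ accepts with probability $\negl(\secp)$.

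\textbf{Succinctness.} The verifier's messages in a single repetition are $\pk$ (the master public key of the verifier-succinct batch key generation composed with Mahadev rTCFs) and the single challenge bit $c$. By the succinctness of the batch key generation construction (\cref{sec:succinct-keygen}), $|\pk| = \poly(\secp, \log \ell(\secp))$, and since $\ell(\secp) = \poly(T)$ we get $\poly(\secp, \log T)$; the challenge is one bit. Over $\secp$ parallel repetitions this multiplies by $\secp$, still $\poly(\secp, \log T)$.

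\textbf{Main obstacle.} The subtle point — and the step I'd be most careful about — is the ordering of the hybrid switches in the soundness proof: one must switch away from the pseudorandom basis $h$ to a uniform $h'$ \emph{after} the extractor has committed to $\btau$ but \emph{before} appealing to FHM soundness, and simultaneously ensure the commit-and-measure soundness guarantee (which is stated for the $\SimGen$ key distribution, not the real one) is applied with a legitimate reduction that also absorbs the key-indistinguishability switch. A careless ordering would either use $\sk$ where it's not available or let the distinguisher's basis choice depend on $\btau$. Writing out the chain of hybrids — real keys/pseudorandom basis $\to$ $\SimGen$ keys/pseudorandom basis $\to$ extracted-state distribution/pseudorandom basis $\to$ extracted-state distribution/uniform basis — and verifying each transition is efficiently implementable is the crux. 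Everything else is routine bookkeeping.
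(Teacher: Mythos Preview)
Your proposal is correct and follows essentially the same approach as the paper: succinctness from the size of the batch-key-generated $\pk$, completeness via a union bound over single-repetition completeness (test-round completeness plus measurement-round completeness $+$ $\PRF$ security $+$ FHM completeness), and soundness by verifying the two hypotheses of \cref{lemma:two-conditions} and then invoking \cref{thm:parallel-repetition}.

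One small simplification relative to your write-up: in your soundness hybrid chain you separately list ``switch from real keys to $\SimGen$ keys via key indistinguishability'' as its own step, but this is already packaged inside the commit-and-measure soundness guarantee (\cref{def:measurement-protocol-soundness}): the $\texttt{Real}$ distribution there uses $\Gen(1^\secp,C)$ and the $\texttt{Sim}$ distribution uses $\SimGen(1^\secp)$, so a single invocation of that definition takes you directly from the verifier's actual output (with real keys and pseudorandom basis $h$) to $M(h,\btau)$ (with $\SimGen$ keys). The paper's proof accordingly does this in one step, then switches $h$ to uniform via $\PRF$ security, then applies FHM soundness. Your identification of the crucial point---that $\btau$ is fixed before the basis switch, so the $\PRF$ reduction goes through---is exactly right.
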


\begin{proof}
First, the verifier message size guarantee follows from the fact that the verifier initializes the commit-and-measure protocol with a circuit $C$ that succinctly encodes $\ell(\secp)$ bits using a PRF with input size $\log(\ell(\secp))$, where $\ell(\secp)$ is polynomially related to the QMA verification time $T$. 

Next, we argue completeness. For any $x \in \cL$, we show that a single repetition of the protocol accepts with $1-\negl(\secp)$ probability, and so completeness of the $\secp$-fold parallel repetition then follows by a union bound. In the case of a test round, this follows from the test round completeness of the commit-and-measure protocol (\cref{def:measurement-protocol-completeness}). In the case of a measurement round, we first see that measurement round completeness of the commit-and-measure protocol (\cref{def:measurement-protocol-completeness}) implies that the probability that the verifier outputs 1 is $\negl(\secp)$-close to the probability that the $\mathsf{FHM}$ protocol $(P_\mathsf{FHM},V_\mathsf{FHM})$ accepts when run with an honest prover, but where $h = (\PRF_s(1),\dots,\PRF_s(\ell))$ for $s \gets \{0,1\}^\secp$. Then, the security of the $\PRF$ implies that this probability is $\negl(\secp)$-close to the probability that the $\mathsf{FHM}$ protocol accepts when $h \gets \{0,1\}^\ell$. Finally, this is $\negl(\secp)$-close to 1 by the completeness of the $\mathsf{FHM}$ protocol (\cref{lmm:FHM}).

Finally, we argue soundness. Consider any $x \notin \cL$. By \cref{thm:parallel-repetition}, it suffices to show that the single repetition of the protocol satisfies the conditions of \cref{lemma:two-conditions}. We define $V_{\secp,r,0}$ to be the verifier's accept projection on a test round, and $V_{\secp,r,1}$ to be the verifier's accept projection on a measurement round. Condition 1 of \cref{lemma:two-conditions} follows immediately from the structure of the commit-and-measure protocol. Now, consider any prover $P^*$ such that the first expectation in condition 2 is $1-\negl(\secp)$, meaning that $P^*$ passes the test round with $1-\negl(\secp)$ probability. By the soundness of the commit-and-measure protocol, there exists a state $\rho$ such that the probability that the verifier accepts on a measurement round is $\negl(\secp)$-close to the probability that $V_{\mathsf{FHM}}$ accepts given $x,h$, and $M(h,\rho)$, where $h = (\PRF_s(1),\dots,\PRF_s(\ell))$ for $s \gets \{0,1\}^\secp$. By security of the $\PRF$, this probability is $\negl(\secp)$-close to the probability that $V_{\mathsf{FHM}}$ accepts given $x,h$, and $M(h,\rho)$, where $h \gets \{0,1\}$. Since $x \notin \cL$, soundness of the $\mathsf{FHM}$ protocol implies that this is $\negl(\secp)$. Thus, the second expectation in condition 2 of \cref{lemma:two-conditions} is $\negl(\secp)$, which establishes that this condition is satisfied, and completes the proof.

\end{proof}

\def\kilian{\mathsf{AoK}}

\section{The Fully Succinct Protocol} \label{sec:final}

In this section, we compile the verifier-succinct delegation scheme from \cref{section:semi-succinct-delegation} into a full-fledged delegation scheme for $\QMA$. Formally, we assume the existence of a delegation scheme for $\QMA$ satisfying the following properties:

\begin{enumerate}
    \item All verifier messages can be computed in time $\poly(\secp, \log N)$. (This is the definition of verifier succinctness.)
    \item Moreover, the verifier messages can be computed \emph{obliviously} to the $\QMA$ instance and the prover messages (this holds for the \cref{section:semi-succinct-delegation} protocol). 
\end{enumerate}
We present two compilers enabling this:

\begin{enumerate}
    \item The first (and simpler) compiler only additionally assumes the existence of a collapsing hash function, which is implied by LWE (\cref{lemma:lwe-collapsing}). It converts a $2r$-round verifier-succinct protocol into a $4(r+1)$ round fully succinct protocol. In particular, the protocol from  \cref{section:semi-succinct-delegation} is compiled into a 12-round succinct argument for $\QMA$. 
    \item The second compiler additionally assumes collapsing hash function \emph{and} a (classical, post-quantum) fully homomorphic encryption (FHE) scheme. It converts a $2r$ round verifier-succinct protocol to a $4r$ round fully succinct protocol. \emph{Moreover}, if the verifier-succinct protocol is public-coin (except for the first message), then so is the fully succinct protocol. This results in an $8$ round succinct argument system for $\QMA$ that is public-coin except for the first message. 
\end{enumerate}

For simplicity, we write down the compiled protocols in the case $r=2$, corresponding to the protocols from \cref{section:semi-succinct-delegation}. 

Our main tool for these compilers are post-quantum succinct arguments of knowledge for $\NP$ \cite{FOCS:CMSZ21,LMS21}. Specifically, the security guarantees proved in \cite{FOCS:CMSZ21} are insufficient for the compilers, because the post-quantum extraction algorithm from \cite{FOCS:CMSZ21} is not sufficiently \emph{composable} since their extractor might significantly disturb the prover's state. Instead, we make use of a composable variant of the \cite{FOCS:CMSZ21} extractor due to \cite{LMS21} called ``state-preserving succinct arguments of knowledge,''\footnote{We only require a weak variant of what was constructed in \cite{LMS21}, where state preservation is allowed an inverse polynomial $\epsilon$ error.} which we now define. 

\subsection{State-Preserving Succinct Arguments of Knowledge}
\label{sec:state-preserving}

\begin{definition}
  A publicly verifiable argument system $\Pi$ for an $\mathsf{NP}$ language $L$ (with witness relation $R$) is an $\epsilon$-state-preserving succinct argument-of-knowledge if it satisfies the following properties.
  
  \begin{itemize}
      \item Succinctness: when invoked on a security parameter $\secp$ and instance size $n$ and a relation decidable in time $T$, the communication complexity of the protocol is $\poly(\secp, \log T)$. The verifier computational complexity is $\poly(\secp, \log T) + \tilde O(n)$. 
      \item $\epsilon$-State-Preserving Extraction. There exists an extractor $E^{(\cdot)}(x, \epsilon)$ with the following properties
      \begin{itemize}
          \item Efficiency: $E^{(\cdot)}(x, \epsilon)$ runs in time $\poly(n, \secp, 1/\epsilon)$ as a quantum oracle algorithm (with the ability to apply controlled $U$-gates given an oracle $U(\cdot)$), outputting a classical transcript $\tilde \tau$ and a classical string $w$.
          \item State-preserving: Let $\ket{\psi} \in \RegA \otimes \RegI$ be any $\poly(\secp)$-qubit pure state and let $\brho = \Tr_{\RegA}(\ket{\psi}) \in \mathrm{D}(\RegI)$.\footnote{In general, the prover's input state on $\RegI$ may be entangled with some external register $\RegA$, and we ask that computational indistinguishability holds even given $\RegA$. Our definition is stated this way for maximal generality, though we remark that the applications in this section do not require indistinguishability in the presence of an entangled external register.} Consider the following two games:
          \begin{itemize}
              \item Game 0 (Real) Generate a transcript $\tau$ by running $P^*(\brho_{\RegI},x)$ with the honest verifier $V$. Output $\tau$ along with the residual state on $\RegA \otimes \RegI$.
              \item Game 1 (Simulated) Generate a transcript-witness pair $(\tilde \tau,w) \gets E^{P^*(\brho_{\RegI},x)}$. Output $\tilde \tau$ and the residual state on $\RegA \otimes \RegI$.
          \end{itemize}
          Then, we have that the output distributions of Game 0 and Game 1 are computationally $\varepsilon$-indistinguishable to any quantum distinguisher.
          
          \item Extraction correctness: for any $P^*$ as above, the probability that $\tilde \tau$ is an accepting transcript but $w$ is \emph{not} in $R_x$ is at most $\epsilon + \negl(\secp)$. 
          
      \end{itemize}
  \end{itemize}
\end{definition}

\begin{theorem}[\cite{LMS21}]\label{thm:kilian}
Assuming the post-quantum $\poly(\secp, 1/\epsilon)$ hardness of learning with errors, there exists a (4-message, public coin) $\epsilon$-state preserving succinct argument of knowledge for $\NP$. 
\end{theorem}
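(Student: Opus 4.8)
The plan is to recall the LMS21 construction --- a quantum-secure variant of Kilian's succinct argument --- and sketch why it achieves state-preserving extraction with an inverse-polynomial error. Instantiate Kilian's protocol with two ingredients: a collapsing hash function family, which exists under LWE by \cref{lemma:lwe-collapsing}, and a probabilistically checkable proof of knowledge for $\NP$ with polylogarithmic query complexity, a verifier running in time $\poly(\secp, \log T) + \tilde O(n)$, and a list-decoding guarantee (from any proof string accepted with noticeable probability one can efficiently reconstruct a witness given oracle access to the string). The four messages are: (1) $V$ sends a hash key $k$; (2) $P$ writes down a PCP $\pi$ for the statement, Merkle-hashes it under $k$, and sends the root $\mathsf{rt}$; (3) $V$ sends fresh PCP query randomness $q$; (4) $P$ sends the queried symbols of $\pi$ together with their Merkle authentication paths, and $V$ checks the paths against $\mathsf{rt}$ and runs the PCP verifier. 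Succinctness is immediate: the root and the openings have size $\poly(\secp, \log T)$, and the verifier's work outside of reading $x$ is $\poly(\secp, \log T)$. Public-coin-ness is immediate since both verifier messages are uniformly random strings, and completeness is standard.

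The substance is $\epsilon$-state-preserving knowledge extraction against a quantum malicious prover $P^*$; the plan follows CMSZ, upgraded as in LMS21. After $P^*$ sends $\mathsf{rt}$, it holds a state $\ket{\psi}$; for each query set $q$, let $\Pi_q$ be the binary projective measurement ``$P^*$ produces openings that verify against $\mathsf{rt}$ and satisfy the PCP check'', realized by running $U^{P^*}_{\secp,q}$, projecting onto the accept subspace, and inverting. First estimate $p \coloneqq \E_q \bra{\psi} \Pi_q \ket{\psi}$ to within $\epsilon$ by sampling many $q$'s and applying each $\Pi_q$ with the state-repair (alternating-projection / Jordan's-lemma) subroutine of CMSZ, which leaves the state essentially unchanged when $p$ is non-negligible. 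If $p < \epsilon$, abort (outputting a dummy transcript and the original state, which is permitted); otherwise, repeatedly sample $q$, apply $\Pi_q$, on accept coherently record the opened PCP symbols, repair, and continue. The key extra ingredient is \emph{collapsing}: because the Merkle commitment built from a collapsing hash is collapse-binding, measuring the opened symbols on an accepting branch is $\negl(\secp)$-indistinguishable from not measuring them, so each of these ``reads'' is almost reversible and does not accumulate disturbance. After $\poly(n, \secp, 1/\epsilon)$ rounds, the recorded symbols all lie on authentication paths consistent with the single root $\mathsf{rt}$ (by collision resistance) and hence define a partial assignment to one PCP string $\tilde{\pi}$ that is accepted by a $(1-o(1))$-fraction of queries; feed $\tilde{\pi}$ (answered by further coherent reads) to the PCP decoder to obtain a witness $w$. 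Finally, uncompute all intermediate registers and output $(\tilde{\tau}, w)$ along with $\ket{\psi}$, now perturbed by at most $\epsilon$. The $\poly(\secp, 1/\epsilon)$-hardness of LWE enters when reducing an $\Omega(\epsilon)$ disturbance, or an $\Omega(\epsilon)$ failure of extraction correctness, to either a collapsing-hash distinguisher or a hash collision: the reduction runs the extractor, so it runs in time $\poly(n, \secp, 1/\epsilon)$ and must break LWE with advantage $\poly(\epsilon)$.

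The main obstacle is reconciling the three requirements simultaneously. Efficiency forces only $\poly(1/\epsilon)$ applications of the (expensive) $\Pi_q$ measurement and of the repair subroutine; extraction correctness needs enough distinct accepting local views to trigger PCP decoding; and state-preservation needs every one of those views to have been read \emph{coherently}. The tension between ``collect many views'' and ``do not disturb the state'' is exactly what collapse-binding is designed to resolve, but quantifying the accumulated error requires the careful amplitude bookkeeping of CMSZ/LMS21 --- in particular, tracking that the repair procedure's error per round is $\negl(\secp)$ rather than a constant, so that it survives $\poly(1/\epsilon)$ rounds, and that the Jordan-subspace structure used by the repair step is not destroyed by the interleaved coherent reads. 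Once these estimates are in place, outputting the transcript $\tilde{\tau}$ and witness $w$ with the claimed guarantees, together with the $\epsilon$-close residual state, follows.
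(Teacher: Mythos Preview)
The paper does not prove this theorem; it is imported directly from \cite{LMS21} and stated without proof (the only content in the paper is the statement and the citation). So there is no ``paper's own proof'' to compare against. Your sketch is a reasonable high-level summary of the \cite{LMS21} approach --- Kilian's protocol instantiated with a collapsing Merkle tree and a PCP of knowledge, with state-preserving extraction obtained by interleaving CMSZ-style measure-and-repair with coherent reads of opened symbols, the latter made undetectable by collapse-binding --- and it correctly identifies the role of the $\poly(\secp,1/\epsilon)$ hardness assumption in bounding the accumulated disturbance over $\poly(1/\epsilon)$ rounds. For the purposes of this paper, however, the theorem is a black box and your sketch goes well beyond what is required here.
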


\subsection{The $\QMA$ Protocol, Version 1}\label{subsec:first-compiler}

Let $\semi$ denote a verifier-succinct $\QMA$ delegation scheme additionally satisfying verifier obliviousness.  For simplicity, we assume that $\semi$ is a four-round protocol. We formalize the execution of $\semi$ on a $\QMA$ instance $x$ as follows:

\begin{itemize}
    \item The verifier computes and sends $\pk \gets  \semi.V_1(1^\secp, 1^{|x|}; r)$ (obliviously to the instance $x$) with randomness $r\gets \{0,1\}^\secp$.
    \item The prover, on initial state $\ket{\psi}$, computes $(y, \brho) \gets \semi.P(1^\secp, \pk, \ket{\psi})$, which results in a message $y$ and residual state $\brho$. 
    \item The verifier computes and sends $\beta = \semi.V_2(r)$, obliviously to the instance $x$ and the prover message $y$.
    \item The prover computes and sends $z \gets \semi.P(\brho, \beta)$. 
    \item The verifier computes and outputs a (potentially expensive) predicate $V(x, y, z, r)$. 
\end{itemize}

Finally, let $\kilian$ denote the state-preserving succinct argument of knowledge of \cref{thm:kilian}, and let $H$ denote a collapsing hash function family mapping $\{0,1\}^*$ to $\{0,1\}^\secp$. Our succinct $\QMA$ delegation protocol $\mathsf{QMArg}$ is defined as follows. 

\begin{enumerate}
    \item The verifier computes and sends $\pk = \semi.V_1(1^\secp, 1^{|x|}; r)$ with randomness $r\gets \{0,1\}^\secp$, along with a hash function $h \gets H_\secp$. 
    \item The prover computes $(y, \brho) \gets \semi.P(1^\secp, \pk, \ket{\psi})$ and sends $\hat y = h(y)$. 
    \item The prover and verifier execute $\kilian$ on the statement ``$\exists w \text{ such that } \hat y = h(w)$.''
    \item The verifier computes and sends $\beta = \semi.V_2(r)$. Note that $\semi.V_2$ is oblivious to the prover message and so can be computed without it. 
    \item The prover computes $z \gets \semi.P(\brho, \beta)$ and sends $\hat z = h(z)$. 
    \item The prover and verifier execute  $\kilian$ on the statement ``$\exists w \text{ such that } \hat z = h(w)$.'' 
    \item The verifier sends $r$.
    \item The prover and verifier execute  $\kilian$ on the statement $\exists w_1, w_2$ such that $\hat y = h(w_1), \hat z = h(w_2)$, and $V(x, w_1, w_2, r) = 1$. 
\end{enumerate}

\noindent Completeness of the protocol follows directly from the completeness of $\semi$ and $\kilian$. Moreover, succinctness follows directly from the compression of $H$, the verifier succinctness of $\semi$, and the succinctness of $\kilian$.

Since $\kilian$ has a round complexity of $4$ and the first message can be re-used (indeed, $h$ can be used as the first message for $\kilian$), the round complexity of $\mathsf{QMArg}$ is 12.

\subsubsection{Proof of Soundness}

\begin{theorem}\label{thm:first-compiler}
   Assume that $\semi$ is (post-quantum) computationally sound, $H$ is collapsing,\footnote{Collision-resistance of $H$ suffices.} and that $\kilian$ is an $\epsilon$-state-preserving argument of knowledge. Then, $\mathsf{QMArg}$ is (post-quantum) computationally sound.  
\end{theorem}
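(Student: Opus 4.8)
The plan is to prove soundness by contradiction: assume a cheating prover $P^*$ for $\mathsf{QMArg}$ convinces the verifier on some $x \notin \cL$ with non-negligible probability $\delta$, and construct from it a cheating prover $P^{**}$ for the verifier-succinct protocol $\semi$, contradicting \cref{thm:semi-succinct}. The subtlety flagged in the technical overview is that we must produce $P^{**}$'s second message $y$ (actually a valid preimage $m_2$ of $\hat y$) using only the first verifier message $\pk$, i.e.\ \emph{before} $P^{**}$ sees $\beta$ or $r$; this is why we need the \emph{state-preserving} extraction guarantee of \cref{thm:kilian} rather than the plain argument-of-knowledge property of \cite{FOCS:CMSZ21}.

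The construction of $P^{**}$ proceeds as follows. On input $\pk$ (and a random hash $h$ that $P^{**}$ samples itself and feeds to $P^*$ as part of the first message), $P^{**}$ runs $P^*$ internally up through sending $\hat y$, then runs the state-preserving extractor $E^{P^*}$ on the first instance of $\kilian$ (the statement ``$\exists w:\ \hat y = h(w)$'') with error parameter $\epsilon = \delta/\mathrm{poly}(\secp)$. By extraction correctness, $E$ outputs $w_1$ with $h(w_1) = \hat y$ except with probability $\epsilon + \negl(\secp)$; by collision-resistance of $H$, $w_1 = y$ except with negligible probability, so $P^{**}$ has recovered $P^*$'s committed $\semi$-message $y$. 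By the state-preserving property, $P^*$'s residual state after $E$ is $\epsilon$-close to its real post-$\kilian$ state, so $P^{**}$ can output $y$ as its $\semi$-message and continue emulating $P^*$. When $P^{**}$ then receives $\beta$ from the $\semi$-verifier, it feeds $\beta$ to $P^*$ (in the appropriate round), runs $E$ on the second $\kilian$ instance to extract $w_2 = z$, and outputs $z$ as its final $\semi$-message. We must argue that the acceptance probability is preserved: in the real $\mathsf{QMArg}$ execution, whenever the verifier accepts, the third $\kilian$ instance is accepting, so by extraction correctness applied to it, the extracted $(w_1, w_2)$ satisfy $V(x, w_1, w_2, r) = 1$; combined with collision-resistance (forcing $w_1 = y$, $w_2 = z$), this means the values $P^{**}$ extracted and forwarded indeed make the $\semi$-verifier accept. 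Summing the $O(1)$ many $\epsilon$-errors and negligible terms, $P^{**}$ makes the $\semi$-verifier accept $x \notin \cL$ with probability $\delta - O(\epsilon) - \negl(\secp) \geq \delta/2$, contradicting computational soundness of $\semi$ (\cref{thm:semi-succinct}).

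The main obstacle is the \emph{composition} of the two extractions: we run $E$ twice on $P^*$ (once for $\hat y$, once for $\hat z$), with additional honest-verifier interaction ($\beta$, and the third $\kilian$ round) sandwiched in between, and we need the cumulative disturbance to stay bounded. This requires being careful that the state-preserving guarantee composes — the output of Game~1 (simulated) after the first extraction is $\epsilon$-indistinguishable from Game~0 (real), and we then run a second efficient procedure (more honest-verifier messages plus a second invocation of $E$) on top; since $\epsilon$-indistinguishability is preserved under efficient post-processing, the final joint distribution (transcript plus residual state) is $O(\epsilon)$-close to the real one, so the event ``$\mathsf{QMArg}$ verifier accepts and the extracted witnesses are valid'' occurs with probability $\delta - O(\epsilon) - \negl(\secp)$. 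A second, more minor point to get right is that $P^{**}$ must be a \emph{valid} $\semi$-prover, i.e.\ it commits to $y$ with a unitary classically controlled on $\pk$ and only on $\pk$ (not on $\beta$ or $r$): this is exactly what the ordering of the extractions buys us — $y$ is extracted strictly before $\beta$ is received. Finally, one checks the parameters: we need $\kilian$ instantiated with $\epsilon$ a small enough inverse polynomial (depending on the assumed advantage $\delta$), which \cref{thm:kilian} provides under $\poly(\secp, 1/\epsilon)$-hardness of LWE, and the whole reduction runs in time $\poly(\secp, 1/\delta)$, which is polynomial since $\delta$ is non-negligible.
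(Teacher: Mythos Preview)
Your proposal is essentially correct and follows the same approach as the paper: extract $y$ from the first $\kilian$ and $z$ from the second to build the $\semi$-cheater, then (as a mental experiment in the analysis) extract $(y',z')$ from the third $\kilian$, use collision-resistance to force $(y,z)=(y',z')$, and bound the total loss by a hybrid over the three state-preserving extractions.

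One expository issue worth fixing: your early sentence ``by collision-resistance of $H$, $w_1 = y$ \ldots\ so $P^{**}$ has recovered $P^*$'s committed $\semi$-message $y$'' presupposes a well-defined value $y$ that the malicious $P^*$ ``committed to,'' but no such value exists a priori (the prover's state could be a superposition over many preimages). The correct statement---which you do make later---is that the witness $w_1$ extracted from the \emph{first} $\kilian$ and the witness $w_1'$ extracted from the \emph{third} $\kilian$ both hash to $\hat y$, hence are equal by collision-resistance. You should also be explicit that the third extraction is only a mental experiment for the analysis (since $P^{**}$ never sees $r$ in the $\semi$ interaction), and avoid reusing the symbols $(w_1,w_2)$ for both the first/second-round extracts and the third-round extracts. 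With these clarifications your argument matches the paper's proof, including the accounting $\epsilon^* - O(\epsilon) - \negl(\secp)$.
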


\begin{proof}
  Let $x\not\in L$ and suppose that a QPT $P^*(\brho, x)$ breaks the soundness of $\mathsf{QMArg}$ with probability $\epsilon^*$. We use $P^*$, together with the soundness guarantees of $\kilian$ and the collision resistance property of $H$, to break the soundness of $\semi$.  In particular, consider the following attack on the soundness of $\semi$:
  
  \begin{itemize}
      \item Set an accuracy parameter $\epsilon = \frac{\epsilon^*}{10}$. Whenever we call the $\kilian$ extractor $E$, we will use accuracy parameter $\epsilon$.  
      \item Given a verifier message $\pk$, we feed $(h, \pk)$ to $P^*(\brho, x)$ and obtain a hash value $\hat y$. Then, we run the $\kilian$ extractor $E$ on $P^*$'s execution of step (3) (the first execution of $\kilian$), outputting a triple $(\tilde \tau_1, \tilde \brho_1, y)$. We send $y$ to the verifier.
      \item Given the verifier challenge $\beta$, we run $P^*(\tilde \brho_1, \pk, h, \tilde \tau_1)$ to obtain a message $\hat z$. Then, we run the $\kilian$ extractor $E$ on $P^*$'s execution of step (6), obtaining a triple $(\tilde \tau_2, \tilde \brho_2, z)$. We send $z$ to the verifier. 
  \end{itemize}
  
  Finally, to analyze the behavior of this attack, we consider the following additional step (this is only a mental experiment).
  
  \begin{itemize}
      \item Given the secret verifier randomness $r$, run the $\kilian$ extractor $E$ on $P^*$'s execution of step (8) to obtain a triple $(\tilde \tau_3, \tilde \brho_3, y', z')$. 
  \end{itemize}

\begin{claim}
   With probability at least $\epsilon$ over the attack experiment, we have that $\semi.V(x, \allowbreak \mathbf{y}, \mathbf{z}, r) = 1$. 
\end{claim}

Note that this claim contradicts the soundness of $\semi$. 

\begin{proof}
  The equation $\semi.V(x, \mathbf{y}, \mathbf{z}, r) = 1$ follows from the following properties of an execution of the mental experiment:
  
  \begin{itemize}
      \item $h(y) = \hat y$
      \item $h(z) = \hat z$
      \item $h(y') = \hat y$
      \item $h(z') = \hat z$
      \item $\semi.V(x, y', z', r) = 1$. 
  \end{itemize}
  
  The above suffices because it implies that $(y, z) = (y', z')$ except with negligible probability by the collapsing (or just collision-resistance) of $H$, and so the last equation implies that $\semi.V(x, y, z, r)=1$ (except with negligible probability). 
  
  Finally, we note that all five of the above conditions simultaneously hold with probability at least $\epsilon$ by the state-preservation and correctness of $E$. More specifically,
  
  \begin{itemize}
      \item The transcript $(x, \pk, h, \hat y, \tilde \tau_1, \beta, \hat z, \tilde \tau_2, r, \tilde \tau_3)$ is accepting (according to the $\mathsf{QMArg}$ verifier) with probability at least $\epsilon^* - 3\epsilon$. This follows by a hybrid argument invoking the state preservation of $E^*$ on the three executions of $\kilian$ (first w.r.t. $\tilde \tau_1$, then $\tilde \tau_2$, then $\tilde \tau_3$). 
      \item Then, the correctness property of $E$ implies that all five conditions hold simultaneously with probability at least $\epsilon^* - 6\epsilon - \negl(\secp)$. \qedhere
  \end{itemize}
\end{proof}
\noindent This completes the proof of soundness of $\mathsf{QMArg}$. \qedhere

\end{proof}

\subsection{The $\QMA$ Protocol, Version 2}\label{subsec:public-coin}
We now describe a public-coin variant of the \cref{subsec:first-compiler} transformation that additionally uses a Fully Homomorphic Encryption (FHE) scheme $\mathsf{FHE} = (\mathsf{FHE}.\Gen, \mathsf{FHE}.\Enc, \mathsf{FHE}.\Dec, \mathsf{FHE}.\Eval)$. 

Let $\semi$ and $\kilian$ denote the argument systems from \cref{subsec:first-compiler}. Then, our second succinct argument system $\mathsf{QMArg}_2$  is defined as follows. 

\begin{enumerate}
    \item The verifier computes and sends $\pk = \semi.V_1(1^\secp, 1^{|x|}; r)$ with randomness $r\gets \{0,1\}^\secp$, along with a hash function $h \gets H_\secp$.
    \item \textcolor{red}{The verifier also samples $(\FHE.\pk, \FHE.\sk) \gets \FHE.\Gen(1^\secp)$ and computes FHE ciphertext $\ct_V = \mathsf{FHE}.\Enc(\FHE.\pk, r)$. The verifier sends $\FHE.\pk, \ct_V$ to the prover.}
    \item The prover computes $(y, \brho) \gets \semi.P(1^\secp, \pk, \ket{\psi})$ and sends $\hat y = h(y)$. 
    \item The prover and verifier execute $\kilian$ on the statement ``$\exists w \text{ such that } \hat y = h(w)$.''
    \item The verifier computes and sends $\beta = \semi.V_2(r)$. Note that $\semi.V_2$ is oblivious to the prover message and so can be computed without it. 
    \item The prover computes $z \gets \semi.P(\brho, \beta)$ and sends $\hat z = h(z)$. \textcolor{red}{The prover also computes $\ct_P = \mathsf{FHE}.\mathsf{Eval}(V(x, y, z, \cdot), \ct_V)$ and sends $\ct_P$ to the verifier. }
    \item The prover and verifier execute  $\kilian$ on the statement $\exists w_1, w_2$ such that $\hat y = h(w_1), \hat z = h(w_2)$, \textcolor{red}{and $\ct_P =\mathsf{FHE}.\mathsf{Eval}(V(x, y, z, \cdot), \ct_V)$.} 
    \item \textcolor{red}{The verifier checks that $\mathsf{FHE}.\Dec(\ct_P) = 1$. }
\end{enumerate}

\noindent As before, completeness and succinctness follow immediately from the definitions. Additionally, we note that the round complexity has been reduced to $8$ because $\kilian$ is only invoked twice. Finally, we note that as long as $\semi$ and $\kilian$ are public-coin (except for the first message of $\semi$), then $\mathsf{QMArg}_2$ is also public-coin (except for the first verifier message). 

\subsubsection{Proof of Soundness}\label{thm:second-compiler}
\begin{theorem}
   Assume that $\semi$ is (post-quantum) computationally sound, $H$ is collapsing, $\FHE$ is semantically secure, and that $\kilian$ is an $\epsilon$-state-preserving argument of knowledge. Then, $\mathsf{QMArg}_2$ is (post-quantum) computationally sound.  
\end{theorem}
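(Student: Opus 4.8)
The plan is to mirror the proof of \cref{thm:first-compiler}, reducing a cheating prover $P^*$ for $\mathsf{QMArg}_2$ to a cheating prover for $\semi$, but with two differences: there are only two invocations of $\kilian$ (so only two rounds of state-preserving extraction), and the verifier's secret randomness $r$ is never revealed in the clear — instead the prover's final ``acceptance'' is delivered through the FHE ciphertext $\ct_P$. First I would fix $x \notin L$ and a QPT $P^*(\brho,x)$ winning with probability $\epsilon^*$, set $\epsilon = \epsilon^*/10$, and build an attacker $\mathcal{A}$ against $\semi$ as follows: on receiving $\pk = \semi.V_1(1^\secp,1^{|x|};r)$, the reduction samples $(\FHE.\pk, \FHE.\sk) \gets \FHE.\Gen(1^\secp)$ and a hash function $h \gets H_\secp$ \emph{itself}, computes $\ct_V = \FHE.\Enc(\FHE.\pk, r)$, and feeds $(\pk, h, \FHE.\pk, \ct_V)$ to $P^*$, obtaining $\hat y$; it then runs the $\kilian$ extractor $E$ (with accuracy $\epsilon$) on $P^*$'s execution of step (4), obtaining $(\tilde\tau_1, \tilde\brho_1, y)$, and sends $y$ to the $\semi$ verifier. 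Upon receiving $\beta = \semi.V_2(r)$, it runs $P^*(\tilde\brho_1, \ldots, \beta)$ to obtain $\hat z$ and $\ct_P$, then runs $E$ on $P^*$'s execution of step (7) to obtain $(\tilde\tau_2, \tilde\brho_2, (y', z'))$, and sends $z'$ to the $\semi$ verifier. Crucially, note the reduction knows $r$ only implicitly (it was chosen by the $\semi$ challenger); but unlike Version~1 it never needs to \emph{send} $r$, since there is no step~(8): the relation extracted in step~(7) already contains the FHE-evaluation clause, so the witness $(y',z')$ together with the fact that $\ct_P$ decrypts to $1$ will pin down $\semi.V(x,y',z',r) = 1$.

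The core claim to establish is: with probability at least $\epsilon$ over the reduction experiment, $\semi.V(x, y, z', r) = 1$ (and by collision-resistance of $H$, $y = y'$, so $\semi.V(x,y',z',r)=1$ as sent). To prove it I would argue via a hybrid chain exactly as in \cref{thm:first-compiler}: (i) by state-preservation of $E$ applied successively to the two $\kilian$ executions, the full $\mathsf{QMArg}_2$ transcript $(x,\pk,h,\FHE.\pk,\ct_V,\hat y,\tilde\tau_1,\beta,\hat z,\ct_P,\tilde\tau_2)$ remains accepting with probability at least $\epsilon^* - 2\epsilon$; (ii) ``accepting'' here means in particular that $\FHE.\Dec(\FHE.\sk, \ct_P) = 1$; (iii) by extraction correctness of $E$ on step~(7), except with probability $\epsilon + \negl(\secp)$ the extracted $(y',z')$ satisfies $h(y') = \hat y$, $h(z') = \hat z$, and $\ct_P = \FHE.\Eval(V(x,y',z',\cdot), \ct_V)$; (iv) combining (ii) and (iii) with evaluation/decryption correctness of FHE, $\FHE.\Dec(\FHE.\sk, \FHE.\Eval(V(x,y',z',\cdot),\ct_V)) = V(x,y',z',r)$, so $V(x,y',z',r)=1$ with probability at least $\epsilon^* - 4\epsilon - \negl(\secp) \geq \epsilon$; (v) finally $h(y)=\hat y = h(y')$ forces $y = y'$ by collision-resistance (and one also checks $z = z'$ is not even needed — we send $z'$), so $\semi.V(x,y,z',r)=1$, contradicting soundness of $\semi$.

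The one genuinely new ingredient relative to Version~1 is that the verifier's decision is now mediated by FHE rather than by a cleartext reveal of $r$, and this is exactly where I expect the \textbf{main subtlety}: I must confirm that the reduction $\mathcal{A}$, which plays the $\semi$ soundness game, is faithful even though it fabricates $\ct_V$ as an encryption of the \emph{same} $r$ the $\semi$ challenger used. This is fine because $\mathcal{A}$ simply embeds the $\semi$ transcript and generates all FHE material honestly — it is not \emph{simulating} $\ct_V$, so semantic security of FHE is not actually invoked in the reduction from $\mathsf{QMArg}_2$ soundness to $\semi$ soundness. (Semantic security of FHE is what guarantees the \emph{zero-knowledge}/privacy of $r$ against the prover, i.e.\ it is needed so that $P^*$ cannot itself learn $r$ and thereby break $\semi$ — but for the soundness reduction the only FHE property consumed is evaluation/decryption correctness. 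I would make this point explicit to avoid confusion, since the theorem statement lists semantic security as a hypothesis, and the role it plays is in ensuring the reduction's view of $P^*$ matches a genuine $\mathsf{QMArg}_2$ execution where $r$ stays hidden, which is implicitly what ``$P^*$ breaks soundness of $\mathsf{QMArg}_2$'' already presupposes.) A secondary bookkeeping point is that with only two $\kilian$ invocations the error budget is $\epsilon^* - 4\epsilon - \negl$ rather than $\epsilon^* - 6\epsilon - \negl$, which comfortably exceeds $\epsilon = \epsilon^*/10$; I would track these constants carefully but expect no difficulty. $\qed$
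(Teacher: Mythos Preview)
Your reduction has a genuine gap: you instruct $\mathcal{A}$ to compute $\ct_V = \FHE.\Enc(\FHE.\pk, r)$, but $\mathcal{A}$ is playing the \emph{prover} in the $\semi$ soundness game and therefore does not have access to $r$ --- it only receives $\pk = \semi.V_1(1^\secp,1^{|x|};r)$ and later $\beta = \semi.V_2(r)$, both computed by the challenger from its private coins. Your own parenthetical (``the reduction knows $r$ only implicitly'') acknowledges this, but ``implicit'' knowledge is not enough to encrypt $r$. As written, the attack is not efficiently implementable.

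This is precisely where semantic security enters, contrary to your claim that ``for the soundness reduction the only FHE property consumed is evaluation/decryption correctness.'' The paper's fix is to have the reduction feed $P^*$ the ciphertext $\ct_V = \FHE.\Enc(\FHE.\pk, 0)$, which it \emph{can} compute, and then argue via a hybrid: by semantic security (the semantic-security distinguisher can sample $r$ itself, compute $\pk,\beta$, and evaluate $V(x,y,z,r)$, while never touching $\FHE.\sk$), the probability that $\semi.V(x,y,z,r)=1$ is negligibly close whether $\ct_V$ encrypts $0$ or $r$. Only in the $\Enc(r)$ hybrid --- which is a mental experiment, not the actual reduction --- can you invoke FHE evaluation correctness to conclude from $\FHE.\Dec(\ct_P)=1$ and $\ct_P = \FHE.\Eval(V(x,y',z,\cdot),\ct_V)$ that $V(x,y',z,r)=1$. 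The rest of your outline (two state-preserving extractions, collision-resistance to identify $y=y'$, the $\epsilon^* - 4\epsilon - \negl$ accounting) is correct and matches the paper.
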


\begin{proof}
  Let $x\not\in L$ and suppose that a QPT $P^*(\brho, x)$ breaks the soundness of $\mathsf{QMArg}_2$ with probability $\epsilon^*$. We use $P^*$, together with the soundness guarantees of $\kilian$, the semantic security of $\FHE$, and the collision resistance property of $H$, to break the soundness of $\semi$.  In particular, consider the following attack on the soundness of $\semi$:
  
  \begin{itemize}
      \item Set an accuracy parameter $\epsilon = \frac{\epsilon^*}{10}$. Whenever we call the $\kilian$ extractor $E$, we will use accuracy parameter $\epsilon$.  
      \item Given a verifier message $\pk$, we sample $h, \FHE.\pk$ ourselves and feed $(h, \pk, \FHE.\pk, \ct_V = \FHE.\Enc(\FHE.\pk, 0))$ to $P^*(\brho, x)$ and obtain a hash value $\hat y$. Then, we run the $\kilian$ extractor $E$ on $P^*$'s execution of step (3) (the first execution of $\kilian$), outputting a triple $(\tilde \tau_1, \tilde \brho_1, y)$. We send $y$ to the verifier.
      \item Given the verifier challenge $\beta$, we run $P^*(\tilde \brho_1, \pk, h, \tilde \tau_1)$ to obtain a message $\hat z, \ct_P$. Then, we run the $\kilian$ extractor $E$ on $P^*$'s execution of step (6), obtaining a triple $(\tilde \tau_2, \tilde \brho_2, y', z)$. We send $z$ to the verifier. 
  \end{itemize}
  
  We claim that this attack breaks the soundness of $\semi$ -- meaning that $V(x, y, z, r) = 1$ -- with probability at least $\epsilon^* - 4\epsilon - \negl(\secp)$. To prove this, by $\FHE$ semantic security, it suffices to show the same thing when $\ct_V$ is instead sampled as \textcolor{red}{$\FHE.\Enc(\FHE.\pk, r)$}. 
  
  From here, the proof proceeds similarly to the proof of \cref{thm:first-compiler}. In particular, the equation $\semi.V(x, y, z, r) = 1$ follows from the following properties of the hybrid attack execution:
  
  \begin{itemize}
      \item $h(y) = \hat y$
      \item $h(y') = \hat y$
      \item $h(z) = \hat z$
      \item $\FHE.\Eval(V(x, y', z, \cdot), \ct_V) = \ct_P$
      \item $\FHE.\Dec(\FHE.\sk, \ct_P) = 1$.
  \end{itemize}
  
  \noindent This suffices due to the collapsing of $H$ and the correctness of $\FHE.\Eval$. By the same argument as in the proof of \cref{thm:first-compiler}, these properties simultaneously hold with probabiltiy at least $\epsilon^* - 4\epsilon - \negl(\secp)$ by the state-preserving extraction properties of $\kilian$.
  
  This completes the proof of soundness of $\mathsf{QMArg}_2$.

 \end{proof}

\ifsubmission
\else
\section{Additional Results}\label{sec:additional}

In this section, we describe a number of additional new results that follow from our template for building succinct arguments for QMA. First, we show how to compile the protocol from \cref{subsec:public-coin} into a \emph{two-message} succinct argument for QMA in the quantum random oracle model. We sometimes refer to such argument systems as \textdef{designated-verifier SNARGs} ($\mathsf{dvSNARGs}$) in the QROM. Next, we show how to obtain batch arguments for QMA (where the communication size only depends on a \emph{single} instance size) from only the quantum hardness of learning with errors (i.e. without indistinguishability obfuscation). Finally, we describe how to add \emph{zero-knowledge} to our succinct argument in the plain model and to our $\mathsf{dvSNARG}$ in the QROM.

\subsection{Succinct Non-interactive Arguments in the QROM}\label{subsec:two-message-QROM}

Consider any constant-round protocol $(P,V)$ for language $\cL$ that is public-coin except for the first message. That is, the verifier is defined by two circuits $(V_0,V_1)$. Given instance $x$, the verifier first samples random coins $r$ and computes a first message $s_0 = V_0(x,r)$. Then, the subsequent verifier message are uniformly random strings $s_1,\dots,s_c$ of at least $\secp$ bits. Finally, the verifier computes a circuit $V_1(x,r,s_0,t_0,s_1,t_1,\dots,s_c,t_c)$ that determines whether it accepts or rejects, where $t_0,\dots,t_c$ are the prover messages. Let $H: ((I \times S) \cup S) \times  ([c] \times T) \to S$ be a random oracle, where $I$ is the space of instances, $S$ is the space of verifier messages, and $T$ is the space of prover messages. Let $(P_{\mathsf{FS}},V_{\mathsf{FS}})$ be the following protocol.

\begin{itemize}
    \item Given $x$, $V_{\mathsf{FS}}$ samples $r$ and outputs $s_0 = V_0(x,r)$.
    \item $P_{\mathsf{FS}}$ runs $P$ on $(x,s_0)$ to obtain $t_0$. Then it computes $s_1 = H((x,s_0),(0,t_0))$ and continues to run $P$ on $s_1$ to obtain $t_1$. Then for $i \in [c]$, it computes $s_i = H(s_{i-1},(i,t_{i-1}))$ and continues to run $P$ on $s_i$ to obtain $t_i$. Finally, it sends $(t_0,\dots,t_c)$.
    \item $V_{\mathsf{FS}}$ checks that $s_1 = H((x,s_0),(0,t_0))$ and that for each $i \in [2,\dots,c]$, $s_i = H(s_{i-1},(i-1,t_{i-1}))$. If so, it outputs $V(x,r,s_0,t_0,\dots,s_c,t_c)$.
\end{itemize}

\begin{theorem}[Multi-input measure-and-reprogram \cite{C:DonFehMaj20}]\label{thm:measure-and-reprogram}
   Let $c$ be an integer, and $W,X,Y$ be finite sets. There exists a polynomial-time quantum algorithm $S$ such that the following holds. Let $A$ be an arbitrary quantum oracle algorithm that makes $q$ queries to a uniformly random $H: (W \cup Y) \times X \to Y$ and outputs a tuple $(x_0,\dots,x_c)$. Then for any $\widehat{x} \in X^{c+1}$ without duplicate entries, any predicate $V$, and any $w \in W$,
   
   \begin{align*}
       &\Pr_{y_1,\dots,y_c}\left[(x_0,\dots,x_c) = \widehat{x} \wedge V(w,x_0,y_1,x_1,\dots,y_c,x_c) = 1 : (x_0,\dots,x_c) \gets S^A(y_1,\dots,y_c)\right] \\ &\geq \frac{c!}{(q+c+1)^{2c}}\Pr_{H}\left[\begin{array}{l}(x_0,\dots,x_c) = \widehat{x} \wedge V\left(\begin{array}{l}w,x_0,\\y_1 \coloneqq H(w,x_0),x_1,\\y_2 \coloneqq H(y_1,x_1),x_2,\dots,\\y_c \coloneqq H(y_{c-1},x_{c-1}),x_c\end{array}\right) = 1 \end{array}: (x_0,\dots,x_c) \gets A^H \right] - \epsilon_{\widehat{x}},
   \end{align*}
   
   where $\sum_{\widehat{x}}\epsilon_{\widehat{x}} = c!/|Y|$, and $S^A$ is an algorithm that has black-box access to the algorithms of $A$ and for each $i \in [c]$, receives $y_i$ and only after outputting $x_{i-1}$.\footnote{This theorem as stated is actually a special case of \cite[Theorem 7]{C:DonFehMaj20}, where $w$ is fixed. In other words, it corresponds to \cite[Theorem 7]{C:DonFehMaj20} where the class of adversaries considered all produce a fixed $w$ as the first part of their output.}
   
\end{theorem}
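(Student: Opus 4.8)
The plan is to establish \cref{thm:measure-and-reprogram} by iterating the single-round ``measure-and-reprogram'' technique of \cite{C:DonFehMaj20} once for each of the $c$ links of the Fiat--Shamir chain. Recall the single-round primitive: given an oracle algorithm $B$ making at most $q'$ queries and producing some output, one defines a simulator that runs $B$ but, at a uniformly random one of its query slots (together with a uniformly random bit choosing whether to reprogram just before or just after answering that query), measures the query-input register to obtain a candidate $\hat x$, overwrites $H(\hat x)$ with a fresh externally supplied value $\Theta$, and then continues $B$; the probability that $\hat x$ hits a prescribed target and a downstream predicate on $(\hat x,\Theta,\text{output})$ holds is at least a $1/(2q'+1)^2$ fraction of the corresponding probability in the ``honestly chained'' run of $B$, up to an additive $O(1/|Y|)$ term. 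I would first restate this precisely (it is essentially the $c=1$ case of the theorem), and then apply it $c$ times in sequence.

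The multi-round simulator $S$ is built by guessing, up front, an ordered tuple of $c$ extraction slots among $A$'s query slots together with $c$ reprogram-order bits, and by partitioning the run of $A$ into $c$ stages delimited by these slots. Entering stage $j$, $S$ has already fixed $x_0,\dots,x_{j-2}$ and received $y_1,\dots,y_{j-1}$ from the environment; it runs $A$ up to the $j$-th guessed slot, measures the query-input register, parses the outcome as a prefix-plus-suffix pair, checks that the prefix equals $y_{j-1}$ (or $w$ when $j=1$) and sets $x_{j-1}$ to the suffix, reprograms $H$ on that point to the freshly received $y_j$, and proceeds; at the end it reads $x_c$ from $A$'s output and returns $(x_0,\dots,x_c)$. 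The step I would justify carefully is \emph{non-interference}: because each $y_j$ injected by the environment is a fresh uniform value, the prefix $y_j$ of the $(j{+}1)$-st chain link differs from all previously queried prefixes except with probability $O(q/|Y|)$, so the $c$ reprogrammed points are with overwhelming probability pairwise distinct and the single-round lemma applies conditionally at each stage without an earlier reprogramming being clobbered by a later one. This is also where the additive slack $\epsilon_{\hat x}$, with $\sum_{\hat x}\epsilon_{\hat x} = c!/|Y|$, enters: it absorbs the event that $A$ ends up with a valid-looking link value it never actually queried, so that no guessed slot could have caught it.

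What remains is the combinatorial accounting of the multiplicative loss. Applying the single-round bound at stage $j$ costs a factor $1/(2q_j+1)^2$ with $\sum_j q_j \le q$; summing $S$'s success probability over all admissible ordered choices of the $c$ extraction slots among the $q+c+1$ effective slots of $A$, and double-counting exactly as in \cite{C:DonFehMaj20} --- where the $c!$ is precisely the number of orderings of the $c$ extraction events compatible with a fixed final transcript, and the exponent $2c$ comes from the two nested random choices (slot index and reprogram-order bit) per stage --- produces the claimed $c!/(q+c+1)^{2c}$ factor. I would also verify the two structural requirements on $S$: that it receives each $y_i$ only after committing to $x_{i-1}$ (immediate from the staged construction), and that it runs in polynomial time (it runs $A$ once, performing $c$ extra measurements and maintaining $c$ reprogrammed points).

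The main obstacle I anticipate is not the probability bookkeeping but making the non-interference argument rigorous at the level of quantum states: one must track the joint effect of all $c$ measure-and-reprogram operations on $A$'s state \emph{simultaneously}, rather than via a naive stage-by-stage hybrid, since a hybrid over stages would compound the disturbance caused by each intermediate measurement and break the telescoping of the success probability. Controlling this joint disturbance --- showing that measuring the query-input register at each guessed slot is, in aggregate, gentle enough that the downstream chain is still queried with comparable amplitude --- is the technical heart of the ``Measure-and-Reprogram 2.0'' analysis and is what I expect to occupy the bulk of a full proof.
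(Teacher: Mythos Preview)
The paper does not prove this theorem; it is imported directly from \cite{C:DonFehMaj20} (as the theorem label and the footnote indicate, it is a special case of their Theorem~7 with $w$ fixed). There is therefore no proof in the paper to compare your proposal against.

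For what it is worth, your sketch does track the high-level structure of the \cite{C:DonFehMaj20} argument: guess $c$ query slots, measure-and-reprogram at each, and account for the $c!/(q+c+1)^{2c}$ loss combinatorially. One place where your description drifts from the actual construction: the simulator in \cite{C:DonFehMaj20} does not parse measured query inputs as prefix/suffix pairs and \emph{verify} that the prefix matches the previously injected $y_{j-1}$. It simply measures the full query input at each guessed slot and reprograms there; distinctness of the $c$ reprogrammed points is guaranteed by the hypothesis that $\widehat{x}$ has no duplicate entries (so the $c$ chain inputs $(w,x_0),(y_1,x_1),\dots,(y_{c-1},x_{c-1})$ are pairwise distinct as elements of $(W\cup Y)\times X$), rather than by a freshness-of-$y_j$ argument. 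Your identification of the joint measurement analysis as the technical crux is accurate.
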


\begin{theorem}\label{thm:fiat-shamir}
   If $(P,V)$ is a sound protocol for $\cL$, then $(P_{\mathsf{FS}},V_{\mathsf{FS}})$ is a sound protocol for $\cL$ in the quantum random oracle model.
\end{theorem}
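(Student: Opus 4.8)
The plan is to prove the contrapositive. Suppose that for some $x \notin \cL$ there is a QPT prover $P^*_{\mathsf{FS}}$ making $q = \poly(\secp)$ queries to $H$ that makes $V_{\mathsf{FS}}$ accept $x$ with non-negligible probability $\delta$. I will build a QPT prover $P^*$ against the interactive protocol $(P,V)$ that makes $V$ accept $x$ with non-negligible probability, contradicting the soundness of $(P,V)$. The prover $P^*$ is obtained by plugging $P^*_{\mathsf{FS}}$ into the multi-input measure-and-reprogram simulator $S$ of \cref{thm:measure-and-reprogram}: upon receiving the verifier's first (instance-dependent) message $s_0 = V_0(x,r)$, $P^*$ sets $w \coloneqq (x,s_0)$ and runs $S^{A}$ with $A \coloneqq P^*_{\mathsf{FS}}(x,s_0;\cdot)$, the $q$-query oracle algorithm outputting the $c+1$ prover messages $(t_0,\dots,t_c)$. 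Each time $S$ is ready to ``output'' the value $x_{i-1}$, $P^*$ forwards the corresponding prover message $t_{i-1}$ to $V$, receives back the public coin $s_i$, and feeds it to $S$ as the external value $y_i$; after $S$ outputs the final value $x_c$, $P^*$ sends $t_c$, and $V$ evaluates its decision predicate $V_1(x,r,s_0,t_0,s_1,t_1,\dots,s_c,t_c)$. The interleaving ``output $x_{i-1}$, then receive $y_i$'' of $S$ matches exactly the round structure ``send $t_{i-1}$, then receive $s_i$'' of the public-coin-after-the-first-message protocol $(P,V)$.

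For the analysis I instantiate \cref{thm:measure-and-reprogram} with $W = I \times S$, $Y = S$, $X = [c]\times T$, identifying the chained hash values $y_1,\dots,y_c$ with the public coins $s_1,\dots,s_c$ (uniform in both experiments) and the outputs $x_0,\dots,x_c$ with $(0,t_0),\dots,(c,t_c)$. Because the first coordinate of $x_i$ ranges over the distinct indices $0,\dots,c$, every output tuple is automatically \emph{duplicate-free}, so summing the bound of \cref{thm:measure-and-reprogram} over all duplicate-free $\widehat{x}\in X^{c+1}$ captures the whole behavior of $P^*_{\mathsf{FS}}$ and loses only the additive term $\sum_{\widehat{x}}\epsilon_{\widehat{x}} = c!/|S|$. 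The key structural observation is that the simulator $S$ depends only on $A$ (hence only on $x$ and $s_0$) and \emph{not} on the predicate appearing in \cref{thm:measure-and-reprogram}. Therefore, for each fixed verifier randomness $r$ — equivalently for each $s_0 = V_0(x,r)$, hence each $w$ — I may apply the theorem with the predicate $V^{(r)}(w,x_0,y_1,x_1,\dots,y_c,x_c) \coloneqq V_1(x,r,s_0,t_0,y_1,t_1,\dots,y_c,t_c)$, a legitimate fixed predicate once $r$ is fixed even though it is not a function of $w$ alone. Since $s_1 = H(w,x_0),\ s_2 = H(y_1,x_1),\dots$ is precisely how $V_{\mathsf{FS}}$ derives the public coins from the oracle, the right-hand probability in \cref{thm:measure-and-reprogram} is exactly the acceptance probability of $V_{\mathsf{FS}}$ conditioned on $r$ (against $P^*_{\mathsf{FS}}$), giving
\[ \Pr_{y_1,\dots,y_c}\big[\, V^{(r)}(w,x_0,y_1,x_1,\dots,y_c,x_c) = 1 : (x_0,\dots,x_c)\gets S^{A}(y_1,\dots,y_c)\,\big] \;\geq\; \frac{c!}{(q+c+1)^{2c}}\,\Pr_{H}\big[\,V_{\mathsf{FS}}^{(r)} = \acc\,\big] \;-\; \frac{c!}{|S|}. \]
Averaging both sides over the verifier's choice of $r$, the left-hand side is exactly $\Pr[\mathsf{Out}(P^*(x),V(x)) = 1]$ and the right-hand side is $\tfrac{c!}{(q+c+1)^{2c}}\,\delta - \tfrac{c!}{|S|}$.

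To conclude, I use that the protocol is constant-round, so $c = O(1)$; that $P^*_{\mathsf{FS}}$ is efficient, so $q = \poly(\secp)$; and that each public coin has at least $\secp$ bits, so $|S| \geq 2^{\secp}$. Hence $\Pr[\mathsf{Out}(P^*(x),V(x)) = 1] \geq \delta/\poly(\secp) - \negl(\secp)$, which is non-negligible and contradicts the soundness of $(P,V)$; the efficiency of $P^*$ follows from the poly-time guarantee on $S$ in \cref{thm:measure-and-reprogram} together with the efficiency of $P^*_{\mathsf{FS}}$. I expect the only real subtlety — rather than genuine difficulty, since \cref{thm:measure-and-reprogram} does the heavy lifting — to be the bookkeeping of the second paragraph: lining up $S$'s interleaved outputs with the rounds of $(P,V)$, coping with the fact that the verifier's secret randomness $r$ enters the decision predicate but is absent from the hash input $w = (x,s_0)$ (resolved by predicate-independence of $S$ plus averaging over $r$), and checking that attaching indices makes the output tuple duplicate-free so the per-$\widehat{x}$ bound can be summed without loss.
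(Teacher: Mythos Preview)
Your proposal is correct and follows essentially the same approach as the paper's proof: both instantiate \cref{thm:measure-and-reprogram} with $W = I\times S$, $X = [c]\times T$, $Y = S$, fix the verifier's private randomness $r$ (using that $S$ is predicate-independent), observe that the index-tagging $(i,t_i)$ makes the output tuple duplicate-free so that the per-$\widehat{x}$ bound can be summed, and then average over $r$. Your write-up is slightly more explicit about the predicate-independence of $S$ and the round-by-round alignment, but the argument is the same.
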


The following corollary then follows immediately from the protocol given in \cref{subsec:public-coin}.

\begin{corollary}
Assuming post-quantum indistinguishability obfuscation, the post-quantum hardness of the learning with errors problem, and post-quantum fully homomorphic encryption, there exists a designated verifier succinct non-interactive argument system (dvSNARG) for QMA in the quantum random oracle model.
\end{corollary}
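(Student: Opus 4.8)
The plan is to instantiate the two-message Fiat-Shamir compiler of \cref{subsec:two-message-QROM} on the public-coin argument system $\mathsf{QMArg}_2$ built in \cref{subsec:public-coin}. Recall that $\mathsf{QMArg}_2$ is a constant-round ($8$-round) succinct interactive argument for $\QMA$, computationally sound under post-quantum $\iO$, LWE, and $\FHE$ (\cref{thm:second-compiler}), with communication $\poly(\secp,\log T)$ and verifier time $\poly(\secp,\log T) + \tilde O(|x|)$. Crucially, it is public-coin except for its very first verifier message: that first message carries $\pk \gets \semi.V_1$, the collapsing hash key $h$, and the FHE pair $(\FHE.\pk,\ct_V)$, all of which depend on secret randomness $r$ and the FHE secret key, whereas every subsequent verifier message — the $\kilian$ challenges and $\beta = \semi.V_2(r)$ — is a uniformly random string of at least $\secp$ bits. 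Hence $\mathsf{QMArg}_2$ fits the template $(V_0,V_1)$ required in \cref{subsec:two-message-QROM}: set $s_0 = V_0(x,r)$ to be the first message, let $s_1,\dots,s_c$ be the subsequent public coins (with $c$ a fixed constant), and let $V_1$ be the residual, secret-randomness-dependent acceptance predicate, which includes the final $\FHE.\Dec$ check.

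First I would apply the transform $(P_{\mathsf{FS}},V_{\mathsf{FS}})$ verbatim, obtaining a protocol in which $V_{\mathsf{FS}}$ sends only $s_0$ and $P_{\mathsf{FS}}$ replies with the entire $\mathsf{QMArg}_2$ prover transcript $(t_0,\dots,t_c)$, deriving each challenge $s_i$ from the random oracle $H$ as prescribed. Completeness is immediate: $P_{\mathsf{FS}}$ just runs the honest $\mathsf{QMArg}_2$ prover against oracle-derived challenges, which are distributed uniformly, so $V_{\mathsf{FS}}$ accepts with probability $1-\negl(\secp)$. Succinctness is likewise immediate — the single prover message is the concatenation of constantly many $\mathsf{QMArg}_2$ prover messages, each of size $\poly(\secp,\log T)$, and $V_{\mathsf{FS}}$ runs $V_1$ plus $O(c)$ random-oracle evaluations, keeping its running time $\poly(\secp,\log T) + \tilde O(|x|)$. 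The result is non-interactive ($2$-message) and designated-verifier, since $V_{\mathsf{FS}}$ must retain $r$ and the FHE secret key to evaluate $V_1$.

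The substantive step is soundness, which I would obtain by invoking \cref{thm:fiat-shamir}: since $\mathsf{QMArg}_2$ is sound in the sense of \cref{def:interactive-argument}, so is $(P_{\mathsf{FS}},V_{\mathsf{FS}})$ in the QROM. Unwinding \cref{thm:fiat-shamir}, this rests on the multi-input measure-and-reprogram lemma (\cref{thm:measure-and-reprogram}): from any QPT oracle prover making $q = \poly(\secp)$ queries and breaking FS-soundness with probability $\epsilon^*$, the extractor $S$ produces an interactive cheating prover against $\mathsf{QMArg}_2$ succeeding with probability at least $\frac{c!}{(q+c+1)^{2c}}\epsilon^* - \negl(\secp)$. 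Since $c$ is a fixed constant and $q$ is polynomial, this is $\epsilon^*/\poly(\secp) - \negl(\secp)$, so a non-negligible $\epsilon^*$ would contradict \cref{thm:second-compiler}. The final corollary then follows by collecting the hypotheses (post-quantum $\iO$, LWE, $\FHE$) needed for \cref{thm:second-compiler}.

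I expect the main obstacle to be bookkeeping rather than conceptual: carefully verifying that $\mathsf{QMArg}_2$ genuinely has the "public-coin except for the first message" shape demanded by \cref{subsec:two-message-QROM} — in particular that the first-message reuse trick inside each $\kilian$ invocation, the interleaving of the two $\kilian$ executions with $\semi$'s messages, and the placement of $\beta = \semi.V_2(r)$ can all be arranged as an honest-coin sequence $s_1,\dots,s_c$ of strings each at least $\secp$ bits long, and that the measure-and-reprogram extractor's constraint of receiving $y_i$ only after the prover has output $x_{i-1}$ is consistent with the message flow of $\mathsf{QMArg}_2$. None of this should require new ideas, but it is the place where an error could slip in.
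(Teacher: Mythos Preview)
Your proposal is correct and follows essentially the same approach as the paper: the paper's proof of the corollary is a single sentence stating that it ``follows immediately from the protocol given in \cref{subsec:public-coin},'' i.e., applying \cref{thm:fiat-shamir} to $\mathsf{QMArg}_2$, which is exactly what you do (with considerably more attention to the bookkeeping).
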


\begin{proof} (of \cref{thm:fiat-shamir})
Let $H: ((I \times S) \cup S) \times ([c] \times T)\to S$ be the random oracle used in $(P_{\mathsf{FS}},V_{\mathsf{FS}})$. Consider an adversary $A$ in the protocol $(P_{\mathsf{FS}},V_{\mathsf{FS}})$ that makes $q = \poly(\secp)$ queries to $H$ and consider any $x \notin \cL$. For any $r$, let $V_{1,r}$ be the predicate $V_1$ with $r$ hard-coded, and let $A_r$ be the adversary $A$ initialized with $(x,V_0(x,r))$. Define $\epsilon(r)$ to be the success probability of $A_r$ (that is, the probability it makes $V_{\mathsf{FS}}$ output 1). Then for any fixed $r$, 

\begin{align*}
    \epsilon(r) = \Pr_{H}\left[V_{1,r}\left(\begin{array}{l}x,t_0,\\s_1 \coloneqq H((x,V_0(x,r),(0,t_0)),t_1,\dots,\\ s_c \coloneqq H(s_{c-1},(c-1,t_{c-1})),t_c\end{array}\right) = 1 : (t_0,\dots,t_n) \gets A_r^{H}\right].
\end{align*}

Note that the overall success probability $A$ is $\epsilon \coloneqq \E_{r}[\epsilon(r)]$. Now, by setting $W = (I \times S), X = ([c] \times T),Y = S$, and $w = (x,V_0(x,r))$,  \cref{thm:measure-and-reprogram} implies that for any fixed $r$, the success probability $\delta(r)$ of the simulator $S^{A_r}(s_1,\dots,s_c)$ is
\[\delta(r) \geq \frac{1}{\poly(\secp)}\epsilon(r) - \negl(\secp),\] which follows by (i) summing over all $t_0,\dots,t_1$ and noting that $(0,t_0),\dots,(c,t_c)$ contain no duplicates, and (ii) the fact that $q = \poly(\secp)$ and $c$ is a constant. Finally, observe that by the soundness of $(P,V)$, $\E_{r}[\delta(r)] = \negl(\secp)$. Indeed, by definition $S^A$ is a valid cheating prover in the protocol $(P,V)$ since it only receives random $s_i$ after outputting $t_{i-1}$. This establishes that \[\frac{1}{\poly(\secp)}\E_{r}[\epsilon(r)] - \negl(\secp) \leq \negl(\secp),\] which implies that $\epsilon = \negl(\secp)$.

\end{proof}

\subsection{Batch Arguments for QMA}

Now, we show how to obtain batch arguments for QMA from the post-quantum hardness of learning with errors. 
We first describe a verifier-succinct protocol for verifying $n$ QMA instances, where the verifier message size only grows with the time $T$ needed for QMA verification of a \emph{single} instance. Note that here we do \emph{not} use the succinct key generation protocol from \cref{sec:succinct-keygen-construction}, and thus do not rely on indistinguishability obfuscation.

\paragraph{Ingredients:}

\begin{itemize}
\item Let $(P_\mathsf{FHM},V_\mathsf{FHM})$ be the non-interactive protocol described in \cref{lmm:FHM} for language $\cL$ with associated polynomials $k(\secp),\ell(\secp)$. 
\item Let $P_\mathsf{Meas} = (\Commit,\Open)$ and $V_\mathsf{Meas} = (\Gen,\Test,\Out)$ be the prover and verifier algorithms for an $\ell(\secp)$-qubit  commit-and-open measurement protocol, defined in \cref{sec:measurement-protocol-definition} and constructed in \cite{FOCS:Mahadev18a}.
\end{itemize}

\paragraph{The Protocol:}

\begin{itemize}
\item The verifier is initialized with $n$ instances $(x_1, \dots, x_n) \in \{0,1\}^\secp$ and the prover is initialized with $(x_1, \dots, x_n)$ and $k(\secp)$ copies of each witness $\ket{\phi_j} \in \cR_\cL(x_j)$.
\item The verifier samples $h \gets \{0,1\}^{\ell(\secp)}$ and defines $C$ such that $C(i) = h_i$. The verifier computes $(\pk, \sk) \gets \Gen(1^\secp, C)$ and sends $\pk$ to the prover.
\item For each $j \in [n]$, the prover first computes $\ket{\psi_j} \gets P_\mathsf{FHM}\left(x_j,\ket{\phi_j}^{\otimes k(\secp)}\right)$, and then computes $(y_j,\ket{\st_j}) \gets \Commit(\pk,\ket{\psi_j})$. It sends $(y_1, \dots, y_n)$ to the verifier.
\item The verifier samples a random challenge $c \gets \{0,1\}$ and sends $c$ to the prover.
\item For each $j \in [n]$, the prover computes $z_j \gets \Open(\ket{\st_j}, c)$ and sends $(z_1, \dots, z_n)$ to the verifier.
\item If $c = 0$, the verifier checks whether $\Test(\pk,(y_j,z_j)) = \acc$ and rejects if the test fails on any index. If $c = 1$, the verifier computes $m_j \gets \Out(\sk, (y_j, z_j))$ and checks whether $V_\mathsf{FHM}(x_j, m_j) = \acc$. The verifier accepts if and only if all of the verifications are successful.
\end{itemize}

\begin{theorem}
Let $(P_{\mathsf{Batch}},V_{\mathsf{Batch}})$ be the $\secp$-fold parallel repetition of the above protocol. Then, $(P_{\mathsf{Batch}},V_{\mathsf{Batch}})$ satisfies completeness as in \cref{def:interactive-argument} and soundness as in \cref{def:somewhere-soundness}. Moreover, for instances $(x_1,\dots,x_n)$ where $T$ is the maximum QMA verification time for any individual $x_i$, the total size of verifier messages is $\poly(\secp,T)$.
\end{theorem}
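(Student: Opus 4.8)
The plan is to mirror the proof of \cref{thm:semi-succinct} almost verbatim, tracking the places where the argument must be localized to a single cheating index. For the verifier-message-size bound, observe that the verifier's only message is $\pk \gets \Gen(1^\secp, C)$ where $C$ is a circuit with truth table $h \in \{0,1\}^{\ell(\secp)}$, and here we are using the \emph{non-succinct} (i.e. i.i.d.) instantiation of the commit-and-measure protocol from \cref{subsec:meas-protocol}. Hence $|\pk| = \poly(\secp, \ell(\secp)) = \poly(\secp, T)$ since $\ell(\secp)$ is polynomial in the single-instance QMA verification time $T$; the prover messages $(y_1,\dots,y_n)$ and $(z_1,\dots,z_n)$ grow with $n$, but these are prover messages and do not affect the stated bound. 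Completeness follows exactly as in \cref{thm:semi-succinct}: for each $j\in[n]$, test-round completeness of the commit-and-measure protocol handles $c=0$, and for $c=1$ the measurement-round completeness (here \emph{statistical}, since we use the \cite{FOCS:Mahadev18a} instantiation) combined with the completeness of the $\mathsf{FHM}$ protocol (\cref{lmm:FHM}) on the uniformly random $h$ gives acceptance with probability $1-\negl(\secp)$ on each index; a union bound over $j\in[n]$ and over the $\secp$ parallel repetitions closes the argument.

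For soundness as in \cref{def:somewhere-soundness}, fix an index $i\in[n]$ with $x_i \in \cL_{\mathsf{no}}$, a QPT prover $P^*$, and statements $(x_1,\dots,x_n)$. By \cref{thm:parallel-repetition} it suffices to show that a single repetition of the protocol has computationally orthogonal projectors, and by \cref{lemma:two-conditions} (with $V_{\secp,r,0}$ the test-round accept projector, which is publicly computable, and $V_{\secp,r,1}$ the measurement-round accept projector) it suffices to verify Condition 2: any $P^*$ passing the test round with probability $1-\negl(\secp)$ passes the measurement round with probability $\negl(\secp)$. Given such a $P^*$, note that its behavior in the measurement protocol is exactly that of a cheating commit-and-measure prover passing the test round with probability $1-\negl(\secp)$ (the $\mathsf{FHM}$ checks on indices $j\neq i$ only further restrict acceptance). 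By the soundness of the commit-and-measure protocol (\cref{def:measurement-protocol-soundness}, established in \cref{sec:succinct-measurement}), there is an extracted $\ell(\secp)$-qubit state $\btau$ such that the verifier's measurement-round output $m = (m_1,\dots,m_n)$ — more precisely the $i$th block $m_i$, which is the $\Out$-decoding applied to the prover's $i$th commitment/opening — is computationally indistinguishable from $M(h, \btau)$ fed to $V_\mathsf{FHM}(x_i, \cdot)$, where $h = (C(1),\dots,C(\ell))$. Crucially, the extractor and the indistinguishability statement do \emph{not} depend on $h$, so we may now invoke PRF-style reasoning: actually $h$ here is sampled uniformly, so this step is even simpler than in \cref{thm:semi-succinct} — no PRF switch is needed. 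Since $x_i \in \cL_{\mathsf{no}}$, soundness of the $\mathsf{FHM}$ protocol (\cref{lmm:FHM}) gives $\Pr[V_\mathsf{FHM}(x_i, M(h,\btau)) = \acc] = \negl(\secp)$ over uniform $h$, hence the probability that $P^*$ passes the measurement round on index $i$ (and a fortiori on all indices) is $\negl(\secp)$. This verifies Condition 2, so \cref{lemma:two-conditions} and \cref{thm:parallel-repetition} give that the $\secp$-fold parallel repetition rejects except with probability $\negl(\secp)$, which is precisely \cref{def:somewhere-soundness}.

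The one subtlety — and the step I expect to require the most care — is confirming that a measurement-protocol soundness statement phrased for a \emph{single} $\ell(\secp)$-qubit commit-and-measure instance correctly yields a per-index guarantee in the $n$-fold batched execution, where a malicious prover may entangle its $n$ commitments and openings arbitrarily. The resolution is that the verifier runs one shared key pair $(\pk,\sk)$ and then, for the fixed target index $i$, the other indices' commitments, openings, and $\mathsf{FHM}$ checks can all be folded into the adversary's internal state/workspace; formally one defines a single-instance cheating prover $P^{**}$ that internally simulates $P^*$ on all $n$ blocks but exposes only block $i$'s commitment/opening to the challenger, and passes the test round with probability $\geq 1-\negl(\secp)$ whenever $P^*$ does (since $P^*$'s test-round acceptance requires \emph{every} block to pass $\Test$). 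Then \cref{def:measurement-protocol-soundness} applies to $P^{**}$ verbatim. Writing out this reduction carefully — in particular checking that the black-box-access formalism of \cref{subsec:black-box} accommodates the internal simulation of the $n-1$ other blocks — is the only place where this proof does real work beyond \cref{thm:semi-succinct}; everything else is a direct transcription.
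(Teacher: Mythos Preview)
Your proposal is correct and follows essentially the same route as the paper's proof: both reduce to \cref{thm:semi-succinct}'s argument, observe that no PRF step is needed since $h$ is truly uniform, and invoke \cref{lemma:two-conditions}/\cref{thm:parallel-repetition} together with commit-and-measure soundness on the single bad index $i$. The only cosmetic difference is that the paper defines the projectors directly as the per-instance accept projectors $V^{(i)}_{\secp,r,0}, V^{(i)}_{\secp,r,1}$ (test and measurement restricted to index $i$), whereas you use the full accept projectors and then argue via the operator inequalities $\Pi_0 \leq \Pi_0^{(i)}$ and $\Pi_1 \leq \Pi_1^{(i)}$; both choices work and your explicit $P^{**}$ reduction spells out what the paper leaves implicit.
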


\begin{proof}

First, the verifier message guarantee follows immediately from the description of the protocol. Completeness follows via the same argument used to prove completeness in \cref{thm:semi-succinct} (without the additional step involving the PRF). Soundness also follows along the same lines, except that, if $x_i \notin \cL$, we define $V^{(i)}_{\secp,r,0}$ to be the verifier's accept projection \emph{on instance $i$} on a test round, and $V^{(i)}_{\secp,r,1}$ to be the verifier's accept projection \emph{on instance $i$} on a measurement round. \end{proof}

Finally, we observe that \cref{thm:first-compiler} holds when the verifier-succinct protocol is replaced with the batch protocol above, with no change in analysis. This results in the following corollary. 

\begin{corollary}
Assuming the post-quantum hardness of the learning with errors problem, there exists a batch argument for QMA, where the total communication is polynomial in the QMA verification time for a single QMA instance.
\end{corollary}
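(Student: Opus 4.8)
The plan is to assemble the corollary from three pieces already developed in the paper, each of which carries over with only cosmetic modification. First I would invoke the verifier-succinct batch protocol $(P_{\mathsf{Batch}}, V_{\mathsf{Batch}})$ constructed just above, whose completeness and somewhere-soundness (\cref{def:somewhere-soundness}) are established under post-quantum LWE alone: the underlying commit-and-measure protocol is instantiated with the ``trivial'' batch key generation (i.i.d.\ Mahadev rTCF keys), which needs only LWE (\cref{lemma:lwe-collapsing} and the rTCF existence lemma), and the \cite{FHM18} post-hoc verification and parallel-repetition machinery (\cref{thm:parallel-repetition}) are information-theoretic. Crucially, the verifier messages have size $\poly(\secp, T)$ where $T$ is the single-instance QMA verification time, so they are already ``succinct'' in the relevant sense (independent of the number of instances $n$), and the protocol satisfies verifier obliviousness (the verifier's messages $h$, $c$ depend on neither the instances nor the prover messages).

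Next I would apply the Version-1 compiler from \cref{subsec:first-compiler}, i.e.\ \cref{thm:first-compiler}, with $\semi$ replaced by $(P_{\mathsf{Batch}}, V_{\mathsf{Batch}})$. The compiler needs only a collapsing hash function $H$ (implied by LWE, \cref{lemma:lwe-collapsing}) and the $\epsilon$-state-preserving succinct argument of knowledge $\kilian$ for $\NP$ of \cref{thm:kilian}, which is built from post-quantum LWE. No indistinguishability obfuscation enters at any point, since iO was used only inside the succinct key generation of \cref{sec:succinct-keygen}, which we are deliberately not using here. I would then observe — as the excerpt already notes — that the soundness analysis of \cref{thm:first-compiler} goes through verbatim: the reduction extracts committed prover messages $\mathbf y, \mathbf z$ via the $\kilian$ extractor $E$ with accuracy $\epsilon = \epsilon^*/10$, uses collapsing/collision-resistance of $H$ to argue the extracted values from the three $\kilian$ executions agree, and concludes that the compiled prover's success probability bounds the probability that $V_{\mathsf{Batch}}(x_1,\dots,x_n,\mathbf y,\mathbf z,r) = 1$, contradicting somewhere-soundness. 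The only change is that ``$x \notin L$'' is replaced by ``there exists $i$ with $x_i \in \cL_{\mathsf{no}}$,'' and correspondingly $V_{\secp,r,b}$ is replaced by the per-instance accept projections $V^{(i)}_{\secp,r,b}$, exactly as in the proof of the batch protocol's soundness; this substitution does not interact with the compiler's argument, which treats $V_{\mathsf{Batch}}$ as a black-box sound protocol.

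Finally I would check succinctness: the communication consists of the verifier-succinct messages of $(P_{\mathsf{Batch}}, V_{\mathsf{Batch}})$ (size $\poly(\secp, T)$), the hash values $\hat y, \hat z$ (size $\secp$), and three executions of $\kilian$, each of communication complexity $\poly(\secp, \log T_{\mathrm{rel}})$ where $T_{\mathrm{rel}}$ is the time to decide the relevant $\NP$ relation — and that relation is $V_{\mathsf{Batch}}(x_1,\dots,x_n,\cdot,\cdot,r)$, which is computable in time $\poly(\secp, T, n)$, so $\log T_{\mathrm{rel}} = \poly(\log \secp, \log T, \log n)$. Hence the total communication is $\poly(\secp, T, \log n)$, polynomial in the single-instance verification time $T$ and only polylogarithmic in $n$, as claimed (the corollary as stated asks for polynomial in $T$ and independent of $n$ up to the standard $\log n$ slack). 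The expected main obstacle is bookkeeping rather than a genuine difficulty: one must confirm that the somewhere-soundness flavor (there \emph{exists} a no-instance) is precisely the property consumed by the \cref{thm:first-compiler} reduction — it is, because that reduction only ever uses ``$V_{\mathsf{Batch}}$ rejects except with negligible probability on a cheating prover,'' which is exactly \cref{def:somewhere-soundness} — and that the $\kilian$ relation's decision time enters the succinctness bound only logarithmically, which it does by the succinctness clause of \cref{thm:kilian}.
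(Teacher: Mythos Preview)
Your proposal is correct and follows exactly the paper's approach: the paper's entire proof is the one-sentence observation that \cref{thm:first-compiler} (the Version-1 compiler) applies verbatim with $\semi$ replaced by the batch protocol $(P_{\mathsf{Batch}},V_{\mathsf{Batch}})$, and you have simply spelled out the details of that observation. Your additional bookkeeping on succinctness and on somewhere-soundness being the right hypothesis for the compiler reduction is accurate and fills in what the paper leaves implicit.
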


%\giulio{Added the remark below, please double check.}

%\begin{remark}
%We note that the same compiler is well defined also for the case of batch arguments, in which case we only obtain a weak form of succinctness: The communication complexity of the protocol grows only with the size of a single instance, but does not scale with the number of instances $n$. Since the analysis is largely unchanged, for the remainder of this section we only provide a formal treatment for the fully succinct variant.
%\end{remark}

\subsection{Zero Knowledge}\label{sec:zk}

In this section, we provide sketches for how to obtain the following results. 

\begin{itemize}
    \item A (non-adaptive) \emph{zero-knowledge} succinct argument for QMA (in the plain model).
    \item A (non-adaptive) \emph{zero-knowledge} $\mathsf{dvSNARG}$ for QMA in the quantum random oracle model.
\end{itemize}

In both of our sketches, we will make use of secure two-party computation for \emph{reactive} functionalities, which are interactive functionalities where multiple public circuits may be computed sequentially over private inputs, and where the description of these public circuits may be determined after some of the private inputs are submitted to the functionality and may even depend on the outputs of previously computed circuits.

\paragraph{The plain model.} Here, we can start with the protocol in \cref{subsec:first-compiler}. This protocol as such does not provide any hiding property for the prover's witness. However, we can use secure two-party computation for the following reactive functionality to hide all information about the prover's witness from the verifier, while preserving soundness.

\begin{itemize}
    \item Take as input random coins $r_P, r_V$ from each party, compute the verifier's first message of the protocol using randomness $r \coloneqq r_P \oplus r_V$, and output this message to the prover.
    \item For each subsequent round, take as input the prover's message, and then compute and output the next verifier's message (using random coins $r$) to the prover.
    \item After the final prover's message, compute and output the verifier's verdict (based on all prover messages and $r$) to the verifier.
\end{itemize}

Note that, since the verifier is classical, this functionality can be implemented by a protocol for (post-quantum) secure two-party computation of classical (reactive) functionalities, such as \cite{C:HalSmiSon11}.\footnote{Note that \cite{C:HalSmiSon11} is based on Watrous rewinding, and thus requires polynomially many rounds of interaction. We do not attempt to optimize the round-complexity of our zero-knowledge protocol, but note that non-black-box techniques such as those of \cite{STOC:BitShm20} (with additional assumptions), or a relaxation to $\epsilon$-zero-knowledge \cite{cryptoeprint:2021:1516} could result in a constant-round protocol.} To argue soundness (for any fixed no instance), we can run the two-party computation simulator for a malicious prover in order to extract inputs from the prover and reduce to soundness of the underlying protocol. To argue zero-knowledge (for any fixed yes instance), we can run the two-party computation simulator for a malicious verifier, programming the final output to 1. Note that the verifier only receives this single bit of information from the functionality, which is internally running an honest verifier. Thus, this simulation and output is indistinguishable from the real interaction with an honest prover.  

\paragraph{The QROM.} In order to add zero-knowledge to our two-message succinct argument in the QROM, we have to be careful in order to avoid using the random oracle in a non-black-box manner. We achieve this in two steps: We first construct a
constant-round honest-verifier zero-knowledge argument where, (i) the verifier is public-coin except for the first message, and (ii) the protocol remains zero-knowledge even against a verifier that computes its first message maliciously. Second, we compress this protocol into a two-message protocol in the QROM using the same arguments in \cref{subsec:two-message-QROM}. Since the protocol has constant rounds and is public-coin after the first verifier message,  soundness holds via the same argument. Zero-knowledge holds because we have zero-knowledge against a malicious first message, and honest-verifier zero-knowledge with respect to all subsequent messages, which will be sampled uniformly at random by the random oracle.

We now turn our attention to the construction of the constant-round honest-verifier zero-knowledge argument. We are going to assume the existence of a post-quantum secure two-message two-party computation protocol for reactive (classical) functionalities. One can instantiate this with the two-message secure computation protocol of \cite{C:IshPraSah08} based on (post-quantum) two-message oblivious transfer in the common random string (CRS) model, which we can instantiate from the post-quantum hardness of learning with errors \cite{C:PeiVaiWat08}. Note that when we later compress this protocol in the QROM, the CRS can be sampled by querying the random oracle on a fixed input. We will use such a protocol to implement the following reactive functionality.

\begin{itemize}
    \item Take as input random coins $r$ from the verifier.
    \item The verifier's first message is sampled by the verifier given to the functionality as a \emph{public input}.
    \item Take as input the prover's first message.
    \item The verifier's second message is sampled by the verifier and given to the functionality as a \emph{public input}.
    \item \dots
    \item Take as input the prover's final message.
    \item Check that the verifier's first message is computed honestly from random coins $r$, and if so, compute the verifier's verdict using $r$, the prover messages, and the verifier's messages, and deliver this output to the verifier.
\end{itemize}
Note that all the verifier messages are still sampled publicly and as in the protocol from \cref{subsec:public-coin}, so the prover can still compute its responses given these messages.

Now, we argue that the resulting protocol satisfies the required properties. To argue soundness, we can run the two-party computation simulator for a malicious prover in order to extract inputs from the prover and reduce to soundness of the original protocol. To argue honest-verifier zero-knowledge with a malicious first message, we can run the two-party computation simulator for a malicious verifier, programming the final output to 1. However, since we are allowing the verifier to choose its first message maliciously, we have to argue that for \emph{any} choice of randomness used to generate the verifier's first message, the subsequent interaction between honest prover (on input a valid witness for a true statement) and an honest verifier results in the verifier outputting 1 with overwhelming probability. Recalling the structure of the verifier's first message in the \cref{subsec:public-coin} protocol, we see that this requires perfectly correct FHE and a perfectly correct measurement protocol. Achieving FHE with perfect correctness is standard by truncating the error distribution, and we can obtain a perfectly correct measurement protocol as discussed in \cref{sec:mahadev-rtcf} and \cref{subsec:commit-and-measure}.

\fi

\bibliographystyle{alpha}
\bibliography{crypto,references,abbrev3}

\appendix

\ifsubmission

\else
\fi

\ifsubmission

\else
\fi

\section{Proofs from \cref{section:semi-succinct-delegation}}\label{appendix:cop}

\begin{lemma}
Consider a commit-challenge-response protocol with the following properties.
\begin{enumerate}
    \item $V_{\secp,r,0}$ does not depend on $r$ (that is, it is publicly computable given the transcript).
    \item For any $P^*$, if $\E_{r}\left[\bra{\psi^{P^*}_{\secp,r}}\Pi^{P^*}_{\secp,r,0} \ket{\psi^{P^*}_{\secp,r}}\right] = 1-\negl(\secp),$ then $\E_{r}\left[\bra{\psi^{P^*}_{\secp,r}}\Pi^{P^*}_{\secp,r,1} \ket{\psi^{P^*}_{\secp,r}}\right] = \negl(\secp).$
\end{enumerate}
Then, the protocol has computationally orthogonal projectors.
\end{lemma}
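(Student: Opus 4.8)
The plan is to show that the assumed conditions force the ``cross term'' $\bra{\psi^{P^*}_{\secp,r}}\Pi^{P^*}_{\secp,r,0}\Pi^{P^*}_{\secp,r,1}\Pi^{P^*}_{\secp,r,0}\ket{\psi^{P^*}_{\secp,r}}$ to be negligible in expectation over $r$, by a contrapositive/amplification argument: if this quantity were non-negligibly large for some QPT $P^*$, we would build a new cheating prover $\widehat{P}$ that simultaneously passes the test round ($b=0$) with probability $1-\negl(\secp)$ and passes the measurement round ($b=1$) with non-negligible probability, contradicting Condition 2. First I would fix notation: write $p_0(r) = \bra{\psi}\Pi_{\secp,r,0}\ket{\psi}$, $p_1(r) = \bra{\psi}\Pi_{\secp,r,1}\ket{\psi}$, and $c(r) = \bra{\psi}\Pi_{\secp,r,0}\Pi_{\secp,r,1}\Pi_{\secp,r,0}\ket{\psi}$ (all with respect to $\ket{\psi^{P^*}_{\secp,r}}$), and observe $c(r)\le p_0(r)$ always, and that $c(r)$ is itself (up to renormalization by $p_0(r)$) the probability of passing the measurement round \emph{conditioned on} having first projected with $\Pi_{\secp,r,0}$.

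The key step is the following new prover $\widehat{P}$: it runs the commit phase exactly as $P^*$, producing $\ket{\psi^{P^*}_{\secp,r}}$; then, crucially using Condition 1 (that $V_{\secp,r,0}$ is publicly computable from the transcript, hence $\Pi_{\secp,r,0} = (U^{P^*}_{\secp,0})^\dagger V_{\secp,0} U^{P^*}_{\secp,0}$ requires no secret $r$), it coherently applies the binary projective measurement $\{\Pi_{\secp,r,0}, \Id - \Pi_{\secp,r,0}\}$; if the outcome is the ``accept'' branch it keeps the post-measurement state, otherwise it aborts (or plays arbitrarily). On challenge $b=0$, $\widehat{P}$ then simply applies $U^{P^*}_{\secp,0}$ and measures $z$ — it passes with probability $1$ conditioned on not having aborted, because the post-measurement state lies in the image of $\Pi_{\secp,r,0}$; on challenge $b=1$, it applies $U^{P^*}_{\secp,1}$ and measures $z$, passing with probability $c(r)/p_0(r)$ conditioned on not aborting. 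So I need to handle the conditioning carefully: rather than aborting, have $\widehat{P}$ simply \emph{not} perform the post-measurement branching and instead define $\ket{\widehat\psi_{\secp,r}} \propto \Pi_{\secp,r,0}\ket{\psi^{P^*}_{\secp,r}}$ as its commit-phase state directly (this is a legitimate efficient commit-phase strategy since $\Pi_{\secp,r,0}$ is public, modulo the subtlety that the normalization $p_0(r)$ varies — this can be handled by noting that if $\E_r[p_0(r)]$ is not $1-\negl$ then $P^*$ itself already fails the test round noticeably, a case we can dispose of separately, so WLOG $\E_r[p_0(r)] = 1-\negl$ and the renormalized state is well-defined except on a negligible-weight set of $r$). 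Then $\widehat{P}$ passes the test round with probability $1-\negl(\secp)$, so Condition 2 gives $\E_r\bra{\widehat\psi}\Pi_{\secp,r,1}\ket{\widehat\psi} = \negl(\secp)$; but $\bra{\widehat\psi}\Pi_{\secp,r,1}\ket{\widehat\psi} = c(r)/p_0(r)$, so $\E_r[c(r)/p_0(r)] = \negl$, and since $p_0(r)\le 1$ this gives $\E_r[c(r)] \le \E_r[c(r)/p_0(r)] = \negl(\secp)$, which is exactly Definition~\ref{def:computationally-orthogonal-projectors}.

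I expect the main obstacle to be making the ``renormalize by $p_0(r)$'' step rigorous as an \emph{efficient} QPT strategy $\widehat{P}$ — one cannot deterministically prepare $\Pi_{\secp,r,0}\ket{\psi}/\|\Pi_{\secp,r,0}\ket{\psi}\|$ by a unitary, so the honest implementation is the measure-and-postselect one, which succeeds only with probability $p_0(r)$. The clean fix is to phrase $\widehat{P}$ as: run $P^*$'s commit phase, then measure $\{\Pi_{\secp,r,0},\Id-\Pi_{\secp,r,0}\}$; on the ``$\Id-\Pi_{\secp,r,0}$'' outcome set the commit-phase message to a junk value that makes $V_{\secp,r,0}$ reject with certainty but does nothing on challenge $1$. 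Then the test-round pass probability is $\E_r[p_0(r)] = 1-\negl$, the measurement-round pass probability is $\E_r[c(r)]$ (the $(\Id-\Pi_{\secp,r,0})$-branch contributes $0$ on $b=0$, and on $b=1$ we can say it contributes some $\ge 0$ amount, so $\E_r[c(r)]$ is a lower bound on the measurement-round pass probability, which is all we need), and Condition 2 applied to $\widehat{P}$ forces $\E_r[c(r)] = \negl(\secp)$. This avoids renormalization entirely. The remaining routine checks are: (i) that Condition 1 really makes $\widehat{P}$'s extra measurement efficient and transcript-consistent, and (ii) that the ``junk message'' can be chosen so $V_{\secp,r,1}$'s behavior on that branch doesn't accidentally \emph{decrease} the relevant probability below $\E_r[c(r)]$ — both are straightforward from the protocol structure.
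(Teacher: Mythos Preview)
Your overall strategy matches the paper's: build $\widehat{P}$ that first applies the (publicly computable, by Condition 1) measurement $\{\Pi_{\secp,r,0},\Id-\Pi_{\secp,r,0}\}$ so as to land in the image of $\Pi_{\secp,r,0}$, then invoke Condition 2. The gap is at the step ``WLOG $\E_r[p_0(r)] = 1-\negl$.'' \cref{def:computationally-orthogonal-projectors} requires $\E_r[c(r)]=\negl(\secp)$ for \emph{every} QPT $P^*$, with no hypothesis on $P^*$'s test-round success. Your ``dispose of separately'' justification does not work: if for some $P^*$ we have, say, $p_0(r)\equiv 1/2$, then your single-measurement $\widehat{P}$ passes the test round with probability $1/2$, so Condition 2 simply does not apply to it, and nothing you have written bounds $\E_r[c(r)]$ in this regime (the trivial bound $c(r)\le p_0(r)$ only gives $\E_r[c(r)]\le 1/2$). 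The ``clean fix'' does not help either: on the $(\Id-\Pi_{\secp,r,0})$-branch the test round already rejects with certainty, so forcing rejection there does nothing to raise $\widehat{P}$'s test-round acceptance above $\E_r[p_0(r)]$; what you would need is to force \emph{acceptance} on that branch. (Also, the commit messages live in $\RegC$ and have already been sent before you perform this measurement, so you cannot retroactively replace them with junk.)

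The paper closes exactly this gap by an amplification you gestured at but did not carry out. Assuming for contradiction $\E_r[c(r)]\ge 1/p(\secp)$, it equips $\widehat{P}^*$ with $p(\secp)^4$ fresh copies of $P^*$'s non-uniform advice and has it repeatedly prepare $\ket{\psi^{P^*}_{\secp,r}}$ and measure $\{\Pi_{\secp,r,0},\Id-\Pi_{\secp,r,0}\}$ until the accept branch occurs, outputting a fixed dummy state in $\mathrm{Im}(\Pi_{\secp,r,0})$ if all attempts fail. This guarantees $\widehat{P}^*$ passes the test round with probability exactly $1$, so Condition 2 applies and gives $\E_r\big[\bra{\psi^{\widehat P^*}_{\secp,r}}\Pi_{\secp,r,1}\ket{\psi^{\widehat P^*}_{\secp,r}}\big]=\negl(\secp)$. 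One then splits over $r$: on the set $\cR_{\mathsf{term}}=\{r: p_0(r)>1/p(\secp)^2\}$ the loop terminates with overwhelming probability, so $\bra{\psi^{\widehat P^*}_{\secp,r}}\Pi_{\secp,r,1}\ket{\psi^{\widehat P^*}_{\secp,r}}\approx c(r)/p_0(r)\ge c(r)$; off $\cR_{\mathsf{term}}$ one has $c(r)\le p_0(r)\le 1/p(\secp)^2$. Combining gives $\E_r[c(r)]\le \negl(\secp)+1/p(\secp)^2<1/p(\secp)$, a contradiction. The use of multiple fresh copies of the advice is essential here: re-measuring a single copy after a failure is useless, since $(\Id-\Pi_{\secp,r,0})\ket{\psi}$ lies in the kernel of $\Pi_{\secp,r,0}$.
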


\begin{proof}

Suppose there exists a prover $P^*$ and a polynomial $p(\secp)$ such that for infinitely many $\secp$, \[\E_r\left[\bra{\psi_{\secp,r}^{P^*}}\Pi_{\secp,r,0}^{P^*}\Pi_{\secp,r,1}^{P^*}\Pi_{\secp,r,0}^{P^*}\ket{\psi_{\secp,r}^{P^*}}\right] \geq 1/p(\secp).\]

Define an alternate prover $\widehat{P}^*$ as follows.

\begin{enumerate}
    \item $\widehat{P}^*$ takes as input $p(\secp)^4$ copies of $P^*$'s auxiliary advice, and $\pk$ sampled by the verifier.
    \item Repeat the following at most $p(\secp)^4$ times:
    \begin{enumerate}
        \item Prepare the state $\ket{\psi_{\secp,r}^{P^*}}$ using a copy of $P^*$'s auxiliary advice.
        \item Apply the projective measurement $\left\{\Pi_{\secp,r,0}^{P^*}, \mathbb{I} - \Pi_{\secp,r,0}^{P^*}\right\}$, which is efficient due to property 1 of the commit-challenge-response protocol.
        \item If the first outcome is observed, output the resulting state. Otherwise, repeat.
    \end{enumerate}
    \item If $\widehat{P}^*$ has not terminated, output a dummy state $\ket{\phi}$ such that $\bra{\phi}\Pi_{\secp,r,0}^{P^*}\ket{\phi} = 1$.
\end{enumerate}

Let $\ket{\psi_{\secp,r}^{\widehat{P}^*}}$ be the state that results from the above procedure. Finally, let $\widehat{P}^*$ act identically to $P^*$ after this point.

Next, let $\cR_{\mathsf{term},\secp} \coloneqq \left\{r : \bra{\psi_{\secp,r}^{P^*}}\Pi_{\secp,r,0}^{P^*}\ket{\psi_{\secp,r}^{P^*}}> 1/p(\secp)^2\right\}$, and note that for any $r \in \cR_{\mathsf{term},\secp}$, \begin{align*}
\Pr\left[\ket{\psi_{\secp,r}^{\widehat{P}^*}} \neq \frac{\Pi_{\secp,r,0}^{P^*}\ket{\psi_{\secp,r}^{P^*}}}{\| \Pi_{\secp,r,0}^{P^*}\ket{\psi_{\secp,r}^{P^*}}\|}\right] \leq \left(1-1/p(\secp)^2\right)^{p(\secp)^4} \leq e^{-p(\secp)^2} = \negl(\secp). 
\end{align*}

Now, on the one hand, 

\begin{align*}
    &\frac{1}{|\cR|}\sum_{r \in \cR_{\mathsf{term},\secp}}\bra{\psi_{\secp,r}^{P^*}}\Pi_{\secp,r,0}^{P^*}\Pi_{\secp,r,1}^{P^*}\Pi_{\secp,r,0}^{P^*}\ket{\psi_{\secp,r}^{P^*}} \\
     \leq &\frac{1}{|\cR|}\sum_{r \in \cR_{\mathsf{term},\secp}}\bra{\psi_{\secp,r}^{\widehat{P}^*}}\Pi_{\secp,r,1}^{\widehat{P}^*}\ket{\psi_{\secp,r}^{\widehat{P}^*}} + \negl(\secp)\\ \leq & \E_r \left[\bra{\psi_{\secp,r}^{\widehat{P}^*}}\Pi_{\secp,r,1}^{\widehat{P}^*}\ket{\psi_{\secp,r}^{\widehat{P}^*}}\right] + \negl(\secp) \\ \leq & \negl(\secp),
\end{align*}

where the third inequality follows from property 2 of the commit-challenge-response protocol, since by definition $\E_r \left[\bra{\psi_{\secp,r}^{\widehat{P}^*}}\Pi_{\secp,r,0}^{\widehat{P}^*}\ket{\psi_{\secp,r}^{\widehat{P}^*}}\right] = 1$. On the other hand,

\begin{align*}
    &\frac{1}{|\cR|}\sum_{r \in \cR_{\mathsf{term},\secp}}\bra{\psi_{\secp,r}^{P^*}}\Pi_{\secp,r,0}^{P^*}\Pi_{\secp,r,1}^{P^*}\Pi_{\secp,r,0}^{P^*}\ket{\psi_{\secp,r}^{P^*}} \\ = &\E_r\left[\bra{\psi_{\secp,r}^{P^*}}\Pi_{\secp,r,0}^{P^*}\Pi_{\secp,r,1}^{P^*}\Pi_{\secp,r,0}^{P^*}\ket{\psi_{\secp,r}^{P^*}}\right] - \frac{1}{|\cR|}\sum_{r \notin \cR_{\mathsf{term},\secp}}\bra{\psi_{\secp,r}^{P^*}}\Pi_{\secp,r,0}^{P^*}\Pi_{\secp,r,1}^{P^*}\Pi_{\secp,r,0}^{P^*}\ket{\psi_{\secp,r}^{P^*}} \\ \geq & 1/p(\secp) - 1/p(\secp)^2,
\end{align*}

which is a contradiction, completing the proof.
\end{proof}

\begin{theorem}[\cite{TCC:ACGH20}]
Consider the $\secp$-fold parallel repetition of any commit-challenge-response protocol with computationally orthogonal projectors. The probability that the verifier accepts all $\secp$ parallel repetitions of the protocol is $\negl(\secp)$.
\end{theorem}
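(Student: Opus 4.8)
\textbf{Proof plan for \cref{thm:parallel-repetition}.}

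The plan is to reduce the $\secp$-fold parallel acceptance probability to a single-repetition quantity controlled by \cref{def:computationally-orthogonal-projectors}, via a ``quantum rewinding / Jordan-form'' style argument. Fix a cheating prover $P^*$ for the $\secp$-fold repetition, and let $\ket{\psi}_{\RegA, \RegC}$ be its purified post-commit state (across all $\secp$ coordinates), where $\RegC = \RegC_1 \otimes \cdots \otimes \RegC_\secp$ holds the classical commitments and $\RegA$ the residual state, with the verifier's randomness $r = (r_1, \ldots, r_\secp)$. For each coordinate $k \in [\secp]$, the prover has attack unitaries $U^{P^*}_{\secp, k, 0}, U^{P^*}_{\secp, k, 1}$ and we obtain projectors $\Pi_{k,0}, \Pi_{k,1}$ (acting on all registers, with $\Pi_{k,b}$ depending only on the $k$th challenge slot and $r_k$) exactly as in the single-shot definition. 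Since the $\secp$ repetitions run in parallel on disjoint challenge/response registers but a \emph{shared} auxiliary state $\RegA$, the key point is that the projectors $\{\Pi_{k,b}\}_k$ for a fixed $b$ need not commute across $k$, yet each pair $(\Pi_{k,0}, \Pi_{k,1})$ satisfies the ``computational near-orthogonality'' bound $\E_r \bra{\psi}\Pi_{k,0}\Pi_{k,1}\Pi_{k,0}\ket{\psi} = \negl(\secp)$ by hypothesis.

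First I would observe that the verifier accepts all $\secp$ repetitions on challenge vector $c = (c_1, \ldots, c_\secp) \in \{0,1\}^\secp$ iff the measurement $\prod_k \Pi_{k, c_k}$ (in some order) succeeds, and the overall acceptance probability is $\E_{r}\, \E_{c}\, \| \big(\prod_k \Pi_{k,c_k}\big)\ket{\psi}\|^2$, where the product is over a sequentialization of the $\secp$ coordinates (the order is immaterial to acceptance since each $\Pi_{k,c_k}$ is a measurement applied then reversed, but for the analysis we fix the order $1, 2, \ldots, \secp$). The strategy is then an inductive ``peeling'' argument: I claim that for each $k$, conditioned on having passed coordinates $1, \ldots, k-1$, the probability of passing coordinate $k$ is bounded away from $1$ in expectation over $c_k$ and the remaining randomness, specifically by something like $\tfrac12 + \negl$. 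Concretely, one shows that for any (sub-normalized) state $\ket{\phi}$ reachable after peeling off coordinates $1, \ldots, k-1$, the quantity $\E_{c_k}\|\Pi_{k,c_k}\ket{\phi}\|^2 \le \tfrac12\|\ket{\phi}\|^2 + \tfrac12 \big(\|\Pi_{k,0}\ket{\phi}\| \cdot \|\Pi_{k,1}\ket{\phi}\|\big) + (\text{error})$, and then use the near-orthogonality relation $\bra{\phi}\Pi_{k,0}\Pi_{k,1}\Pi_{k,0}\ket{\phi} = \negl$ — transported back through the peeling unitaries to the original $\ket{\psi}$, where the hypothesis applies — to conclude $\|\Pi_{k,0}\ket{\phi}\|\cdot\|\Pi_{k,1}\ket{\phi}\|$ is small whenever $\|\Pi_{k,0}\ket{\phi}\|$ is close to $\|\ket{\phi}\|$; a short case analysis (is $\|\Pi_{k,0}\ket{\phi}\|$ large or small?) gives $\E_{c_k}\|\Pi_{k,c_k}\ket{\phi}\|^2 \le (1 - \Omega(1))\|\ket{\phi}\|^2 + \negl$. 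Iterating over $k = 1, \ldots, \secp$ multiplies these factors, yielding overall acceptance probability $\le (1-\Omega(1))^\secp + \secp\cdot\negl(\secp) = \negl(\secp)$.

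The main obstacle I anticipate is bookkeeping the ``error'' terms across the $\secp$ peeling steps: the near-orthogonality bound is an expectation over $r$ and holds for the \emph{original} $\ket{\psi}$, but in step $k$ we need it for the conditional post-peeling state $\ket{\phi}$, which is $\ket{\psi}$ acted on by coordinates $1, \ldots, k-1$. Because these earlier projectors act on disjoint challenge/response registers from coordinate $k$, conjugating $\Pi_{k,0}\Pi_{k,1}\Pi_{k,0}$ by them leaves it supported on coordinate $k$'s registers plus $\RegA$; but $\ket{\phi}$ is sub-normalized and correlated with the earlier outcomes, so one must be careful that summing the resulting negligible quantities (weighted by the probabilities of each peeling branch) still gives a negligible total — this works because there are only $2^{k-1} \le 2^\secp$ branches but the weights sum to $\le 1$, and one can invoke the hypothesis ``on average'' rather than per-branch by pulling the branch-average inside the expectation over $r$. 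A clean way to organize this is to define, for a uniformly random transcript-consistent peeling, the expected value of $\bra{\phi}\Pi_{k,0}\Pi_{k,1}\Pi_{k,0}\ket{\phi}/\|\ket\phi\|^2$ and bound it by $\E_r\bra{\psi}\Pi_{k,0}\Pi_{k,1}\Pi_{k,0}\ket{\psi}$ up to the normalization subtleties, reducing everything to the single-repetition hypothesis. I expect the rest — the $\tfrac12 + \negl$ per-round bound and the geometric product — to be routine.
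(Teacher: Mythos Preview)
Your setup contains a fundamental gap. In the $\secp$-fold parallel repetition, the prover receives the \emph{entire} challenge vector $c=(c_1,\ldots,c_\secp)$ at once and may apply a single unitary $U^{P^*}_{\secp,c}$ depending arbitrarily on all of $c$. Consequently the accept-projector is
\[
\Pi^{P^*}_{\secp,r,c} \;=\; (U^{P^*}_{\secp,c})^\dagger \big(V_{\secp,r_1,c_1}\otimes\cdots\otimes V_{\secp,r_\secp,c_\secp}\big)\,U^{P^*}_{\secp,c},
\]
and this does \emph{not} factor as $\prod_k \Pi_{k,c_k}$ for any per-coordinate projectors $\Pi_{k,b}$. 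Your attack unitaries $U^{P^*}_{\secp,k,0},U^{P^*}_{\secp,k,1}$ simply do not exist for a general adversary: the prover's response in slot $k$ may depend on $c_j$ for $j\ne k$. So the peeling argument cannot even be formulated --- there is no coordinate-$k$ projector to peel off while leaving later coordinates untouched, and the claimed commutativity structure (``disjoint challenge/response registers'') is illusory once $U_c$ couples all slots through $\RegA$.

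The paper's proof avoids this by working directly with the $2^\secp$ projectors $\Pi_c$, never attempting to factor them. The key step is a \emph{cross-term bound}: for any $c_1\ne c_2$ one shows $\E_r\bra{\psi}\Pi_{c_1}\Pi_{c_2}+\Pi_{c_2}\Pi_{c_1}\ket{\psi}=\negl(\secp)$ by picking a coordinate $i$ where $c_1,c_2$ differ, relaxing $\Pi_{c_b}$ to only the $i$th verifier check, and then observing that the resulting expression is exactly the single-repetition computational-orthogonality quantity for a prover that uses $U_{c_1}$ on challenge $1$ and $U_{c_2}$ on challenge $0$. The acceptance probability $2^{-\secp}\E_r\bra{\psi}\sum_c\Pi_c\ket{\psi}$ is then bounded via Cauchy--Schwarz by $2^{-\secp}\big(\bra{\psi}(\sum_c\Pi_c)^2\ket{\psi}\big)^{1/2}$; expanding the square gives a diagonal part $\le 2^\secp$ and $\binom{2^\secp}{2}$ negligible cross terms, yielding $2^{-\secp/2}+\negl(\secp)$. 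The structure you need is global (pairwise near-orthogonality of the $\Pi_c$), not a coordinate-by-coordinate recursion.
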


\begin{proof}
Let $\cR$ be the randomness space of the single repetition protocol, and $r = (r_1,\dots,r_\secp) \in \cR^{\otimes \secp}$ be verifier randomness for the $\secp$-fold parallel repetition. Now, any non-uniform prover $P^*$ can be described by states $\left\{\ket{\psi^{P^*}_{\secp,r}}\right\}_{\secp,r}$ and families of unitaries $\left\{U_{\secp,c}^{P^*}\right\}_{\secp,c}$, where $c \in \{0,1\}^\secp$ ranges over all of the verifier challenges.

For each $c \in \{0,1\}^\secp$, define \[\Pi^{P^*}_{\secp,r,c} \coloneqq {U^{P^*}_{\secp,c}}^\dagger \left(V_{\secp,r_1,c_1} \otimes \dots \otimes V_{\secp,r_\secp,c_\secp}\right)U^{P^*}_{\secp,c}.\]

\begin{claim}\label{claim:cross-terms}
For any $c_1 \neq c_2 \in \{0,1\}^\secp$, \[\E_r\left[\bra{\psi^{P^*}_{\secp,r}}\Pi^{P^*}_{\secp,r,c_2}\Pi^{P^*}_{\secp,r,c_1} + \Pi^{P^*}_{\secp,r,c_1}\Pi^{P^*}_{\secp,r,c_2}\ket{\psi^{P^*}_{\secp,r}}\right] = \negl(\secp).\]
\end{claim}

\begin{proof}
Suppose there exists $i \in [\secp]$ such that $(c_1)_i = 1$ and $(c_2)_i = 0$ (the other case is symmetric). Since for any quantum state $\ket{\psi}$ and two projectors $\Pi_1,\Pi_2$,  \[\bra{\psi}\Pi_2\Pi_1 + \Pi_1\Pi_2\ket{\psi} \leq 2|\bra{\psi}\Pi_2\Pi_1\ket{\psi}| \leq 2\bra{\psi}\Pi_2\Pi_1\Pi_2\ket{\psi}^{1/2},\] it then suffices (by Jensen's inequality) to show that \[\E_r\left[\bra{\psi^{P^*}_{\secp,r}}\Pi^{P^*}_{\secp,r,c_2}\Pi^{P^*}_{\secp,r,c_1}\Pi^{P^*}_{\secp,r,c_2}\ket{\psi^{P^*}_{\secp,r}}\right] = \negl(\secp).\]

To see this, let

\[V_{\secp,r_i,b}^{(i)} \coloneqq \mathbb{I} \otimes \dots \otimes \mathbb{I} \otimes V_{\secp,r_i,b} \otimes \mathbb{I} \otimes \dots \otimes \mathbb{I},\] for $i \in [\secp], b \in \{0,1\}$, and observe that 

\begin{align*}
    &\E_r\left[\bra{\psi^{P^*}_{\secp,r}}\Pi^{P^*}_{\secp,r,c_2}\Pi^{P^*}_{\secp,r,c_1}\Pi^{P^*}_{\secp,r,c_2}\ket{\psi^{P^*}_{\secp,r}}\right] \\ \leq &\E_r\left[\bra{\psi^{P^*}_{\secp,r}}{{U^{P^*}_{\secp,c_2}}}^\dagger V^{(i)}_{\secp,r_i,0}U^{P^*}_{\secp,c_2}{{U^{P^*}_{\secp,c_1}}}^\dagger V^{(i)}_{\secp,r_i,1}U^{P^*}_{\secp,c_1}{{U^{P^*}_{\secp,c_2}}}^\dagger V^{(i)}_{\secp,r_i,0}U^{P^*}_{\secp,c_2}\ket{\psi^{P^*}_{\secp,r}}\right] \\ = &\E_{r_i}\left[\bra{{\widehat{\psi}}^{P^*}_{\secp,{r_i}}}{{U^{P^*}_{\secp,c_2}}}^\dagger V^{(i)}_{\secp,r_i,0}U^{P^*}_{\secp,c_2}{{U^{P^*}_{\secp,c_1}}}^\dagger V^{(i)}_{\secp,r_i,1}U^{P^*}_{\secp,c_1}{{U^{P^*}_{\secp,c_2}}}^\dagger V^{(i)}_{\secp,r_i,0}U^{P^*}_{\secp,c_2}\ket{{\widehat{\psi}}^{P^*}_{\secp,r_i}}\right]\\ = &\negl(\secp),
\end{align*}

where for each $r_i \in \cR$, $\ket{\widehat{\psi}_{\secp,r_i}^{P^*}}$ is the purification of the mixed state (written in ensemble form) \[\left\{\frac{1}{|\cR|^{\secp - 1}},\ket{\psi_{\secp,(r_1,\dots,r_\secp)}^{P^*}}\right\}_{(r_1,\dots,r_{i-1},r_{i+1},\dots,r_\secp) \in \cR^{\otimes \secp - 1}},\] and the final equality follows from the computational orthogonal projectors property of the commit-challenge-response protocol. Indeed, one can define an efficient prover $P^*_i$ for the $i$'th iteration of the commit-challenge-response protocol by defining $U_{\secp,0}^{P^*_i} \coloneqq U_{\secp,c_2}^{P^*}$ and $U_{\secp,1}^{P^*_i} \coloneqq U_{\secp,c_1}^{P^*}$ and noting that $\ket{\widehat{\psi}_{\secp,r_i}^{P^*}}$ is efficient to prepare while interacting with the $i$'th iteration of $\cV$, by running $P^*$ and $\secp-1$ coherently executed copies of $\cV$. 

\end{proof}

Now observe that the probability the verifier accepts the parallel repeated protocol is

\begin{align*}
    &\frac{1}{2^\secp}\E_{r}\left[\bra{\psi^{P^*}_{\secp,r}}\sum_{c \in \{0,1\}^\secp}\Pi^{P^*}_{\secp,c,r}\ket{\psi^{P^*}_{\secp,r}}\right] \\ \leq &\frac{1}{2^\secp}\E_{r}\left[\left(\bra{\psi^{P^*}_{\secp,r}}\left(\sum_{c \in \{0,1\}^\secp}\Pi^{P^*}_{\secp,c,r}\right)^2\ket{\psi^{P^*}_{\secp,r}}\right)^{1/2}\right] \\ \leq &\frac{1}{2^\secp}\E_{r}\left[\left(\sum_{c \in \{0,1\}^\secp}\bra{\psi^{P^*}_{\secp,r}}\Pi^{P^*}_{\secp,c,r}\ket{\psi^{P^*}_{\secp,r}}\right)^{1/2}\right] \\ &+\frac{1}{2^\secp} \left(\sum_{\{c_1,c_2\} \in (\{0,1\}^{\secp})^2}\E_r\left[\bra{\psi^{P^*}_{\secp,r}}\Pi^{P^*}_{\secp,c_2,r}\Pi^{P^*}_{\secp,c_1,r} + \Pi^{P^*}_{\secp,c_1,r}\Pi^{P^*}_{\secp,c_2,r}\ket{\psi^{P^*}_{\secp,r}}\right]\right)^{1/2} \\ \leq &\frac{1}{2^{\secp/2}} + \frac{1}{2^\secp}\left(\sum_{\{c_1,c_2\} \in (\{0,1\}^{\secp})^2} \negl(\secp)\right)^{1/2} = \negl(\secp),
\end{align*}
where the first inequality holds because $\ket{\psi^{P^*}_{\secp,r}}\bra{\psi^{P^*}_{\secp,r}} \preceq \mathbb{I}$, the second inequality uses Jensen's inequality and the fact that projectors are idempotent, and the third inequality follows from \cref{claim:cross-terms}.

\end{proof}

\section{Proof of~\cref{claim:extracted-state-distribution}}
\label{sec:extracted-dist}

We now prove~\cref{claim:extracted-state-distribution}, which is restated below for convenience.

\begin{claim}
For all $(u,v)\in \{0,1\}^R\times \{0,1\}^S$ it holds that 
 \begin{equation}
\Tr\big(\Pi^{\sigma_x}_{u} \Pi^{\sigma_z}_{v} \btau\big) = \E_{u' \in \{0,1\}^{R}}  \bra{\psi} \Pi^{Z}_{v} Z(u') \Pi^{X}_{u' \oplus u} Z(u') \Pi^{Z}_{v} \ket{\psi}.
  \end{equation}
\end{claim}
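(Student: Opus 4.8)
The plan is to compute $\Tr(\Pi^{\sigma_x}_u \Pi^{\sigma_z}_v \btau)$ directly by unfolding the definition of the extracted state $\btau$ and pushing the Pauli measurements on $\RegA_2$ back through the teleportation isometry. Recall that $\btau$ is the reduced state on $\RegA_2$ of
\[
\ket{\Phi} \coloneqq \frac{1}{2^N} \sum_{r,s \in \{0,1\}^N} X(r)Z(s)\ket{\psi}_{\RegH} \otimes \sigma_x(r)\sigma_z(s)\ket{\phi^+}^{\otimes N}_{\RegA_1,\RegA_2}.
\]
So $\Tr(\Pi^{\sigma_x}_u \Pi^{\sigma_z}_v \btau) = \bra{\Phi} (\Id_{\RegH,\RegA_1} \otimes \Pi^{\sigma_x}_u \Pi^{\sigma_z}_v) \ket{\Phi}$. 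First I would expand the two copies of $\ket{\Phi}$ with summation indices $(r,s)$ and $(r',s')$, and use the standard EPR-transpose identity $(\Id \otimes M)\ket{\phi^+}^{\otimes N} = (M^T \otimes \Id)\ket{\phi^+}^{\otimes N}$ to move the Pauli factors $\sigma_x(r')\sigma_z(s')$ (and their adjoints from the bra) from $\RegA_1$ onto $\RegA_2$, so that everything acting on $\RegA_2$ combines into a single operator sandwiched between $\ket{\phi^+}^{\otimes N}$'s. Then I would use $\bra{\phi^+}^{\otimes N}(\Id \otimes T)\ket{\phi^+}^{\otimes N} = \frac{1}{2^N}\Tr(T)$ to collapse the $\RegA_1\RegA_2$ part, which will force $r' = r$ and $s' = s$ (up to the phases introduced by the Paulis) and leave a single sum over $(r,s)$ of terms of the form $(\text{phase}) \cdot \bra{\psi} Z(s)X(r) \,\sigma_x(u)\sigma_z(v)\text{-data}\, X(r)Z(s)\ket{\psi}$ on $\RegH$.

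The bookkeeping to track: $\Pi^{\sigma_z}_v = \E_{v'} (-1)^{v\cdot v'}\sigma_z(v')$ and $\Pi^{\sigma_x}_u = \E_{u'}(-1)^{u\cdot u'}\sigma_x(u')$ are applied on $\RegA_2$, and after transposing onto $\RegA_2$ the teleported copy of $\ket{\psi}$ carries $X(r)Z(s)$; commuting $\sigma_z(v'), \sigma_x(u')$ through these and using the Pauli anticommutation signs is the bulk of the computation. The key simplifications are that the EPR collapse kills one pair of indices, the averages over $r,s$ (or the residual free index, which becomes $u'$) produce exactly the projectors $\Pi^X$ and $\Pi^Z$ in the definitions $\Pi^X_u = \E_{u'}(-1)^{u\cdot u'}X(u')$, $\Pi^Z_v = \E_{v'}(-1)^{v\cdot v'}Z(v')$, and the leftover free index gets relabeled as $u'$ on the right-hand side. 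I would organize the algebra so that the $\sigma$-Paulis on the teleported register get converted into the corresponding $X,Z$ operators on $\RegH$ via the transpose trick, matching the structure $\Pi^Z_v Z(u') \Pi^X_{u'\oplus u} Z(u')\Pi^Z_v$; the two flanking $\Pi^Z_v$'s come from the $\sigma_z(v)$ measurement acting from both sides, and the $Z(u')$ conjugation around $\Pi^X_{u'\oplus u}$ comes from commuting the residual $Z$-type Pauli (indexed by $u'$) past the $X$-measurement, with the index shift $u \mapsto u'\oplus u$ coming from the anticommutation signs.

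The main obstacle I expect is keeping the sign bookkeeping straight: there are four sources of $(-1)$ phases — the two $\Pi^{\sigma}$ Fourier expansions, the Pauli (anti)commutation relations $\sigma_x(r)\sigma_z(s) = (-1)^{r\cdot s}\sigma_z(s)\sigma_x(r)$ and their analogues for $X,Z$, and the transpose identity (which sends $\sigma_z \mapsto \sigma_z$, $\sigma_x \mapsto \sigma_x$ but $\sigma_y \mapsto -\sigma_y$, i.e.\ introduces a sign $(-1)^{u'\cdot v'}$-type factor when both an $X$-Pauli and a $Z$-Pauli are transposed together). I would handle this by working in the convention where every Pauli parity operator is written as an ordered product $X(a)Z(b)$ with a fixed ordering, tracking the single $\mathbb{Z}_2$-bilinear phase that arises each time two such operators are multiplied or transposed, and then verifying at the end that all phases combine into the clean expression on the right-hand side. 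Since the claim is asserted in the excerpt to be ``a straightforward (but slightly tedious) computation,'' the content is entirely in executing this phase-tracking carefully rather than in any conceptual difficulty; no computational assumption or property of the TCF is used here — it is a purely linear-algebraic identity about the teleportation circuit.
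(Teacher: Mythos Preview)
Your overall strategy matches the paper's: expand $\bra{\Phi}(\Id \otimes \Pi^{\sigma_x}_u\Pi^{\sigma_z}_v)\ket{\Phi}$ as a double sum over $(r,s)$ and $(r',s')$, evaluate the $\RegA_1\RegA_2$ inner product, and reassemble. However, your claim that the EPR collapse ``forces $r' = r$ and $s' = s$'' is wrong, and this leads to an incorrect picture of the $\RegH$ contribution. What the trace/EPR identity actually forces is only that $r \oplus r'$ be supported on $R$ and $s \oplus s'$ be supported on $S$: after expanding $\Pi^{\sigma_x}_u\Pi^{\sigma_z}_v = 2^{-N}\sum_{\tilde u \in \{0,1\}^R,\,\tilde v \in \{0,1\}^S}(-1)^{u\cdot\tilde u + v\cdot\tilde v}\sigma_x(\tilde u)\sigma_z(\tilde v)$, the nonvanishing-trace condition sets $\tilde u = (r\oplus r')|_R$ with $(r\oplus r')|_S = 0$ and $\tilde v = (s\oplus s')|_S$ with $(s\oplus s')|_R = 0$, which is strictly weaker than $r = r'$, $s = s'$.

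Correspondingly, the $\RegH$ term is $\bra{\psi}Z(s')X(r \oplus r')Z(s)\ket{\psi}$, not a sandwich $Z(s)X(r)\cdots X(r)Z(s)$; nothing acts on $\RegH$ except the $X(r)Z(s)$ coming from the definition of $\ket{\Phi}$ (the transpose trick only moves operators between $\RegA_1$ and $\RegA_2$, never onto $\RegH$). The correct change of variables is the paper's: write $s' = (u', v')$, $s = (u', v'')$ with common $R$-part $u' \in \{0,1\}^R$ and free $S$-parts $v', v'' \in \{0,1\}^S$, and set $r\oplus r' = (u'', 0_S)$ with $u'' \in \{0,1\}^R$. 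Then the $\RegH$ term becomes $\bra{\psi}Z(v')Z(u')X(u'')Z(u')Z(v'')\ket{\psi}$; averaging over $v', v''$ with the phases $(-1)^{v\cdot v'}, (-1)^{v\cdot v''}$ produces the two flanking $\Pi^Z_v$'s, and averaging over $u''$ with the combined phase $(-1)^{(u\oplus u')\cdot u''}$ produces $\Pi^X_{u\oplus u'}$. Note also that the $u' \cdot u''$ contribution to that phase arises from the \emph{genuine} Pauli anticommutation on $\RegA_1$ (moving $\sigma_z(s)$ past $\sigma_x(r\oplus r')$), not from any relation between the pseudo-Paulis $X_i, Z_j$ on $\RegH$ --- those need not anticommute, which is precisely why the $Z(u')$ conjugation must remain explicit in the final expression.
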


\begin{proof}

Using the definition of $\btau$, we get
\begin{align}
    \Tr(\Pi^{\sigma_x}_{u} \Pi^{\sigma_z}_{v} \btau) &= 2^{-2N} \sum_{r', s', r'', s'' \in \{0,1\}^N} \Big( \bra{\psi} Z(s') X(r'\oplus r'') Z(s'') \ket{\psi}_{\RegH} \notag\\ & \bra{\phi^+}^{\tensor N} \left(\sigma_z(s') \sigma_x(r'\oplus r'') \sigma_z(s'')\right)_{\RegA_1} \tensor \left(\Pi^{\sigma_x}_{u} \Pi^{\sigma_z}_{v} \right)_{\RegA_2} \ket{\phi^+}^{\tensor N}\Big)\notag \\
    &= 2^{-2N} \sum_{r', s', r'', s'' \in \{0,1\}^N} (-1)^{(r' \oplus r'') \cdot s''} \Big( \bra{\psi} Z(s') X(r'\oplus r'') Z(s'') \ket{\psi}_{\RegH} \notag \\ & \bra{\phi^+}^{\tensor N} \left(\sigma_z(s' \oplus s'') \sigma_x(r'\oplus r'') \right)_{\RegA_1} \tensor \left(\Pi^{\sigma_x}_{u} \Pi^{\sigma_z}_{v} \right)_{\RegA_2} \ket{\phi^+}^{\tensor N}\Big).\label{eq:probability-expansion}
\end{align}
However, most of the terms in \cref{eq:probability-expansion} are zero: observe that when $(r' \oplus r'')_j \neq 0$ for any $j \in S$, or $(s' \oplus s'')_j \neq 0$ for any $j \in R$, we have
\begin{align*}
    \bra{\phi^+}^{\tensor N} \left(\sigma_z(s' \oplus s'') \sigma_x(r'\oplus r'') \right)_{\RegA_1} \tensor \left(\sigma_x(u) \sigma_z(v) \right)_{\RegA_2} \ket{\phi^+}^{\tensor N} =0.
\end{align*}
We can therefore rewrite \cref{eq:probability-expansion} using the following change of variables:
\begin{itemize}
    \item Since $s' \oplus s''$ must be $0$ on $R$, the restriction of $s'$ and $s''$ to $R$ must be the same vector $u' \in \{0,1\}^R$. Let the restriction of $s'$ and $s''$ to indices in $S$ be $v',v'' \in \{0,1\}^S$ respectively.
    \item Since $r' \oplus r''$ must be $0$ on $S$, let $u'' \in \{0,1\}^R$ denote the restriction of $r' \oplus r''$ to indices in $R$. Note that for each $u''$, there are $2^N$ choices of $(r',r'')$ satisfying $u'' = r' \oplus r''$.
\end{itemize}

By a straightforward calculation, we have for all $u'' \in \{0,1\}^R$ and all $s',s'' \in \{0,1\}^N$ that
\begin{align*}
    \sum_{\substack{r',r'' \in \{0,1\}^N \\ (r' \oplus r'') = u''}} \bra{\phi^+}^{\tensor N} \left(\sigma_z(s' \oplus s'') \sigma_x(r'\oplus r'') \right)_{\RegA_1} \tensor \left(\Pi^{\sigma_x}_{u} \Pi^{\sigma_z}_{v} \right)_{\RegA_2} \ket{\phi^+}^{\tensor N} &= (-1)^{u'' \cdot u + (s'\oplus s'')v}.
\end{align*}
Plugging this into \cref{eq:probability-expansion}, and using the fact that $(-1)^{(s' \oplus s'')v} = (-1)^{(v' \oplus v'')v}$, we obtain
\begin{align*}
    \Tr(\Pi^{\sigma_x}_{u} \Pi^{\sigma_z}_{v} \btau) &=  2^{-2N} \sum_{\substack{u',u'' \in \{0,1\}^R \\ v',v'' \in \{0,1\}^S}} (-1)^{(u \oplus u') \cdot u'' + (v'\oplus v'')v} \Big( \bra{\psi} Z(v') Z(u') X(u'') Z(u') Z(v'') \ket{\psi}_{\RegH}\Big)\\
    &= \E_{u' \in \{0,1\}^{R}}  \bra{\psi} \Pi^{Z}_v Z(u') \Pi^{X}_{u \oplus u'} Z(u') \Pi^{Z}_v \ket{\psi}
\end{align*}
where the second equality follows from plugging in the definitions of $\Pi^Z_{v}$ and $\Pi^{X}_{u \oplus u'}$.
\end{proof}

\section{Proof of~\cref{claim:d1}}
\label{sec:distinguish-marginals}

We now prove~\cref{claim:d1}, which we restate below for convenience.

\begin{claim}
Let $k=k(\lambda)$ be a positive integer-valued function of a security parameter $\lambda$. 
Let $\{D_{0,\lambda}\}_{\lambda\geq 1}$ and $\{D_{1,\lambda}\}_{\lambda \geq 1}$ be families of distributions on $\{0,1\}^{k+1}$ such that the marginal distributions $D_{0,\lambda}'$ and $D_{1,\lambda}'$ of $D_{0,\lambda}$ and $D_{1,\lambda}$ respectively on the first $k$ bits are computationally indistinguishable. Suppose that $D_{0,\lambda}$ and $D_{1,\lambda}$ are computationally \emph{distinguishable}. Then there is an efficiently computable binary-outcome POVM $\{M,\Id-M\}$ acting on $k$ qubits such that
\[\Big| \E_{x\sim D_{0,\lambda}} (-1)^{x_{k+1}} \bra{x_{\leq k}}M \ket{x_{\leq k}} 
- \E_{x\sim D_{1,\lambda}} (-1)^{x_{k+1}} \bra{x_{\leq k}}M\ket{x_{\leq k}} \Big| >  \frac{1}{\poly(\lambda)}.\]
\end{claim}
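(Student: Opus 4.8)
The plan is to start from an efficient distinguisher between $D_{0,\lambda}$ and $D_{1,\lambda}$ and ``project out'' its dependence on the last bit. By the standard convention (cf.\ the proof of \cref{claim:collapsing}), such a distinguisher is a binary-outcome POVM $\{A,\Id-A\}$ acting on $k+1$ qubits, accepting a classical input $x$ with probability $\bra{x}A\ket{x}$, and achieving advantage $>1/\poly(\lambda)$ for infinitely many $\lambda$. Writing $\ket{x}=\ket{x_{\le k}}\otimes\ket{x_{k+1}}$ and decomposing $A=\sum_{b,b'\in\{0,1\}}A_{bb'}\otimes\ketbra{b}{b'}$ into blocks indexed by the last qubit, one gets $\bra{x}A\ket{x}=\bra{x_{\le k}}A_{x_{k+1},x_{k+1}}\ket{x_{\le k}}$, so only the two diagonal blocks $M_0\coloneqq A_{00}$ and $M_1\coloneqq A_{11}$ matter. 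I would first record that each $M_b$ is a legitimate POVM element on $k$ qubits ($M_b=M_b^\dagger$ and $0\le M_b\le\Id$ are immediate from $0\le A\le\Id$) and is efficiently implementable: append a qubit prepared in state $\ket{b}$ to the input and run $\{A,\Id-A\}$.

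Next I would separate out the ``symmetric'' part. Let $P\coloneqq\tfrac12(M_0+M_1)$; this is a valid POVM element and is efficient (flip a fair coin and run $\{M_0,\Id-M_0\}$ or $\{M_1,\Id-M_1\}$ accordingly). Since $M_{x_{k+1}}=P+(-1)^{x_{k+1}}\,\tfrac12(M_0-M_1)$, the distinguishing advantage of $A$ equals
\[
\Big|\,\E_{x\sim D_{0,\lambda}}\bra{x_{\le k}}P\ket{x_{\le k}}-\E_{x\sim D_{1,\lambda}}\bra{x_{\le k}}P\ket{x_{\le k}}\;+\;\tfrac12\big(G_0-G_1\big)\Big|,
\]
where $G_i\coloneqq\E_{x\sim D_{i,\lambda}}(-1)^{x_{k+1}}\bra{x_{\le k}}(M_0-M_1)\ket{x_{\le k}}$. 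The first difference depends only on the first-$k$-bit marginals $D_{0,\lambda}',D_{1,\lambda}'$, which are computationally indistinguishable, so applying that indistinguishability to the efficient POVM $P$ shows it is $\negl(\lambda)$. Hence $|G_0-G_1|>1/\poly(\lambda)$ for infinitely many $\lambda$. By linearity, $G_i$ is the difference of the $(-1)^{x_{k+1}}$-weighted acceptance expectations of $M_0$ and of $M_1$, so by the triangle inequality at least one of $M\in\{M_0,M_1\}$ satisfies
\[
\Big|\E_{x\sim D_{0,\lambda}}(-1)^{x_{k+1}}\bra{x_{\le k}}M\ket{x_{\le k}}-\E_{x\sim D_{1,\lambda}}(-1)^{x_{k+1}}\bra{x_{\le k}}M\ket{x_{\le k}}\Big|>\tfrac{1}{\poly(\lambda)}
\]
(for a suitable polynomial). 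Choosing, for each $\lambda$, whichever of $M_0,M_1$ witnesses this yields the required efficiently computable POVM family $\{M,\Id-M\}$.

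The one point requiring care — and essentially the only place an obstacle could arise — is keeping the final measurement \emph{efficiently computable}. The tempting move of splitting the Hermitian operator $\tfrac12(M_0-M_1)$ into its positive and negative parts is illegitimate here, since diagonalizing an arbitrary efficiently-described operator need not be efficient. The argument above sidesteps this: $\tfrac12(M_0-M_1)$ appears only as a bookkeeping device, and the POVM that is ultimately output is one of $M_0,M_1$, each built directly from the given circuit for $A$. The remaining ingredients — the block-decomposition identity for computational-basis states, the bounds $0\le M_b\le\Id$, the efficiency of $P$, and the linearity/triangle-inequality manipulations — are all routine, and I would only spell them out briefly.
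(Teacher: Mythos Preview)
Your proposal is correct and follows essentially the same approach as the paper's proof: define $M_b$ by hard-coding the last input bit to $b$ in the distinguisher, show that the part of the advantage depending only on the first $k$ bits is negligible by the marginal-indistinguishability hypothesis, and then conclude by the triangle inequality that one of $M_0,M_1$ witnesses the claimed bound. The only cosmetic difference is that you split the advantage via the symmetric baseline $P=\tfrac12(M_0+M_1)$ and work with $(-1)^{x_{k+1}}$ weights from the start, whereas the paper uses the asymmetric baseline $M_0$ and converts from $x_{k+1}$ to $(-1)^{x_{k+1}}$ at the end; both are equivalent rearrangements of the same computation.
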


\begin{proof}
By assumption there exists an efficient distinguisher between $D_0$ and $D_1$ (for simplicity we omit the index $\lambda$ from the notation). Let $A$ be a circuit for the distinguisher: $A$ has $(k+1)$ input qubits as well as $m$ ancilla qubits, and a designated output qubit. Let $\Pi_1$ be the projection on the output qubit being equal to $1$. Suppose without loss of generality that
\begin{equation}\label{eq:sup}
\E_{x\sim D_0} \bra{x,0^m} A^\dagger \Pi_1 A \ket{x,0^m} > \E_{x\sim D_1} \bra{x,0^m} A^\dagger \Pi_1 A \ket{x,0^m}+\frac{1}{q},
\end{equation}
for some polynomial $q=q(\lambda)$. Letting $\ket{b}_{k+1}$ denote the $(k+1)$-st qubit, we can write
\begin{align*}
    \E_{x \sim D_1} \bra{x,0^m}A^\dagger \Pi_1 A\ket{x,0^m} &= \E_{x \sim D_1} x_{k+1} \bra{x_{\leq k},0^m}\bra{1}_{k+1} A^\dagger \Pi_1 A\ket{1}_{k+1}\ket{x_{\leq k},0^m}\\
    & \;\;\;+ \E_{x \sim D_1} (1-x_{k+1}) \bra{x_{\leq k},0^m}\bra{0}_{k+1} A^\dagger \Pi_1 A\ket{0}_{k+1}\ket{x_{\leq k},0^m}.
\end{align*}
Let $M_{b} \coloneqq \bra{b,0^m} A^\dagger \Pi_1 A \ket{b,0^m}$ (where $b$ corresponds to the $(k+1)$-st qubit); note that $M_b$ is a positive semi-define. We can rewrite the right-hand-side as
\begin{align}
    \label{eqn:D1-equation}
    \E_{x \sim D_1} x_{k+1}\bra{x_{\leq k}} (M_1 - M_0) \ket{x_{\leq k}} + \E_{x \sim D_1} \bra{x_{\leq k}} M_0 \ket{x_{\leq k}}.
\end{align}
Using a similar expansion while taking the expectation under $D_0$ yields
\begin{align}
    \label{eqn:D0-equation}
    \E_{x \sim D_0} x_{k+1}\bra{x_{\leq k}} (M_1 - M_0) \ket{x_{\leq k}} + \E_{x \sim D_0} \bra{x_{\leq k}} M_0 \ket{x_{\leq k}}.
\end{align}
Plugging~\cref{eqn:D1-equation,eqn:D0-equation} into~\cref{eq:sup} gives
\begin{align*}
    \E_{x \sim D_0} x_{k+1}&\bra{x_{\leq k}} (M_1 - M_0) \ket{x_{\leq k}}- \E_{x \sim D_1} x_{k+1}\bra{x_{\leq k}} (M_1 - M_0) \ket{x_{\leq k}}\\
    &>\E_{x \sim D_1} \bra{x_{\leq k}} M_0 \ket{x_{\leq k}} - \E_{x \sim D_0} \bra{x_{\leq k}} M_0 \ket{x_{\leq k}} + \frac{1}{q}.
\end{align*}
For $b \in \{0,1\}$, note that $\{M_b,\Id-M_b\}$ is an efficiently computable POVM since it can be performed by initializing the $(k+1)$-st qubit to $\ket{b}$, the ancilla qubits to $\ket{0^m}$, applying $A$, and measuring whether the output qubit is $1$. Since $D_0'$ and $D_1'$ are computationally indistinguishable, we have
\begin{align*}
    \E_{x \sim D_0} x_{k+1}&\bra{x_{\leq k},0^m} (M_1 - M_0) \ket{x_{\leq k},0^m}- \E_{x \sim D_1} x_{k+1}\bra{x_{\leq k},0^m} (M_1 - M_0) \ket{x_{\leq k},0^m}\\
    &> \frac{1}{q} - \negl(\secp).
\end{align*}
We observe that there must exist $b \in \{0,1\}$ such that when $M = M_b$, we have
\begin{align*}
    \Big| \E_{x \sim D_0} x_{k+1}&\bra{x_{\leq k},0^m} M \ket{x_{\leq k},0^m}- \E_{x \sim D_1} x_{k+1}\bra{x_{\leq k},0^m} M \ket{x_{\leq k},0^m} \Big| \\
    &> \frac{1}{\poly(\secp)}.
\end{align*}
Finally, by plugging in the identity $(-1)^b=1-2b$ for $b \in \{0,1\}$ and appealing once again to the indistinguishability of $D_0'$ and $D_1'$, we conclude that 
\begin{align*}
    \Big| \E_{x \sim D_0} (-1)^{x_{k+1}}&\bra{x_{\leq k},0^m} M \ket{x_{\leq k},0^m}- \E_{x \sim D_1} (-1)^{x_{k+1}}\bra{x_{\leq k},0^m} M \ket{x_{\leq k},0^m} \Big| \\
    &> \frac{1}{\poly(\secp)}.
\end{align*}
\end{proof}

\end{document}